\documentclass[a4paper, UKenglish, cleveref, autoref, thm-restate, numberwithinsect]{lipics-v2021}

\pdfoutput=1

\title{A Dichotomy Theorem for Automatic Structures}

\author{Antoine {Cuvelier}}{ENS Ulm, France}{}{}{}
\author{Rémi {Morvan}}{Université de Lille, France}{remi[at]morvan.xyz}{https://orcid.org/0000-0002-1418-3405}{}

\authorrunning{A. Cuvelier and R. Morvan}

\Copyright{Antoine Cuvelier and Rémi Morvan}

\ccsdesc[500]{Theory of computation~Regular languages}
\ccsdesc[500]{Theory of computation~Constraint and logic programming}
\ccsdesc[300]{Theory of computation~Finite Model Theory}
\ccsdesc[100]{Mathematics of computing~Graph algorithms}

\keywords{automatic structures, constraint satisfaction problems, dichotomy theorem, finite duality}

\acknowledgements{
    We thank Pablo Barceló, Edgar Baucher, Diego Figueira and Joanna Fijalkow for helpful discussions.
    This paper is derived from part of the second author's Ph.D. thesis \anonymized{\cite{morvan2025thesis}}, completed while being affiliated with \anonymized{Université de Bordeaux}. The technical results are presented in Chapter VIII, while the introduction and preliminaries of this paper partially come from Chapters I, II and VII.
}

\nolinenumbers 

\usepackage{tikz}
\usepackage{tikz-cd}

\usetikzlibrary{calc, positioning, arrows.meta, hobby, decorations.pathreplacing, decorations.markings, patterns, decorations.pathmorphing, fit, backgrounds, shapes, fadings}
\tikzfading[name=myfading, outer color=transparent!100, inner color=transparent!0]

\usetikzlibrary{automata}
\tikzset{
    initial text={},
    accepting/.style=accepting by double,
	every state/.style={minimum size=2em}
}

\tikzset{
	node distance = 1.5em,
	line width = .75pt,
	>={Classical TikZ Rightarrow},
	font = \footnotesize,
	big vertex/.style={
		draw,
		fill,
		fill opacity=.4,
		circle,
		line width=1pt,
		outer sep=2pt,
		inner sep=2pt
	},
	vertex/.style={
		draw,
		circle,
		line width=1pt,
		outer sep=2pt,
		inner sep=2pt
	},
	tiny vertex/.style={
		draw,
		circle,
		line width=.66pt,
		outer sep=0pt,
		inner sep=1.33pt
	},
	edge/.style={
		->,
		line width=1pt
	},
	implication/.style={
		->,
		double,
		line width=.66pt,
		arrows = {-Classical TikZ Rightarrow[length=0pt 3 .9]}
	}
}

\newrobustcmd{\drawHCGuess}[6]{
	\node[tiny vertex, #2=#3 of #1, draw=c0, fill=c0, fill opacity=#4] (#1-0) {};
	\node[tiny vertex, below=.1em of #1-0, draw=c1, fill=c1, fill opacity=#5] (#1-1) {};
	\node[tiny vertex, below=.1em of #1-1, draw=c2, fill=c2, fill opacity=#6] (#1-2) {};
}

\usetikzlibrary{fit}
\usetikzlibrary{arrows.meta}
\tikzset{
    inode/.style={
        inner xsep=0pt,
	}
}
\usepackage{xcolor}

\definecolor{Desire}{HTML}{eb3b5a} 
\definecolor{Boyzone}{HTML}{2d98da} 
\definecolor{Royal Blue}{HTML}{3867d6} 
\definecolor{NYC Taxi}{HTML}{f7b731} 
\definecolor{Algal Fuel}{HTML}{20bf6b} 
\definecolor{Innuendo}{HTML}{a5b1c2} 
\definecolor{Twinkle Blue}{HTML}{d1d8e0} 
\definecolor{Gloomy Purple}{HTML}{8854d0} 
\definecolor{Turquoise Topaz}{HTML}{0fb9b1}
\definecolor{Orange Hibiscus}{HTML}{fd9644}

\colorlet{cBlue}{Royal Blue}
\colorlet{cLightBlue}{Boyzone}
\colorlet{cYellow}{NYC Taxi}
\colorlet{cGreen}{Algal Fuel}
\colorlet{cRed}{Desire}
\colorlet{cGrey}{Innuendo}
\colorlet{cDarkGrey}{cGrey!50!black}
\colorlet{cLightGrey}{Twinkle Blue}
\colorlet{cPurple}{Gloomy Purple}
\colorlet{cTurquoise}{Turquoise Topaz}
\colorlet{cOrange}{Orange Hibiscus}

\colorlet{maincolor}{cRed!80!black}

\colorlet{KlDefn}{cRed!60!black}
\colorlet{KlCite}{maincolor} 
\colorlet{KlLink}{cBlue!40!black}
\colorlet{KlUrl}{cBlue!40!black} 
\colorlet{KlWarning}{cYellow!60!black}

\colorlet{c0}{cRed} 
\colorlet{c1}{cYellow}
\colorlet{c2}{cBlue}
\colorlet{c3}{cGrey}
\usepackage{stmaryrd}
\usepackage{mathcommand}
\usepackage[xcolor, hyperref, cleveref, notion, quotation, electronic]{knowledge}
\knowledgeconfigure{quotation, protect quotation={tikzcd}}
\knowledgeconfigure{diagnose line=true, diagnose bar=true}

\IfKnowledgePaperModeTF{
}{
    \knowledgestyle{intro notion}{color={KlDefn}, emphasize}
    \knowledgestyle{notion}{color={KlLink}}
    \hypersetup{
        colorlinks=true,
        breaklinks=true,
        linkcolor={KlLink}, 
        citecolor={KlCite}, 
        filecolor={KlLink}, 
        urlcolor={KlUrl}, 
    }
    \IfKnowledgeElectronicModeTF{
    }{
        \knowledgeconfigure{anchor point color={KlDefn}, anchor point shape=corner}
        \knowledgestyle{intro unknown}{color={KlWarning}, emphasize}
        \knowledgestyle{intro unknown cont}{color={KlWarning}, emphasize}
        \knowledgestyle{kl unknown}{color={KlWarning}}
        \knowledgestyle{kl unknown cont}{color={KlWarning}}
    }
}
\usepackage{mathtools} 
\usepackage{cancel} 
\usepackage{dutchcal} 

\usepackage[noend]{algorithm2e}
\SetKwInput{KwInput}{Input}
\makeatletter
\newcommand{\nosemic}{\renewcommand{\@endalgocfline}{\relax}}
\newcommand{\dosemic}{\renewcommand{\@endalgocfline}{\algocf@endline}}
\newcommand{\pushline}{\Indp}
\newcommand{\popline}{\Indm\dosemic}
\makeatother

\newrobustcmd{\case}[1]{\noindent{\color{lipicsYellow}\rule{0.73em}{0.73em}}~\textbf{\color{lipicsGray}#1}}
\newrobustcmd{\todo}[1]{{\color{cRed}\bfseries TODO: #1}}
\newrobustcmd{\anonymized}[1]{#1}

\renewcommand{\emptyset}{\varnothing}
\newrobustcmd{\lBrack}{\llbracket}
\newrobustcmd{\rBrack}{\rrbracket}

\newcommand{\dcup}{\sqcup} 

\newrobustcmd\decisionproblem[3]{
	\begin{center}
	\AP\fbox{%
	\begin{tabular}{rl}
	\multicolumn{2}{l}{#1} \\
	{\emph{Input}}: & \parbox[t]{.73\linewidth}{#2} \\
	{\emph{Question}}: & \parbox[t]{.73\linewidth}{#3}\\[.3em]
	\end{tabular}
	}
	\end{center}
}
\newrobustcmd{\DPfont}[1]{\normalfont{\scshape #1}}

\newrobustcmd\smashxrightarrow[1]{%
  	\raisebox{-.04em}{$%
		\xrightarrow{\smash{\raisebox{-.1em}{%
	  		\tiny{#1}%
		}}}%
  	$}%
}%
\newrobustcmd\smashxleftarrow[1]{%
	\raisebox{-.04em}{$%
		\xleftarrow{\smash{\raisebox{-.1em}{%
			\tiny{#1}%
		}}}%
	$}%
}%
\newrobustcmd\smashxleftrightarrow[1]{%
	\raisebox{-.04em}{$%
		\xrightleftharpoons{\smash{\raisebox{-.1em}{%
			\tiny{#1}%
		}}}%
	$}%
}%

\newif\ifmainstatement
\mainstatementfalse

\newrobustcmd\introinrestatable[1]{%
	\ifmainstatement%
		\intro{#1}%
	\else%
		\kl{#1}%
	\fi%
}

\newrobustcmd\introinrestatableopt[1]{%
	\ifmainstatement%
		\intro[#1]{#1}%
	\else%
		\kl[#1]{#1}%
	\fi%
}

\newenvironment{mainstatement}{
  \mainstatementtrue
}{
  \mainstatementfalse
}
\newrobustcmd\defeq{\mathrel{\hat{=}}}
\newrobustcmd\?{\mathbf}
\newrobustcmd\+{\mathcal}

\newrobustcmd\card[1]{|1|}
\newcommand{\set}[1]{\{#1\}}
\newcommand{\tup}[1]{\langle#1\rangle}

\knowledgenewrobustcmd\pset[1]{\cmdkl{\mathfrak{P}(#1)}} 
\knowledgenewrobustcmd\psetp[1]{\cmdkl{\mathfrak{P}_+(#1)}} 

\knowledgenewrobustcmd{\transition}[1]{\mathrel{\cmdkl{\smashxrightarrow{\ensuremath{#1}}}}}

\knowledgenewrobustcmd{\semFO}[2]{\cmdkl{\lBrack} #1 \cmdkl{\rBrack^{#2}}} 
\knowledgenewrobustcmd\FOmodels{\mathrel{\cmdkl{\vDash}}} 
\newrobustcmd\notFOmodels{\mathrel{\kl[\FOmodels]{\nvDash}\;}}

\newrobustcmd\N{\mathbb{N}} 
\newrobustcmd\Np{\N_{>0}} 
\newrobustcmd\Z{\mathbb{Z}}
\newrobustcmd\Q{\mathbb{Q}}
\knowledgenewrobustcmd\intInt[1]{\cmdkl{\lBrack} #1 \cmdkl{\rBrack}} 
\usepackage{bbm}
\knowledgenewrobustcmd\2{\cmdkl{\mathbbm{2}}} 

\knowledgenewrobustcmd\restr[2]{{
  #1 \cmdkl{\raisebox{-.1em}{$\vert$}{}_{#2}}
}}

\knowledgenewrobustcmd{\adjacency}[4]{\cmdkl{\+{A\!d\!j}_{\!#2}^{\smash{#3,#4}}(#1)}}
\knowledgenewrobustcmd{\ball}[3]{\cmdkl{\+B_{#1}^{#3}(#2)}}
\knowledgenewrobustcmd{\isom}{\mathrel{\cmdkl{\cong}}}

\knowledgenewrobustcmd\marked[1]{#1^{\cmdkl{\dag}}}
\knowledgenewrobustcmd\unarypred[1]{\cmdkl{P_{#1}}}
\knowledgenewrobustcmd\extendedSignature[2]{\cmdkl{#1_{#2}}}

\knowledgenewrobustcmd{\homto}{\mathrel{\cmdkl{\smashxrightarrow{hom}}}}
\newrobustcmd{\cohomto}{\mathrel{\kl[\homto]{\smashxleftarrow{hom}}}}
\newrobustcmd{\nothomto}{\mathrel{\kl[\homto]{\cancel{\smashxrightarrow{hom}}}}}
\newrobustcmd{\notcohomto}{\mathrel{\kl[\homto]{\cancel{\smashxleftarrow{hom}}}}}
\knowledgenewrobustcmd{\homregto}{\mathrel{\cmdkl{\smashxrightarrow{reg hom}}}}
\newrobustcmd{\nothomregto}{\mathrel{\kl[\homregto]{\cancel{\smashxrightarrow{reg hom}}}}}

\knowledgenewrobustcmd{\interpretation}[2]{\cmdkl{#1(#2)}}
\knowledgenewrobustcmd{\domainInter}[1]{\cmdkl{\mathrm{dom}_{#1}}}
\knowledgenewrobustcmd{\relInter}[2]{\cmdkl{#1_{\! #2}}}
\knowledgenewrobustcmd{\domainPres}[1]{\cmdkl{\mathrm{dom}_{#1}}}
\knowledgenewrobustcmd{\relPres}[2]{\cmdkl{#1_{\! #2}}}

\newrobustcmd{\smallpad}{%
	\hspace{.05em}\rule{.3em}{.075em}\hspace{.05em}
}
\knowledgenewrobustcmd{\pad}{\cmdkl{\smallpad}}
\knowledgenewrobustcmd{\convol}{\mathbin{\cmdkl{\otimes}}} 
\newrobustcmd\pair[2]{{\smaller\left(\begin{smallmatrix}%
	#1\\
	#2%
\end{smallmatrix}\right)}} 
\knowledgenewrobustcmd\SigmaPair[1][\Sigma]{\cmdkl{#1^{\smash{2}}_{{\otimes}}}}

\knowledgenewrobustcmd\core[1]{\cmdkl{\check{{#1}}}}
\knowledgenewrobustcmd\disunion{\mathbin{\cmdkl{\uplus}}}
\knowledgenewrobustcmd\prodstruct{\mathbin{\cmdkl{\times}}}
\knowledgenewrobustcmd\powstruct[2]{\cmdkl{#1^{#2}}}
\knowledgenewrobustcmd\iterstruct[2]{#1^{\cmdkl{#2}}}
\knowledgenewrobustcmd\prodpres{\mathbin{\cmdkl{\underline{\times}}}}

\knowledgenewrobustcmd\Fin[1][\sigma]{\cmdkl{\mathrm{Fin}_{#1}}}
\knowledgenewrobustcmd\Aut[1][\sigma]{\cmdkl{\mathrm{Aut}_{#1}}}
\knowledgenewrobustcmd\FinPres[1][\sigma]{\cmdkl{\mathcal{F\!in}^{\smash{\textsf{pres}}}_{#1}}}
\knowledgenewrobustcmd\AutPres[1][\sigma]{\cmdkl{\mathcal{Aut}^{\smash{\textsf{pres}}}_{#1}}}

\newrobustcmd{\classStruct}{\mathcal{Cls}}

\newrobustcmd{\HomNoKl}[2]{\mathcal{H\!om}(#1,\,#2)}
\knowledgenewrobustcmd\Hom[2]{\cmdkl{\HomNoKl{#1}{#2}}}
\knowledgenewrobustcmd\HomFin[1]{\cmdkl{\HomNoKl{\mathrm{Fin}}{#1}}}
\knowledgenewrobustcmd\HomAut[1]{\cmdkl{\HomNoKl{\mathrm{Aut}}{#1}}}
\knowledgenewrobustcmd\HomAll[1]{\cmdkl{\HomNoKl{\mathrm{All}}{#1}}}

\newrobustcmd{\HomRegNoKl}[2]{\mathcal{H\!om}^{\smash{\textsf{reg}}}(#1,\,#2)}
\knowledgenewrobustcmd\HomReg[2]{\cmdkl{\HomRegNoKl{#1}{#2}}}
\knowledgenewrobustcmd\HomRegAut[1]{\cmdkl{\HomRegNoKl{\mathrm{Aut}}{#1}}}

\knowledgenewrobustcmd\link[1]{\cmdkl{\?L_{#1}}}
\knowledgenewrobustcmd{\zigzag}[2]{\cmdkl{\?Z_{#2}^{(#1)}}}
\knowledgenewrobustcmd{\transitiveTournament}[1]{\cmdkl{\?T_{#1}}}
\knowledgenewrobustcmd{\pathGraph}[1]{\cmdkl{\?P_{#1}}}
\knowledgenewrobustcmd{\clique}[1]{\cmdkl{\?K_{#1}}}
\newcommandPIE{\omegaClique}{\withkl{\kl[\omegaClique]}{\cmdkl{\?K_{<\omega}#1#2#3}}}
	\knowledge{\omegaClique}{notion}
\newcommandPIE{\omegaCliquePres}{\withkl{\kl[\omegaCliquePres]}{\cmdkl{\+K_{\!<\omega}#1#2#3}}}
	\knowledge{\omegaCliquePres}{notion}

\knowledgenewrobustcmd\projHom[1]{\cmdkl{\pi_{#1}}} 
\knowledgenewrobustcmd{\FederVardi}[1]{\cmdkl{\mathfrak{U}(#1)}} 
\knowledgenewrobustcmd\Myc{\mathop{\cmdkl{\mathfrak{M}}}} 
\knowledgenewrobustcmd\MycInf{\mathop{\cmdkl{\?M}}} 

\knowledgenewrobustcmd{\equalLength}{\mathrel{\cmdkl{\approx_{\textrm{len}}}}}
\knowledgenewrobustcmd{\prefix}{\mathrel{\cmdkl{\preccurlyeq_{\textrm{pref}}}}}
\knowledgenewrobustcmd{\lastLetter}[1]{\cmdkl{\+l_{#1}}}
\knowledgenewrobustcmd{\univStructSynchronous}[1]{\cmdkl{\?{#1^*}}}
\knowledgenewrobustcmd{\signatureSynchronous}[1]{\cmdkl{\sigma_{#1}^{\smash{\textrm{sync}}}}}

\knowledgenewrobustcmd{\subsumed}{\mathrel{\cmdkl{\sqsubseteq}}}
\knowledgenewrobustcmd{\LatticeGuessFunctions}[2]{\cmdkl{\langle \pset{#2}^{#1},\, \subsumed \rangle}}
\knowledgenewrobustcmd{\topLatticeGuessFunctions}[1]{\cmdkl{\Lambda_{#1}}}

\newcommandPIE{\HCOperator}{\withkl{\kl[\HCOperator]}{\cmdkl{\mathcal{H\!C}#1#2#3}}}
	\knowledge{\HCOperator}{notion}
\knowledgenewrobustcmd{\HCFixpoint}[2]{\cmdkl{H^{\,*}_{\smash{#1,#2}}}}

\knowledgenewrobustcmd{\unaryType}[2]{\cmdkl{\mu_{#2}(#1)}}
\knowledgenewrobustcmd{\structOfUnaryType}[1]{\cmdkl{\?1_{#1}}}
\knowledgenewrobustcmd{\ConstrUndecHom}[1]{#1^{\cmdkl{\star}}}
\knowledgenewrobustcmd{\rightquotient}[2]{#1\mathbin{\cmdkl{/}}#2}
\knowledgenewrobustcmd{\restrConnected}[1]{\cmdkl{\restr{#1}{\textrm{conn}}}}

\knowledgenewrobustcmd{\itemDTFinDual}{\cmdkl{\text{\normalfont{(\textsc{dt})$_\textsf{fin-dual}$}}}}
\knowledgenewrobustcmd{\itemDTHomDec}{\cmdkl{\text{\normalfont{(\textsc{dt})$_\textsf{hom-dec}$}}}}
\knowledgenewrobustcmd{\itemDTHomRegDec}{\cmdkl{\text{\normalfont{(\textsc{dt})$_\textsf{hom-reg-dec}$}}}}
\knowledgenewrobustcmd{\itemDTEqual}{\cmdkl{\text{\normalfont{(\textsc{dt})$_\textsf{equal}$}}}}
\knowledgenewrobustcmd{\itemDTFirstOrder}{\cmdkl{\text{\normalfont{(\textsc{dt})$_\textsf{first-order}$}}}}

\knowledgenewrobustcmd{\RAT}{\cmdkl{\ensuremath{\textsc{Rat}}}}%
\knowledgenewrobustcmd{\DRAT}{\cmdkl{\ensuremath{\textsc{DRat}}}}%
\knowledgenewrobustcmd{\AUT}{\cmdkl{\ensuremath{\textsc{Aut}}}}%
\knowledgenewrobustcmd{\SYNC}{\cmdkl{\ensuremath{\textsc{Sync}}}}
\knowledgenewrobustcmd{\REC}{\cmdkl{\ensuremath{\textsc{Rec}}}}%
\knowledgenewrobustcmd\kREC[1][k]{\cmdkl{\ensuremath{#1\textsc{-Rec}}}}
\knowledgenewrobustcmd\kPROD[1][k]{\cmdkl{\ensuremath{#1\textsc{-Prod}}}}

\knowledgenewrobustcmd{\Id}{\cmdkl{\+{I\mkern-1.5mu d}}} 
\newrobustcmd{\bSymb}{\textcolor{cBlue}{b}}
\newrobustcmd{\rSymb}{\textcolor{cRed}{r}}

\input{kl/basic.kl}
\input{kl/complexity.kl}
\input{kl/automatic-structures.kl}
\input{kl/colouring.kl}
\input{kl/concrete-graphs.kl}
\input{kl/decision-problems.kl}
\input{kl/duality.kl}
\input{kl/first-order.kl}
\input{kl/homomorphism.kl}
\input{kl/misc.kl}
\input{kl/relations.kl}
\input{kl/structures.kl}
\input{kl/turing-machine.kl}

\begin{document}

\maketitle

\begin{abstract}
    The field of constraint satisfaction problems (CSPs) studies homomorphism problems between relational structures where the target structure is fixed. Classifying the complexity of these problems has been a central quest of the field, notably when both sides are finite structures. In this paper, we study the generalization where the input is an automatic structure—potentially infinite, but describable by finite automata.

    We prove a striking dichotomy: homomorphism problems over automatic structures are either decidable in non-deterministic logarithmic space (NL), or undecidable. We show that structures for which the problem is decidable are exactly those with finite duality, which is a classical property of target structures asserting that the existence of a homomorphism into them can be characterized by the absence of a finite set of obstructions in the source structure. Notably, this property precisely characterizes target structures whose homomorphism problem is definable in first-order logic, which is well-known to be decidable over automatic structures. 

    We also consider a natural variant of the problem. While automatic structures are finitely describable, homomorphisms from them into finite structures need not be. This leads to the notion of ``regular homomorphism'', where the homomorphism itself must be describable by finite automata. Remarkably, we prove that this variant exhibits the same dichotomy, with the same characterization for decidability.
\end{abstract}

\medskip
\case{Links.} This paper contains internal hyperlinks: clicking on a "notion@@notice"
leads to its \AP""definition@notion@notice"". 

\section{Introduction}
\label{sec:dichotomy-introduction}

\subparagraph*{Constraint Satisfaction Problems.}
The homomorphism problem takes two finite "relational structures" over the same "signature"%
---for instance two directed graphs---and asks whether there is a "homomorphism" from the former to the latter. This problem provides a natural framework to encode many other problems in computer 
science. For instance, whether a directed graph is "$k$-colourable" or whether it
does \textbf{not} admit any directed path of size $k$ can both be encoded as homomorphism problems. More involved examples include encoding solving Sudoku grids and satisfiability of 3-SAT formulas.\footnote{This is folklore, see "eg" \anonymized{\cite[\S~I.3, pp.~39-42]{morvan2025thesis}}.}
In all four cases, the entry of the problem is encoded as the left-hand side of the homomorphism problem, while the constraints of the problem are encoded in the right-hand side.
Hence, the field of \emph{constraint satisfaction problems} studies the complexity of the 
homomorphism problem when its right-hand side, called \AP""target structure"", is fixed.
The input to such a problem is called \AP""source structure"",
and the problem can always be solved in "NP": guess a function, and then check that it is a "homomorphism". The examples of $3$-colouring or $3$-SAT prove that it can be "NP"-complete.
However, not all such problems are hard: the (non-)existence of a $k$-path in a directed graph is for instance solvable in polynomial time.

Classifying the complexity of these problems as a function of the "target structure" has hence been 
the central quest of the field, which culminated in 2017 when Bulatov
\cite[Theorem~1]{Bulatov2017DichotomyCSPs} and Zhuk \cite[Theorem~1.4]{Zhuk2020CSPDichotomy} 
independently proved the so-called ``dichotomy theorem'', conjecture 20 years earlier by Feder and 
Vardi \cite[\S~2, ``Dichotomy question'']{FederVardi1998ComputationalStructure}.
It states that every homomorphism problem (over finite structures and whose "target structure" is 
finite) is either in "P" or "NP"-complete.%
\footnote{In other words, there is no "NP"-intermediate problem among these problems.}
This result extends a result of Schaefer dating from 1978, who proved 
a similar statement for problems over the Boolean domain---meaning that the "target structure" can have only two elements \cite[Theorem~2.1]{Schaefer1978ComplexitySatisfiability}.

But beyond "P" and "NP", the complexity of the "homomorphism" problem can reach some
surprisingly low and varied complexities.
Even for Boolean domains,
Allender, Bauland, Immerman, Schnoor and Vollmer extended Schaefer's theorem to
prove that every not-so-easy problem---meaning that it is not solvable in "coNLogTime"---is
complete for one class among "NP", "P", "modL", "NL" or "L" under "AC0"-reductions
\cite[Theorem~3.1]{AllenderBaulandImmermanSchnoorVollmer2009Schaefer}.

For non-Boolean structures, the landscape becomes even more complex.
For instance, the non-existence of a path of size $k$ is actually defined in terms of
the "source structure" not containing some sort of forbidden shape, called "obstruction".
"Target structures" whose homomorphism problem share this property are said to  
have "finite duality", and their homomorphism problem is always definable in first-order logic---which forms a strict subclass of problems solvable in deterministic logspace, see "eg" \cite[Theorems~3.1\& 13.1, pp.~45\& 203]{Immerman1998DescriptiveComplexity}.
This notion will play an import role in this paper.
Note that Atserias proved the converse implication to this result \cite[Corollary 4]{Atserias2008DigraphColoring}, see \Cref{prop:atserias},%
\footnote{This result was followed in the same year
by Rossman's theorem, that subsumes it \cite[Theorem~1.7]{Rossman2008Homomorphism}.}
and Larose and Tesson proved a (most useful) dichotomy theorem for "structures" with low complexity.
\AP\begin{proposition}[""Larose-Tesson theorem"" {\cite[Theorem~3.2]{LaroseTesson2009UniversalAlgebraCSP}}]
	Let $\?B$ be a "finite $\sigma$-structure". If $\?B$ does not have "finite duality",
	then $\HomFin{\?B}$ is "L"-hard under "first-order reductions".
\end{proposition}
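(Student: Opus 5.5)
This is the Larose–Tesson theorem, a cited result, so no argument from the present paper is needed. The plan is nonetheless to reconstruct the proof from the algebraic approach, in order to expose its structure. We may assume $\?B$ is a core: replacing $\?B$ by its core changes neither $\HomFin{\?B}$ nor whether $\?B$ has finite duality. We may also add the singleton unary relations $\{b\}$ for $b \in B$. For a core, the resulting structure $\?B^c$ is log-space (indeed first-order) equivalent to $\?B$, since the constants can be simulated by gluing a copy of $\?B$ onto the instance.

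\textbf{Step one: obtain a finite relation that is not FO-definable inside the polymorphism clone.} By Atserias' result (\Cref{prop:atserias}), and more precisely its converse form, $\?B$ lacks finite duality if and only if $\HomFin{\?B}$ is not first-order definable. The plan is to combine this with the following algebraic characterisation of first-order definable CSPs (Larose–Loten–Tardif): a core $\?B$ has finite duality if and only if its idempotent polymorphism algebra $\mathbb{A}$ generates a variety with no nontrivial ``order obstruction''. Concretely, the failure of finite duality is witnessed by one of the following.
\begin{enumerate}
\item[(a)] A two-element subquotient of $\mathbb{A}$ on which every polymorphism acts as a projection or as an affine operation. Equivalently, $\mathbb{A}$ fails to omit tame-congruence-theory types $1$ or $2$.
\item[(b)] A subquotient carrying a semilattice or lattice operation together with a pp-definable relation encoding a ``non-FO'' order. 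The archetype is $\?B$ pp-constructing a structure containing the two-element set with the order relation $\le$ together with both constants.
\end{enumerate}

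\textbf{Step two: reduce from reachability to the CSP in each case.} In case (a) the reduction is from affine or unrestricted problems. A two-element quotient with only projections yields NP-hardness. An affine one yields $\mathrm{Mod}_p\mathrm{L}$-hardness. Both imply L-hardness under first-order reductions. In case (b), $\?B$ pp-interprets (with constants) the structure $(\{0,1\}; \le, \{0\}, \{1\})$. Its CSP is exactly undirected/directed $st$-(non)connectivity in a form that is L-hard: reachability in a directed path or undirected forest, as in the Cook–McKenzie problem ORD or undirected reachability. The reduction sends each instance edge $u\to v$ to a constraint $x_u \le x_v$, pins $x_s=1$ and $x_t=0$, and is clearly first-order. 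Since pp-interpretations with constants give first-order reductions between CSPs, composing these steps gives L-hardness of $\HomFin{\?B}$.

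\textbf{Main obstacle.} The crux is the algebraic dichotomy of step one, namely showing that a core whose idempotent polymorphism algebra avoids both witnesses has finite duality. I would obtain it by showing that if no two-element order or affine or trivial subquotient exists, then $\mathbb{A}$ has near-unanimity and bounded-width-style behaviour strong enough to force tree duality with bounded obstructions. The Larose–Loten–Tardif characterisation via ``$\?B^2$ dismantles to its diagonal'' is the natural tool here: if dismantling fails, the non-dismantlable part provides the order subquotient. Turning the failure of dismantlability into a pp-definition of $\le$ on a two-element set is the delicate step. I would try a minimal non-dismantlable substructure and argue, by minimality, that it collapses to a two-element ordered quotient.
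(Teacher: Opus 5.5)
Citing the result is legitimate, since the paper itself only cites it. However, the reconstruction you sketch has a genuine gap, and it is not the Larose--Tesson argument.

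\textbf{The case split in Step one is not exhaustive.} Also, the ``order obstruction'' characterisation you attribute to Larose, Loten and Tardif is not theirs. Their criterion is combinatorial: $\mathbf{B}^2$ dismantles to its diagonal, equivalently \Cref{prop:characterization-finite-duality-path-projections}.

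Take $\mathbf{K}_2$, i.e.\ $2$-colouring. It has no finite duality, and not even tree duality (\Cref{ex:Feder-Vardi-P2}).
\begin{itemize}
\item Case (a) does not apply. The idempotent polymorphisms of $\mathbf{K}_2$ include the Boolean majority and the minority $x\oplus y\oplus z$; both are self-dual, hence preserve $\neq$. So the variety they generate is arithmetical and admits only type $3$.
\item Case (b) does not apply either. A pp-interpretation with constants induces a clone homomorphism, which sends the minority to a Maltsev polymorphism. But $(\{0,1\};\le,\{0\},\{1\})$ has no Maltsev polymorphism: applying one to the pairs $(0,0),(0,1),(1,1)$ yields $(1,0)$.
\end{itemize}
The same happens for $\mathbf{P}_2$. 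It has tree duality (\Cref{prop:2-path-tree-duality}), hence omits types $1,2$. It also has a Maltsev polymorphism (minority on $\{0,1\}$, shifted to $\{1,2\}$). Yet it has no finite duality.

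Conceptually, (a) and (b) give NP-, $\mathrm{Mod}_p\mathrm{L}$- or NL-hardness. So they cannot capture CSPs that are not FO-definable but lie in L, such as $2$-colouring. Those are exactly the cases where this theorem goes beyond the tame-congruence hardness results.

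\textbf{Your ``main obstacle'' plan fails for the same reason.}
\begin{itemize}
\item $\mathbf{K}_2$ has a near-unanimity polymorphism yet no tree duality.
\item $\mathbf{P}_2$ has tree duality yet critical obstructions of unbounded size (\Cref{ex:zigzag-defn}).
\item Neither pp-defines an order with constants, so failure of dismantlability cannot in general be turned into $\le$.
\end{itemize}

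\textbf{The missing idea} is to bypass the algebra and use the linked-projections criterion directly. This is the argument the paper attributes to Larose and Tesson and transfers to automatic structures in \Cref{lem:reduction-hom}. Your preliminary steps (passing to the core, adding constants) are fine. By \Cref{prop:finite-duality-unary-predicates}, $\sigma$ has a predicate of arity at least $2$.

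The reduction goes as follows. Given an undirected graph $G$ with vertices $s,t$:
\begin{itemize}
\item replace each edge by a $1$-link to obtain $\mathbf{A}$;
\item form $\mathbf{A}\times\mathbf{B}^2$;
\item for each $b_0$, put every $\langle s,b_0,b\rangle$ and every $\langle t,b,b_0\rangle$ into $P_{b_0}$.
\end{itemize}

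Correctness has two directions.
\begin{itemize}
\item A homomorphism to $\mathbf{B}^{\dagger}$ curries (\Cref{prop:currying-hom}) to some $F\colon\mathbf{A}\to\mathbf{B}^{\mathbf{B}^2}$ with $F(s)=\pi_1$ and $F(t)=\pi_2$. An $s$--$t$ path would then link $\pi_1$ and $\pi_2$, contradicting \Cref{prop:characterization-finite-duality-path-projections}.
\item Conversely, if $s$ and $t$ are disconnected, project onto the first coordinate of $\mathbf{B}^2$ on the component of $s$, and onto the second coordinate elsewhere.
\end{itemize}

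This is a first-order reduction from undirected $s$--$t$ non-connectivity to the homomorphism problem for $\mathbf{B}^{\dagger}$ over finite structures. Non-connectivity is L-hard under first-order reductions, because undirected reachability is and L is closed under complement. Finally, \Cref{prop:idempotent-core-preserves-csp-complexity} removes the constants. The argument is uniform and never asks which tame-congruence types occur.
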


\subparagraph*{Automatic Structures.}
The results mentioned here all deal with \textbf{finite} structures; the extension of constraint satisfaction problems to infinite structures
is both an active field of research and a hard task. One of the most frequent
way of finitely representing infinite structures is perhaps "via" "automatic structures", introduced by Hodgson in his Ph.D. thesis
\cite{Hodgson1976PhD}.\footnote{See "eg" \anonymized{\cite[\S~VII.3.1, p.~234]{morvan2025thesis}} for the rich history of independent redefinitions of the notion.} 
While being potentially infinite, their representation via finite-state automata allow one
to decide model-checking of first-order logic over them \cite[Théorème~3.5]{Hodgson1983Decidabilite}. "Automatic structures" prove to be particularly interesting
given that they can actually encode various useful structure, including Presburger's arithmetic
$\tup{\N, +}$---not only that, but the most common way of proving the decidability
of first-order logic over $\tup{\N, +}$ is probably by proving that $\tup{\N, +}$ is an "automatic structure"---, or the configuration graphs of all Turing machines.

The frontier of decidability over "automatic structures" is unfortunately quite blurry:
for instance the isomorphism problem is in general undecidable \cite[Theorem~5.15]{BlumensathGradel2004FinitePresentations}---and in fact it is complete for the first level
of the analytical hierarchy \cite[Theorem~5.9]{KhoussainovNiesRubinStephan2007Automatic}---but becomes decidable when restricted to ordinals \cite[Theorem~5.3]{KhoussainovRubinStephan2005AutomaticLinearOrders}, see also \cite{FinkelTodorcevic2013AutomaticOrdinals,JainKhoussainovSchlichtStephan2019IsomorphismTreeAutomaticOrdinals} for extensions of these results.
Despite automatic groups having been extensively studied since the late 1980s---see \cite{Rees2022AutomaticGroups} for a recent account---, the isomorphism problems over this class
remains widely open.

Kuske and Lohrey extensively studied problems on automatic graphs, including the existence of an Eulerian path and Hamiltoninan path \cite{KuskeLohrey2010AutomaticGraphs}.
Concerning homomorphism problems, Köcher proved that whether an automatic graph is
"2-colourable"---or equivalently bipartite---is undecidable
\cite[Proposition~6.5]{Kocher2014AutomatischenGraphen}, and an analogue result was recently proved 
by Barceló, Figueira and Morvan for "$2$-regular colourability"
\cite[Theorem~4.4]{BarceloFigueiraMorvan2023SeparatingAutomatic}, which consists of asking
whether the graph is "2-colourable", with the extra constraint that the colouring itself
must be definable by an automaton.

For more background "automatic structures", see the survey \cite{Gradel2020AutomaticStructures} or
Blumensath's monograph \cite[Chapter~XII]{Blumensath2024MSOModelTheory}.

\subparagraph*{Contributions: Homomorphism Problems on Automatic Structures.}

Our main result is a dichotomy theorem for "homomorphism problems" when the "target structure" is fixed and finite, and the source is any "automatic structure".

\begin{restatable*}[\introinrestatable{Dichotomy Theorem for Automatic Structures}]{theorem}{DichotomyThmDichotomyAutomatic}
	\AP\label{thm:dichotomy-theorem-automatic-structures}
	Let $\?B$ be a finite "$\sigma$-structure". The following are equivalent:
	\begin{description}
		\itemAP[\introinrestatable\itemDTFinDual.] $\?B$ has "finite duality";
		\itemAP[\introinrestatable\itemDTHomDec.] $\HomAut{\?B}$ is decidable;
		\itemAP[\introinrestatable\itemDTHomRegDec.] $\HomRegAut{\?B}$ is decidable;
		\itemAP[\introinrestatable\itemDTEqual.] $\HomAut{\?B} = \HomRegAut{\?B}$, "ie" for any "automatic presentation" $\+A$ of a 
		"$\sigma$-structure" $\?A$, there is a "homomorphism" from $\?A$ to $\?B$ "iff" 
		there is a "regular homomorphism" from $\+A$ to $\?B$;
		\itemAP[\introinrestatable\itemDTFirstOrder.] $\HomAll{\?B}$ has "uniformly first-order definable homomorphisms".\footnote{The notion of "uniformly first-order definable homomorphisms" is defined in \Cref{sec:uniformly-first-order-definable-hom}.}
	\end{description}
	Moreover, when $\HomAut{\?B}$ and $\HomRegAut{\?B}$ are undecidable, they are "coRE"-complete
	and "RE"-complete, respectively. When they are decidable, they are "NL".
\end{restatable*}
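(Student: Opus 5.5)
We prove the equivalences by a cycle of implications through a common construction. The easy direction starts from \itemDTFinDual. If $\?B$ has finite duality with obstruction set $\+F$, then by the classical correspondence (Atserias' theorem, \Cref{prop:atserias}, together with the standard fact that finite duality implies first-order definability) there is a first-order sentence $\varphi_{\?B}$ such that a finite structure maps to $\?B$ iff it satisfies $\varphi_{\?B}$. We first extend this to arbitrary (possibly infinite) structures $\?A$. We use compactness: $\?A \homto \?B$ iff every finite substructure of $\?A$ maps to $\?B$, since $\?B$ is finite. This holds iff no obstruction of $\+F$ maps into $\?A$, i.e.\ iff $\?A \FOmodels \neg\exists$ (canonical conjunctive queries of the obstructions).

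To obtain \itemDTFirstOrder, we also need the homomorphism itself to be uniformly first-order definable. Here we use the known fact that a target with finite duality admits a duality whose obstructions are trees, and that membership of $a$ in each colour class can be decided by a bounded-radius first-order formula. Concretely, we take $\?B$'s core and use the Nešetřil--Tardif description of duals to define $h(a)$ by which tree-shaped patterns of bounded depth are rooted at $a$. Applied to an automatic presentation, these formulas define regular sets by Hodgson's theorem. This yields a regular homomorphism, so \itemDTEqual{} holds. It also gives decidability of both problems via first-order model checking. For the NL bound, $\varphi_{\?B}$ is a fixed existential-positive sentence, whose satisfaction on an automatic presentation reduces to nonemptiness of a fixed-arity product automaton. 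That nonemptiness is reachability in a graph polynomial in the input, hence in NL.

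The hard direction is $\neg$\itemDTFinDual{} implies undecidability of both problems, with \itemDTEqual{} failing. Upper bounds come first. $\HomRegAut{\?B}$ is RE, by enumerating automata and checking the homomorphism condition, which is decidable by first-order model checking. $\HomAut{\?B}$ is coRE by compactness: enumerate finite substructures (words up to length $n$) and reject when one fails to map. For hardness we reduce from a Turing machine halting problem. Given a machine $M$, we build an automatic structure $\ConstrUndecHom{\?A_M}$ from the configuration graph of $M$ (automatic), glued with gadgets. The aim is that it maps (regularly) to $\?B$ iff $M$ does not halt, and that in the halting case it contains a finite subpart witnessing no homomorphism.

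The gadgets are where non-finite-duality is used. Since $\?B$ lacks finite duality, there are arbitrarily large critical obstructions, by Larose--Loten--Tardif/Rossman: these are finite structures not mapping to $\?B$ whose proper substructures do map. One needs instead a \emph{uniform, automatic} family, e.g.\ "path-like" obstructions of unbounded length obtained from the failure of bounded treewidth duality. We would try to extract from $\?B$ an infinite regular "link" gadget $\link{n}$ (as the paper's macros \link{}, \zigzag{}{} suggest) that propagates a constraint along a computation path and yields a contradiction exactly when the path reaches a halting configuration. Getting this propagation gadget from the abstract hypothesis "no finite duality", and making it expressible by automata, is the main obstacle. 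I would first attempt it by adapting the Larose--Tesson algebraic/L-hardness construction, which encodes reachability into $\HomFin{\?B}$, and lift that first-order reduction to automatic structures.

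For the RE-hardness of the regular version, we use a second reduction from halting with a recursively inseparable pair. This makes a regular homomorphism exist iff $M$ halts, whereas an arbitrary homomorphism exists regardless. That simultaneously refutes \itemDTEqual.
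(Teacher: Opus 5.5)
\textbf{The gap is the hard direction.} You never give a reduction proving undecidability when $\mathbf{B}$ lacks finite duality. You name the propagation gadget as the main obstacle and leave it open. The heuristic you suggest for finding it, an automatic family of long obstructions coming from a failure of bounded treewidth duality, also points the wrong way: $\mathbf{P}_2$ has tree duality but not finite duality. In fact no family of obstructions is needed at all.

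The missing idea is the Larose--Loten--Tardif characterization: a core $\mathbf{B}$ has finite duality iff the projections $\pi_1,\pi_2$ are linked in $\mathbf{B}^{\mathbf{B}^2}$. With it, the Larose--Tesson reduction lifts directly to automatic structures:
\begin{itemize}
\item Given an automatic graph $\mathbf{G}$ and vertices $s,t$, replace each edge by a $1$-link to get $\mathbf{A}$.
\item Form $\mathbf{A}\times\mathbf{B}^2$. It is automatic because $\mathbf{B}$ is finite, so the fixed factor $\mathbf{B}^2$ serves as the uniform gadget.
\item Mark $(s,b,b')$ by $P_b$ and $(t,b,b')$ by $P_{b'}$, and ask for a homomorphism to $\mathbf{B}^\dagger$.
\end{itemize}
By currying, such a homomorphism is a map $\mathbf{A}\to\mathbf{B}^{\mathbf{B}^2}$ with $s\mapsto\pi_1$ and $t\mapsto\pi_2$. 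If $s,t$ are connected, it would link $\pi_1$ to $\pi_2$, which is impossible. If they are not, take $\pi_1$ on the component of $s$ and $\pi_2$ elsewhere. This gives coRE-hardness from non-connectivity in automatic graphs. The idempotent-core reduction then passes from $\mathbf{B}^\dagger$ back to $\mathbf{B}$.

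\textbf{The regular problem has a second gap.} Your recursively inseparable pair construction is never specified, and nothing in it explains why \emph{regularity} of a homomorphism would track halting. What is needed is a source problem about regular separation. The paper uses regular unconnectivity in automatic graphs: is there a regular union of connected components containing $s$ but not $t$? This is RE-hard via the regular reachability problem for linear Turing machines of Barcel\'o, Figueira and Morvan. The same product construction then works in both directions:
\begin{itemize}
\item a regular homomorphism $F$ gives the regular separator $F^{-1}[\{g \mid g=\pi_1 \text{ or } g \text{ is linked to } \pi_1\}]$;
\item a regular separator $L$ gives the regular homomorphism equal to $\pi_1$ on $L$ and to $\pi_2$ off it.
\end{itemize}

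\textbf{The decidable side needs two repairs.} Your outline there is essentially right otherwise.
\begin{itemize}
\item The sentence you model-check is the \emph{negation} of an existential-positive sentence, so the NL bound needs $\mathrm{NL}=\mathrm{coNL}$.
\item Your claim that membership of $a$ in each colour class is given by a bounded-radius formula is exactly what has to be proved. The explicit Ne\v{s}et\v{r}il--Tardif duals of trees could supply it. The paper instead applies Atserias' theorem to $\mathbf{B}^\dagger$, making $F(a)=\{b\mid \text{some homomorphism } \mathbf{A}\to\mathbf{B} \text{ sends } a \text{ to } b\}$ uniformly definable. It then composes $F$ with a homomorphism $\mathfrak{U}(\mathbf{B})\to\mathbf{B}$ given by tree duality.
\end{itemize}
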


In the statement above, "regular homomorphisms" (and the $\HomRegAut{-}$ notation) refers
to "homomorphisms" from the presentation of an "automatic structure" to a finite structure
that can be defined themselves by finite-state automata.

\subparagraph*{Organization.}
We first study decidability in \Cref{sec:dichotomy-decidability}, and
start with an overview and the proof of the easy implications of this theorem
in \Cref{sec:dichotomy-overview}.
Since the decidability of $\HomAut{\?B}$ when $\?B$ has "finite duality"
is trivial, the rest of the section is dedicated to the problem $\HomRegAut{\?B}$.
In \Cref{sec:uniformly-first-order-definable-hom} we give a relatively succinct
logic-based proof of the decidability of $\HomAut{\?B}$, however the proof is
somewhat \emph{abstract}. In \Cref{sec:hyperedge-consistency} we give a more visual solution,
known as "hyperedge consistency algorithm@@automatic". It generalizes
the eponymous algorithm for "finite structures", that captures
the "homomorphism problem" when the "target structure" has "tree duality"---which is
a supclass of "finite duality".
However, the proof of correctness of our algorithm is non-trivial
and relies on providing a fine understanding of the behaviour of the algorithm
for "finite structures" in the special case of "target structures"
that have "finite duality".

We then move to the undecidability results---one for "homomorphisms", and another one for "regular homomorphisms"---in \Cref{sec:dichotomy-undecidability}.
The undecidability proof for "homomorphisms" follows from an adaptation of
Larose and Tesson's $L$-hardness proof \cite[\S~3]{LaroseTesson2009UniversalAlgebraCSP},
and the fact that reachability in automatic graphs in undecidable.
The case of "regular homomorphisms" is similar much more technical:
the full sequence of reductions to prove undecidability goes as follows:
\begin{itemize}
	\item we start from the "regular reachability problem" for "linear Turing machines",
	which was proven to be undecidable by Barceló, Figueira and Morvan \cite[Lemma~4.2]{BarceloFigueiraMorvan2023SeparatingAutomatic};
	\item we reduce it to an intermediate problem called "regular unconnectivity in automatic graphs";
	\item and then, adapting Larose and Tesson's proof, we reduce the latter problem 
		to $\HomRegAut{\?B}$ when $\?B$ does not have "finite duality".
\end{itemize}
We conclude in \Cref{sec:dichotomy-discussion}, by extending our theorem to a larger class of 
structures, and discussing conjectures and related problems.

\subparagraph*{Related work: Regular Homomorphisms}
This paper studies "regular homomorphisms" as a generalization of
the "regular $k$-colourings" of \cite{BarceloFigueiraMorvan2023SeparatingAutomatic}.
The related notion of ``regular isomorphism''
(under the name ``automatic isomorphism'') was studied in
\cite[Definition~6.10]{KhoussainovNerode1995AutomaticPresentations}. They showed that for any
"$\sigma$-structure" $\?A$,
there are either zero, one, or $\omega$ many "automatic presentations" of $\?A$
up to ``regular isomorphism'' \cite[Theorem~6.8]{KhoussainovNerode1995AutomaticPresentations}.

\subparagraph*{Related work: Homomorphisms Problems.}
Let us mention first that other work deal with homomorphisms problems on finite structures. 
Freedman showed that a variant of 3-SAT was undecidable if formulas
were allowed to have atoms over pairs of integers \cite{Freedman1998KSat},
and this work was extended by Dantchev and Valencia \cite{DantchevValencia2005ComputationalLimits}.
In a similar setting, it was proven by Chen that 2-SAT and Horn-SAT are both decidable
\cite{Chen2005Periodic}. 
Klin, Kopczyński, Ochremiak and Toruńczyk then proved that
most homomorphism problems over a finite "target structure" 
becomes (only) exponentially harder if the "source structure" is allowed to be
a ``definable structure with atoms'' over the atoms $\tup{\Q, <}$
\cite[Theorem~3]{KlinKopczynskiOchremiakTorunczyk2015LocallyFiniteCSP}.
This was quickly followed by a result of Klin, Lasota, Ochremiak and Toruńczyk 
proving that when both structures are part of the input and both
are definable over the atoms $\tup{\N, =}$, then the existence of a homomorphism is
decidable, but the existence of a \emph{definable homomorphism} is not 
\cite[Theorem~3]{KlinLasotaOchremiakTorunczyk2016HomomorphismProblems}.
Note how this result contrasts with our dichotomy (\Cref{thm:dichotomy-theorem-automatic-structures}) as in our case, solving the "homomorphism problem" or
the "regular homomorphism problem" actually turns out to be equivalent.
Finally, we refer the reader to Bodirsky's survey \cite{Bodirsky2008Survey}---as
well as his more recent works---for "homomorphism problems" with infinite "target structures".
Lastly, let us point out that the brief description we gave of "CSPs" is incomplete
and biased towards the prism of computatability/complexity.
Other work on the subject include "eg" finding succinct representations
of "homomorphisms" when they exist \cite{BerkholzVinallSmeeth2023Dichotomy},
or the study of \emph{promise} constraint satisfaction problems
\cite{KrokhinOprsal2022PCSP}.
\section{Preliminaries}
\label{sec:dichotomy-preliminaries}

We advise the reader to only skim over this section, and to
go back to it when necessary using the numerous internal hyperlinks.

\subparagraph*{Notations.}
We let $\Sigma$ denote any finite alphabet and $\intro*\2$ be the alphabet $\{0,1\}$.
Given a set $X$, \AP$\intro*\pset{X}$ denotes its powerset and \AP$\intro*\psetp{X} \defeq \pset{X} \smallsetminus \set{\varnothing}$. The set \AP$\intro*\intInt{i,j}$, for $i,j \in N$,
denotes the set of integers between $i$ and $j$ (inclusive). The restriction of a function $f$
to a subset $X$ of its domain is denoted by \AP$\intro*\restr{f}{X}$.

\subsection{Relational Structures}

We assume the reader to be familiar with the elementary notions on "relational structures",
see otherwise \Cref{apdx:relational-structures}.

Given two "structures" $\?A$ and $\?B$, we define the structure \AP$\intro*\powstruct{\?B}{\?A}$ as follows: its domain are "homomorphisms" $\?A \to \?B$,
and for every "predicate" $\+R$ of arity $k$, for any homomorphism $f_1,\dotsc,f_k$,
we have $\langle f_1,\dotsc,f_k\rangle \in \+R(\powstruct{\?B}{\?A})$ when 
for every $\langle a_1,\dotsc,a_k\rangle \in \+R(\?A)$, we have 
$\langle f_1(a_1), \dotsc, f_k(a_k) \rangle \in \+R(\?B)$. 
The following statement is proven in \Cref{apdx-prop:currying-hom}.

\begin{restatable}[Folklore: Currying Homomorphisms]{proposition}{curryingHom}
	\AP\label{prop:currying-hom}
	Given "structures" $\?A$, $\?B$ and $\?C$, if $f\colon \?A\prodstruct \?B \to \?C$
	is a "homomorphism", then $F\colon \?A \to \powstruct{\?C}{\?B}$,
	defined by $a \mapsto (b \mapsto f(a,b))$, is a "homomorphism".
	In fact, this mapping $f \mapsto F$ is a bijection
	between "homomorphisms" $\?A\prodstruct \?B \to \?C$
	and "homomorphisms" $\?A \to \powstruct{\?C}{\?B}$.
\end{restatable}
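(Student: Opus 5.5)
The plan is to check the two claims in turn: first that $F$ is a well-defined homomorphism, then that $f \mapsto F$ is a bijection. Throughout, we use only the definitions of the product structure $\?A \prodstruct \?B$ (tuples are related componentwise) and of the exponential $\powstruct{\?C}{\?B}$ given just above.

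\textbf{Well-definedness.} Fix $a$ in the domain of $\?A$. We must show that $F(a) = (b \mapsto f(a,b))$ is a homomorphism $\?B \to \?C$, so that it lies in the domain of $\powstruct{\?C}{\?B}$. Let $\langle b_1,\dotsc,b_k\rangle \in \+R(\?B)$. The natural move is to consider the diagonal tuple $\langle a,\dotsc,a\rangle$ and to note that $\langle (a,b_1),\dotsc,(a,b_k)\rangle$ then lies in $\+R(\?A\prodstruct\?B)$, whence $\langle f(a,b_1),\dotsc,f(a,b_k)\rangle \in \+R(\?C)$.

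This is where the one real subtlety lies. The argument needs $\langle a,\dotsc,a\rangle \in \+R(\?A)$, which fails in general, so the naive reading of the statement is false when $\?A$ is not reflexive. For instance, take $\?A$ to be a single vertex with no edge. Then $\?A \prodstruct \?B$ has no edges, so every function into $\?C$ is a homomorphism. Yet $F(a)$ is an arbitrary function $\?B \to \?C$, which need not be a homomorphism. I would therefore resolve this by reading the domain of $\powstruct{\?C}{\?B}$ as \emph{all} functions from the domain of $\?B$ to that of $\?C$. This is the standard definition of the exponential for relational structures, and the relation clause as written in the paper already only constrains tuples of functions. Under this reading, well-definedness is trivial.

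\textbf{Homomorphism property.} Let $\langle a_1,\dotsc,a_k\rangle \in \+R(\?A)$. We must show $\langle F(a_1),\dotsc,F(a_k)\rangle \in \+R(\powstruct{\?C}{\?B})$. So take any $\langle b_1,\dotsc,b_k\rangle \in \+R(\?B)$. By the definition of the product, $\langle (a_1,b_1),\dotsc,(a_k,b_k)\rangle \in \+R(\?A\prodstruct\?B)$. Since $f$ is a homomorphism, $\langle f(a_1,b_1),\dotsc,f(a_k,b_k)\rangle \in \+R(\?C)$, that is, $\langle F(a_1)(b_1),\dotsc,F(a_k)(b_k)\rangle \in \+R(\?C)$. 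This is exactly the defining condition of $\+R(\powstruct{\?C}{\?B})$.

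\textbf{Bijection.} Define the inverse $G \mapsto g$ by uncurrying, $g(a,b) \defeq G(a)(b)$. The two maps are mutually inverse at the level of set functions; this is plain currying. It remains to show that $g$ is a homomorphism whenever $G$ is. Take $\langle (a_1,b_1),\dotsc,(a_k,b_k)\rangle \in \+R(\?A\prodstruct\?B)$. By the definition of the product, $\langle a_i\rangle_i \in \+R(\?A)$ and $\langle b_i\rangle_i \in \+R(\?B)$. Hence $\langle G(a_i)\rangle_i \in \+R(\powstruct{\?C}{\?B})$, and applying the defining condition of that relation to the tuple $\langle b_i\rangle_i$ yields $\langle G(a_i)(b_i)\rangle_i \in \+R(\?C)$, as required.

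All of this is routine unfolding of definitions. The only point needing care is the choice of the domain of the exponential discussed above: with domain \emph{all} functions, both directions go through verbatim.
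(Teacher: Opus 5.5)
Your argument is the paper's: the same unfolding of the product and exponential definitions shows $F$ is a homomorphism, the inverse is the same uncurrying $g(a,b) = G(a)(b)$ with the same one-line check, and both treat mutual inverseness as routine. Your caveat is also correct and exposes a real gap, since the paper never checks that $F(a)$ lies in the domain of the exponential. With that domain taken to be homomorphisms $\mathbf{B} \to \mathbf{C}$, as in the paper's preliminaries, your edgeless one-point $\mathbf{A}$ breaks both well-definedness and the bijection; taking all functions, as you do, repairs the statement without affecting its later uses, because every element of a link $\mathbf{L}_n$ carries all constant tuples and is therefore sent to a genuine homomorphism, so linkedness of $\pi_1$ and $\pi_2$ is unchanged.
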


\subparagraph*{First-order logic.}
We assume that the reader is familiar with first-order logic.
The semantics of a first-order formula $\phi(x_1,\hdots,x_n)$ over a
"$\sigma$-structure" $\?A$ is denoted by \AP$\intro*\semFO{\phi(x_1,\hdots,x_n)}{\?A}$:
it is a subset of $A^k$. The fact that a tuple of points
$\tup{a_1,\hdots,a_k}$ belongs to this set
is denoted by \AP$\?A, \tup{a_1,\hdots,a_k} \intro*\FOmodels \phi(x_1,\hdots,x_k)$.

\begin{figure}
	\centering
	\begin{tikzpicture}
		\input{tikz/3-clique}
	\end{tikzpicture}
	\hspace{8em}
	\begin{tikzpicture}
		\input{tikz/2-transitive-tournament}
	\end{tikzpicture}
	\hspace{8em}
	\begin{tikzpicture}
		\input{tikz/2-path-graph}
	\end{tikzpicture}
	\caption{\label{fig:examples-graphs}The graphs $\clique{3}$, $\transitiveTournament{2}$
	and $\pathGraph{2}$ (from left to right).}
\end{figure}

\subparagraph*{Directed Graphs.}
Directed graphs can be seen as "$\sigma$-structure"
when $\sigma$ is the \AP""graph signature"", which contains a unique predicate $\+E_{(2)}$.
Let $k \in \N$. We define the \AP""$k$-clique"" $\intro*\clique{k}$ to be the
directed graph over $\intInt{1,k}$ with every edge $\tup{i,j}$ for $i \neq j$.
The \AP""$k$-transitive tournament"" $\intro*\transitiveTournament{k}$
and ""$k$-path"" $\intro*\pathGraph{k}$ both have $\intInt{0,k}$ as their domain,
and have an edge from $i$ to $j$ "iff" $i < j$ and $j = i + 1$, respectively.
See \Cref{fig:examples-graphs}.

\subsection{Automatic Relations and Structures}

\subparagraph*{Automatic Relations.}
Let $k\in \Np$. Given words $u_1,\hdots,u_k \in \Sigma^*$,
we let $u_1 \intro*\convol u_2 \reintro*\convol \hdots \reintro*\convol u_k$ denote the word
of size $\max(|u_1|,\hdots,|u_k|)$ over $(\Sigma\cup{\pad})^k \smallsetminus \set{(\pad, \hdots, \pad)}$ obtained by writing each $u_i$ on a left-aligned horizontal tape,
adding padding symbols $\pad$ at the end of the shorter words,
and then reading $k$-tuples of letters from left to right.
Here, the padding symbol $\intro*\pad$ denotes a fixed symbol lying outside $\Sigma$.

\begin{figure}
	\centering
	\begin{tikzpicture}[shorten >= 1pt, node distance = 1.8cm, on grid, baseline]
		\node[state, initial left, accepting] (q0) {}; 
		\path[->]
			(q0) edge[loop above] node[font=\scriptsize] {$\pair{a}{a}, \pair{\pad}{a}\text{ for }a\in \Sigma$} (q0);
	\end{tikzpicture}
	\hspace{7em}
	\begin{tikzpicture}[shorten >= 1pt, node distance = 1.8cm, on grid, baseline]
		\node[state, initial left] (q0) {}; 
		\node[state, accepting, right=of q0] (q1) {}; 
		\path[->]
			(q0) edge[loop above] node[font=\scriptsize] {$\pair{a}{a}\text{ for }a\in \Sigma$} (q0);
		\path[->]
			(q0) edge node[font=\scriptsize, below=1em] {$\pair{\pad}{a}\text{ for }a\in \Sigma$} (q1);
	\end{tikzpicture}
	\caption{
		\AP\label{fig:example-synchronous-automata}
		"Synchronous automata" "recognizing@@syncauto"
		the prefix relation (left)
		and $\set{\tup{u,ua} \mid u \in \Sigma^* \land a \in \Sigma}$ (right).
	}
\end{figure}

A (finite-state) \AP""synchronous automaton"" $\+A$ of arity $k \in \Np$ over $\Sigma$
is a finite-state automaton over $(\Sigma\cup{\pad})^k \smallsetminus \set{(\pad, \hdots, \pad)}$.
We say that $\+A$ accepts the $k$-tuple of words $\tup{u_1,\hdots,u_k}$ whenever
it accepts $u_1 \convol \hdots \convol u_k$.
It \AP""recognizes@@syncauto"" a relation $\+R \subseteq (\Sigma^*)^k$ when
$\+R$ is precisely the set of $k$-tuples of words it accepts.
Examples of such automata are given in \Cref{fig:example-synchronous-automata}.
A relation is \AP""automatic@@rel"" when it is "recognized@@syncauto" by some "synchronous automaton".

\subparagraph*{A Model-Theoretic Perspective on Automatic Relations.}
We define on $\Sigma^*$:
\begin{itemize}
	\itemAP a unary relation $\intro*\lastLetter{a}$ indicating that the last letter of a word is $a$,
	\itemAP a binary relation $\intro*\equalLength$ indicating that two words have the same length,
	\itemAP a binary relation $\intro*\prefix$ indicating that a word is a prefix of another.
\end{itemize} 
We denote by $\intro*\signatureSynchronous{\Sigma}$ the "signature" $\langle \langle\lastLetter{a}\rangle_{a \in \Sigma},\, \equalLength,\, \prefix \rangle$%
\footnote{For the sake of simplicity, we abusively use the same notations for
the "predicates" and their "interpretations@@predicate" in the "signature".} and
by \AP$\intro*\univStructSynchronous{\Sigma}$ the "$\signatureSynchronous{\Sigma}$-structure"
over $\Sigma^*$ where the "predicates" are "interpreted@@predicate" as above.

\begin{proposition}[{\cite[Theorems~1 \& 2]{EilenbergElgotShepherdson1969TapeAutomata}}]
	\AP\label{prop:automatic-first-order}
	If $\Sigma$ has at least two letters, then a relation over $\Sigma^*$
	is "automatic@@rel" "iff" it is "first-order definable" in
	$\univStructSynchronous{\Sigma}$.\footnote{When $\Sigma$ has a single letter,
	then the right-to-left implication holds, but not the converse one.}
\end{proposition}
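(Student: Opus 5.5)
The plan is to prove the two directions separately, with $|\Sigma|\ge 2$ used only for the right-to-left direction.

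\emph{Left to right.} Let $\phi(x_1,\dots,x_k)$ be a first-order formula over $\signatureSynchronous{\Sigma}$. I will build a "synchronous automaton" recognising $\semFO{\phi}{\univStructSynchronous{\Sigma}}$ by induction on $\phi$.

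For the atoms, I give automata directly. The relation $\lastLetter{a}(x)$ is recognised by checking that the last non-padding letter read is $a$. The relation $x \equalLength y$ holds exactly when no padding symbol occurs in $x \convol y$. The prefix relation is handled by the left automaton of \Cref{fig:example-synchronous-automata}. Equality is immediate.

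For the induction steps, I use three closure properties.
\begin{itemize}
	\item Boolean combinations: product construction for conjunction, and complementation after intersecting with the regular set of well-formed convolutions.
	\item Existential quantification: project away a track. The subtlety is that the projected word may end in columns that are padding on every remaining track. I fix this by making acceptance saturated under appending such columns, i.e.\ a state becomes accepting if some word over the projected-away letters leads from it to acceptance.
	\item Renaming, duplicating and adding dummy variables: cylindrification by a free track.
\end{itemize}

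\emph{Right to left.} Let $\+A$ be a "synchronous automaton" with states $q_1,\dots,q_n$ recognising $\+R$. I encode runs by first-order formulas. Since $|\Sigma|\ge2$, say $0,1\in\Sigma$, a run of length at most $m$ can be coded by words of length roughly $m$, one per state. Let $z_1,\dots,z_n$ be words with $|z_i| = \max_j |u_j|$. The intended meaning is that position $p$ of $z_i$ carries letter $1$ exactly when the run is in state $q_i$ after reading $p$ letters.

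To speak about positions, I identify position $p$ with the prefix of length $p$ of a long word $w$, which is expressible using $\prefix$. The letter of $z_i$ at that position is then read via $\lastLetter{}$ of the prefix of $z_i$ of the same length, using $\equalLength$ and $\prefix$ together.

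The formula existentially quantifies over $w$ and the $z_i$ and states the following.
\begin{itemize}
	\item Each position has exactly one active state.
	\item The initial state is active at the start. This needs care with the empty prefix; I will shift indices by one.
	\item For consecutive prefixes $p'$ and $p$ with $|p| = |p'|+1$ (a prefix plus one letter), the transition from the state at $p'$ to the state at $p$ is labelled by the column $(\text{letter of } u_j \text{ at } p \text{ or } \pad)_j$. This column is definable: $u_j$ has a letter there iff some prefix of $u_j$ has the length of $p$, and that letter is given by $\lastLetter{a}$ of that prefix.
	\item The state at the final position is accepting.
\end{itemize}

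The main obstacle is the right-to-left encoding. One difficulty is the bookkeeping of lengths at the boundaries. The other is the genuine need for two letters: with a single letter, every word is determined by its length, so auxiliary words cannot carry state information. The footnote's failure case is exactly where this encoding breaks.
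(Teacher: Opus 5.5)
Your proposal is correct. The paper gives no proof of its own here: it imports the result from Eilenberg, Elgot and Shepherdson. What you describe is the standard argument.
\begin{itemize}
	\item For one direction, you use closure of synchronous automata under Boolean operations (complementing relative to well-formed convolutions), cylindrification, and projection with saturation of accepting states.
	\item For the other, you existentially guess an accepting run through auxiliary words of maximal length. Their letters are read off at each position via the prefix, equal-length and last-letter predicates, which is exactly where a second letter is needed.
\end{itemize}

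One cosmetic fix: your direction labels are swapped relative to the statement, which reads ``automatic iff first-order definable''. What you call left-to-right is the implication first-order definable $\Rightarrow$ automatic. That is the direction the footnote says survives for a unary alphabet, and your automaton-to-formula encoding is the one that fails there.
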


\subparagraph*{Automatic Structures.}
We fix a finite "relational signature" $\sigma$.
An \AP""automatic presentation $\+A$ of a $\sigma$-structure"" consists of:
\begin{itemize}
	\item an alphabet $\Sigma$,
	\itemAP a regular language $\intro*\domainPres{\+A} \subseteq \Sigma^*$,
	\itemAP for every relation $\+R_{(k)} \in \sigma$, an
		"automatic relation" $\intro*\relPres{\+R}{\+A} \subseteq (\Sigma^*)^k$.
\end{itemize} 
The \AP\reintro{structure represented} $\?A$ by an "automatic presentation" $\+A$ has
$\domainPres{\+A}$ as its domain, and the predicate $\+R_{(k)} \in \sigma$
is "interpreted@@pred" as $\relPres{\+R}{\+A}$.
Given $u \in \domainPres{\+A}$, we denote by $\+A(s)$ the element of $\?A$ it represents,
namely the equivalence class of $u$ under $\relPres{=}{\+A}$.

We say that a "$\sigma$-structure" is \AP""automatic@@struct"" if it is
"represented@@struct" by an "automatic presentation". For instance,
it is not hard to see that the infinite binary tree $\+B$ is "automatic@@struct":
over the alphabet $\2$, we take $\domainPres{\+B} \defeq \2^*$
and $\relPres{\+E}{\+B} \defeq \{\tup{u,ua} \mid u \in \2^* \land a \in \2\}$.
A "synchronous automaton" for $\relPres{\+E}{\+B}$ is represented on the
right-hand side of \Cref{fig:example-synchronous-automata}.

\begin{proposition}[{Variation of \cite[Théorème~3.5]{Hodgson1983Decidabilite}}]
	\AP\label{prop:data-complexity-model-checking}
	\label{prop:first-order-model-checking-automatic-structures}
	The problem of, given a first-order formula $\phi$ over $\sigma$
	and an "automatic presentation" $\+A$ of a "$\sigma$-structure" $\?A$,
	to decide whether $\?A$ satisfies $\phi$, is decidable.
	If moreover $\phi$ is fixed and of the form
	$\exists x_1.\,\dotsc \exists x_k.\, \psi(x_1,\dotsc,x_k)$,
	where $\psi$ contains no quantifier nor negation, then 
	the problem becomes "NL"-complete.
\end{proposition}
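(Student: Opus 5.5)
The plan is to prove the first part by the classical inductive automaton construction, and the second part by an on-the-fly emptiness check on a product automaton; hardness comes from NFA emptiness.

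\emph{Decidability.} By induction on a first-order formula $\phi(x_1,\dotsc,x_k)$ over $\sigma$, I would build a "synchronous automaton" of arity $k$ "recognizing@@syncauto" the set of tuples $\tup{u_1,\dotsc,u_k}$ with $u_i \in \domainPres{\+A}$ and $\?A, \tup{\+A(u_1),\dotsc,\+A(u_k)} \FOmodels \phi$.
\begin{itemize}
	\item \emph{Atoms.} An atom $\+R(x_{i_1},\dotsc,x_{i_m})$ is handled by the automaton of $\relPres{\+R}{\+A}$. I make it read the selected tracks of the $k$-track convolution and ignore the other tracks. Columns that are padded on all selected tracks are absorbed by a self-loop, which is allowed only after those tracks have ended. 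Equality uses $\relPres{=}{\+A}$.
	\item \emph{Boolean connectives and quantifiers.} Conjunction is handled by the product construction. Negation is handled by determinizing, complementing, and then intersecting with the regular language of well-formed convolutions of words of $\domainPres{\+A}$. An existential quantifier is handled by erasing a track.
	\item \emph{Trailing padding after projection.} After erasing a track, the resulting words may end with columns made only of $\pad$. I therefore close the projected automaton under removal of such trailing all-$\pad$ columns; this operation preserves regularity.
\end{itemize}
For a sentence, the resulting automaton has arity $0$, and one simply tests whether it accepts the empty word. The only delicate point is this bookkeeping of padding. It causes no real obstacle, since all the auxiliary languages involved are regular. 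Because of the successive determinizations, the construction is non-elementary in $\phi$, but it is effective, which is all that is claimed.

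\emph{NL upper bound for fixed existential-positive $\phi$.} Here negation never occurs, so no determinization is needed. I write $\psi$ in disjunctive normal form. Since $\phi$ is fixed, there are constantly many disjuncts, each a conjunction of constantly many atoms. For each disjunct, $\phi$ holds "iff" some $k$-track word is accepted simultaneously by:
\begin{itemize}
	\item the padded atom automata of the disjunct, as above;
	\item $k$ copies of the domain automaton, one per track;
	\item a constant-size automaton checking that the word is a well-formed convolution.
\end{itemize}
This product has polynomially many states in $|\+A|$, since the number of factors is constant. Its non-emptiness is a reachability question, which I decide in "NL" by guessing an accepted word letter by letter. At each step I store only the current tuple of states, that is, a constant number of pointers of logarithmic size. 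I never build the product explicitly. Iterating over the constantly many disjuncts preserves the bound.

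\emph{NL-hardness.} I would reduce from emptiness of NFAs, equivalently directed $s$-$t$ reachability, which is "NL"-hard. Given a directed graph $G$ with vertices $s$ and $t$, I read it as an automaton over the one-letter alphabet $\set{a}$ with initial state $s$ and final state $t$. I take $\domainPres{\+A}$ to be its language and interpret every predicate of $\sigma$ as empty. The fixed sentence is $\exists x.\, x = x$, which is existential-positive. It holds "iff" $\domainPres{\+A} \neq \varnothing$, that is, "iff" $t$ is reachable from $s$. The map $G \mapsto \+A$ is clearly computable in logarithmic space, indeed first-order.

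The step I expect to require the most care is the padding discipline. In the decidability part it appears at quantifier elimination. In the NL part it appears in making the atom automata agree on a common $k$-track word.
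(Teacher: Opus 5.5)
Your proof is correct. It is the standard argument behind the cited result; the paper gives no proof of its own and defers to Hodgson. Decidability comes from closure of synchronous automata under Boolean operations and track projection (with your right-quotient handling of trailing padding). The NL upper bound comes from an on-the-fly emptiness check of a product of constantly many automata, and hardness from a reduction from reachability.

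You prove NL-hardness for one fixed sentence. If the statement is instead read as claiming hardness for \emph{every} fixed such $\phi$ with $k \geq 1$, your reduction still works after a small change. Interpret each $m$-ary predicate as the full relation $(\mathrm{dom}_{\mathcal{A}})^m$ instead of the empty one; its automaton is a product of $m$ padded copies of the input NFA, so it is still logspace-computable. Then $\phi$ holds if and only if $\mathrm{dom}_{\mathcal{A}} \neq \emptyset$.
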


In \Cref{apdx:construction-automatic-presentations}, we show that the classical
constructions on "relational structures"
can actually be implemented with "automatic presentations".
Let us mention that we will make heavy use
of \Cref{prop:automatic-first-order} to logically manipulate "automatic structures".
For an alternative logics-based approach to "automatic structures", we refer the reader
to \cite{ColcombetLoding2007WMSO}.

\subsection{Homomorphisms}

Elementary definitions on "homomorphism" are recalled in
\Cref{apdx:homomorphisms}.
\begin{figure}
	\centering
	\begin{tikzpicture}
		\node[vertex, draw=c1, fill=c1, fill opacity=.4] at (0,1) (a) {};
		\node[vertex, draw=c0, fill=c0, fill opacity=.4] at (1.414,1) (b) {};
		\node[vertex, draw=c2, fill=c2, fill opacity=.4] at (2.828,1) (c) {};
		\node[vertex, draw=c1, fill=c1, fill opacity=.4] at (4.242,1) (d) {};
		\node[vertex, draw=c2, fill=c2, fill opacity=.4] at (0.707,0) (e) {};
		\node[vertex, draw=c1, fill=c1, fill opacity=.4] at (2.121,0) (f) {};
		\node[vertex, draw=c3, fill=c3, fill opacity=.4] at (3.536,0) (g) {};
		\node[vertex, draw=c0, fill=c0, fill opacity=.4] at (4.950,0) (h) {};

		\draw[edge] (a) to (b);
		\draw[edge] (b) to (e);
		\draw[edge] (f) to (b);
		\draw[edge] (b) to (c);
		\draw[edge] (f) to (c);
		\draw[edge] (g) to (f);
		\draw[edge] (c) to (g);
		\draw[edge] (d) to (c);
		\draw[edge] (g) to (d);
		\draw[edge] (d) to (h);

		\begin{scope}[xshift=5.5cm]
			\node[vertex, draw=c0, fill=c0, fill opacity=.4] at (1.414,1) (b2) {};
			\node[vertex, draw=c2, fill=c2, fill opacity=.4] at (2.828,1) (c2) {};
			\node[vertex, draw=c1, fill=c1, fill opacity=.4] at (2.121,0) (f2) {};
			\node[vertex, draw=c3, fill=c3, fill opacity=.4] at (3.536,0) (g2) {};

			\draw[edge] (f2) to (b2);
			\draw[edge] (b2) to (c2);
			\draw[edge] (f2) to (c2);
			\draw[edge] (g2) to (f2);
			\draw[edge] (c2) to (g2);
		\end{scope}
	\end{tikzpicture}
	\caption{
		\AP\label{fig:prelim-core}
		On the left-hand side a "graph@@dir", and its "core" on the right.
		The colours are not part of the "structure", but
		are used to describe the retraction of the original
		structure onto its "core".
	}
\end{figure}
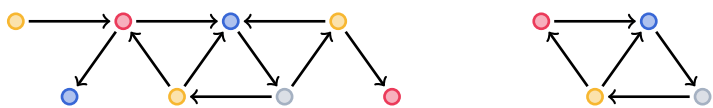
Let us however mention that we denote the "core" of a "structure" $\?A$ by $\reintro*\core{\?A}$.
See \Cref{fig:prelim-core} for an example.

\subparagraph*{Regular Homomorphisms.} 
A \AP""regular function"" from a regular language $K \subseteq \Sigma^*$ into a finite set
$L$ is any function such that $f^{-1}[v]$ is a regular language for each $v \in L$.
Note that there is a more general definition, given in \Cref{apdx:homomorphisms},
for when $L$ is infinite: it will be useful in this paper.

A \AP""regular homomorphism"" from a "presentation@@auto" $\+A$ to
a "presentation@@auto" $\+B$ of a \textbf{finite} structure $\?B$
is a "regular function" that is also
a "homomorphism" from $\?A$ to $\?B$.
Since $\?B$ is finite, the existence of a "regular homomorphism" to $\?B$
is actually independent of its "presentation@@auto", and we will hence write $\+A \reintro*\homregto \?B$ 
to mean that $\+A \homregto \+B$.

\subparagraph*{Colourings.} Let $k\in \N$.
A \AP""$k$-colouring"" of a graph is a "homomorphism" from it to $\clique{k}$.
Similarly, a \AP""regular $k$-colouring"" of the "presentation of an automatic graph"
is a "regular homomorphism" from it to $\clique{k}$.

\subparagraph*{Homomorphism Problems.} Given a fixed "signature" $\sigma$ and a $\sigma$-structure $\?B$,
we denote by:
\begin{itemize}
	\itemAP $\intro*\HomFin{\?B}$ (resp. $\intro*\HomAut{\?B}$, resp.
	$\intro*\HomAll{\?B}$) the class of all "finite $\sigma$-structure"
	(resp. "automatic $\sigma$-structure", resp. arbitrary "$\sigma$-structures")
	that admit a "homomorphism" to $\?B$,
	\itemAP $\intro*\HomRegAut{\?B}$ is the class of all "automatic presentations of  $\sigma$-structures" that admit a "regular homomorphism" to $\?B$.
\end{itemize}
Somewhat abusively, we identify these classes with the associated decision problems---except
for $\HomAll{\?B}$ since arbitrary "$\sigma$-structures" cannot be encoded as finite strings.
For finite structures, we assume the input to be given using adjacency lists, and for "automatic 
structures", we assume the input to be described by an "automatic presentation".
We call $\HomAut{\?B}$ the \AP""homomorphism problem"" over $\?B$ and
$\HomRegAut{\?B}$ the \AP""regular homomorphism problem"" over $\?B$.

\subsection{Idempotent Core}

We fix a "relational signature" $\sigma$.
Given a "$\sigma$-structure" $\?B$,
we denote by \AP$\intro*\extendedSignature{\sigma}{\?B}$
the signature obtained from $\sigma$ by adding
a unary predicate \AP$\intro*\unarypred{b}$ for each $b\in B$.
The \AP""marked structure"" \AP$\intro*\marked{\?B}$ of $\?B$ is the
"$\extendedSignature{\sigma}{\?B}$-structure"
obtained from $\?B$ by "interpreting@@predicate" each predicate $\unarypred{b}$ as the
singleton $\{b\}$.

The following reduction is considered folklore, see "eg"
\cite[Lemma 2.5]{LaroseTesson2009UniversalAlgebraCSP},
but we propose in \Cref{apdx-prop:idempotent-core-preserves-csp-complexity}
a self-contained and illustrated%
\footnote{How unusual!} proof.
The "marked structure" $\marked{\core{\?B}}$ of the "core" of $\?B$ is
usually called the \AP""idempotent core""
of $\?B$. From this proposition, we get that $\HomFin{\?B}$ and
$\HomFin{\marked{\core{\?B}}}$ are "first-order equivalent".
This reduction is a central tool
in the algebraic approach to understand CSPs.
\begin{restatable}{proposition}{idempotentCore}
	\AP\label{prop:idempotent-core-preserves-csp-complexity}
	If $\?B$ is a finite "core", then the problems $\HomAll{\marked{\?B}}$ and 
	$\HomAll{\?B}$ are "first-order equivalent".
	Moreover, this equivalence preserves "finiteness@@structure",
	in the sense that "finite structures" are mapped to "finite structures".  
	Hence, by restricting this equivalence, we also obtain that
	$\HomFin{\marked{\?B}}$ and $\HomFin{\?B}$ are "first-order equivalent".
\end{restatable}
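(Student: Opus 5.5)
The plan is to give two first-order reductions: one from $\HomAll{\?B}$ to $\HomAll{\marked{\?B}}$, and one in the converse direction. In both directions, finite inputs will be mapped to finite outputs.

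\textbf{From $\HomAll{\?B}$ to $\HomAll{\marked{\?B}}$.} This direction is essentially trivial. Given a $\sigma$-structure $\?A$, we view it as an $\extendedSignature{\sigma}{\?B}$-structure by interpreting every predicate $\unarypred{b}$ as the empty set. This interpretation is clearly first-order. A map $\?A \to \marked{\?B}$ is then a homomorphism if and only if it is a homomorphism $\?A \to \?B$, because the empty unary predicates impose no constraint.

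\textbf{From $\HomAll{\marked{\?B}}$ to $\HomAll{\?B}$.} Here we use that $\?B$ is a core. Let $\?A$ be an $\extendedSignature{\sigma}{\?B}$-structure. We first deal with inconsistent inputs. If some element of $\?A$ lies in $\unarypred{b}\cap\unarypred{b'}$ with $b\neq b'$, then $\?A$ has no homomorphism to $\marked{\?B}$. This condition is first-order definable, so in that case the reduction outputs a fixed finite $\sigma$-structure that does not map to $\?B$. If no such structure exists, then every structure maps to $\?B$, so both problems are trivial; this happens when some element of $\?B$ carries all tuples, and then $\marked{\?B}$ can also be handled directly.

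Otherwise, we build the $\sigma$-structure $\?A'$ obtained from the disjoint union $\?A \disunion \?B$ by identifying every $a\in\unarypred{b}(\?A)$ with the element $b$ of the copy of $\?B$. Concretely, $\?A'$ is the quotient of $\?A\disunion\?B$ by the equivalence relation generated by $a\sim b$ whenever $a\in \unarypred{b}(\?A)$. Since $\?B$ is fixed and finite, this is a first-order interpretation with parameters-free copies. The domain consists of the elements of $\?A$ that carry no mark, together with $|B|$ fresh elements, one per element of $\?B$. Alternatively, we can take as domain $A$ plus the copies of $b$ for which $\unarypred{b}$ is empty, and identify marked elements with their class. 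The relations of $\?A'$ are the images of the relations of $\?A$ together with those of $\?B$. If $\?A$ is finite, then so is $\?A'$.

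\textbf{Correctness.} If $h\colon\?A\to\marked{\?B}$ is a homomorphism, then $h\cup\mathrm{id}_{\?B}$ respects the identifications, because $h(a)=b$ whenever $a\in\unarypred{b}$. It therefore induces a homomorphism $\?A'\to\?B$.

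Conversely, let $g\colon\?A'\to\?B$ be a homomorphism. Its restriction to the copy of $\?B$ is an endomorphism of $\?B$. Since $\?B$ is a finite core, this endomorphism is an automorphism $\alpha$. Then $\alpha^{-1}\circ g$ restricted to $\?A$ is a homomorphism $\?A \to \?B$ that sends every element of $\unarypred{b}$ to $\alpha^{-1}(\alpha(b))=b$. It is therefore a homomorphism $\?A\to\marked{\?B}$.

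\textbf{Main obstacle.} The key step, and the only real obstacle, is this use of the core property: bijective endomorphisms of a finite structure are automorphisms. The remaining work is a routine check that the quotient construction is expressible as a first-order interpretation. That check goes through because the equivalence classes are explicitly described by the unary predicates, so no transitive closure is needed. Finiteness is preserved by both constructions, so restricting them to finite structures yields the statement for $\HomFin{\marked{\?B}}$ and $\HomFin{\?B}$.
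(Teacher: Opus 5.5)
Your proof is correct, and its skeleton matches the paper's. The easy direction, via empty unary predicates, is identical. For the hard direction you likewise attach a copy of $\mathbf{B}$, restrict $g$ to it to get an endomorphism of the finite core, hence an automorphism $\alpha$, and correct by $\alpha^{-1}$.

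The difference is how a marked element $a\in P_b(\mathbf{A})$ is tied to $b$.

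\emph{Your route.} You glue $a$ onto $b$, so $g(a)=\alpha(b)$ is immediate.

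\emph{The paper's route.} It keeps the domain $A\sqcup B$ with no identification. For each tuple of $\mathbf{B}$ through $b$, it adds the tuple obtained by substituting $a$ for $b$ in a single coordinate.
\begin{itemize}
\item This makes the output relations quantifier-free in the input.
\item It handles doubly-marked elements uniformly: they simply force $g'(a)$ to equal two distinct marks, so no case split is needed.
\item The price is an extra step. The paper must deduce $g(a)=g(b)$ from adjacencies via \Cref{prop:adjacency-core}. Strictly, the construction only shows that the adjacency of $g(a)$ \emph{contains} that of $g(b)=\alpha(b)$, not that they are equal. The argument of \Cref{prop:adjacency-core} still works under inclusion, by substituting one coordinate at a time.
\end{itemize}
Your gluing sidesteps that lemma entirely.

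Your case split for doubly-marked elements is actually dispensable. Suppose the generated identification merges some $b\neq b'$; this remains first-order definable, since it is determined by a finite graph on $B$. Then composing the quotient map $\mathbf{B}\to\mathbf{A}'$ with any $g\colon\mathbf{A}'\to\mathbf{B}$ gives a non-injective endomorphism of the core. So $\mathbf{A}'$ is automatically a no-instance, and you need neither the fixed no-instance nor the degenerate $|B|=1$ discussion.

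Also, your ``main obstacle'' sentence misnames the key fact. What you use is that every endomorphism of a finite core is an automorphism, as your correctness paragraph correctly says. The statement that bijective endomorphisms of finite structures are automorphisms is not the core property.
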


\subsection{Obstructions and Finite Duality}

Let $\?B$ and $\?D$ be "finite $\sigma$-structures".
We say that $\?D$ is an \AP""obstruction"" of $\?B$ when $\?D \nothomto \?B$.
In this case, note that finding $\?D$ inside $\?A$---in the sense that $\?D \homto \?A$---implies that $\?A$ cannot have a "homomorphism" to $\?B$: in this sense, the presence of
$\?D$ in an \emph{obstruction} to the existence of a "homomorphism" to $\?B$.

A \AP""dual"" of $\?B$ is any arbitrary set $\+D$ of "finite $\sigma$-structures" "st"
for any "finite $\sigma$-structure" $\?A$:
$
	\?A \homto \?B
	\text{ "iff" }
	\forall \?D \in \+D,\, \?D \nothomto \?A.
$
Note that any "dual" must only contain "obstructions" of $\?B$.%
\footnote{The notion of dual was implicitly introduced
by Ne\v{s}et\v{r}il and Pultr, who proved that for "\emph{undirected} graph", no
non-trivial "finite duality" existed, in the sense that the
only "cores" $\?B$ having a finite "dual" where the empty graph, and the graph having
a single vertex and no edge \cite[Corollary 4.1]{NesetrilPultr1978Duality}.}
The set of all "obstructions" of $\?B$ is a "dual" of $\?B$.

\begin{example}
	\AP\label{ex:dual-T2} Let $k\in\N$. The set
	$\{\pathGraph{k+1}\}$ is a "dual" of the
	"$k$-transitive tournament" $\transitiveTournament{k}$.
	
	Indeed, $\pathGraph{k+1} \nothomto \transitiveTournament{k}$. Dually,
	if $\?G$ if a "finite graph" "st" $\?G \homto \transitiveTournament{k}$ then
	letting $f$ be a "homomorphism" from $\?G$ to $\transitiveTournament{k}$,
	we have that any edge from $u \in G$ to $v \in G$, we must have $f(u) < f(v)$,
	and so $\pathGraph{k} \nothomto \?G$.
\end{example}

Moreover, if $\+D$ and $\+D'$ are sets of "obstructions" of $\?B$ and $\+D \subseteq \+D'$,
then if $\+D$ is a "dual", so is $\+D'$: hence, the goal is to find \emph{small} duals.
For this reason, "duals" are also called \emph{complete sets of obstructions}.
We say that $\?B$ has \AP""finite duality"" if is admits a finite "dual", "ie"
consisting of finitely many "structures". For instance, $\transitiveTournament{k}$ has "finite duality".

An "obstruction" $\?D$ of $\?B$ is \AP""critical@@obs""\footnote{We borrow the terminology
from \cite{LaroseLotenTardif2007CharacterisationFOCSP}.} when for every
"proper substructure" $\?D'$ of $\?D$, we have $\?D' \homto \?B$.
Clearly, every "critical obstruction" must be a "core".

Note first that the set of all "critical obstructions" of $\?B$ is a "dual" of $\?B$.
Indeed, if $\?A \nothomto \?B$, then $\?A$ is an "obstruction" and so it must contain---by well-foundedness of $\N$---a "critical obstruction" $\?D$ as a "substructure".

\begin{proposition}
	\label{prop:finite-duality-iff-critical-obstructions}
	Let $\?B$ be a "finite $\sigma$-structure". $\?B$ has "finite duality"
	"iff" it has finitely many "critical obstructions".
\end{proposition}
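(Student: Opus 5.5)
The backward implication is immediate from the remark just before the statement: the set of all "critical obstructions" of $\?B$ is a "dual" of $\?B$. So if there are only finitely many of them, they form a finite "dual", and $\?B$ has "finite duality".

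For the forward implication, I fix a finite "dual" $\+D$ of $\?B$ and show that every "critical obstruction" $\?D$ of $\?B$ is a "substructure" of some member of $\+D$, up to isomorphism. Since the members of $\+D$ are finite and there are finitely many of them, they have only finitely many substructures up to isomorphism, so this bounds the number of "critical obstructions" (counted up to isomorphism).

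The argument runs as follows.

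1. Since $\?D \nothomto \?B$ and $\+D$ is a "dual", there is some $\?D_0 \in \+D$ and a "homomorphism" $h\colon \?D_0 \to \?D$.
2. Let $\?D'$ be the image of $h$: its domain is $h[D_0]$, and it carries exactly the tuples $h(\bar a)$ for $\bar a$ in the relations of $\?D_0$. This $\?D'$ is a "substructure" of $\?D$, possibly non-induced. I will check that "proper substructure" in the definition of criticality is meant in this weak sense (fewer elements or fewer tuples); if the paper means induced substructures, I would instead use the induced substructure on $h[D_0]$, or a minimal-tuple variant of criticality.
3. Since $h$ corestricts to a "homomorphism" $\?D_0 \to \?D'$, the fact that $\?D_0 \nothomto \?B$ gives $\?D' \nothomto \?B$.
4. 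Criticality of $\?D$ then forces $\?D' = \?D$. So $h$ is surjective on elements and on tuples, and $|D| \le |D_0|$.

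From step 4, each "critical obstruction" has at most $\max_{\?D_0\in\+D}|D_0|$ elements. Because $\sigma$ is finite, there are only finitely many $\sigma$-structures of bounded size up to isomorphism, which finishes the proof.

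The main obstacle is the subtlety in step 2: the notion of "proper substructure" must be compatible with images of homomorphisms. If only induced substructures count, then the induced substructure on $h[D_0]$ might contain more tuples than the image. Even so, it is still an "obstruction", since $\?D_0$ maps into it, so criticality still forces it to be all of $\?D$. Hence the size bound $|D| \le |D_0|$ survives in either reading, and the counting argument goes through unchanged.
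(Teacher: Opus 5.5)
Your proof is correct and is essentially the paper's argument: some member of the finite dual maps into the critical obstruction, and its image is again an obstruction and a substructure. The paper uses the weak, non-induced sense of \emph{proper substructure}, so your step 2 works as written, and criticality then forces the map to be surjective on elements and tuples, which bounds the critical obstructions. One small correction: your opening claim that each critical obstruction is a substructure of some member of $\+D$ is not what you prove (it is a strong-onto homomorphic image of one), but your size-bound counting never uses that claim, so nothing breaks.
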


\begin{proof}
	The right-to-left implication is trivial. For the converse one,
	let $\+D = \{D_1,\dotsc,D_m\}$ be a finite "dual" of $\?B$.
	If $\?C$ is a "critical obstruction" of $\?B$ then in particular $\?C \nothomto \?B$
	and so there exists $i \in \intInt{1,m}$ "st" there is a "homomorphism" $f$ 
	from $\?D_i$ to $\?C$. Now the image of $f$ is again an "obstruction" of $\?B$,
	and since $\?C$ is "critical@@obs", it follows that it must be $\?C$ itself. In other
	words, $f$ is "strong onto".
	Hence, there are only finitely many "critical obstructions" of $\?B$.
\end{proof}

One of the key interest of \Cref{prop:finite-duality-iff-critical-obstructions} is
to prove that some structures don't have "finite duality".

\begin{example}[{\Cref{ex:dual-T2}, continued.}]
	\AP\label{ex:zigzag-defn}
	Let $n\in\?N$.
	We define the \AP""zigzag graph"" $\intro*\zigzag{n}{2}$ of width $n$ and length 2
	to be "graph@@dir" whose vertices are $a_0, \dotsc, a_n$, $b_0, \dotsc, b_{n}$,
	$a'_0$ and $b'_n$, with edges from $a_i$ to $b_{i-1}$ and to $b_{i}$ (for $i \in \intInt{0,n}$, whenever the nodes exist), and with an edge from $a'_0$ to $a_0$ and from $b_n$
	to $b'_n$. See \Cref{fig:zigzag-graph-hom-T2} for an illustration.
	
	Note that $\zigzag{n}{2}$ does not admit a "homomorphism" to the "$2$-path"---indeed, such a homomorphism should send $a'_0$, $a_0$ and $b_0$ onto $0$, $1$, and $2$, respectively, 
	and so all $a_i$'s (resp. $b_i$'s) must be sent onto $1$ (resp. $2$), but then $b'_n$ cannot be mapped anywhere.

	We claim that each $\zigzag{n}{2}$ ($n\in\N$) is a "critical obstruction" of $\pathGraph{2}$.
	We have already seen that $\zigzag{n}{2}$ is an "obstruction" of $\pathGraph{2}$.
	But then notice that to obtain a "proper substructure" of $\zigzag{n}{2}$,
	we must either:
	\begin{itemize}
		\item remove the edge from $a'_0$ to $a_0$ or the edge from $b_n$ to $b'_n$,
		in which case it admits a "homomorphism" to $\pathGraph{2}$, or
		\item remove any other edge, in which case the resulting "substructure" is not "connected",
		and both parts admit a "homomorphism" to $\pathGraph{2}$.
	\end{itemize}
	And hence, by \Cref{prop:finite-duality-iff-critical-obstructions}, it follows that
	$\pathGraph{2}$ does not have "finite duality".

	On the other hand, each $\zigzag{n}{2}$ with $n\in\Np$ admits a "homomorphism" to the "$2$-transitive tournament", as witnessed by \Cref{fig:zigzag-graph-hom-T2}.%
	\footnote{However, observe that $\zigzag{0}{2} = \pathGraph{3}$ is an "obstruction" of
	$\transitiveTournament{2}$.}
	In fact, this "homomorphism" is far from being unique:
	each vertex $a_1,\,a_2,\,\dotsc,\,a_{n-1}$ can be sent on either $0$ or $1$
	(the red and yellow vertices), 
	and similarly, each vertex $b_1,\,b_2,\,\dotsc,\,b_{n-1}$ can be sent on either $1$ or $2$
	(the yellow and blue vertices).
	\begin{figure}
		\centering 
		\begin{tikzpicture}
			\input{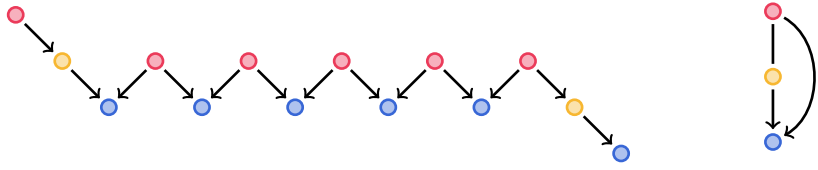}
		\end{tikzpicture}
		\caption{\AP\label{fig:zigzag-graph-hom-T2}A "homomorphism" from the "zigzag graph" (left-hand side) to the "$2$-transitive tournament" (right-hand side).}
	\end{figure}
\end{example}

While $\pathGraph{k}$ and $\transitiveTournament{k}$ are similar structures,
one has "finite duality" and the other does not.

\begin{proposition}[""Atserias' theorem""]
	\label{prop:atserias}
	\AP Let $\?B$ be a "finite $\sigma$-structure". Then $\?B$ has "finite duality"
	if, and only if, $\HomAll{\?B}$ is "first-order definable".
\end{proposition}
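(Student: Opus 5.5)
The argument splits into the two directions, and the backward one is the real work.

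For the forward direction, suppose $\?B$ has finite duality. By \Cref{prop:finite-duality-iff-critical-obstructions}, $\?B$ has finitely many critical obstructions $\?D_1,\dotsc,\?D_m$, and these form a dual. Each $\?D_i$ is finite, so let $\phi_{\?D_i}$ be its canonical conjunctive query: existentially quantify one variable per element and take the conjunction of all atoms of $\?D_i$. Then $\?D_i \homto \?A$ iff $\?A \FOmodels \phi_{\?D_i}$, for every $\?A$, finite or not. Put $\Phi \defeq \bigwedge_i \neg\phi_{\?D_i}$.

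It remains to show that $\Phi$ defines $\HomAll{\?B}$, and not only the finite part. Soundness is clear: if $\?A \homto \?B$, no obstruction maps into $\?A$, since homomorphisms compose. For the converse, suppose no $\?D_i$ maps into $\?A$. Then no finite substructure of $\?A$ contains an obstruction, so every finite substructure of $\?A$ maps to $\?B$. A compactness argument then yields $\?A \homto \?B$, because $\?B$ is finite. Concretely, use propositional compactness on variables $x_{a,b}$ expressing ``$a\mapsto b$'', or König's lemma or Tychonoff on $B^A$. This gives the first-order definition.

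For the backward direction, suppose $\HomAll{\?B}$ is defined by a first-order sentence $\Phi$. Restricted to finite structures, $\Phi$ defines a homomorphism-closed class, namely the complement of $\HomFin{\?B}$. I would not reprove Rossman's or Atserias' theorem from scratch. Instead I would invoke Atserias' argument: for finite templates, first-order definability of $\HomFin{\?B}$ implies finite duality.

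Since the class is now defined over all structures, there is a shortcut. The class $\neg\HomAll{\?B}$ is defined by $\neg\Phi$ and is closed under taking extensions along homomorphisms. By compactness, $\neg\Phi$ is equivalent to an infinite disjunction of the canonical queries of all obstructions. Indeed, $\?A \nothomto \?B$ iff some finite substructure of $\?A$ is an obstruction, again by compactness.

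Then $\neg\Phi$ entails the infinite disjunction $\bigvee_{\?D} \phi_{\?D}$ over all obstructions $\?D$. By compactness, $\neg\Phi$ entails a finite subdisjunction. That finite set of obstructions is therefore a dual on all structures, hence in particular on finite ones, and so $\?B$ has finite duality.

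The expected obstacle is making this compactness step rigorous. One must check that $\neg\Phi \wedge \bigwedge_{\?D\in F}\neg\phi_{\?D}$ is unsatisfiable for some finite set $F$ of obstructions. This follows because any model of $\neg\Phi \wedge \bigwedge_{\text{all }\?D}\neg\phi_{\?D}$ would map to $\?B$ by the forward compactness argument, while also satisfying $\neg\Phi$, which is a contradiction.
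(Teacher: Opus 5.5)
Your argument is correct, but it does not follow the paper's route. The paper proves nothing here and simply imports \cite[Corollary~4]{Atserias2008DigraphColoring}, which is a result of finite model theory.

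Your forward direction is exactly the reasoning the paper leaves implicit in \Cref{prop:dichotomy-FinDual-implies-HomDec}: finitely many critical obstructions by \Cref{prop:finite-duality-iff-critical-obstructions}, their canonical conjunctive queries, and \Cref{prop:de-bruijn-erdos} to lift from finite substructures to arbitrary ones. For the backward direction you use the fact that the proposition concerns $\HomAll{\mathbf{B}}$, that is, definability over \emph{all} structures, so first-order compactness is available. By De Bruijn--Erd\H{o}s, $\{\neg\Phi\}\cup\{\neg\phi_{\mathbf{D}} \mid \mathbf{D} \text{ an obstruction of } \mathbf{B}\}$ is unsatisfiable, so some finite part of it is too, and the obstructions occurring there form a finite dual. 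This is the classical homomorphism-preservation argument and it is complete on its own, so your remark about invoking Atserias' argument is unnecessary.

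What your route buys is a short, self-contained proof of exactly the version the core of the paper needs. \Cref{lemma:finite-duality-uniformly-definable-homomorphisms} and the implication (3)$\Rightarrow$(1) of \Cref{lem:hyperedge-consistency-uniform-convergence} only use definability of $\HomAll{\mathbf{B}}$. What it cannot give is the finite version: that first-order definability of $\HomFin{\mathbf{B}}$ over finite structures already forces finite duality. Compactness fails on finite structures, and a sentence defining $\HomFin{\mathbf{B}}$ there need not define $\HomAll{\mathbf{B}}$. That is the genuinely hard content of Atserias' theorem (and of Rossman's).

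The paper does invoke this stronger version under the same name. It uses it to assert that $\HomFin{\mathbf{P}_k}$ is not first-order definable. It also uses it in the footnote to \Cref{prop:characterization-finite-duality-path-projections}, to convert the Larose--Loten--Tardif characterisation (stated for $\HomFin{\mathbf{B}}$) into one about finite duality, and that characterisation feeds the undecidability reductions. So your proof establishes the proposition as stated, but not every use the paper makes of it.
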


By "Atserias' theorem",
$\HomFin{\transitiveTournament{k}}$ is "first-order definable"
but $\HomFin{\pathGraph{k}}$ is not.

We say that a "$\sigma$-structure" is \AP""rigid"" if its only "automorphism"
is the identity.
\begin{proposition}[{\cite[Lemma~4.1]{LaroseLotenTardif2007CharacterisationFOCSP}}]
	\!\footnote{In fact
	only assumes that $\?B$ has "tree duality": as we will see in 
	\Cref{prop:finite-duality-implies-tree-duality}, this is a weaker condition
	than having "finite duality".}%
	\label{prop:finite-duality-implies-rigid}
	If a finite "core" has "finite duality", then it is "rigid".
\end{proposition}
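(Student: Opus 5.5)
We argue by contradiction: we assume that $\?B$ is a finite "core" with "finite duality" and a non-trivial "automorphism" $\alpha$, and we build, for every $n$, a "critical obstruction" of unbounded size. By \Cref{prop:finite-duality-iff-critical-obstructions} this contradicts finite duality.

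Since $\?B$ is a "core", $\alpha$ has finite order. Hence there is an element $b$ with $\alpha(b)\neq b$; in fact, replacing $\alpha$ by a suitable power, we may assume $\alpha$ has prime order on some orbit. The obstructions will live over the "marked structure" $\marked{\?B}$. By \Cref{prop:idempotent-core-preserves-csp-complexity}, $\HomAll{\marked{\?B}}$ and $\HomAll{\?B}$ are "first-order equivalent", so by "Atserias' theorem" $\marked{\?B}$ also has "finite duality". It therefore suffices to exhibit unboundedly many "critical obstructions" for $\marked{\?B}$.

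The construction is a ``long chain of copies of $\?B$ glued via $\alpha$''. Take $n$ copies $\?B_1,\dots,\?B_n$ of $\?B$. Identify $\?B_i$ and $\?B_{i+1}$ along a suitable substructure, so that any homomorphism to $\?B$ restricts on each copy to an automorphism; here we use that $\?B$ is a "core", so endomorphisms are automorphisms. The gluing forces consecutive automorphisms $\beta_i,\beta_{i+1}$ to be related by $\beta_{i+1}=\beta_i$. Finally we mark the element $b$ of $\?B_1$ with $\unarypred{b}$ and the element $b$ of $\?B_n$ with $\unarypred{\alpha(b)}$. This yields an obstruction, since the automorphism must be constant along the chain, yet it must fix $b$ at one end and send $b$ to $\alpha(b)$ at the other. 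Removing any tuple breaks the chain and allows switching from $\mathrm{id}$ to $\alpha$ at the break, so every "proper substructure" maps to $\marked{\?B}$; hence the obstruction is "critical@@obs". A cleaner alternative to the gluing is to use the "currying" structure $\powstruct{\?B}{\?B}$: homomorphisms $\?A\prodstruct\?B\to\?B$ correspond to maps into $\powstruct{\?B}{\?B}$, whose "rigid"-ity failure gives two distinct loops. The zigzag idea then becomes a long path in that structure connecting $\mathrm{id}$ and $\alpha$.

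The main obstacle is making the gluing rigorous in a way that forces the chain to be ``all $\mathrm{id}$'' or ``all $\alpha$''. A naive identification lets intermediate copies be mapped by non-automorphisms, or forces nothing at all. My first attempt is to glue $\?B_i$ and $\?B_{i+1}$ by a single copy of $\?B\prodstruct\?B$ restricted to the graph of $\alpha$, i.e.\ the substructure on $\set{\tup{x,x}}\cup\set{\tup{x,\alpha(x)}}$. Failing this, I would fall back on the known argument of Larose--Loten--Tardif, which uses that finite duality implies bounded treewidth duality, and argue via the one-tolerant/tree-duality characterisation: tree duality gives a homomorphism $\powstruct{\?B}{\?B}$-style tree-unfolding into $\?B$ that must be invariant, contradicting a non-trivial automorphism.
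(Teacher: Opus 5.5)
Your proof has a genuine gap at exactly the step that carries the content. Exhibiting large obstructions of $\mathbf{B}^{\dagger}$ proves nothing, since every structure has arbitrarily large obstructions (a small one plus junk). \Cref{prop:finite-duality-iff-critical-obstructions} needs unboundedly many \emph{critical} ones, and you establish neither of your two key claims. First, the gluing substructure is never specified. Second, the assertion that deleting any tuple ``breaks the chain and allows switching from $\mathrm{id}$ to $\alpha$'' is false in general.

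Take for $\mathbf{B}$ the directed $3$-cycle $0\to1\to2\to0$. It is a core whose automorphisms are the three rotations, and a single vertex already determines a rotation. So you may glue consecutive triangles identically at one vertex, using different vertices for the left and right gluing of each copy, and this forces $\beta_{i+1}=\beta_i$ as you want. Now delete one arc of an intermediate triangle. The two remaining arcs still form a directed path joining its two gluing vertices. In the $3$-cycle, a directed path of length $\ell$ from $u$ to $v$ forces $h(v)=h(u)+\ell \bmod 3$, which is exactly the relation the intact triangle imposed. So the rotations on both sides are still forced to coincide, the structure is still an obstruction, and your chain is not critical. Your two fallbacks (a long path in $\mathbf{B}^{\mathbf{B}}$, and an invariant tree-unfolding) are too vague to close the gap.

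The missing idea is short and uses only tools already in the paper. The paper itself just cites the lemma, whose footnote notes that tree duality suffices. By \Cref{prop:finite-duality-implies-tree-duality,prop:charac-Feder-Vardi} there is a homomorphism $g\colon\mathfrak{U}(\mathbf{B})\to\mathbf{B}$. The orbit map $o\colon b\mapsto\{\sigma(b)\mid\sigma\in\mathrm{Aut}(\mathbf{B})\}$ is a homomorphism $\mathbf{B}\to\mathfrak{U}(\mathbf{B})$. Indeed, if $\langle b_1,\dotsc,b_k\rangle\in\mathcal{R}(\mathbf{B})$ and $\sigma(b_i)\in o(b_i)$, then $\langle\sigma(b_1),\dotsc,\sigma(b_k)\rangle\in\mathcal{R}(\mathbf{B})$ is the required witness. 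Hence $g\circ o$ is an endomorphism of the finite core $\mathbf{B}$, so it is an automorphism by \Cref{prop:core-iff-hom-are-auto}, and in particular injective. But $g\circ o$ is constant on $\mathrm{Aut}(\mathbf{B})$-orbits, so every orbit is a singleton and $\mathbf{B}$ is rigid.

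If you want to keep your combinatorial route, it can be repaired in two steps.
\begin{enumerate}
\item Glue copy $j$ to copy $j+1$ identically along $B\smallsetminus\{x_j\}$, with $x_j$ cycling through $B$. An automorphism fixing all but one point is the identity, so all copies are sent by a single automorphism and the marked chain is an obstruction.
\item Do not claim the chain is critical. Instead, note that deleting either marked unary tuple gives a structure that maps to $\mathbf{B}^{\dagger}$ (all-$\mathrm{id}$, respectively all-$\alpha$). Hence every critical obstruction inside the chain is connected and contains both marked vertices. Each element lies in at most $|B|$ consecutive copies, so the distance between the marked vertices is at least $n/|B|-1$.
\end{enumerate}
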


\subsection{Trees and Tree Duality}

A "$\sigma$-structure" is \AP""strongly acyclic@@struct"" if its "incidence graph" is acyclic.
A \AP""$\sigma$-tree"" is a "$\sigma$-structure" that is both "connected" and "strongly acyclic@@struct".
Do not confuse this notion with the classical notion of "directed trees":
every "directed tree" is a "$\sigma$-tree" for the "graph signature" $\sigma$,
but $\zigzag{n}{2}$---see \Cref{fig:zigzag-graph-hom-T2}---is a "$\sigma$-tree"
while it is not a "directed tree". 

Given a "$\sigma$-tree" $\?T$ and a vertex $t\in T$, the \AP""height@@struct"" of $\?T$
when rooted at $t$ is the maximal "distance@@struct" between $t$ and any other vertex of $\?T$.

We say that a finite "$\sigma$-structure" $\?B$ has \AP""tree duality""
if it admits a (potentially infinite) "dual" consisting only of "$\sigma$-trees".
Somewhat surprisingly, Ne\v{s}et\v{r}il and Tardif showed that
this notion generalizes "finite duality".

\begin{proposition}[{\cite[Theorem~3.1]{NesetrilTardif2000DualityTheorems}}]
	\!\footnote{The statement of \cite[Theorem~3.1]{NesetrilTardif2000DualityTheorems}
	is somewhat cryptic: the relationship with "duals" is given by
	\cite[Lemma~2.5]{NesetrilTardif2000DualityTheorems}.
	We refer the reader to Foniok's Ph.D. for statements that use a terminology closer to ours:
	\cite[Theorem~2.1.12]{Foniok2007PhD} shows that if $\{\?A\}$ is a "dual" of $\?B$,
	then $\?A$ is "homomorphically equivalent" to a tree, "ie" if a "structure" has
	``singleton duality'', then it has "tree duality".
	The generalization to "finite duality" then follows from
	\cite[Theorem~2.4.4]{Foniok2007PhD}.}
	\AP\label{prop:finite-duality-implies-tree-duality}
	If a finite "$\sigma$-structure" $\?B$ has "finite duality",
	then it has "tree duality".
\end{proposition}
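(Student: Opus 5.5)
We plan to prove that a finite $\sigma$-structure $\?B$ with "finite duality" admits a "dual" made only of "$\sigma$-trees". By \Cref{prop:finite-duality-iff-critical-obstructions}, $\?B$ has only finitely many "critical obstructions"; write $n$ for the maximum number of elements among them. The candidate tree dual is
\[
\+T \defeq \{\?T \mid \?T \text{ is a finite "$\sigma$-tree" with } \?T \nothomto \?B\}.
\]
Every member of $\+T$ is an "obstruction", so one direction is immediate: if $\?A \homto \?B$, then no $\?T\in\+T$ maps to $\?A$. The content is the converse. If $\?A \nothomto \?B$, we must find an obstruction tree $\?T$ with $\?T \homto \?A$.

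For the converse we first reduce to critical obstructions. Since $\?A \nothomto \?B$, $\?A$ contains a "critical obstruction" $\?D$ of $\?B$, so it suffices to show that every critical obstruction $\?D$ admits some $\?T\in\+T$ with $\?T\homto\?D$. Suppose not. Then every finite $\sigma$-tree mapping to $\?D$ also maps to $\?B$.

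To exploit this, take the unravelling (universal cover) $\?U_m$ of $\?D$ up to depth $m$. Its elements are walks in the "incidence graph" of $\?D$ of length at most $m$ that start at a fixed element and never immediately backtrack through the same tuple occurrence. Each $\?U_m$ is a finite $\sigma$-tree, and the projection $\?U_m\homto\?D$ is a homomorphism. By assumption, then, $\?U_m\homto\?B$ for every $m$.

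The key step is to turn these homomorphisms into one from $\?D$, which would contradict $\?D\nothomto\?B$. We try to do this in the \emph{product} $\?D\prodstruct\?B$, or equivalently via \Cref{prop:currying-hom}, in $\powstruct{\?B}{\?D}$. First, a compactness/König argument gives a homomorphism from the infinite universal cover $\?U_\infty$ to $\?B$. Second, we need to show that this homomorphism can be chosen to factor through $\?D$, i.e.\ to be constant on fibres.

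For the second point we would use the finiteness of duality a second time. Consider the structure $\?D'$ obtained by gluing together two copies of deep unravellings along a fibre. By criticality, every proper substructure of $\?D$ maps to $\?B$. We would then argue that if no fibre-constant homomorphism exists, one can build obstructions of unbounded size that are minimal (obtained by iteratively gluing $\?D$ minus one tuple along paths in $\?U_m$). This would contradict the bound $n$ on critical obstructions.

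I expect this step to be the main obstacle: controlling the unbounded girth. The cleanest route I would try first is the Nešetřil–Tardif/Foniok one. Show that critical obstructions of a structure with finite duality are themselves "homomorphically equivalent" to trees, by proving that any critical obstruction containing a cycle in its incidence graph can be ``opened up'' at a tuple on that cycle. Concretely, take a tuple $t$ on a cycle and form $\?D_k$ by chaining $k$ copies of $\?D$ with $t$ split and reconnected to the next copy. Each $\?D_k$ is an obstruction, since it maps onto $\?D$. For large $k$ it has a critical sub-obstruction. If that sub-obstruction mapped to a bounded piece, it would be a proper substructure of some copy and hence map to $\?B$; so it must stretch across many copies, and therefore has size exceeding $n$, a contradiction. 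Hence every critical obstruction is a $\sigma$-tree up to its core, and so the finite set of critical obstructions, replaced by these trees, is a tree dual.
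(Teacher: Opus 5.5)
Your proof breaks at the sentence ``Each $\mathbf{D}_k$ is an obstruction, since it maps onto $\mathbf{D}$.'' Non-homomorphism to $\mathbf{B}$ propagates along homomorphisms \emph{out of} an obstruction: if $\mathbf{X}\not\to\mathbf{B}$ and $\mathbf{X}\to\mathbf{Y}$, then $\mathbf{Y}\not\to\mathbf{B}$. A homomorphism $\mathbf{D}_k\to\mathbf{D}$ goes the wrong way and tells you nothing about $\mathbf{D}_k$.

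This step never uses finite duality, so it would apply just as well to $\mathbf{B}=\mathbf{K}_2$, where it is false. The directed $3$-cycle is a critical obstruction of $\mathbf{K}_2$. Opening it at one edge and chaining $k$ copies gives the directed cycle of length $3k$, which is $2$-colourable for even $k$. Showing that unfolded copies of $\mathbf{D}$ stay obstructions is therefore not a technicality; it is essentially the content of the proposition. Once tree duality is known, a tree obstruction mapping to $\mathbf{D}$ lifts along every covering map. Before that, you have no handle on covers.

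The counting step that follows is also loose. A critical sub-obstruction of $\mathbf{D}_k$ with at most $n$ elements need not lie inside one copy; it can straddle several consecutive copies. Such a window is not a substructure of $\mathbf{D}$, so the criticality of $\mathbf{D}$ gives it no homomorphism to $\mathbf{B}$.

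Your first route is only a restatement of the goal. Each $\mathbf{U}_m$ is itself a finite $\sigma$-tree mapping to $\mathbf{D}$, and any tree mapping to $\mathbf{D}$ lifts to some $\mathbf{U}_m$. So ``$\mathbf{U}_m\not\to\mathbf{B}$ for some $m$'' is equivalent to ``some tree obstruction maps into $\mathbf{D}$'', which is exactly what you need to prove.

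The missing ingredient is a large-girth construction that does preserve non-homomorphism: the sparse incomparability lemma (Ne\v{s}et\v{r}il and Rödl for graphs, Feder and Vardi for relational structures). It is proved probabilistically, and the example above shows that covers cannot replace it. It states that for every finite $\mathbf{A}$ and every $g$, there is a finite $\mathbf{A}'$ with no cycle in its incidence graph through at most $g$ elements, such that $\mathbf{A}'\to\mathbf{A}$ and $\mathbf{A}'\not\to\mathbf{B}$ whenever $\mathbf{A}\not\to\mathbf{B}$.

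With this lemma the proposition follows directly. Given $\mathbf{A}\not\to\mathbf{B}$, take $g=n$. Then $\mathbf{A}'$ contains a critical obstruction $\mathbf{C}$ with at most $n$ elements. $\mathbf{C}$ is connected, and the girth bound makes it acyclic, so it is a $\sigma$-tree with $\mathbf{C}\to\mathbf{A}'\to\mathbf{A}$. Hence the finitely many $\sigma$-tree obstructions with at most $n$ elements already form a dual. Your target statement, that every critical obstruction is a $\sigma$-tree, is the right one; it just needs this lemma rather than covers.

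Note that the paper gives no proof of this proposition; it imports it from Ne\v{s}et\v{r}il--Tardif and Foniok.
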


The converse does not hold.
\begin{proposition}
	\AP\label{prop:2-path-tree-duality}
	The "$2$-path" $\pathGraph{2}$ has "tree duality".
\end{proposition}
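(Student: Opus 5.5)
We have to show that every finite graph $\?G$ with $\?G \nothomto \pathGraph{2}$ contains a "$\sigma$-tree" $\?T$ such that $\?T \homto \?G$ and $\?T \nothomto \pathGraph{2}$. The set of all such trees is then a "dual" of $\pathGraph{2}$. Since "homomorphisms" to $\pathGraph{2}$ can be chosen component by component, we may assume $\?G$ is "connected". The plan is to describe $\?G \homto \pathGraph{2}$ via a level function and read off a tree witness from its failure.

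\textbf{Step 1: characterise homomorphisms to $\pathGraph{2}$.} A map $h\colon G \to \intInt{0,2}$ is a "homomorphism" exactly when every edge $u\to v$ satisfies $h(v)=h(u)+1$. Call an oriented walk (a walk that may traverse edges in either direction) \emph{balanced-net} if we count $+1$ for each forward edge and $-1$ for each backward edge. Then $\?G\homto\pathGraph{2}$ iff two conditions hold:
\begin{itemize}
	\item every closed oriented walk has net $0$;
	\item every oriented walk has net at most $2$.
\end{itemize}
Under these conditions, fix a vertex $r$. Put $h(v)$ to be the net of any walk from $r$ to $v$ (well defined by the first condition), then shift by $-\min h$. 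The range of the result is contained in $\intInt{0,2}$ by the second condition.

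\textbf{Step 2: unfold a failure into a tree.} If $\?G\nothomto\pathGraph{2}$, one of the two conditions fails and we get a finite oriented walk $w$ that is a witness:
\begin{itemize}
	\item either $w$ is closed with nonzero net;
	\item or $w$ has net $\geq 3$; taking a subwalk, we may assume it has height profile going from $0$ to $3$.
\end{itemize}
In the first case, go around $w$ $k$ times, for $k$ large enough. Tracking the running height along this longer walk, the height range exceeds $2$, so again some subwalk goes from the minimal height to $3$ above it. So in both cases we have an oriented walk $w$ whose prefix sums span a range $\geq 3$.

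Now unfold $w$ into an oriented path $\?T$: take fresh vertices for each position of the walk and keep the edge orientations. $\?T$ is "connected" and its "incidence graph" is a path, so $\?T$ is a "$\sigma$-tree". The unfolding map sends $\?T$ to $\?G$ along $w$, so $\?T\homto\?G$.

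Finally, any "homomorphism" $\?T\to\pathGraph{2}$ is forced, up to a constant shift, to equal the prefix-sum function, because each edge increments the level by exactly one. That function has range at least $4$ values, which do not fit in $\intInt{0,2}$. Hence $\?T\nothomto\pathGraph{2}$.

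\textbf{Main obstacle.} The only delicate point is the case analysis in Step 1, namely checking that a failure of the inconsistency condition really yields a finite walk of range $\geq 3$. This is handled by the $k$-fold repetition above. Conversely, all trees $\?T$ produced this way are "obstructions", so the set of these "$\sigma$-trees" is a "dual".
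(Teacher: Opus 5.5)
Your proof is correct, but it follows a different route from the paper. The paper only sketches this proposition. It asserts that the zigzag graphs $\zigzag{n}{2}$, $n\in\N$, form a dual of $\pathGraph{2}$, and notes that each of them is a $\sigma$-tree. Separately, in the example on the Feder--Vardi construction, it obtains tree duality from the Feder--Vardi criterion by exhibiting a homomorphism $\FederVardi{\pathGraph{2}}\to\pathGraph{2}$.

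You instead characterise homomorphisms into $\pathGraph{2}$ by a height function (nets of oriented walks). You then take as dual the set of \emph{all} oriented paths that are obstructions, obtained by unfolding a bad walk.

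Each approach buys something different:
\begin{itemize}
\item Your argument is self-contained and elementary. It relies neither on the Feder--Vardi theorem nor on checking completeness of a specific family, and it even gives a dual made of oriented paths.
\item The paper's zigzag family is more explicit, and its members are themselves critical obstructions, as shown in the zigzag example.
\end{itemize}

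In fact your proof yields the paper's claim at no extra cost. In the unfolded path, let $c$ be the minimal height. A shortest segment from a vertex of height $c$ to a vertex of height $c+3$ has all its interior vertices at heights $c+1$ and $c+2$. Since every edge raises the height by exactly one, this segment is a copy of $\zigzag{n}{2}$ for some $n$, and it therefore maps into $\mathbf{G}$.
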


\begin{proof}[Proof sketch]
	It can be shown that $\{\zigzag{n}{2} \mid n\in\N\}$ is a "dual" of
	$\pathGraph{2}$. Moreover, each $\zigzag{n}{2}$ ($n\in\N$) is a "$\sigma$-tree".
\end{proof}

Feder and Vardi introduced a construction to decide if a "finite structure" has "tree duality":
given a "$\sigma$-structure" $\?B$, we let \AP$\intro*\FederVardi{\?B}$ be
the $\sigma$-structure whose domain is $\psetp{B}$, and for every $\+R_{(k)} \in \sigma$,
we have $\tup{Y_1, \dotsc, Y_k} \in \+R(\FederVardi{\?B})$ 
(with $Y_1, \dotsc, Y_k \in \psetp{B}$) precisely when 
for every $i\in \intInt{1,k}$, 
for every $b_i \in Y_i$,
there exists $b_j \in Y_j$ for every $j \neq i$ "st" 
$\tup{y_1, \dotsc, y_k} \in \+R(\?B)$.%
\footnote{In fact $\FederVardi{-}$ can
easily be extended to be a functor in the category of $\sigma$-structures, see
"eg" \cite[\S~9.2.2]{NesetrilPOM2012FirstOrderCSPs}.}

By construction, note that $b \mapsto \set{b}$ defines a "homomorphism" from
$\?B$ to $\FederVardi{\?B}$.

\begin{proposition}[{\cite[Theorem 21]{FederVardi1998ComputationalStructure}}]
	\label{prop:charac-Feder-Vardi}
	A "finite $\sigma$-structure" $\?B$ has "tree duality" if, and only if,
	$\FederVardi{\?B} \homto \?B$, or, equivalently, if $\?B$ and $\FederVardi{\?B}$
	are "homomorphically equivalent".
\end{proposition}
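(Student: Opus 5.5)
The statement has two implications. In both I will use that $b \mapsto \set{b}$ is a "homomorphism" $\?B \to \FederVardi{\?B}$. This makes ``$\FederVardi{\?B} \homto \?B$'' equivalent to ``$\?B$ and $\FederVardi{\?B}$ are "homomorphically equivalent"''. So it suffices to prove that $\?B$ has "tree duality" iff $\FederVardi{\?B} \homto \?B$.

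\emph{($\Rightarrow$).} Assume $\?B$ has "tree duality", with a "dual" $\+D$ made of "$\sigma$-trees". Since $\FederVardi{\?B}$ is a "finite $\sigma$-structure", it suffices to check that no $\?T \in \+D$ satisfies $\?T \homto \FederVardi{\?B}$. For this I prove the stronger fact that every finite "$\sigma$-tree" $\?T$ with a "homomorphism" $h\colon \?T \to \FederVardi{\?B}$ also maps to $\?B$.
\begin{itemize}
\item Root $\?T$ at a vertex $t_0$ and pick any $b_0 \in h(t_0)$; this is possible since $h(t_0)$ is non-empty.
\item Explore the incidence graph of $\?T$ outward from the root. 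When a tuple $\tup{t_1,\dotsc,t_k} \in \+R(\?T)$ is reached through a vertex $t_i$ already mapped to some $b_i \in h(t_i)$, use the definition of $\+R(\FederVardi{\?B})$: it provides $b_j \in h(t_j)$ for all $j\neq i$ with $\tup{b_1,\dotsc,b_k}\in\+R(\?B)$.
\item Assign these $b_j$ to the $t_j$, and maintain the invariant $g(t) \in h(t)$ for every mapped vertex $t$.
\end{itemize}
"Strong acyclicity@@struct" guarantees that every vertex other than $t_j$ is reached exactly once. Hence no conflicting choices arise and every tuple is handled, so $g$ is a "homomorphism" $\?T \to \?B$. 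Since each element of $\+D$ is an "obstruction", none of them maps to $\FederVardi{\?B}$, and duality gives $\FederVardi{\?B} \homto \?B$.

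\emph{($\Leftarrow$).} Assume $\FederVardi{\?B} \homto \?B$, and let $\?A$ be a "finite $\sigma$-structure". I run the "hyperedge consistency" procedure. Start with $H(a) = B$ for all $a\in A$. Repeatedly remove $b$ from $H(a_i)$ whenever some $\tup{a_1,\dotsc,a_k} \in \+R(\?A)$ has no tuple $\tup{b_1,\dotsc,b_k}\in\+R(\?B)$ with $b_i=b$ and $b_j \in H(a_j)$ for all $j$.
\begin{itemize}
\item If at the fixpoint all $H(a)$ are non-empty, then the fixpoint condition is exactly the defining condition of $\+R(\FederVardi{\?B})$. So $a\mapsto H(a)$ is a "homomorphism" $\?A \to \FederVardi{\?B}$, and composing yields $\?A\homto\?B$.
\item Otherwise I extract a tree obstruction. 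By induction on the removal steps, for each removal of $b$ from $H(a)$ I build a finite "$\sigma$-tree" $\?T_{a,b}$ with a root $r$. It comes with a "homomorphism" to $\?A$ sending $r$ to $a$, and with the property that no "homomorphism" $\?T_{a,b}\to\?B$ sends $r$ to $b$.
\end{itemize}

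The construction of $\?T_{a,b}$ when $b$ is removed via $\tup{a_1,\dotsc,a_k}$ with $a = a_i$ goes as follows.
\begin{itemize}
\item Take a fresh hyperedge $\tup{t_1,\dotsc,t_k}$ and set $r = t_i$.
\item At each $t_j$ with $j\neq i$, glue by their roots all previously built trees $\?T_{a_j,b'}$, for every $b'$ already removed from $H(a_j)$.
\item Gluing several trees at a single vertex, or attaching a tree at one vertex of a fresh hyperedge, keeps the incidence graph connected and acyclic. So the result is a "$\sigma$-tree", and it maps to $\?A$ in the obvious way.
\item Suppose some $g\colon \?T_{a,b}\to\?B$ had $g(r)=b$. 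Then $\tup{g(t_1),\dotsc,g(t_k)}\in \+R(\?B)$, so by the removal condition some $g(t_j)$ was already removed from $H(a_j)$. But then the subtree $\?T_{a_j,g(t_j)}$ glued at $t_j$ is mapped with root onto $g(t_j)$, which contradicts the induction hypothesis.
\end{itemize}
When some $H(a)$ becomes empty, glue all $\?T_{a,b}$ for $b\in B$ at their roots. This gives a "$\sigma$-tree" that maps to $\?A$ but not to $\?B$.

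Hence the set of all finite "$\sigma$-trees" that are "obstructions" of $\?B$ is a "dual": if $\?A \nothomto \?B$, the procedure must empty some $H(a)$, and we get such a tree mapping to $\?A$. So $\?B$ has "tree duality".

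The main obstacle is this inductive tree construction. One has to state the invariant precisely (``$b$ removed from $H(a)$'' is witnessed by a rooted tree) and check that gluing at single vertices preserves "strong acyclicity@@struct". The gluing must also happen only over removals made earlier, so that the induction is well-founded.
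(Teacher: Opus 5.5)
Your proof is correct. It is the classical Feder--Vardi argument; the paper gives no proof of its own and only cites \cite[Theorem~21]{FederVardi1998ComputationalStructure}. The invariant you maintain for removals is that ``$b$ removed from $H(a)$'' is witnessed by a rooted $\sigma$-tree mapping to $\?A$ with root on $a$, but not to $\?B$ with root on $b$. This is exactly the statement the paper attributes to Larose, Loten and Tardif in a footnote of the proof of \Cref{lem:hyperedge-consistency-uniform-convergence}. Your $(\Rightarrow)$ direction is the usual lifting of trees from $\FederVardi{\?B}$ to $\?B$.

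One convention should be stated explicitly, because your $(\Rightarrow)$ step depends on it. When a tuple $\tup{t_1,\dotsc,t_k}$ is reached through $t_i$, you assume every other $t_j$ is a fresh, not-yet-assigned vertex. This holds for $\sigma$-trees in the sense of Feder--Vardi and Larose--Loten--Tardif. There the incidence \emph{multigraph} has one edge per coordinate, so a tree has no tuple with a repeated entry, and no tuple with two distinct entries lying in two relations.

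The paper's incidence graph, read literally, is simple. Under that reading a single loop $\tup{t,t}$ counts as a $\sigma$-tree, and your intermediate claim fails for it. The vertex $\set{1,2}$ of $\FederVardi{\clique{2}}$ carries a loop, so the loop maps to $\FederVardi{\clique{2}}$ but not to $\clique{2}$.

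With the multigraph convention stated, your argument is complete. The trees you build in $(\Leftarrow)$ always use fresh vertices for each new tuple, so they are trees under either reading.
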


Note that $\FederVardi{\FederVardi{\?B}}$ is always "homomorphically equivalent" to
$\FederVardi{\?B}$---see "eg" \cite[\S~9.2.2, Proposition 9.1]{NesetrilPOM2012FirstOrderCSPs}---
and so $\FederVardi{\?B}$ \emph{always} has "tree duality", no matter if $\?B$ has this property.

\begin{example}[{\Cref{ex:zigzag-defn}, continued}]
	\label{ex:Feder-Vardi-P2}
	Using \Cref{prop:charac-Feder-Vardi}, we can prove for instance that
	if a "graph@@dir" $\?G$ has "tree duality", then either it is a "DAG", or $\?G$ contains
	a self-loop.

	Indeed, let $v_0 \to v_1 \to \cdots \to v_m$ be a "directed cycle" in $\?G$,
	with $v_0 = v_m$. Then in $\FederVardi{\?G}$, there is an edge from
	$\{v_0,v_1,\dotsc,v_m\}$ to itself.
	Since $\?G$ has "tree duality", it follows by \Cref{prop:charac-Feder-Vardi}
	that $\FederVardi{\?G} \homto \?G$ and hence $\?G$ also contains a self-loop.
	
	Note that, if $\?G$ contains a self-loop, then actually it is "homomorphically equivalent" to the graph consisting of a single self-loop---that admits $\emptyset$ as a "dual" and hence has
	"tree duality". In other words, what we showed can be rephrased as: any
	non-trivial "graph@@dir" with "tree duality" is a "DAG".
	For instance, this implies that $\clique{2}$ does not have
	"tree duality".
	
	\begin{figure}
		\centering
		\begin{tikzpicture}
			\input{tikz/2-path-graph-Feder-Vardi}
		\end{tikzpicture}
		\caption{
			\AP\label{fig:P2-Feder-Vardi}
			The Feder-Vardi construction $\FederVardi{\pathGraph{2}}$
			on the "$2$-path".
		}
	\end{figure}
	On the other hand, we saw in \Cref{ex:zigzag-defn} that $\pathGraph{2}$ does not have
	"finite duality". However, it has "tree duality": indeed, see \Cref{fig:P2-Feder-Vardi}
	and observe that $\FederVardi{\pathGraph{2}} \homto \pathGraph{2}$.
\end{example}
\section{Decidability of the Homomorphism Problems}
\label{sec:dichotomy-decidability}

The results of Köcher \cite[Proposition~6.5]{Kocher2014AutomatischenGraphen} and Barceló, Figueira and Morvan \cite[Theorem~4.4]{BarceloFigueiraMorvan2023SeparatingAutomatic}, stating
that it undecidable if (the presentation of) an "automatic graph" is "$2$-colourable"
or "$2$-regularly colourable" , respectively, leads us to conjecture that most "homomorphism" and "regular homomorphism problems"
on "automatic structures" are undecidable.
This is formalized in \Cref{thm:dichotomy-theorem-automatic-structures},
and we devote \Cref{sec:dichotomy-undecidability,sec:dichotomy-decidability} to proving it.

\begin{mainstatement}
	\DichotomyThmDichotomyAutomatic
\end{mainstatement}

\subsection{Overview \& Easy Implications of the Dichotomy Theorem}
\label{sec:dichotomy-overview}

\begin{figure}
	\centering
	\begin{tikzpicture}
		\input{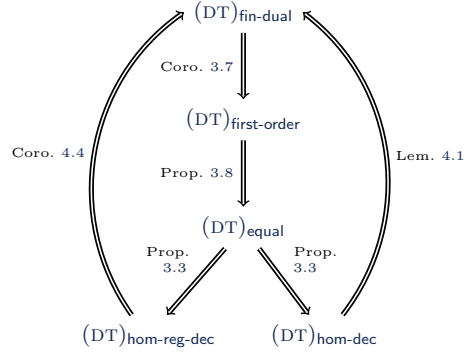}
	\end{tikzpicture}
	\caption{\AP\label{fig:dichotomy-overview}Implications shown in the chapter to prove
	\Cref{thm:dichotomy-theorem-automatic-structures}.}
\end{figure}
We prove \Cref{thm:dichotomy-theorem-automatic-structures} by showing the
implications depicted in \Cref{fig:dichotomy-overview}.
The most difficult implications are $\itemDTFinDual \Rightarrow \itemDTFirstOrder$,
which we prove in \Cref{sec:dichotomy-decidability}, 
$\itemDTHomDec \Rightarrow \itemDTFinDual$ and $\itemDTHomRegDec \Rightarrow \itemDTFinDual$,
which we prove by contraposition in \Cref{sec:undecidability-hom,sec:undecidability-homreg}.

We provide two alternative proofs of decidability: a logic-based one, and a graph-based one.
The former, see \Cref{sec:uniformly-first-order-definable-hom}, relies on the notion of "uniformly first-order definable homomorphisms"
Independently, in \Cref{sec:hyperedge-consistency},
we introduce the "hyperedge consistency algorithm for automatic structures",
which is a variation of the classical "hyperedge consistency algorithm for finite structures".
We start by explaining the later algorithm,
which solves $\HomFin{\?B}$ for some $\?B$'s.\footnote{We will
see later that the algorithm is correct for "$\sigma$-structures" with so-called "tree duality",
which is a superclass of the "structures" with "finite duality".}
Then, we will use the former algorithm to prove that assuming that $\?B$ has "finite duality",
then $\HomRegAut{\?B}$ is decidable.\footnote{Interestingly, this algorithm cannot solve
$\HomRegAut{\?B}$ when $\?B$ has "tree duality" but not "finite duality".}
Surprisingly, both decidability proofs actually rely on notions and ideas used to deal with
"tree duality".

On the other hand, the implications $\itemDTFirstOrder \Rightarrow \itemDTEqual$,
$\itemDTEqual \Rightarrow \itemDTHomRegDec$ and $\itemDTEqual \Rightarrow \itemDTHomDec$ are straightforward, and we prove them in this section.
Before showing these implications, we start by proving $\itemDTFinDual \Rightarrow \itemDTHomDec$ as a warm-up.\footnote{While it is redundant with the implications of \Cref{fig:dichotomy-overview}, 
we prove this implication since not only is it straightforward, but it is also
the implication which, together with the fact that both $\HomAut{\clique{2}}$
and $\HomRegAut{\clique{2}}$ are undecidable, that lead us to conjecture
\Cref{thm:dichotomy-theorem-automatic-structures}.}

\paragraph*{Decidability of the Homomorphism Problem.}

\begin{proposition}%
	\!\!\footnote{This corresponds to the implication $\itemDTFinDual \Rightarrow \itemDTHomDec$
	of \Cref{thm:dichotomy-theorem-automatic-structures}.}
	\AP\label{prop:dichotomy-FinDual-implies-HomDec}
	Let $\?B$ be a "finite $\sigma$-structure".
	If $\?B$ has finite duality, then $\HomAut{\?B}$ is decidable in "NL".
\end{proposition}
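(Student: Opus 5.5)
Let $\+D = \{\?D_1,\dotsc,\?D_m\}$ be a finite "dual" of $\?B$. The plan is to reduce $\HomAut{\?B}$ to model-checking a fixed existential-positive first-order sentence, and then invoke \Cref{prop:data-complexity-model-checking}. For each $\?D_i$ with domain $\{d_1,\dotsc,d_n\}$ I take its canonical conjunctive query $\phi_{\?D_i} \defeq \exists x_1\dotsc\exists x_n.\, \bigwedge_{\tup{d_{j_1},\dotsc,d_{j_k}} \in \+R(\?D_i)} \+R(x_{j_1},\dotsc,x_{j_k})$. This sentence holds in any structure $\?A$ "iff" $\?D_i \homto \?A$.

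The key step is to show that, for every (possibly infinite) "$\sigma$-structure" $\?A$, we have $\?A \homto \?B$ "iff" no $\?D_i$ maps homomorphically into $\?A$. The forward direction is immediate: if $\?D_i \homto \?A \homto \?B$, composing gives $\?D_i \homto \?B$, contradicting that $\?D_i$ is an "obstruction". The converse is where finiteness of $\?A$ was used in the definition of a "dual", so this is the main obstacle. I would handle it by compactness. Suppose no $\?D_i$ maps into $\?A$. Then every finite "substructure" $\?A'$ of $\?A$ also receives no $\?D_i$, so $\?A' \homto \?B$ by duality. Since $\?B$ is finite, a standard compactness argument yields $\?A \homto \?B$. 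One way to phrase it: propositional compactness on variables $p_{a,b}$ for $a\in A$, $b\in B$. Alternatively, Tychonoff on $B^A$, where the sets of maps that are homomorphisms on a given finite part form nonempty closed sets with the finite intersection property. (Automatic structures are countable, so König's lemma also suffices.) This is also the finite-to-all extension implicit in "Atserias' theorem" (\Cref{prop:atserias}), which I could cite instead.

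Hence $\?A \in \HomAut{\?B}$ "iff" $\?A \FOmodels \bigwedge_i \neg\phi_{\?D_i}$. In other words, the complement of $\HomAut{\?B}$ consists of those $\?A$ satisfying the fixed positive existential sentence $\bigvee_i \phi_{\?D_i}$. To stay within the syntactic form of \Cref{prop:data-complexity-model-checking} (no disjunction), I would decide each $\phi_{\?D_i}$ separately. Each of these is "NL" by that proposition, and a finite union of "NL" problems is in "NL". This puts the complement of $\HomAut{\?B}$ in "NL". Finally, by Immerman–Szelepcsényi ($\mathrm{NL}=\mathrm{coNL}$), $\HomAut{\?B}$ itself is in "NL", and in particular it is decidable.

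One technical point needs care: the automatic presentation may interpret equality by a non-trivial automatic relation $\relPres{=}{\+A}$. The queries $\phi_{\?D_i}$ contain no equality atoms, and the relations are invariant under this congruence. So homomorphisms from $\?D_i$ into $\?A$ correspond to satisfying assignments over words, and the reduction is unaffected.
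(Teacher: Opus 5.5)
Your proposal is correct and follows the paper's own argument: take the canonical conjunctive queries $\phi_{\mathbf{D}_i}$ of a finite dual and reduce to model checking via \Cref{prop:first-order-model-checking-automatic-structures}. Your two additions are left implicit in the paper and are what its one-line conclusion needs to be rigorous: the compactness step (as in \Cref{prop:de-bruijn-erdos}) that extends the dual from finite sources to infinite automatic ones, and the appeal to $\mathrm{NL}=\mathrm{coNL}$, required because the defining sentence is a negation of existential-positive ones.
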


\begin{proof}
	Given a "finite $\sigma$-structure" $\?D$ with domain $\{d_1,\dotsc,d_n\}$,
	we build the "first-order sentence"
	\[
		\phi_{\?D} \defeq \exists x_1.\; \cdots \exists x_n.\;
		\bigwedge_{\+R_{(k)} \in \sigma} \bigwedge_{\substack{\tup{i_1,\dotsc,i_k} \in \intInt{1,n}^k\\\text{"st" }\tup{d_{i_1},\dotsc,d_{i_k}} \in \+R(\?D)}}
		\+R(x_{i_1},\dotsc,x_{i_k}).
	\]
	By construction, for any arbitrary "$\sigma$-structure" $\?A$, we have $\?A \FOmodels \phi_{\?D}$
	"iff" $\?D \homto \?A$.
	Then, since $\?B$ has "finite duality", it admits a finite "dual"
	$\?D_1,\dotsc,\?D_m$.
	Then
	\[
		\?A \FOmodels \bigwedge_{i=1}^m \neg \phi_{\?D_i}
		\quad\text{"iff"}\quad 
		\?A \homto \?B.
	\]
	The conclusion follows from \Cref{prop:first-order-model-checking-automatic-structures}.
\end{proof}

\paragraph*{Equality of the Homomorphism Problems Imply their Decidability.}

\begin{proposition}%
	\!\!\footnote{This corresponds to the implication $\itemDTEqual \Rightarrow \itemDTHomDec$
	of \Cref{thm:dichotomy-theorem-automatic-structures}.}
	\AP\label{prop:dichotomy-Equal-implies-HomDec}
	Let $\?B$ be a "finite $\sigma$-structure".
	If $\HomAut{\?B} = \HomRegAut{\?B}$ then $\HomAut{\?B}$ and $\HomRegAut{\?B}$
	are decidable.
\end{proposition}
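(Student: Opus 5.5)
The approach is to exploit that, when the two classes coincide, one problem is semi-decidable and its complement is semi-decidable too, so both are decidable.

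\emph{$\HomRegAut{\?B}$ is recursively enumerable.} Given an "automatic presentation" $\+A$ over an alphabet $\Sigma$, a "regular homomorphism" into the finite structure $\?B$ is described by finitely many finite automata, one for each $b \in B$, recognizing the languages $f^{-1}[b]$. We enumerate all such finite tuples of automata. For a candidate, the following properties are expressible by "first-order" sentences over the structure obtained by expanding $\+A$ with one unary predicate per $b\in B$ (still "automatic@@struct"):
\begin{itemize}
	\item the languages partition $\domainPres{\+A}$;
	\item they are compatible with $\relPres{=}{\+A}$ when the presentation is not injective;
	\item for each $\+R_{(k)}\in\sigma$ and each tuple in $\relPres{\+R}{\+A}$, the tuple of labels lies in $\+R(\?B)$.
\end{itemize}
So they are decidable by \Cref{prop:first-order-model-checking-automatic-structures}, or directly by automata constructions (products, projections and emptiness tests). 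Hence we semi-decide membership in $\HomRegAut{\?B}$.

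\emph{$\HomAut{\?B}$ is co-recursively enumerable.} By compactness (or König's lemma, since $\?B$ is finite), a countable structure $\?A$ has no "homomorphism" to $\?B$ iff some finite "substructure" of $\?A$ has none. Given $\+A$, we enumerate finite sets of words of $\domainPres{\+A}$. For each one we compute the induced finite substructure: membership in each "automatic relation" is decidable by running the automaton. We then check by brute force whether it maps to $\?B$. This halts precisely when $\?A \nothomto \?B$. The main point to justify carefully is this compactness step: enumerate the domain as $w_1,w_2,\dotsc$, and observe that the homomorphisms of the finite prefixes into $\?B$ form a finitely branching tree; if every prefix admits one, an infinite branch yields a global "homomorphism".

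\emph{Combining the two.} Under the hypothesis $\HomAut{\?B} = \HomRegAut{\?B}$, the same set of presentations is both RE and coRE, hence decidable, and so are both problems. The obstacle is mild: it only lies in verifying that the enumerated candidate automata indeed define a function on the elements represented (handling $\relPres{=}{\+A}$), which the first-order check covers.
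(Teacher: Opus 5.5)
Your proof is correct and follows the paper's route. The paper likewise shows that $\HomRegAut{\?B}$ is RE by enumerating automata (phrased as ``$\?B$-automata'' whose states are partitioned according to $B$) and that $\HomAut{\?B}$ is coRE via compactness (the De Bruijn--Erd\H{o}s theorem, proved with Tychonoff), and then concludes that a set which is both RE and coRE is decidable. The differences are cosmetic: you use K\"onig's lemma on the countable domain instead of Tychonoff, and you enumerate finite sets of words and compute their induced substructures directly, whereas the paper enumerates abstract finite structures and tests them against $\?A$ by first-order model checking.
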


To prove this, we first give an upper bound on the homomorphism problems independently
of any assumption on $\?B$.
\begin{restatable}{proposition}{dichotomyGeneralUpperBounds}
	\AP\label{prop:dichotomy-general-upper-bounds}
	Let $\?B$ be a "finite $\sigma$-structure".
	Then $\HomAut{\?B}$ is "coRE" and $\HomRegAut{\?B}$ is "RE".
\end{restatable}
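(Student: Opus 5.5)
The two halves are handled separately: for $\HomRegAut{\?B}$ we enumerate candidate witnesses, and for $\HomAut{\?B}$ we enumerate certificates of failure using compactness.

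For $\HomRegAut{\?B}$ being "RE", we semi-decide membership by enumeration. A "regular homomorphism" $f$ from $\+A$ to the finite structure $\?B$ is determined by the regular languages $f^{-1}[b]$ for $b\in B$. Each language is given by a finite automaton, so the candidates form a countable, effectively enumerable family $\tup{\+L_b}_{b\in B}$. For each candidate we decide two things. First, that the languages partition $\domainPres{\+A}$ and are compatible with $\relPres{=}{\+A}$, so that they define a function on $\?A$. Second, that for every $\+R_{(k)}\in\sigma$ and every $\tup{b_1,\dots,b_k}\notin\+R(\?B)$, no tuple of $\relPres{\+R}{\+A}$ lies in $\+L_{b_1}\times\dots\times\+L_{b_k}$. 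All of these conditions are first-order statements over the "automatic presentation" expanded by the regular unary predicates $\+L_b$. That expansion is again automatic, so the conditions are decidable by \Cref{prop:first-order-model-checking-automatic-structures}; alternatively, the product-automaton emptiness checks can be done directly. The procedure halts and accepts exactly when a regular homomorphism exists.

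For $\HomAut{\?B}$ being "coRE", we show that the complement is "RE". The key step is a compactness argument: since $\?B$ is finite, an arbitrary (countable) structure $\?A$ satisfies $\?A \homto \?B$ if and only if every finite substructure of $\?A$ maps homomorphically to $\?B$.

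To prove this, fix an enumeration $a_0,a_1,\dots$ of $A$, which is possible because automatic structures are countable. The partial maps from $\{a_0,\dots,a_n\}$ that are homomorphisms of the induced substructure form a finitely branching tree, ordered by restriction. Every level of this tree is nonempty by hypothesis, so König's lemma gives an infinite branch, and that branch is a homomorphism. Equivalently, one can invoke compactness of propositional logic, or Tychonoff's theorem on $B^A$.

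Consequently, $\?A \nothomto \?B$ if and only if some finite $\?D$ with $\?D \nothomto \?B$ satisfies $\?D\homto\?A$. The semi-decision procedure for the complement enumerates all finite $\sigma$-structures $\?D$. For each one it checks $\?D\nothomto\?B$ by brute force and checks $\?A\FOmodels\phi_{\?D}$, with $\phi_{\?D}$ the sentence from the proof of \Cref{prop:dichotomy-FinDual-implies-HomDec}, using \Cref{prop:first-order-model-checking-automatic-structures}. It accepts when both checks succeed. Thus the complement of $\HomAut{\?B}$ is "RE", so $\HomAut{\?B}$ is "coRE".

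The main obstacle is the compactness step, which is the only non-syntactic ingredient. Two points need care there. First, it needs countability of $\?A$. Second, the function must respect the quotient by $\relPres{=}{\+A}$, which we can arrange by enumerating representatives, or by working directly with words and imposing the equality relation as an extra constraint. Everything else reduces to decidability of first-order model checking on automatic presentations.
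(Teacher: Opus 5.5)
Your proof is correct and follows the paper's argument. For the homomorphism problem, the paper also enumerates finite structures $\mathbf{D}$ that do not map to $\mathbf{B}$ and tests the canonical existential sentence of $\mathbf{D}$ by first-order model checking. It justifies this with a De Bruijn--Erd\H{o}s compactness step proved via Tychonoff, where you use K\"onig's lemma plus countability. For the regular problem, the paper enumerates ``$\mathbf{B}$-automata'' whose states are partitioned by the elements of $\mathbf{B}$, which amounts to your tuples of regular languages.

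Your phrasing is slightly more accurate than the paper's: it says it tests whether the finite structure is a substructure of the automatic one, but its sentence actually tests for a homomorphism into it, which is what you state and all the argument needs.
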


\begin{proof}[Proof sketch]
	For the formal proof, see \Cref{apdx-prop:dichotomy-general-upper-bounds}.

	For $\HomAut{\?B}$, we can enumerate "finite substructures" of the "source structure",
	and, when we find one that does not map to $\?B$, answer `no`.

	On the other hand, for $\HomRegAut{\?B}$ we can simply enumerate $|B|$-tuples of finite automata and check that they do define a "regular homomorphism" to $\?B$, in which case we answer `yes'. 
\end{proof}

\begin{proof}[Proof of \Cref{prop:dichotomy-Equal-implies-HomDec}.]
	If $\HomAut{\?B} = \HomRegAut{\?B}$, then by \Cref{prop:dichotomy-general-upper-bounds},
	these problems are both "RE" and "coRE", and are hence decidable.
\end{proof}

\subsection{Uniformly First-Order Definable Homomorphisms}
\label{sec:uniformly-first-order-definable-hom}

We say that $\HomFin{\?B}$ (resp. $\HomAll{\?B}$)
has \AP""uniformly first-order definable homomorphisms"" if there exists "first-order formulas"
$\langle \phi_b(x) \rangle_{b\in B}$ over $\sigma$ "st" for any finite
(resp. arbitrary) "$\sigma$-structure" $\?A$, for any $a\in A$,
there is at most one $b \in B$, denoted by $b(a)$
"st" $\langle \?A, a \rangle \FOmodels \phi_b(x)$, and moreover if $\?A \homto \?B$
then $a \mapsto b(a)$ is a "homomorphism" from $\?A$ to $\?B$.%
\footnote{The adverb ``uniformly'' in ``"uniformly first-order definable homomorphisms"''
refers to the fact that the formulas do not depend on the "source structure".}

\begin{example}[{\Cref{ex:zigzag-defn}, continued}]
	\AP\label{ex:zigzag-FO}
	For instance, $\HomAll{\transitiveTournament{2}}$ has "uniformly first-order definable homomorphisms", by letting $\phi_0(x)$ be the set of vertices with no
	predecessors, $\phi_2(x)$ be the set of vertices with no successor (but at least one predecessor), and $\phi_1(x)$ be the set of vertices that satisfy neither $\phi_0$ not $\phi_2$.

	On the other hand, looking at the "zigzag graph" of \Cref{ex:zigzag-defn} and \Cref{fig:zigzag-graph-hom-T2}
	for long enough will convince the reader that no such strategy can work
	for $\pathGraph{2}$.
\end{example}

\begin{restatable}{lemma}{finiteDualityUniformlyDefinableHomomorphisms}
	\AP\label{lemma:finite-duality-uniformly-definable-homomorphisms}
	Let $\?B$ be a finite structure. Then $\HomAll{\?B}$ is "first-order definable" "iff"
	$\HomAll{\?B}$ has "uniformly first-order definable homomorphisms".%
	\footnote{The same 
	equivalence holds if one replaces $\HomAll{\?B}$ with $\HomFin{\?B}$.
	In both cases, these conditions are equivalent, by "Atserias' theorem", to asking whether $\?B$
	has "finite duality".
	This corresponds to the implication $\itemDTFinDual \Rightarrow \itemDTFirstOrder$
	of \Cref{thm:dichotomy-theorem-automatic-structures} and its converse implication.}
\end{restatable}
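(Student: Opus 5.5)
The direction ($\Leftarrow$) is a direct construction. The direction ($\Rightarrow$) will label each element $a$ by the set of values $a$ can take under some homomorphism to $\?B$, and then collapse that set to a single element using tree duality.

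\textbf{Direction ($\Leftarrow$).} Assume formulas $\langle \phi_b(x)\rangle_{b\in B}$ witness "uniformly first-order definable homomorphisms". Consider the sentence
\[
	\psi \defeq \forall x.\,\bigvee_{b\in B}\phi_b(x) \;\land\; \bigwedge_{\+R_{(k)}\in\sigma} \forall x_1\cdots\forall x_k.\; \+R(x_1,\dotsc,x_k) \rightarrow \bigvee_{\tup{b_1,\dotsc,b_k}\in \+R(\?B)} \bigwedge_{i=1}^k \phi_{b_i}(x_i).
\]
Suppose $\?A \FOmodels \psi$. By the uniqueness clause, $a\mapsto b(a)$ is a well-defined total function, and the second conjunct says it is a "homomorphism". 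Conversely, suppose $\?A\homto\?B$. By hypothesis $a\mapsto b(a)$ is then a "homomorphism", and in particular it is defined everywhere, so $\psi$ holds. Hence $\psi$ defines $\HomAll{\?B}$.

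\textbf{Direction ($\Rightarrow$), step 1: reduction to a core.} Assume $\HomAll{\?B}$ is "first-order definable". By "Atserias' theorem" $\?B$ has "finite duality". We may replace $\?B$ by its "core" $\core{\?B}$: the two structures are homomorphically equivalent, so they have the same $\HomAll{-}$, and formulas $\phi_c$ for $c \in \core{\?B}$ extend to $\?B$ by setting $\phi_b \defeq \bot$ for $b\notin \core{\?B}$. Composing with the inclusion $\core{\?B}\hookrightarrow\?B$ preserves being a homomorphism. So assume $\?B$ is a core; the case $B=\emptyset$ is trivial.

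\textbf{Step 2: pointed definability.} By \Cref{prop:idempotent-core-preserves-csp-complexity}, $\HomAll{\marked{\?B}}$ is "first-order equivalent" to $\HomAll{\?B}$. First-order reductions compose with first-order definitions, so $\HomAll{\marked{\?B}}$ is "first-order definable" by some sentence $\theta$. For $c\in B$, let $\chi_c(x)$ be obtained from $\theta$ by replacing each atom $\unarypred{c}(y)$ by $y=x$ and each $\unarypred{c'}(y)$ with $c'\neq c$ by $\bot$. Then $\langle\?A,a\rangle\FOmodels\chi_c(x)$ iff there is a "homomorphism" $\?A\to\?B$ mapping $a$ to $c$. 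Define $S(a)\defeq\{c\in B \mid \langle\?A,a\rangle\FOmodels\chi_c(x)\}$.

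\textbf{Step 3: the key observation.} If $\?A\homto\?B$, then $a\mapsto S(a)$ is a "homomorphism" $\?A\to\FederVardi{\?B}$. First, each $S(a)$ is nonempty, since some homomorphism exists. Next, let $\tup{a_1,\dotsc,a_k}\in\+R(\?A)$, fix $i$ and $b_i\in S(a_i)$, and take a homomorphism $h$ with $h(a_i)=b_i$. Then $h(a_j)\in S(a_j)$ for every $j$, and $\tup{h(a_1),\dotsc,h(a_k)}\in\+R(\?B)$. This is exactly the defining condition of $\FederVardi{\?B}$.

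\textbf{Step 4: collapsing to $\?B$.} By \Cref{prop:finite-duality-implies-tree-duality} and \Cref{prop:charac-Feder-Vardi}, there is a "homomorphism" $g\colon\FederVardi{\?B}\to\?B$. Set
\[
	\phi_b(x)\defeq\bigvee_{\substack{Y\in\psetp{B}\\ g(Y)=b}}\Bigl(\bigwedge_{c\in Y}\chi_c(x)\land\bigwedge_{c\notin Y}\neg\chi_c(x)\Bigr).
\]
The disjuncts for distinct $Y$ are mutually exclusive, so at most one $b$ satisfies $\phi_b$ at $a$, namely $b(a) = g(S(a))$ when $S(a)\ne\emptyset$. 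When $\?A\homto\?B$, the map $a\mapsto g(S(a))$ is a composition of homomorphisms by Step 3, hence a homomorphism.

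The step I expect to need the most care is Step 2. One must check that first-order definability transfers along the equivalence of \Cref{prop:idempotent-core-preserves-csp-complexity} for arbitrary, possibly infinite, structures, which is why I use the $\HomAll{}$ version of that proposition. One must also check that the substitution defining $\chi_c$ correctly encodes the $\marked{\?B}$-instance in which only $a$ carries the mark $\unarypred{c}$.
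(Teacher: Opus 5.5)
Your proof is correct and follows the paper's argument. The map $a \mapsto S(a)$ into $\FederVardi{\?B}$ is made uniformly first-order definable through $\marked{\?B}$ and then collapsed by a homomorphism $\FederVardi{\?B} \to \?B$ coming from tree duality; the converse direction uses the same sentence $\psi$ as the paper.

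Your Step~1 is a genuinely needed addition that the paper omits. The paper invokes \Cref{prop:idempotent-core-preserves-csp-complexity} for arbitrary $\?B$, although that proposition requires a core. For instance, for $\?B = \transitiveTournament{2} \disunion \pathGraph{2}$, $\HomAll{\?B}$ is first-order definable but $\HomAll{\marked{\?B}}$ is not. Passing to $\core{\?B}$ first is what makes your Step~2 legitimate.
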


\begin{proof}[Proof sketch]
	For the formal proof, see \Cref{apdx-lemma:finite-duality-uniformly-definable-homomorphisms}.
	The converse implication is trivial. For the direct implication,
	for any structure $\?A$, we consider the function $F\colon A \to \pset{B}$ which maps
	$a$ to the set of $b \in B$ "st" at least one "homomorphism" from $\?A$ to $\?B$
	sends $a$ to $b$.
	Using the assumption that $\HomAll{\?B}$ is "first-order definable"
	as well as the "marked structure" of $\?B$, we prove that for any $b \in B$,
	the set ${a \in A \mid b \in F(a)}$ can be defined in first-order logic.
	
	We then use the fact that $\?B$ has "tree duality" to transform
	$F\colon A \to \pset{B}$ into a "homomorphism" from $\?A$ to $\?B$.
	The formulas describing this "homomorphism" can be defined
	from those defining ${a \in A \mid b \in F(a)}$, and actually do not depend on $\?A$.
	It follows that $\HomAll{\?B}$ has "uniformly first-order definable homomorphisms".
\end{proof}

\begin{corollary}[of "Atserias' theorem" and \Cref{lemma:finite-duality-uniformly-definable-homomorphisms}]
	\AP\label{coro:finite-duality-implies-first-order}
	Let $\?B$ be a finite structure. Then $\?B$ has "finite duality" "iff"
	$\HomAll{\?B}$ has "uniformly first-order definable homomorphisms".
\end{corollary}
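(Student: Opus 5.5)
The corollary is obtained by chaining \Cref{prop:atserias} with \Cref{lemma:finite-duality-uniformly-definable-homomorphisms}, so the plan is to compose the two equivalences and check that both are stated for the same object.

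First, by \Cref{prop:atserias}, $\?B$ has "finite duality" if and only if $\HomAll{\?B}$ is "first-order definable". Second, by \Cref{lemma:finite-duality-uniformly-definable-homomorphisms}, $\HomAll{\?B}$ is "first-order definable" if and only if $\HomAll{\?B}$ has "uniformly first-order definable homomorphisms". Composing the two equivalences gives the claim. The only thing to check is that both results refer to the same class, namely $\HomAll{\?B}$ over arbitrary (not only finite) "$\sigma$-structures". This holds as stated: Atserias' theorem is phrased for $\HomAll{\?B}$, and the lemma is stated for $\HomAll{\?B}$ as well.

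For the direction "uniformly definable $\Rightarrow$ finite duality", one could avoid the lemma altogether. Given the formulas $\langle\phi_b\rangle_{b\in B}$, the sentence
\[
\forall x.\,\Bigl(\bigvee_b \phi_b(x)\Bigr) \;\land\; \bigwedge_{\+R_{(k)}}\forall \bar x.\,\Bigl(\+R(\bar x)\to\bigvee_{\langle b_1,\dotsc,b_k\rangle\in\+R(\?B)}\bigwedge_i\phi_{b_i}(x_i)\Bigr)
\]
defines $\HomAll{\?B}$. This sentence expresses that the definable map is total and is a "homomorphism". Conversely, if a homomorphism exists, then the definable map is one, so the sentence holds. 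Atserias' theorem then yields "finite duality".

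The genuine difficulty sits entirely inside the lemma's forward direction, which is already available to us. There, one must define the sets $\{a \mid b\in F(a)\}$ via the "marked structure" and round $F$ to a homomorphism using "tree duality" (\Cref{prop:finite-duality-implies-tree-duality}, \Cref{prop:charac-Feder-Vardi}). For the corollary itself there is no real obstacle beyond matching hypotheses.
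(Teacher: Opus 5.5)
Your proposal is correct and matches the paper: the corollary is obtained exactly by composing Atserias' theorem (\Cref{prop:atserias}) with \Cref{lemma:finite-duality-uniformly-definable-homomorphisms}, both stated over arbitrary $\sigma$-structures. Your direct sentence for the backward direction is the converse-implication argument from the lemma's own proof, inlined.
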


"Uniformly first-order definable homomorphisms" are actually a very strong restriction:
we show that such "homomorphisms" are always "regular@@hom".
\begin{proposition}%
	\!\!\footnote{This corresponds to the implication $\itemDTFirstOrder \Rightarrow \itemDTEqual$
	of \Cref{thm:dichotomy-theorem-automatic-structures}.}
	\AP\label{prop:uniformly-first-order-implies-regular}
	Let $\?B$ be a finite "$\sigma$-structure".
	If $\HomAll{\?B}$ has "uniformly first-order definable homomorphisms",
	then $\HomRegAut{\?B} = \HomAut{\?B}$.
\end{proposition}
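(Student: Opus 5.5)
The inclusion $\HomRegAut{\?B} \subseteq \HomAut{\?B}$ is immediate, since a "regular homomorphism" is in particular a "homomorphism". So the plan is to prove the converse: take an "automatic presentation" $\+A$ of a structure $\?A$ with $\?A \homto \?B$, and produce a "regular homomorphism" from $\+A$ to $\?B$ out of the uniform formulas $\langle \phi_b(x)\rangle_{b\in B}$.

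By assumption, for every $a\in A$ at most one $b$ satisfies $\langle \?A,a\rangle \FOmodels \phi_b(x)$. Since $\?A \homto \?B$, the map $a\mapsto b(a)$ is a "homomorphism"; in particular it is total, so exactly one $b$ holds at each $a$. Now define $h\colon \domainPres{\+A}\to B$ by $h(u) \defeq b(\+A(u))$, that is, $u$ is sent to the unique $b$ with $\?A, \+A(u) \FOmodels \phi_b(x)$. This map factors through the quotient by $\relPres{=}{\+A}$, so it is a "homomorphism" from the structure represented by $\+A$ to $\?B$.

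It remains to show that $h$ is a "regular function", i.e.\ that each $h^{-1}[b]$ is regular. I would rewrite each $\phi_b$ into a formula $\widehat{\phi_b}$ over $\signatureSynchronous{\Sigma}$ in the standard way:
\begin{itemize}
	\item relativise every quantifier to $\domainPres{\+A}$;
	\item replace each atom $\+R(x_1,\dotsc,x_k)$ by a formula defining $\relPres{\+R}{\+A}$;
	\item replace equality by a formula defining $\relPres{=}{\+A}$.
\end{itemize}
These defining formulas exist by \Cref{prop:automatic-first-order}, after possibly enlarging $\Sigma$ to have at least two letters, which does not change regularity. Then $h^{-1}[b] = \semFO{\widehat{\phi_b}(x)}{\univStructSynchronous{\Sigma}}\cap\domainPres{\+A}$ is first-order definable in $\univStructSynchronous{\Sigma}$, hence automatic as a unary relation, i.e.\ regular, again by \Cref{prop:automatic-first-order}. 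Equivalently, one can use the closure of automatic relations under Boolean operations and projection, which is exactly the content of \Cref{prop:first-order-model-checking-automatic-structures}.

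The only delicate point is the equality symbol: the presentation may be non-injective, and formulas must be evaluated on $\?A$ rather than on the raw words. Replacing $=$ by $\relPres{=}{\+A}$, which is a congruence, handles this, since the semantics of $\phi_b$ on $\?A$ at $\+A(u)$ coincides with that of $\widehat{\phi_b}$ at $u$ by a routine induction on formulas. Nothing about $\?B$ beyond the hypothesis is needed.
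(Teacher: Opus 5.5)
Your proposal is correct and follows the paper's route: substitute formulas over $\signatureSynchronous{\Sigma}$ that define the relations of $\+A$ into each $\phi_b$, then use \Cref{prop:automatic-first-order} to conclude that each preimage $h^{-1}[b]$ is regular. Your version is more careful than the paper's, since you also relativise quantifiers to $\domainPres{\+A}$, translate equality via $\relPres{=}{\+A}$, and handle the two-letter alphabet assumption, all of which the paper leaves implicit.
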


\begin{proof}
	Let $\+A$ be an "automatic presentation" of a "$\sigma$-structure" $\?A$,
	and assume that $\?A \homto \?B$. We need to show that $\+A \homregto \?B$.
	Let $\langle \phi_b(x) \rangle_{b\in B}$ be "first-order formulas" over $\sigma$
	as in the definition of "uniformly first-order definable homomorphisms".
	
	Since $\+A$ is an "automatic presentation" over $\Sigma$,
	for each "predicate" $\+R$ of arity $k$
	of $\sigma$, there exists a "first-order formula" $\psi_{\+R}(x_1,\dotsc,x_k)$ over 
	$\signatureSynchronous{\Sigma}$ describing each relation $\+R$.
	We then define $\phi^*_b(x)$ as the formulas obtained from $\phi_b(x)$
	by substituting $\+R(x_1,\dotsc,x_k)$ for $\psi_{\+R}(x_1,\dotsc,x_k)$.

	Then, for each $b\in B$, $\phi^*_b(x)$ is a "first-order formula" over $\signatureSynchronous{\Sigma}$,
	and so \[\{u \in \Sigma^* \mid \langle \univStructSynchronous{\Sigma},u \rangle \FOmodels \phi^*_b(x)\}\] is regular by \Cref{prop:automatic-first-order}.
	Clearly, these sets are disjoint and cover $\domainPres{\+A}$, and the function that
	maps $u \in \domainPres{\+A}$ to the unique $b$ "st" $\langle \univStructSynchronous{\Sigma},u \rangle \FOmodels \phi^*_b(x)$ is a "homomorphism".
	Hence, we have built a "regular homomorphism" from $\+A$ to $\?B$, which concludes the proof.
\end{proof}

Putting \Cref{coro:finite-duality-implies-first-order} with \Cref{prop:uniformly-first-order-implies-regular}, we get that if $\?B$ has "finite duality",
then $\HomRegAut{\?B} = \HomAut{\?B}$.
In turn, since $\HomAut{\?B}$ is "coRE" and $\HomRegAut{\?B}$ is "RE"
(\Cref{prop:dichotomy-general-upper-bounds}), this implies
that $\HomRegAut{\?B}$ is decidable.
In fact, using the formulas $\phi^*_b(x)$, we can build a "first-order formula"
saying ``every $x$ satisfies exactly one $\phi^*_b(x)$, and moreover
if $\langle x_1,\dotsc,x_k\rangle$ is an "$\+R$-tuple" then
$\langle b_1,\dotsc,b_k \rangle$ is an $\+R$-tuple, where $b_i$ is the unique element of $B$
"st" $\phi_{b_i}(x_i)$ holds''. Each property ``$\langle x_1,\dotsc,x_k\rangle$ is an "$\+R$-tuple"'' can be expressed using a "first-order formula" expressing the relations
of the "automatic presentation" given as input.

\begin{corollary}[of "Atserias' theorem" and \Cref{lemma:finite-duality-uniformly-definable-homomorphisms}]
	\AP\label{coro:finite-duality-implies-homreg-decidable}
	Let $\?B$ be a finite "$\sigma$-structure" with "finite duality".
	For each "automatic presentation" $\+A$ over alphabet $\Sigma$, there exists a "first-order 
	formula" $\phi$, whose size is linear in $\+A$, "st" $\univStructSynchronous{\Sigma} \FOmodels 
	\phi$ "iff" $\+A \homregto \?B$.
\end{corollary}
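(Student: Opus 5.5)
Since $\?B$ has "finite duality", \Cref{coro:finite-duality-implies-first-order} yields first-order formulas $\langle \phi_b(x)\rangle_{b\in B}$ over $\sigma$ witnessing that $\HomAll{\?B}$ has "uniformly first-order definable homomorphisms". These formulas depend only on $\?B$, so their size is a constant. I will take the automatic presentation $\+A$ over $\Sigma$ and, using \Cref{prop:automatic-first-order} (assuming $|\Sigma|\geq 2$; otherwise first re-encode $\+A$ over a two-letter alphabet, which changes sizes only linearly), fix first-order formulas over $\signatureSynchronous{\Sigma}$: $\delta(x)$ defining $\domainPres{\+A}$, and $\psi_{\+R}(x_1,\dotsc,x_k)$ defining each $\relPres{\+R}{\+A}$. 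I will obtain these with linear size by the direct simulation of an automaton as a formula: existentially guess a run, encoded through finitely many positions and prefixes, in the spirit of the proof of \Cref{prop:automatic-first-order}. Then, as in the proof of \Cref{prop:uniformly-first-order-implies-regular}, $\phi^*_b(x)$ is obtained from $\phi_b$ by replacing each atom $\+R(\bar y)$ with $\psi_{\+R}(\bar y)$ and relativizing every quantifier to $\delta$. Its size is linear in $\+A$, since $\phi_b$ is fixed and each substitution copies one $\psi_{\+R}$.

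The sentence will be
\[
\phi \defeq \forall x.\,\Bigl(\delta(x)\rightarrow \bigvee_{b\in B}\phi^*_b(x)\Bigr)\;\wedge\;\bigwedge_{\+R_{(k)}\in\sigma}\ \forall x_1\dotsc\forall x_k.\,\Bigl(\psi_{\+R}(\bar x)\rightarrow \bigvee_{\tup{b_1,\dotsc,b_k}\in\+R(\?B)}\ \bigwedge_{i}\phi^*_{b_i}(x_i)\Bigr).
\]
This is linear in $\+A$, since the disjunctions range over the fixed $\?B$. The uniqueness clause ``at most one $b$'' is automatic from the definition of uniformly definable homomorphisms, so it need not be written.

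For correctness, suppose $\univStructSynchronous{\Sigma}\FOmodels\phi$. Then $u\mapsto$ the unique $b$ with $\phi^*_b(u)$ is a total function on $\domainPres{\+A}$. It is regular by \Cref{prop:automatic-first-order}, and the second conjunct makes it a homomorphism, so $\+A\homregto\?B$. Conversely, suppose $\+A\homregto\?B$. Then $\?A\homto\?B$, so by the defining property the map $a\mapsto b(a)$ is a total homomorphism. The totality comes from the definition ("if $\?A\homto\?B$ then $a\mapsto b(a)$ is a homomorphism", which presupposes it is defined everywhere). Since $\langle \univStructSynchronous{\Sigma},u\rangle\FOmodels\phi^*_b$ iff $\langle\?A,u\rangle\FOmodels\phi_b$, both conjuncts hold.

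The main obstacle is the linear size bound for $\psi_{\+R}$ and $\delta$, since the generic translation in \Cref{prop:automatic-first-order} may not be linear. I would first try to encode a run of an automaton with state set $Q$: for each $q\in Q$, quantify a word $w_q$ whose length-$n$ prefixes mark the positions where the run is in state $q$. With $|\Sigma|\ge2$, marking is done with letters and prefixes, via $\lastLetter{a}$ applied to prefixes of equal length (using $\equalLength$ and $\prefix$). Local transition consistency is then a single universally quantified clause over positions, with one disjunct per transition. This gives size linear in the automaton. Padding in $u_1\convol\dotsb\convol u_k$ is handled by comparing the position prefix against the lengths of the $u_i$.
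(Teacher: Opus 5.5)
Your construction and correctness argument are the paper's. The paper also plugs the presentation's defining formulas into the uniformly first-order definable $\phi_b$ given by \Cref{coro:finite-duality-implies-first-order}, and writes a sentence stating that every element satisfies exactly one $\phi^*_b$ and that $\+R$-tuples are sent to $\+R$-tuples of $\?B$. Dropping the ``at most one'' clause is justified. Relativizing quantifiers to $\delta$, which the proof of \Cref{prop:uniformly-first-order-implies-regular} silently skips, is correct, as is the $|\Sigma|\geq 2$ caveat that the paper leaves implicit. The paper simply takes for granted that the formulas $\psi_{\+R}$ have linear size.

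Where you go beyond the paper, in the linear-size encoding of runs, the sketch fails as written. You use independent marking words $w_q$ and one universally quantified clause whose disjuncts are $\mathrm{mark}_q(n)\wedge(\text{letter } c \text{ at } n+1)\wedge\mathrm{mark}_{q'}(n+1)$, one per transition $(q,c,q')$. Nothing forces a single state per position, so for a nondeterministic automaton the marks can splice unrelated transitions into a fake run. For example, take transitions $q_0\to q_1$ and $q_2\to q_3$ only, and mark $q_0$ at position $0$, both $q_1,q_2$ at position $1$, and $q_3$ at position $2$. Every step is locally justified, yet no run exists because $q_1$ has no outgoing transition. Adding an exclusivity constraint would fix this, but written naively it is quadratic in $|Q|$. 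Instead, for each state use the implication $\mathrm{mark}_q(n)\rightarrow\bigvee_{(q,c,q')}\bigl((\text{letter } c \text{ at } n+1)\wedge\mathrm{mark}_{q'}(n+1)\bigr)$. Add the initial mark and the requirement that every state marked at the last position is accepting. Following marks from the initial state then yields an accepting run, and the formula stays linear in the automaton.
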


In particular, this implies the decidability of $\HomRegAut{\?B}$.

\subsection{Hyperedge Consistency}
\label{sec:hyperedge-consistency}

Given a "homomorphism" $f\colon \?G \to \?H$ between "graphs@@dir",
note that if $g\in G$ has at least one successor in $\?G$, then $f(g)$ must also have one
successor in $\?H$.
As a consequence, such an $g$ cannot be mapped by any "homomorphism" to a vertex of $\?H$ with no
successor. The idea behind "hyperedge consistency@@finite" is precisely to identify, "via" a greatest fixpoint, the set $\textrm{Im}_g$ ($g\in G$) of all elements of $\?H$ to which it can be mapped: initially this set is $H$,
and we try to find some ``obstructions''. These obstructions take the following form:
if $g \in G$ has a successor (resp. predecessor) $g' \in G$, then any vertex of $\textrm{Im}_g$
must have a successor (resp. predecessor) in $\?H$ that lives in $\textrm{Im}_{g'}$.

\begin{example}[{\Cref{ex:zigzag-FO}, continued}]
	\AP\label{ex:zigzag-HC-T2}
	We depict in \Cref{fig:zigzag-graph-HC-T2} the first steps of the "hyperedge consistency algorithm@@finite"---that we will define formally after this example---, when the "target structure"
	if $\transitiveTournament{2}$ and the "source structure" is $\zigzag{n}{2}$.
	The second step is a fixpoint, and so the procedure stops there. Note also that
	each $\textrm{Im}_g$ ($g\in \zigzag{n}{2}$) is non-empty. 
\end{example}

\begin{figure}
	\centering
	\begin{tikzpicture}
		\input{tikz/zigzag-graph}
		\input{tikz/zigzag-graph-HC-T2-step0}
	\end{tikzpicture}
	\hphantom{
		\hspace{7em}
		\begin{tikzpicture}
			\input{tikz/2-transitive-tournament}
		\end{tikzpicture}
	} \\[2em]
	\begin{tikzpicture}
		\input{tikz/zigzag-graph}
		\input{tikz/zigzag-graph-HC-T2-step1}
	\end{tikzpicture}
	\hspace{7em}
	\begin{tikzpicture}
		\input{tikz/2-transitive-tournament}
	\end{tikzpicture}
	\\[2em]
	\begin{tikzpicture}
		\input{tikz/zigzag-graph}
		\input{tikz/zigzag-graph-HC-T2-step2}
	\end{tikzpicture}
	\hphantom{
		\hspace{7em}
		\begin{tikzpicture}
			\input{tikz/2-transitive-tournament}
		\end{tikzpicture}
	} \\[2em]
	\caption{\AP\label{fig:zigzag-graph-HC-T2} Zeroth (top), first (middle) and second step (bottom) of the "hyperedge consistency algorithm@@finite" on $\zigzag{n}{2}$
	when the "target structure" is $\transitiveTournament{2}$, depicted on the right-hand side. Next to each vertex $g$ of $\zigzag{n}{2}$ we represent
	all vertices $h$ of $\transitiveTournament{2}$: the vertex is filled
	when $h \in \textrm{Im}_{g}$.
	}
\end{figure}

We formalize this algorithm as the greatest fixpoint of some operator.
Given a finite "$\sigma$-structure" $\?B$, and an arbitrary%
\footnote{Note that in this part, while some results---mostly complexity/decidability ones---require the assumption that the "source structure" is finite, some results do not,
and are stated for arbitrary "structures".}%
"$\sigma$-structure" $\?A$, we say that a function $F\colon A \to \pset{B}$ is subsumed
by $G\colon A \to \pset{B}$, denoted by \AP\(F \intro*\subsumed G\),
if $F(a) \subseteq G(a)$ for each $a \in A$. We denote by
\AP\(\intro*\LatticeGuessFunctions{A}{B}\) the set of functions $A \to \pset{B}$ under this order.%
\footnote{Equivalently, $\LatticeGuessFunctions{A}{B}$ is the
set of binary relations between $A$ and $B$, ordered by inclusion.}

We then define an operator on this space, which corresponds to one step of the "hyperedge consistency algorithm@@finite":
\begin{center}
	\begin{tabular}{rccc}
		$\HCOperator_{\!\?A,\?B}\colon$ & $\LatticeGuessFunctions{A}{B}$ & $\to$ & $\LatticeGuessFunctions{A}{B}$ \\
		& $F$ & $\mapsto$ & $\HCOperator_{\!\?A,\?B}(F)$,
	\end{tabular}
\end{center}
where for each $a \in A$, \AP$\intro*\HCOperator_{\!\?A,\?B}(F)(a)$ is the set of $b \in F(a)$ "st"
for every $\+R_{(k)} \in \sigma$, for every $i \in \intInt{1,k}$,
if $\langle a_1,\, \dotsc,\, a_{k-1} \rangle \in \adjacency{a}{\?A}{\+R}{i}$,
then there exists $b_1 \in F(a_1)$, $\dotsc$, $b_{k-1} \in F(a_{k-1})$ "st" 
$\langle b_1,\, \dotsc,\, b_{k-1} \rangle \in \adjacency{b}{\?B}{\+R}{i}$.\footnote{%
We write $\HCOperator$ for $\HCOperator_{\!\?A,\?B}$ when there is no ambiguity on
the "structures" involved.}

The ordered set $\LatticeGuessFunctions{A}{B}$ is a "complete lattice",
and moreover $\HCOperator$ is monotonic,
and as a consequence of the "Knaster-Tarski theorem", $\HCOperator$ admits a greatest fixpoint, that
we denote by \AP$\intro*\HCFixpoint{\?A}{\?B} \in \LatticeGuessFunctions{A}{B}$.\footnote{Recall 
that this greatest fixpoint can be obtained by ordinal induction by starting from
the greatest element of $\LatticeGuessFunctions{A}{B}$, namely the map $a \mapsto B$,
and iterating $\HCOperator$.}

\begin{proposition}[{\cite[Theorem~3.3]{LaroseLotenTardif2007CharacterisationFOCSP}}]
	\AP\label{prop:hyperedge-consistency-tree-duality}
	If $\?B$ has "tree duality" then $\?A \homto \?B$ "iff"
	$\HCFixpoint{\?A}{\?B}(a) \neq \emptyset$ for all $a\in A$.
\end{proposition}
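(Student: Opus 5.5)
The plan is to route the nontrivial direction through the Feder--Vardi structure $\FederVardi{\?B}$ and \Cref{prop:charac-Feder-Vardi}. The easy direction uses only the Knaster--Tarski characterisation of $\HCFixpoint{\?A}{\?B}$ as the greatest fixpoint. No finiteness of $\?A$ is needed anywhere, which matters since the statement is meant for arbitrary "source structures".

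\emph{Left-to-right.} Let $h\colon \?A \to \?B$ be a "homomorphism" and set $F_h(a) \defeq \set{h(a)}$. I will check that $F_h$ is a post-fixpoint, \emph{i.e.}\ $F_h \subsumed \HCOperator(F_h)$. Take $a \in A$, a predicate $\+R_{(k)}$, a position $i$, and $\langle a_1,\dotsc,a_{k-1}\rangle \in \adjacency{a}{\?A}{\+R}{i}$. This tuple comes from a tuple of $\+R(\?A)$ having $a$ in position $i$. Applying $h$ to it gives a tuple of $\+R(\?B)$, so $\langle h(a_1),\dotsc,h(a_{k-1})\rangle \in \adjacency{h(a)}{\?B}{\+R}{i}$, and each $h(a_j)$ lies in $F_h(a_j)$. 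Hence $h(a) \in \HCOperator(F_h)(a)$. By Knaster--Tarski the greatest fixpoint is the join of all post-fixpoints, so $F_h \subsumed \HCFixpoint{\?A}{\?B}$. In particular every $\HCFixpoint{\?A}{\?B}(a)$ is non-empty.

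\emph{Right-to-left.} Write $H \defeq \HCFixpoint{\?A}{\?B}$ and assume $H(a) \neq \emptyset$ for all $a$. Then $H$ is a map $A \to \psetp{B}$, \emph{i.e.}\ into the domain of $\FederVardi{\?B}$. I will show $H$ is a "homomorphism" $\?A \to \FederVardi{\?B}$. Let $\langle a_1,\dotsc,a_k\rangle \in \+R(\?A)$ and fix $i$ and $b_i \in H(a_i)$. Since $H = \HCOperator(H)$, we have $b_i \in \HCOperator(H)(a_i)$. Applying this to the adjacency tuple of $a_i$ obtained by deleting the $i$-th coordinate yields $b_j \in H(a_j)$ for $j \neq i$ such that $\langle b_1,\dotsc,b_k\rangle \in \+R(\?B)$. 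This is exactly the defining condition for $\langle H(a_1),\dotsc,H(a_k)\rangle \in \+R(\FederVardi{\?B})$.

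Since $\?B$ is finite and has "tree duality", \Cref{prop:charac-Feder-Vardi} gives a "homomorphism" $g\colon \FederVardi{\?B} \to \?B$. The composite $g \circ H$ is then a "homomorphism" $\?A \to \?B$.

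The only delicate point is bookkeeping: the adjacency notation $\adjacency{a}{\?A}{\+R}{i}$ must match the per-coordinate quantification in the definition of $\FederVardi{\?B}$, including when a tuple repeats elements ($a_i = a_j$). Both definitions choose witnesses coordinatewise from the sets $H(a_j)$ rather than per element, so repetitions cause no issue. I would verify this carefully against the appendix definition of adjacency. Beyond that, the argument is a direct unfolding of definitions plus the Feder--Vardi characterisation.
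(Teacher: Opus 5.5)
Your proof is correct, including the bookkeeping for repeated entries: both definitions pick witnesses coordinatewise, so repetitions cause no issue. It works verbatim for infinite $\?A$, which is how the paper later applies the statement to automatic structures. The paper gives no proof here and only cites Larose--Loten--Tardif, so the useful comparison is with the ingredients it uses around this statement.

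\begin{itemize}
\item Your left-to-right direction is \Cref{prop:existence-homomorphism-implies-lowerbound-HC}. The paper argues by ordinal induction and closure under meets; you use the description of the greatest fixpoint as the join of post-fixpoints. The two are interchangeable.
\item Your right-to-left direction notes that a nowhere-empty fixpoint is itself a homomorphism $\?A \to \FederVardi{\?B}$, then composes with $\FederVardi{\?B} \to \?B$. This is the trick the paper uses in the proof of \Cref{lemma:finite-duality-uniformly-definable-homomorphisms}, applied to $\HCFixpoint{\?A}{\?B}$ instead of the map of possible images.
\end{itemize}

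When the paper reasons about hyperedge consistency directly, in the proof of \Cref{lem:hyperedge-consistency-uniform-convergence}, it goes through tree obstructions instead. Suppose $\?A \nothomto \?B$. One first takes a finite substructure that does not map, using the De Bruijn--Erd\H{o}s theorem (\Cref{prop:de-bruijn-erdos}) if $\?A$ is infinite. Tree duality then gives a $\sigma$-tree obstruction $\?T \homto \?A$. Induction on the height of $\?T$, combined with \Cref{prop:hyperedge-consistency-antimonotonicity}, shows that hyperedge consistency empties the set at the image of a vertex of $\?T$ (\Cref{claim:hyperedge-consistency-uniform-convergence-tree-witnesses}).

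Your argument is shorter and needs no compactness, but it hides all the tree combinatorics inside the black-box \Cref{prop:charac-Feder-Vardi}. The tree route is longer, but it gives quantitative information: the height of a witness bounds the number of iterations. That bound is exactly what the paper exploits to obtain uniform convergence under finite duality in \Cref{lem:hyperedge-consistency-uniform-convergence}.
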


As an example, \Cref{ex:zigzag-HC-T2} witnesses that
$\zigzag{n}{2} \homto \transitiveTournament{2}$ since $\transitiveTournament{2}$
has "tree duality" by \Cref{prop:finite-duality-implies-tree-duality}.
When $\?A$ is moreover finite, \Cref{prop:hyperedge-consistency-tree-duality}
immediately gives an algorithm to decide
$\?A \homto \?B$ since $\HCFixpoint{\?A}{\?B}$ can be computed not only
by an ordinal induction but with a finite induction.

\begin{corollary}
	\label{coro:tree-dual-implies-ptime}
	If $\?B$ has "tree duality", then $\HomFin{\?B}$ can be solved in
	polynomial time.
\end{corollary}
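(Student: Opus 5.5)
The plan is to read the corollary off \Cref{prop:hyperedge-consistency-tree-duality}, which does the conceptual work; what remains is to check that the greatest fixpoint $\HCFixpoint{\?A}{\?B}$ can be computed in time polynomial in $\?A$ when $\?A$ is finite, with $\?B$ and $\sigma$ fixed. The algorithm starts from the top element $F_0\colon a \mapsto B$ of $\LatticeGuessFunctions{A}{B}$ and iterates $F_{i+1} \defeq \HCOperator_{\!\?A,\?B}(F_i)$ until $F_{i+1} = F_i$. It then accepts \emph{iff} the final $F$ satisfies $F(a) \neq \emptyset$ for every $a \in A$.

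The first point to verify is that this iteration stabilises at the greatest fixpoint after finitely many steps. By definition $\HCOperator(F)(a) \subseteq F(a)$, so the sequence $(F_i)_i$ is decreasing for $\subsumed$. Each strict decrease removes at least one pair $\tup{a,b}$ from the relation $\{\tup{a,b} \mid b \in F(a)\} \subseteq A \times B$. Hence the sequence stabilises after at most $|A|\cdot|B|+1$ steps, at some fixpoint $F^*$. To see that $F^*$ is the \emph{greatest} fixpoint, take any fixpoint $G$; then $G \subsumed F_0$, and by monotonicity of $\HCOperator$ an induction on $i$ gives $G = \HCOperator^i(G) \subsumed F_i$. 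Therefore $G \subsumed F^*$, so $F^* = \HCFixpoint{\?A}{\?B}$. Correctness of the algorithm then follows directly from \Cref{prop:hyperedge-consistency-tree-duality}, using the hypothesis that $\?B$ has "tree duality".

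Next I bound the cost of a single application of $\HCOperator$. For each $a \in A$, each $b \in F(a)$, each $\+R_{(k)} \in \sigma$, each position $i \le k$ and each tuple in $\adjacency{a}{\?A}{\+R}{i}$, the algorithm checks whether some tuple of $\adjacency{b}{\?B}{\+R}{i}$ lies componentwise in the current sets $F(a_j)$. Since $\?B$ and $\sigma$ are fixed, each check takes constant time. The number of such checks is bounded by $|B|$ times the total number of tuples of $\?A$ times the maximal arity, which is linear in the size of the input given by adjacency lists. Multiplying by the $O(|A|)$ iterations yields a polynomial bound.

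No step here is a real obstacle, since the substance is entirely contained in \Cref{prop:hyperedge-consistency-tree-duality}. The only place needing care is the argument that the finite iteration actually reaches the \emph{greatest} fixpoint rather than just some fixpoint, and the monotonicity induction above settles that.
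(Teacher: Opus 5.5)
Your proposal is correct and follows the paper's route. The paper's argument is a one-line remark: for finite $\?A$, the greatest fixpoint $\HCFixpoint{\?A}{\?B}$ is reached after finitely many iterations of $\HCOperator$ from $\topLatticeGuessFunctions{B}$, and one then invokes \Cref{prop:hyperedge-consistency-tree-duality}. You spell out the details the paper leaves implicit, namely the $|A|\cdot|B|+1$ bound on the number of iterations, the monotonicity argument that the limit is the \emph{greatest} fixpoint, and the polynomial cost of each step, and all of these are sound.
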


See, in \Cref{apdx-sec:hyperedge-consistency}, \Cref{fig:hc-finite} for the "hyperedge consistency algorithm@@finite", and \Cref{fig:zigzag-graph-HC-P2,ex:zigzag-HC-P2}
for an example of its execution when the target structure is $\pathGraph{2}$.
Note that the converse implication of
\Cref{coro:tree-dual-implies-ptime} is false:
having "tree duality" is not necessary for $\HomFin{\?B}$ to be decidable in "P",
as witnessed by $\?B = \clique{2}$.

When $\?A$ is "automatic@@struct", \Cref{prop:hyperedge-consistency-tree-duality} still applies,
however it is not clear how to compute $\HCFixpoint{\?A}{\?B}$ since this element cannot
necessarily be obtained by finite induction, namely as
$\HCOperator^{\,n}(\topLatticeGuessFunctions{B})$ for some $n\in\N$, where
\AP$\intro*\topLatticeGuessFunctions{B}$ is the maximum element
of $\LatticeGuessFunctions{A}{B}$, namely the constant map $a \mapsto B$.
Another issue is to have a finite representation of the functions of
$\LatticeGuessFunctions{A}{B}$ since $A$ can be infinite. This last point is easy to address.

Given an "automatic presentation" $\+A$ of $\?A$, we extend
$\HCOperator_{\!\+A,\?B}$ and $\HCFixpoint{\+A}{\?B}$ to be defined on $\domainPres{\+A}$
instead of $\?A$.

\begin{restatable}[{$\HCOperator$ preserves "regularity@@fun"}]{proposition}{hyperedgeConsistencyPreservesRegularity}
	\AP\label{prop:hyperedge-consistency-preserves-regularity}
	Let $\?A$ be an arbitrary "$\sigma$-structure" and $\?B$ a finite "$\sigma$-structure".
	For any $F\in \LatticeGuessFunctions{\+A}{B}$, if $F$ is "regular@@hom",
	then $\HCOperator(F)$ is "regular@@hom".
\end{restatable}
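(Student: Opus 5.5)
The plan is to write $\HCOperator(F)$ as a first-order definable object in $\univStructSynchronous{\Sigma}$ and then apply \Cref{prop:automatic-first-order}. Here $\+A$ is an "automatic presentation" over $\Sigma$ of $\?A$, and without loss of generality $\Sigma$ has at least two letters; otherwise we add a dummy letter, which changes no language. A function $F\colon \domainPres{\+A} \to \pset{B}$ is "regular@@fun" iff for every $Y \subseteq B$ the language $F^{-1}[Y]$ is regular. Equivalently, for each $b \in B$ the language $L_b^F \defeq \{u \mid b \in F(u)\}$ is regular: indeed $F^{-1}[Y] = \bigcap_{b\in Y} L^F_b \cap \bigcap_{b\notin Y}(\domainPres{\+A}\smallsetminus L^F_b)$, and conversely $L^F_b = \bigcup_{Y \ni b} F^{-1}[Y]$. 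So it suffices to show that each $L_b^{\HCOperator(F)}$ is regular.

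By \Cref{prop:automatic-first-order}, every $L^F_b$ is defined by some formula $\theta_b(x)$ over $\signatureSynchronous{\Sigma}$. Likewise $\domainPres{\+A}$ is defined by some $\delta(x)$, and each relation $\relPres{\+R}{\+A}$ by some $\psi_{\+R}(x_1,\dotsc,x_k)$. Unfolding the definition of $\HCOperator$, a word $u$ lies in $L_b^{\HCOperator(F)}$ iff $\theta_b(u)$ holds and, for every $\+R_{(k)}\in\sigma$ and $i\in\intInt{1,k}$, the following holds:
\[
\forall x_1\cdots\forall x_{k-1}.\; \Bigl(\textstyle\bigwedge_j \delta(x_j) \land \psi_{\+R}(x_1,\dotsc,x_{i-1},u,x_i,\dotsc,x_{k-1})\Bigr) \rightarrow \bigvee_{\tup{b_1,\dotsc,b_{k-1}} \in \adjacency{b}{\?B}{\+R}{i}} \textstyle\bigwedge_j \theta_{b_j}(x_j).
\]
The disjunction over tuples of $B$ is finite because $\?B$ is finite and fixed. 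Membership $\tup{a_1,\dotsc,a_{k-1}} \in \adjacency{a}{\?A}{\+R}{i}$ just means that inserting $a$ at position $i$ gives an $\+R$-tuple, which is exactly what $\psi_{\+R}$ expresses. The resulting formula is first-order over $\signatureSynchronous{\Sigma}$, so by \Cref{prop:automatic-first-order} again $L_b^{\HCOperator(F)}$ is regular.

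The main obstacle is handling $\relPres{=}{\+A}$, the equality of the presentation. If the presentation is injective, nothing more is needed. Otherwise the "structure represented" by $\+A$ is a quotient of $\domainPres{\+A}$. Because the operator is extended to be defined on $\domainPres{\+A}$, relations are read directly on words, and the argument above applies verbatim. If one insists on working on the quotient, $=$ is itself an automatic relation and can be treated as one more predicate $\psi_{=}$; the same formula then still works, since $\relPres{\+R}{\+A}$ is saturated under it. Note that the argument never uses finiteness or a fixpoint, which is consistent with the statement allowing $\?A$ arbitrary among automatic structures.
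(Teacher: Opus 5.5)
Your proof is correct and is essentially the paper's: both express $\HCOperator(F)$ with first-order formulas over $\univStructSynchronous{\Sigma}$ built from formulas for $F$, for the domain and for the relations of $\+A$, and then conclude with \Cref{prop:automatic-first-order}. Indexing by the languages $L^F_b$ rather than the fibres $F^{-1}[Y]$ is only a cosmetic change, and two of your additions are genuine extra care: the conjunct $\theta_b(u)$ enforcing $b \in F(u)$, which the paper's formula $\chi_b$ actually omits, and your remark that \Cref{prop:automatic-first-order} needs at least two letters.
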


See the proof in \Cref{apdx-prop:hyperedge-consistency-preserves-regularity}.
Notice that $\topLatticeGuessFunctions{B}$ is trivially "regular@@fun", and so
by immediate induction, each $\HCOperator^{\,n}(\topLatticeGuessFunctions{B})$ with $n\in \N$ is
also "regular@@fun". While this opens the door to solving $\HomRegAut{\?B}$ when $\?B$ has
"tree duality" using the "hyperedge consistency algorithm@@finite", the problem
of finite convergence remains.

In general, even when the "target@@structure" has "tree duality",
the number of iterations needed to reach the fixpoint depends on $\?A$:
for instance, in \Cref{ex:zigzag-HC-P2}, we showed that when the target structure
is $\pathGraph{2}$, then "hyperedge consistency algorithm@@finite" converges
on $\zigzag{n}{2}$ in $n + \+O(1)$ steps.
We show that, when the "target structure" has "finite duality", the number of steps before 
convergence actually only depends on $\?B$.

\begin{restatable}[Uniform Convergence of Hyperedge Consistency for Structures with Finite Duality]{lemma}{hyperedgeConsistencyUniformConvergence}
	\AP\label{lem:hyperedge-consistency-uniform-convergence}
	Let $\?B$ be a finite "$\sigma$-structure".
	The following are equivalent:
	\begin{enumerate}
		\itemAP\label{item:hc-uniform-finite-duality}%
			$\?B$ has finite duality;
		\itemAP\label{item:hc-uniform-finite-structures}%
			there exists $n\in\N$ "st" for every \emph{finite} "$\sigma$-structure" $\?A$:
			\begin{itemize}
				\item $\HCOperator^{\,n}_{\!\?A,\?B}(\topLatticeGuessFunctions{B})
					= \HCFixpoint{\?A}{\?B}$ when $\?A \homto \?B$, and
				\item $\HCOperator^{\,n}_{\!\?A,\?B}(\topLatticeGuessFunctions{B})(a) = \emptyset$ for some $a\in A$ when $\?A \nothomto \?B$;
			\end{itemize}
		\itemAP\label{item:hc-uniform-arbitrary-structures}%
			there exists $n\in\N$ "st" for every \emph{arbitrary} "$\sigma$-structure" $\?A$:
			\begin{itemize}
				\item $\HCOperator^{\,n}_{\!\?A,\?B}(\topLatticeGuessFunctions{B})
					= \HCFixpoint{\?A}{\?B}$ when $\?A \homto \?B$, and
				\item $\HCOperator^{\,n}_{\!\?A,\?B}(\topLatticeGuessFunctions{B})(a) = \emptyset$ for some $a\in A$ when $\?A \nothomto \?B$;
			\end{itemize}
	\end{enumerate}
\end{restatable}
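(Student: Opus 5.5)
We prove the cycle of implications $1 \Rightarrow 3 \Rightarrow 2 \Rightarrow 1$. The implication $3 \Rightarrow 2$ is immediate, since finite structures are arbitrary structures.

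\emph{From 2 to 1.} We argue by contraposition. Suppose $\?B$ does not have "finite duality". By \Cref{prop:finite-duality-iff-critical-obstructions}, $\?B$ has infinitely many "critical obstructions" $\?D$. Since $\sigma$ is finite, only finitely many of them have radius (in the "incidence graph") at most $2n+1$, so there is a critical $\?D$ containing two hyperedges (or elements) at distance more than $2n$. Remove one such tuple $t$ from $\?D$ to obtain a proper substructure $\?D'$. By criticality $\?D' \homto \?B$, while $\?D \nothomto \?B$ by hypothesis.

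The claim is that $\HCOperator^{\,n}$ on $\?D$ never empties any $F(a)$ for $a$ far from $t$; this contradicts the second bullet of item 2. Here is the reason. By induction on $m$, $\HCOperator^{\,m}_{\?A,\?B}(\topLatticeGuessFunctions{B})(a)$ contains $b$ whenever every tree unravelling of $\?A$ of height $m$ around $a$ maps to $\?B$ with $a \mapsto b$. This is the usual tree-unfolding characterisation of consistency. The radius-$m$ unravelling around an element is local, so it only sees a neighbourhood.

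However, the set $F(a)$ being nonempty everywhere is what we must violate, and this must hold for \emph{all} $a$. This is the delicate step. I would instead choose $\?D$ so that \emph{every} ball of radius $n$ in $\?D$ misses some tuple; that is, $\?D$ has "girth-like" size larger than $2n$ in all directions. Then each ball's unravelling maps into the corresponding part of some $\?D$ minus a tuple, which maps to $\?B$. Showing that such obstructions exist is the main obstacle. I would try taking critical obstructions with tree duality, using \Cref{prop:finite-duality-implies-tree-duality} in its contrapositive form together with \Cref{prop:charac-Feder-Vardi}. When $\?B$ has no tree duality, the fixpoint argument breaks differently: item 2 would already force hyperedge consistency to decide $\HomFin{\?B}$, and hence force tree duality. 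So I would first derive tree duality from item 2, then use large critical $\sigma$-tree obstructions, which are long trees whose height from any root exceeds $n$.

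\emph{From 1 to 3.} Let $\?B$ have finite duality with a dual of trees of height at most $h$; such a dual exists by combining \Cref{prop:finite-duality-implies-tree-duality} with \Cref{prop:finite-duality-iff-critical-obstructions}, since the critical obstructions are finitely many cores equivalent to trees. There are two cases.

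If $\?A \nothomto \?B$, then some dual tree $\?T$ maps into $\?A$. The unravelling lemma then shows $\HCOperator^{\,h+1}$ empties the image of the root, since otherwise $\?T$ would map to $\?B$.

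If $\?A \homto \?B$, consider the marked structures $\marked{\core{\?B}}$. By Larose–Loten–Tardif, finite duality is preserved under marking, via \Cref{prop:idempotent-core-preserves-csp-complexity} and "Atserias' theorem". Hence $b \notin \HCFixpoint{\?A}{\?B}(a)$ is witnessed by an obstruction of bounded height for the pair $(\?A,a\mapsto b)$. It follows that $b$ is removed within a bounded number $n$ of steps, where $n$ is the maximal height of the finitely many tree obstructions of the marked structure. This yields a uniform $n$ for arbitrary $\?A$, with no finiteness of $\?A$ used.
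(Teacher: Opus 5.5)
Your direction $2\Rightarrow1$ is not yet a proof. Its first step fails: finiteness of $\sigma$ does not bound the size of a critical obstruction of bounded radius, because degrees are unbounded. For instance, orientations of odd wheels are infinitely many critical obstructions of $\clique{3}$, all of radius~$1$. Your fallback is viable, but it leaves the key point open. Deriving tree duality from item~2 is fine (e.g.\ via $\FederVardi{\?B}$). You still have to show that critical $\sigma$-trees of arbitrarily large radius exist. That needs an extra argument: in a critical tree, two branches at a vertex imposing the same set of admissible values would make one of them removable. So branching is bounded by $2^{|B|}$, and only finitely many critical trees have bounded height. None of this is needed, though. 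As in the proof of \Cref{prop:hyperedge-consistency-preserves-regularity}, each set $\mathcal{H\!C}^{\,m}_{\?A,\?B}(\Lambda_B)^{-1}[Y]$ is defined by a first-order formula $\phi_{m,Y}(x)$ independent of $\?A$. Item~2 (resp.~3) then says that $\forall x.\,\neg\phi_{n,\emptyset}(x)$ defines $\HomFin{\?B}$ (resp.~$\HomAll{\?B}$), and Atserias' theorem yields finite duality. This is the paper's $3\Rightarrow1$.

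In $1\Rightarrow3$, the case $\?A\homto\?B$ carries all the difficulty, and your ``it follows that'' skips it. A dual tree of the marked structure that maps into $\?A$ with $a$ marked by $P_b$ may carry several $P_b$-marks, all sent to $a$. Hyperedge consistency cannot force these copies of $a$ to receive the same value, so \Cref{claim:hyperedge-consistency-uniform-convergence-tree-witnesses} does not apply. This is repairable. Unravel $b\notin H^{*}_{\?A,\?B}(a)$ into a single-rooted $\sigma$-tree witness $\?T$ (of unbounded height), mark its root by $P_b$, and map a dual tree into this marked tree; it must contain a mark, since $\?A\homto\?B$. Its image is a subtree of $\?T$ of bounded size whose only mark is the root, i.e.\ a bounded-height witness. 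The paper instead glues $\?A$ and $\?B$ along $a=b$.

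More seriously, passing to $\marked{\core{\?B}}$ only handles $b\in\core{\?B}$, whereas the consistency sets range over all of $B$, and for the other $b$ the claim is false. Take $\?B=\transitiveTournament{2}\disunion\pathGraph{2}$, which has finite duality since it is homomorphically equivalent to $\transitiveTournament{2}$, and note $\zigzag{m}{2}\homto\?B$. On the $\pathGraph{2}$-component the sets evolve exactly as over $\pathGraph{2}$. There the fixpoint on $\zigzag{m}{2}$ is empty but is reached only after about $m$ steps (\Cref{ex:zigzag-HC-P2}), so item~2 fails for every $n$. The lemma therefore needs $\?B$ to be a core. The paper's proof makes this assumption silently when it invokes rigidity, since \Cref{prop:finite-duality-implies-rigid} is stated for cores. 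Finally, for infinite $\?A$ with $\?A\nothomto\?B$, you need compactness (\Cref{prop:de-bruijn-erdos}) before a dual tree maps into $\?A$, because duals only quantify over finite structures.
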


\begin{proof}[Proof sketch]
	For the formal proof, see \Cref{apdx-lem:hyperedge-consistency-uniform-convergence}.
	The implication \eqref{item:hc-uniform-finite-structures} $\Rightarrow$
	\eqref{item:hc-uniform-arbitrary-structures} is easy and follows from monotonicity arguments.
	The implication \eqref{item:hc-uniform-arbitrary-structures} $\Rightarrow$
	\eqref{item:hc-uniform-finite-duality} is also quite straightforward and
	follows from "Atserias' theorem".
	The implication \eqref{item:hc-uniform-finite-duality} $\Rightarrow$
	\eqref{item:hc-uniform-finite-structures} is much more technical,
	and relies on building small witnesses proving that
	an element $b \not\in \HCFixpoint{\?A}{\?B}$. This is done
	\todo{finish this sentence}.
\end{proof}

Putting \Cref{prop:hyperedge-consistency-preserves-regularity}
(for computations) and \Cref{lem:hyperedge-consistency-uniform-convergence}
(for correctness), we get (1) that the
algorithm of \Cref{fig:hc-automatic} decides $\HomRegAut{\?B}$
when $\?B$ has finite duality and (2) that $\HomRegAut{\?B} = \HomAut{\?B}$
under the same assumption.

\begin{figure}
	\centering
	\begin{algorithm}[H]
		\SetAlgoLined
		\SetKwRepeat{Do}{do}{while}
		\SetKwInput{KwInvariant}{Invariant}
		\KwInput{An "automatic presentation" $\+A$ of a "$\sigma$-structure" $\?A$ and
		a "finite $\sigma$-structure" $\?B$.}
		\KwInvariant{First-order formulas $\phi^n_{Y}(x)$ ($Y \subseteq B$)
		"defining@first-order definable" the set of $a \in A$ "st"
		$\HCOperator^{\,n}_{\!\?A,\?B}(\topLatticeGuessFunctions{B})(a) = Y$.}
		$\phi^{0}_B(x) \leftarrow \top$\;
		$\phi^{0}_Y(x) \leftarrow \bot$ for all $Y \subsetneq B$\;
		$n \leftarrow 0$\;
		\Do{%
			some $\phi_{Y}(x)$ has been semantically updated
		}{
			compute $\tup{\phi^{n+1}_{Y}(x)}_{Y \subseteq B}$ from the 
			$\tup{\phi^{n}_{Y}(x)}_{Y \subseteq B}$ using
			\Cref{prop:hyperedge-consistency-preserves-regularity}\;
			\If{$\semFO{\phi^{n+1}_\emptyset(x)}{\?A} \neq \emptyset$}{
				\Return{false}\;
			}
			$n \leftarrow n+1$\;
		}
		\Return{true}
	\end{algorithm}
	\caption{\AP\label{fig:hc-automatic} The ""hyperedge consistency algorithm for automatic structures"".}
\end{figure}




\section{Undecidability of the Homomorphism Problems}
\label{sec:dichotomy-undecidability}

We now prove the undecidability of $\HomAut{\?B}$ and $\HomRegAut{\?B}$
when $\?B$ does not have "finite duality". Both reductions are
adaptated from the proof that $\HomFin{\?B}$ is "L"-hard when $\?B$ does not
have finite duality by \cite[Theorem 3.2]{LaroseTesson2009UniversalAlgebraCSP}.
However, proving the undecidability of the problem that is reduced
to $\HomRegAut{\?B}$ is not as easy as for $\HomAut{\?B}$.
\begin{itemize}
	\item For $\HomFin{\?B}$, we reduce the complement of "Connectivity in automatic graphs",
		providing a "coRE"-lower bound.
	\item For $\HomRegAut{\?B}$, we reduce "regular unconnectivity in automatic graphs",
		which in turn is reduced from the "regular reachability problem".
\end{itemize}

\subsection{Undecidability of \,$\HomAut{\?B}$}
\label{sec:undecidability-hom}

\begin{restatable}{lemma}{lowerboundHom}
	\!\!\footnote{In the case of Larose and Tesson, they study the problem
	$\HomFin{-}$, and prove in \cite[Theorem 3.2]{LaroseTesson2009UniversalAlgebraCSP}
	that there is a "first-order reduction" from "Connectivity in Finite Graphs" to
	$\HomFin{\marked{\?B}}$ for any $\?B$ that does not have "finite duality".
	Together with \Cref{prop:idempotent-core-preserves-csp-complexity},
	this shows that $\HomFin{\?B}$ is "L"-hard under "first-order reductions".}%
	\AP\label{lem:lowerbound-hom}
	If $\?B$ does not have "finite duality", then $\HomAut{\?B}$
	is "coRE"-hard.
\end{restatable}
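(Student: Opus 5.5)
We reduce from the complement of connectivity in automatic graphs, following Larose and Tesson's construction. Reachability in automatic graphs is undecidable; more precisely, given an automatic presentation of a graph $\?G$ and two regular singletons (or distinguished words) $s,t$, deciding whether $t$ is reachable from $s$ is RE-complete. This is obtained by taking configuration graphs of Turing machines, which are automatic. Hence non-reachability is coRE-hard. The goal is to build, computably from such an instance, an automatic presentation of a structure $\?A$ with $\?A \homto \?B$ iff $t$ is not reachable from $s$.

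First we reduce to the idempotent core. The problem $\HomAut{\?B}$ is equivalent to $\HomAut{\core{\?B}}$, since homomorphic equivalence is preserved. The reduction of \Cref{prop:idempotent-core-preserves-csp-complexity} is a first-order reduction, so by \Cref{prop:automatic-first-order} it can be implemented on automatic presentations: first-order interpretations of automatic structures are automatic. It therefore suffices to show coRE-hardness of $\HomAut{\marked{\?B}}$ for a core $\?B$ without finite duality.

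Next we extract the gadget. Since $\?B$ lacks finite duality, \Cref{prop:finite-duality-iff-critical-obstructions} gives infinitely many critical obstructions. Following Larose and Tesson, by a pumping/compactness argument we obtain a structure of the form $\?D$ with two distinguished elements, together with elements $b\neq b'$ of $B$, having the following property. Chains of copies of $\?D$ glued end-to-end, with marked endpoints via the unary predicates $\unarypred{b}$, fail to map to $\marked{\?B}$ exactly when the chain is anchored at both ends. Concretely, $\?D$ is a ``link'' $\link{}$ with ports $x,y$. Every homomorphism to $\marked{\?B}$ sending $x$ to $b$ must send $y$ to $b$; some homomorphism sends $x$ to $b'$; and so on. Gluing links along the edges of a graph then propagates the value $b$ along paths. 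In their reduction, the source structure is: replace each edge of $\?G$ by a copy of the link, mark $s$ with $\unarypred{b}$ and $t$ with $\unarypred{b'}$. This structure maps to $\marked{\?B}$ iff $s$ and $t$ lie in different connected components. I would reuse their gadget as a black box and only check that this statement holds for arbitrary, possibly infinite, source graphs. That check relies on the gadget argument being local: a homomorphism is built component by component, assigning the ``$b$-solution'' on the component of $s$ and the ``$b'$-solution'' elsewhere.

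Finally we show automaticity. Given an automatic presentation of $\?G$, the glued structure has domain consisting of the vertices of $\?G$ plus, for each edge $\tup{u,v}$ and each inner element $d$ of the finite gadget, a fresh element encoded as $u\convol v$ tagged with $d$. Its relations are finite unions of automatic relations, and $\unarypred{b},\unarypred{b'}$ are the singletons $\{s\},\{t\}$. All of this is computable, so we get a many-one reduction from non-reachability (undirected connectivity suffices, by symmetrising the edge relation) to $\HomAut{\marked{\?B}}$.

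The main obstacle is the middle step. We must ensure that Larose and Tesson's correctness argument does not secretly use finiteness of the source, for instance via a well-founded choice or a counting argument. If it does, I would fall back on compactness: an infinite structure maps to the finite $\?B$ iff all its finite substructures do. Any finite substructure lives inside finitely many components and involves only finite paths, so the finite case applies directly.
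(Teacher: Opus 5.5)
There is a genuine gap in your middle step, and that step carries the whole proof. The outer shell matches the paper: coRE-hardness of non-connectivity via configuration graphs, the passage to the idempotent core, and automaticity of the constructed source.

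The problem is the gadget you postulate. You want a finite $\?D$ with two single-element ports $x,y$ and two values $b\neq b'$, such that gluing copies of $\?D$ along the edges of $\?G$, and marking $s$ by $\unarypred{b}$ and $t$ by $\unarypred{b'}$, yields a structure that maps to $\marked{\?B}$ iff $s$ and $t$ lie in different components.

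\begin{itemize}
\item You never derive this gadget. The \emph{pumping/compactness argument} and the \emph{and so on} are doing all the work.
\item It is not what the Larose--Tesson reduction (which the paper adapts) does. Nothing cited provides a single-value separating gadget of this kind for every $\?B$ without finite duality.
\item Its concrete form is actually wrong. Suppose $\?D$ is a link $\link{n}$. Then any homomorphism $\link{n}\to\marked{\?B}$ sends $0$ to some $c$ with $\langle c,\dotsc,c\rangle\in\+R(\?B)$ for every predicate $\+R$. The constant map to $c$ is then a homomorphism from every $\sigma$-structure, so $\?B$ would have finite duality (with empty dual).
\end{itemize}

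So for every $\?B$ relevant here, your glued structure never maps to $\marked{\?B}$, and the reduction sends every instance to a negative one.

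The missing idea is to put the links in the \emph{source} and let the separating values live in the exponential $\powstruct{\?B}{(\iterstruct{\?B}{2})}$. There every element is reflexive, since $\langle f,\dotsc,f\rangle\in\+R(\powstruct{\?B}{\?X})$ just says that $f$ is a homomorphism, so links map into it without trouble. The construction (\Cref{lem:reduction-hom}) runs as follows.

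\begin{itemize}
\item Replace each edge of $\?G$ by a $1$-link to get $\?A$.
\item Take $\?A\prodstruct\iterstruct{\?B}{2}$, which is still automatic because $\iterstruct{\?B}{2}$ is finite.
\item Put $\langle s,b_0,b\rangle$ and $\langle t,b,b_0\rangle$ into $\unarypred{b_0}$ for every $b_0,b\in B$.
\end{itemize}

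In effect, each vertex of $\?G$ becomes a whole fibre of $|B|^2$ elements rather than a single port. By currying (\Cref{prop:currying-hom}), a homomorphism to $\marked{\?B}$ is a homomorphism $F\colon\?A\to\powstruct{\?B}{(\iterstruct{\?B}{2})}$ with $F(s)=\projHom{1}$ and $F(t)=\projHom{2}$.

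\begin{itemize}
\item If $s$ and $t$ were connected, $\projHom{1}$ and $\projHom{2}$ would be linked. This contradicts the Larose--Loten--Tardif characterization (\Cref{prop:characterization-finite-duality-path-projections}).
\item Conversely, the map sending $(a,b,b')$ to $b$ on the component of $s$ and to $b'$ elsewhere is a homomorphism.
\end{itemize}

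This argument works verbatim for infinite $\?A$, so the finiteness issue you single out as the main obstacle never arises. You also need a predicate of arity at least $2$ so that edges correspond to $1$-links; the paper obtains this from \Cref{prop:finite-duality-unary-predicates}.
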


\begin{proof}[Proof sketch]
	See the formal proof in \Cref{apdx-lem:lowerbound-hom}.
	Assuming that $\?B$ does not have "finite duality", we prove first
	that its "signature" must contain at least one predicate of arity at least 2
	(\Cref{prop:finite-duality-unary-predicates}).
	From there, we reduce the \textbf{complement} of the following problem
	\AP
	\decisionproblem{""Connectivity in Automatic Graphs""}{
		An "automatic presentation" $\+G$ of a "directed graph",
		and two elements $s,t \in \Sigma^*$.
	}{
		Are $\+G(s)$ and $\+G(t)$ "connected" in $\?G$?
	}
	to $\HomAut{\?B}$ (\Cref{lem:reduction-hom}), by a straightforward adaptation of Larose and Teson's reduction for the finite case. The main idea of their reduction is that
	the fact that $\?B$ does not have "finite duality" can be described as a non-connectivity-like property, more precisely to a property of the form ``$\projHom{1}$ and $\projHom{2}$ are not connected(ish) in a structure of the form $\powstruct{\?B}{\?X}$''
	(\Cref{prop:characterization-finite-duality-path-projections}).
	From there, we reduce an instance $\tup{\+G, s, t}$ of "Connectivity in Automatic Graphs"
	to an instance looking like $\?G' \prodstruct \?X \homto^? \?B$ where $\?G'$ is essentially
	$\?G$ encoded as a "$\sigma$-structure".\footnote{It is actually denoted by $\?A$ in the formal proof.} By currying, the property
	$\?G' \prodstruct \?X \homto^? \?B$ rephrases as $\?G' \homto^? \powstruct{\?B}{\?X}$.
	Well-chosen markings are added to the structures to ensure that the only possible image
	of $s$ (resp. $t$) is $\projHom{1}$ (resp. $\projHom{2}$).
	From there, we get that if there is a homomorphism $\?G' \homto \powstruct{\?B}{\?X}$,
	then $\+G'(s)$ and $\+G'(t)$ cannot be "connected", otherwise $\projHom{1}$ and $\projHom{2}$
	would be connected(ish) too in $\powstruct{\?B}{\?X}$, contradicting the assumption
	that $\?B$ has "finite duality".
	The converse property, "ie" that if $\?G'(s)$ and $\?G'(t)$ are not "connected",
	then there is a "homomorphism" $\?G' \homto \powstruct{\?B}{\?X}$
	is actually trivial and ensured by the construction of $\?G'$.
	This concludes the reduction from "Connectivity in Automatic Graphs" to $\HomAut{\?B}$.
\end{proof}

\subsection{Undecidability of \,$\HomRegAut{\?B}$}
\label{sec:undecidability-homreg}

The reduction to show undecidability of $\HomRegAut{\?B}$
is similar to \Cref{lem:reduction-hom},
but the input problem differs quite a lot. Proofs are given in \Cref{apdx-coro:lowerbound-homreg}.

\decisionproblem{""Regular Unconnectivity in Automatic Graphs""}{
	An "automatic presentation" $\+G$ of a "directed graph" $\?G$,
	and two elements $s,t \in \Sigma^*$.
}{
	Is there a regular language $L \subseteq \Sigma^*$ 
	such that $s \in L$, $t\not\in L$ and $L$ is a union of "connected components"
	of $\+G$?\footnotemark{}
	In this case we say that $s$ and $t$ are \AP""regularly unconnected"".
}
\footnotetext{Formally, we mean that $L = \+G^{-1}[U]$ for some union $U$ of "connected components" of $\?G$.}

\begin{restatable}{lemma}{reductionHomReg}
	\AP\label{lem:reduction-hom-reg}
	Assume that $\sigma$ contains at least one "predicate" of arity at least 2.
	If $\?B$ does not have "finite duality", then there is a "first-order reduction"
	from "regular unconnectivity in automatic graphs"
	to $\HomRegAut{\marked{\?B}}$.
\end{restatable}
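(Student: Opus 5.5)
We will follow the same construction as in the reduction for $\HomAut{\?B}$ (\Cref{lem:reduction-hom}), adapted from Larose and Tesson, and check that it also transfers \emph{regularity} in both directions.

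\textbf{Step 1: the non-connectivity witness.} Since $\?B$ does not have "finite duality", \Cref{prop:characterization-finite-duality-path-projections} provides a finite "$\sigma$-structure" $\?X$ (with suitable markings) such that the projections $\projHom{1}$ and $\projHom{2}$ lie in $\powstruct{\?B}{\?X}$ but are not ``connected(ish)'' there. Concretely, there is a partition of the relevant part of $\powstruct{\?B}{\?X}$ into two sides. One side contains $\projHom{1}$ and the other contains $\projHom{2}$, and no tuple of the gadget relation crosses between them. We fix $\?X$ once and for all; it depends only on $\?B$.

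\textbf{Step 2: the reduction.} Given $\tup{\+G,s,t}$, we build an "automatic presentation" $\+A$ of the $\extendedSignature{\sigma}{\?B}$-structure $\?G'\prodstruct\?X$. Here $\?G'$ encodes each edge of $\?G$ by a tuple of a predicate of arity at least $2$, which exists by assumption. The markings force $s$ to behave like $\projHom{1}$ and $t$ like $\projHom{2}$. Products with the fixed finite structure $\?X$ and the first-order encoding of edges are implementable as "first-order reductions" on presentations (\Cref{apdx:construction-automatic-presentations}, \Cref{prop:automatic-first-order}). The domain of $\+A$ is then essentially $\domainPres{\+G}\times X$, encoded by tagging words with elements of $X$.

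\textbf{Step 3: correctness in both directions.}

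($\Leftarrow$) Suppose $L$ is a regular union of "connected components" with $s\in L$ and $t\notin L$. We map $(u,x)$ to $\projHom{1}(x)$-like values if $u\in L$, and to $\projHom{2}(x)$-like values otherwise. More precisely, we send $u\in L$ to the image prescribed on the $s$-side and $u\notin L$ to the image on the $t$-side, as in the finite construction. Because $L$ is a union of components, every edge lies entirely on one side, so this is a "homomorphism". It is a "regular function" because its fibres are Boolean combinations of $L$, $\domainPres{\+G}$ and finitely many tag languages.

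($\Rightarrow$) Suppose $h$ is a "regular homomorphism" from $\+A$ to $\marked{\?B}$. By currying (\Cref{prop:currying-hom}), $h$ yields $H\colon \?G'\to\powstruct{\?B}{\?X}$ with $H(s)=\projHom{1}$ and $H(t)=\projHom{2}$. We set $L\defeq\{u \mid H(u)\text{ lies on the }\projHom{1}\text{-side}\}$. Then $L$ is a union of components, since edges cannot cross the partition of Step~1. It contains $s$ but not $t$. It is regular, since $H(u)$ is determined by the finitely many values $h(u,x)$ for $x\in X$, each a regular function of $u$; hence $L$ is a finite Boolean combination of regular languages.

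\textbf{Main obstacle.} The delicate point is Step 1 combined with the ($\Rightarrow$) direction. The connected(ish) notion of Larose and Tesson concerns a gadget path in $\powstruct{\?B}{\?X}$ rather than plain edges. We must check that the side predicate is a union of classes of that gadget relation, so that a single $\?G$-edge never maps across the partition. We also need $H$ to be defined on all of $\domainPres{\+G}$, including vertices not connected to $s$ or $t$, and those vertices must still be placed consistently. We will handle this by defining the sides via reachability from $\projHom{1}$ in the finite structure $\powstruct{\?B}{\?X}$. That reachable set is a fixed finite set, so membership of $H(u)$ in it is a regular condition.
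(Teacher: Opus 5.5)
\textbf{There is a genuine gap, and it is in the edge gadget.} You flag this as the main obstacle but do not resolve it. The rest of your plan matches the paper: the product with $\mathbf{B}^2$, the markings forcing $\pi_1$ at $s$ and $\pi_2$ at $t$, currying, pulling back one side of $\mathbf{B}^{\mathbf{B}^2}$ to get $L$, and the map $(u,b,b')\mapsto b$ or $b'$ for the converse. Your regularity arguments are also fine.

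\textbf{Why a single tuple per edge fails.} \Cref{prop:characterization-finite-duality-path-projections} only says that $\pi_1$ and $\pi_2$ are not \emph{linked} in $\mathbf{B}^{\mathbf{B}^2}$. Two elements are $1$-linked when \emph{every} tuple over them lies in \emph{every} relation. You encode an edge $(g,g')$ of $\mathbf{G}$ by one tuple of one chosen predicate $\mathcal{R}$. Then $H$ only sends it to an $\mathcal{R}$-tuple of $\mathbf{B}^{\mathbf{B}^2}$, and that does not make $H(g)$ and $H(g')$ linked. This leaves two options, neither of which works:
\begin{itemize}
\item If the sides are link-classes, ``edges cannot cross the partition'' is unjustified.
\item If the sides are defined by reachability along $\mathcal{R}$-tuples, nothing prevents $\pi_2$ from being reachable from $\pi_1$.
\end{itemize}

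\textbf{A counterexample.} Take $\sigma=\{\mathcal{E},\mathcal{S}\}$, both binary. Let $\mathbf{B}$ have domain $\{0,1,2\}$, with $\mathcal{S}(\mathbf{B})=\{(0,1),(1,2)\}$ and $\mathcal{E}(\mathbf{B})=\{0,1,2\}^2$.
\begin{itemize}
\item $\mathbf{B}$ is a core: its $\mathcal{S}$-reduct is the $2$-path, whose only endomorphism is the identity.
\item $\mathbf{B}$ has no finite duality: since $\mathcal{E}$ is full, $\mathbf{A}$ maps to $\mathbf{B}$ iff its $\mathcal{S}$-reduct maps to the $2$-path. So the zigzags with empty $\mathcal{E}$ are infinitely many critical obstructions.
\end{itemize}
If you encode edges as $\mathcal{E}$-tuples, the source structure has empty $\mathcal{S}$-relation. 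Then $(u,b,b')\mapsto b$ if $u=s$ and $b'$ otherwise respects the markings, preserves $\mathcal{E}$ trivially, and is regular. So every instance is sent to a positive instance, including those where $s$ and $t$ are adjacent.

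\textbf{The fix is the paper's construction.} Replace each edge $(g,g')$ by a $1$-link, putting every tuple over $\{g,g'\}$ into \emph{every} predicate.
\begin{itemize}
\item Adjacent vertices of $\mathbf{G}$ become $1$-linked in $\mathbf{A}$.
\item Homomorphisms preserve linkedness, and linkedness is transitive. Hence $F^{-1}[\mathcal{X}]$, with $\mathcal{X}$ the set of elements linked to or equal to $\pi_1$, is a union of components of $\mathbf{G}$.
\item It contains $s$, and it excludes $t$ by the proposition.
\item It is regular because $\mathcal{X}$ is a finite set.
\end{itemize}
With this gadget, the hypothesis that some predicate has arity at least $2$ is not what carries the encoding. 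Its role is the converse: via the tuple $(a,\dots,a,a')$, two $1$-linked elements of $\mathbf{A}$ must come from an edge of $\mathbf{G}$. So linkedness in $\mathbf{A}$ coincides with connectivity in $\mathbf{G}$.
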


We then prove a lower bound on the complexity of "regular unconnectivity in automatic graphs".

\begin{lemma}
	\AP\label{lemma:regular-unconnectivity-lowerbound}
	"Regular unconnectivity in automatic graphs" is "RE"-hard.
\end{lemma}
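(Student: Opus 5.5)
We reduce from the "regular reachability problem" for "linear Turing machines", which is undecidable by \cite[Lemma~4.2]{BarceloFigueiraMorvan2023SeparatingAutomatic}. Recall its shape: given a linear (space-bounded) Turing machine $\+M$ and two regular sets of configurations, or two configurations $c_s$ and $c_t$, the question is whether there is a regular set of configurations containing the initial configurations, avoiding the target ones, and closed under the successor relation. In other words, it asks for a regular invariant separating source from target. The intended reduction takes the configuration graph of $\+M$, which is an "automatic graph", and turns this ``regular invariant'' question into a ``regular union of connected components'' question. Since a regular union of components is in particular a regular invariant, the only thing we must add is a guarantee that, conversely, every regular invariant yields a regular union of components.

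First, we make the transition graph \emph{undirected in effect}, i.e.\ connectivity should coincide with reachability from the source. For linear (length-preserving) machines, we first make $\+M$ reversible, using the standard Bennett/Lange–McKenzie–Tapp style simulation. This step is where some care is needed: reversibility must be obtained while keeping linear space, and while keeping the configuration relations automatic. Lange, McKenzie and Tapp give a reversible simulation of space-bounded machines in the same space, and the configurations of the simulation are words of the same length, so the successor relation stays synchronous-rational.

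In a reversible deterministic machine, every connected component of the configuration graph is a simple path or cycle. Hence $c_t$ is connected to $c_s$ iff it is reachable forwards or backwards, and a regular forward-invariant set containing $c_s$ must also be closed under predecessors along the path through $c_s$. If needed, we handle this by restricting to the component of $c_s$. Alternatively, and more simply, we take as the automatic graph $\?G$ the configurations together with the (automatic) successor relation, and add for each configuration an edge to a ``history'' marker. It then remains to check that a regular invariant can be \emph{closed} into a regular union of components. In a reversible machine, the component of a configuration is its forward and backward orbit. Given a regular invariant $L$ containing $c_s$ but not $c_t$, we want to replace it with the union of components meeting $L$. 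This union is not obviously regular, and this is the main obstacle. The fix I would try is to choose the reduction so that the source problem already asks for sets closed under both successor and predecessor: in the proof of \cite[Lemma~4.2]{BarceloFigueiraMorvan2023SeparatingAutomatic}, the undecidability comes from the reduction from the halting problem, where a regular separator exists iff the machine does not halt. I would re-run that argument directly with bidirectional closure. If $\+M$ halts, then $c_s$ and $c_t$ are connected, so no separator exists. If $\+M$ does not halt, the separator built there (configurations of a run prefix, recognized using the fact that the run never reaches the halting state and ``counter'' encodings) is defined by a first-order formula over $\univStructSynchronous{\Sigma}$ that is symmetric in the successor relation, so it is regular by \Cref{prop:automatic-first-order}.

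Finally, we check that the map $\+M \mapsto \tup{\+G, s, t}$ is computable, so that the complement of halting, or more precisely halting itself on the ``yes'' side for separators, reduces as needed. Mapping non-halting to ``regularly unconnected'' gives hardness for the complement of halting as a \emph{reachability} statement, so we obtain "RE"-hardness by composing with the appropriate variant where the unconnected side is the "RE" side. The exact orientation (halting vs.\ non-halting) has to follow the one in \cite{BarceloFigueiraMorvan2023SeparatingAutomatic}, where the existence of a regular separator is "RE"-complete. The main obstacle remains the second step: ensuring that the separators are genuinely closed under \emph{both} directions of edges while remaining regular.
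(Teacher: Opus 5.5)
\textbf{There is a genuine gap, and it starts with the source problem.} In the paper (following Barcel\'o, Figueira and Morvan), the \emph{regular reachability problem} asks whether the set $R$ of configurations reachable from the initial configuration of a Turing machine $\mathcal{T}$ is a regular language. A \emph{linear} Turing machine is one whose configuration graph is well-founded, with in- and out-degree at most one. This has nothing to do with linear space, and tapes are unbounded.

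Your reading, length-preserving machines with a single source $s$ and target $t$ (as the target problem requires), cannot produce any undecidability. Every connected component then consists of configurations of one fixed length, so it is finite. Connectivity of $s$ and $t$ is therefore decidable by exhaustive search. Whenever they are unconnected, the finite, hence regular, component of $s$ is already a witness. The reversibility detour is therefore pointless.

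The step you flag as the main obstacle is also never resolved, and as stated it cannot be. Suppose you drop the space bound when you re-run the Barcel\'o--Figueira--Morvan argument. You then claim that halting machines yield connected $s,t$ and non-halting machines yield a regular separator. That is a many-one reduction from the complement of the halting problem, i.e.\ coRE-hardness. But regular unconnectivity is in RE: enumerate automata for $L$, and check $s\in L$, $t\notin L$ and closure of $L$ under edges in both directions by first-order model checking on the automatic presentation. Such a reduction would make halting decidable. So your orientation is wrong, and the closing remark about composing with an appropriate variant does not repair it.

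\textbf{The missing idea is to arrange that there is exactly one candidate union of components.} Then the problem of closing a forward-invariant set into a union of components never arises. For a linear machine, the configuration graph is a disjoint union of paths, each starting at a configuration without predecessor. The component of the initial configuration $s$ (wlog without predecessor) is therefore exactly $R$.

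The paper adds a fresh vertex $\bullet$, and a clique on $\bullet$ together with all predecessor-less configurations other than $s$. This stays automatic, since having no predecessor is first-order definable. The resulting graph has exactly two components: $R$, which contains $s$, and its complement, which contains $t=\bullet$. Hence the only union of components containing $s$ but not $t$ is $R$ itself. So $(\mathcal{G},s,t)$ is regularly unconnected if and only if $R$ is regular, and the hardness of the regular reachability problem transfers directly. No reversible simulation, no backward-closure argument and no re-examination of the original undecidability proof is needed.
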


\begin{proof}
	Our reduction starts from the
	"regular reachability problem", which is the problem of, given a "Turing machine" $\+T$, to decide whether its set of "reachable configurations" is a regular language.
	It was proven to be undecidable by Barceló, Figueira and Morvan
	in \cite[Lemma~4.2]{BarceloFigueiraMorvan2023SeparatingAutomatic},\footnote{Under the name ``\DPfont{reachable regularity problem}''.} even when restricted to what we call here 
	\AP""linear Turing machines"", "ie" machines whose configurations are well-founded and
	have each at most one predecessor and one successor. 

	So, we reduce the "regular reachability problem" on "linear Turing machines",
	to "regular unconnectivity in automatic graphs" in the manner depicted in \Cref{fig:reduction-regreach-to-regunconnect}.

	\begin{figure}
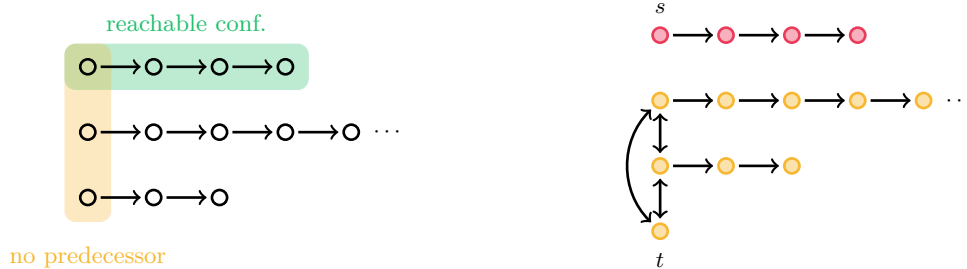

		\centering
		\begin{tikzpicture}
			\input{tikz/conf-graph-wf-RTM}
		\end{tikzpicture}
		\hspace{7em}
		\begin{tikzpicture}
			\input{tikz/reduction-regreach-to-regunconnect}
		\end{tikzpicture}
		\caption{
			\AP\label{fig:reduction-regreach-to-regunconnect}
			"Configuration graph" of a "linear Turing machine" (left, derived from \cite{BarceloFigueiraMorvan2023SeparatingAutomatic}, licensed under CC BY 4.0)
			and the instance of "regular unconnectivity in automatic graphs"
			to which it is reduced (right). Colours indicate the different "connected components".
		}
	\end{figure}%
	Given a "linear Turing machine" $\+T$ with
	configuration graph $\tup{V, \+E}$,
	we reduce it to the "automatic graph" $\+G = \tup{V', \+E'}$ where:
	$V' \defeq V \dcup \{\bullet\}$ and
	$\+E'$ is the union of $\+E$ with the "clique" that
	puts in relation all vertices that are either of the form $\bullet$
	or that have no predecessor but not the "initial configuration".
	We then pick $s$ to be the "initial configuration"---note that we can assume "wlog" that it has no predecessor---and $t$ to be $\bullet$.

	By construction, $\+G$ is "automatic@@struct" and has exactly two connected components:
	the set of reachable configurations (containing $s$) and its complement (containing $t$).
	Hence, the only union of "connected components" of $\+G$ that contains $s$
	but not $t$ is not reachable.
	And hence, $\+T$ is a positive instance of the "regular reachability problem"
	if, and only if, the set of reachable configurations is regular, "ie" $\tup{\+G,s,t}$
	is a positive instance of "regular unconnectivity in automatic graphs".
\end{proof}

\begin{restatable}{corollary}{lowerboundHomReg}
	\AP\label{coro:lowerbound-homreg}
	If $\?B$ does not have "finite duality", then $\HomAut{\?B}$
	is "RE"-hard.
\end{restatable}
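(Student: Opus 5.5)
The statement should read ``then $\HomRegAut{\?B}$ is "RE"-hard''; this matches the announced "RE"-completeness of $\HomRegAut{\?B}$, and I prove it in that form. The plan is to chain the reductions already set up:
\[
\text{regular reachability} \;\le\; \text{"regular unconnectivity in automatic graphs"} \;\le\; \HomRegAut{\marked{\core{\?B}}} \;\le\; \HomRegAut{\core{\?B}} \;=\; \HomRegAut{\?B}.
\]
The first step is \Cref{lemma:regular-unconnectivity-lowerbound}, and the second is \Cref{lem:reduction-hom-reg}. What remains is to check the hypotheses of \Cref{lem:reduction-hom-reg} for $\marked{\core{\?B}}$, and to get rid of the marking and the core.

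\textbf{Passing to the core.} Let $r\colon \?B \to \core{\?B}$ be a retraction and $\iota\colon \core{\?B} \to \?B$ the inclusion. Composing a "regular homomorphism" with either map, which goes between finite structures, preserves "regularity@@fun": each preimage of a point becomes a finite union of regular languages. Hence $\HomRegAut{\?B} = \HomRegAut{\core{\?B}}$. Moreover, $\?B$ and $\core{\?B}$ are homomorphically equivalent, so they have the same "duals". Therefore $\core{\?B}$ does not have "finite duality" either.

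\textbf{The hypotheses for the marked core.} By \Cref{prop:finite-duality-unary-predicates}, $\sigma$ contains a predicate of arity at least $2$, and hence so does $\extendedSignature{\sigma}{\core{\?B}}$. Next, I must show that $\marked{\core{\?B}}$ does not have "finite duality". Suppose it did. By "Atserias' theorem", $\HomAll{\marked{\core{\?B}}}$ would be "first-order definable". By \Cref{prop:idempotent-core-preserves-csp-complexity}, $\HomAll{\core{\?B}}$ is "first-order equivalent" to it, so $\HomAll{\core{\?B}}$ would be "first-order definable" as well. Atserias' theorem would then give "finite duality" to $\core{\?B}$, a contradiction. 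So \Cref{lem:reduction-hom-reg} applies and reduces "regular unconnectivity in automatic graphs" to $\HomRegAut{\marked{\core{\?B}}}$.

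\textbf{Removing the marking (main obstacle).} The step I expect to be delicate is turning the "first-order reduction" from $\HomAll{\marked{\core{\?B}}}$ to $\HomAll{\core{\?B}}$ of \Cref{prop:idempotent-core-preserves-csp-complexity} into a reduction for \emph{regular} homomorphisms on \emph{automatic} inputs. I would do this in three checks.
\begin{itemize}
\item The reduction is a first-order interpretation, possibly multidimensional and with finitely many copies. By \Cref{prop:automatic-first-order}, applying it to an "automatic presentation" $\+A$ yields an "automatic presentation" $\+A'$, computably from $\+A$.
\item If $h\colon \+A' \to \core{\?B}$ is a regular homomorphism, the homomorphism to $\marked{\core{\?B}}$ built in the standard proof is obtained from $h$ by restricting to the copy of $\domainPres{\+A}$ and post-composing with an automorphism of $\core{\?B}$. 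Each step preserves regularity. Here rigidity (\Cref{prop:finite-duality-implies-rigid}) cannot be used, since $\core{\?B}$ lacks finite duality, so the automorphism must be handled explicitly. This is unproblematic because it ranges over a finite set.
\item Conversely, a regular homomorphism $\+A \to \marked{\core{\?B}}$ extends to $\+A'$. On the extra gadget part of $\+A'$ the extension is given by fixed, first-order definable data, so it is again regular.
\end{itemize}
If the appendix proof of \Cref{prop:idempotent-core-preserves-csp-complexity} uses a more complicated gadget, for instance a product with a power of $\core{\?B}$, the same two-way argument applies. The extension stays definable in $\univStructSynchronous{\Sigma}$ from $h$ together with finitely many constants, and is therefore regular by \Cref{prop:automatic-first-order}.

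Composing all the reductions with the undecidability, in fact "RE"-hardness, of the "regular reachability problem" for "linear Turing machines" yields that $\HomRegAut{\?B}$ is "RE"-hard.
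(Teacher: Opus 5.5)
Your proof is correct and follows essentially the same chain as the paper: pass to the core, obtain a predicate of arity at least $2$ from \Cref{prop:finite-duality-unary-predicates}, apply \Cref{lem:reduction-hom-reg} to reduce into $\HomRegAut{\marked{\core{\?B}}}$, and remove the marking via \Cref{prop:idempotent-core-preserves-csp-complexity}; you are also right that the statement should concern $\HomRegAut{\?B}$. You are more explicit than the paper's one-line appeal to \Cref{prop:automatic-first-order} about why the core step and the unmarking reduction preserve regularity, and your check that $\marked{\core{\?B}}$ lacks finite duality is harmless but unnecessary, since the hypothesis of \Cref{lem:reduction-hom-reg} concerns the unmarked structure.
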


This concludes the proof of \Cref{thm:dichotomy-theorem-automatic-structures}.
\section{Discussion}
\label{sec:dichotomy-discussion}

\subparagraph*{Beyond Automatic Relations.}

In the statement of \Cref{thm:dichotomy-theorem-automatic-structures},
"automatic structures" can be replaced by
$\omega$-tree-automatic structures (see \cite[Definition~XII.1.4]{Blumensath2024MSOModelTheory}), or even
higher-order automatic structures (see \cite[last remark of \S~XII.2]{Blumensath2024MSOModelTheory})
and the statement of the theorem would remain true.
The undecidability results obviously remain true when defined on a larger class.
Moreover, to prove decidability, notice that the key lemma of \Cref{sec:uniformly-first-order-definable-hom},
namely \Cref{lemma:finite-duality-uniformly-definable-homomorphisms},
actually deals with $\HomAll{\?B}$ and not $\HomAut{\?B}$.
The decidability follows from the fact that the first-order theory of any
higher-order automatic structure is decidable.

\subparagraph*{Beyond Finite Duality.}

The notion of "duality" has been generalized in graph theory to \emph{restricted dualities},  
in which the quantification over "structures" is restricted to a fixed class of finite graphs.
For instance, Naserasr showed \cite[Theorem 11]{Naserasr2007PlanarDuality} that 
for any finite planar undirected graph $\?G$, then $\?G$ is triangle-free
"iff" $\?G$ admits a "homomorphism" to the so-called Clebsch graph.
Letting $\?H_5$ denote the "directed graph" induced by it---meaning that if $\{u,v\}$ is an edge,
we put an edge from $u$ to $v$ and from $v$ to $u$---it follows that
$\HomAut{\?H_5}$ is decidable when restricted to planar graphs,
even though the full problem is undecidable.

This opens a wide class of problems that are undecidable by the "dichotomy theorem for automatic structures", but that admit non-trivial restrictions which are decidable.
We refer the reader to \cite{NesetrilPOM2012RestrictedDualities} for more details on restricted dualities.

\subparagraph*{Invariance under Graph Isomorphisms.}

Note that given an "automatic presentation" $\+A$ of some "$\sigma$-structure" $\?A$,
the property of whether $\+A \homregto \?B$, where $\?B$ is a "finite $\sigma$-structure",
does not depend only on the "structure" $\?A$, but on its "presentation" $\+A$; it is trivial to come up with a "presentation" of the
same "graph@@dir" that admits a "regular $2$-colouring".

On the other hand, the implication
\itemDTFinDual\ $\Rightarrow$ \itemDTEqual\ proves that if $\?B$ has "finite duality",
then the property of whether $\+A \homregto \?B$ is invariant under "graph isomorphisms",
in the sense that for any "presentations" $\+A_1$ and $\+A_2$ of the "structures"
$\?A_1$ and $\?A_2$, respectively, if $\?A_1$ and $\?A_2$ are "isomorphic", then
$\+A_1 \homregto \?B$ "iff" $\+A_2 \homregto \?B$.
We do not know whether the converse implication holds.

\begin{restatable}{conjecture}{conjInvarianceGraphIsomorphisms}
	\AP\label{conj:invariance-under-graph-isomorphisms}
	For any "finite $\sigma$-structure" $\?B$, $\HomRegAut{\?B}$ is invariant
	under "graph isomorphisms" "iff" $\?B$ has "finite duality".
\end{restatable}

\bibliographystyle{plainurl} 
\bibliography{bibli-abbrev,bibli-dichotomy-theorem-automatic}

\clearpage
\appendix
\section{Supplementary Material for the Preliminaries}

\subsection{Elementary Definitions for Relational Structures}
\label{apdx:relational-structures}

"Relational structures" generalize the notion of directed graph by allowing (1) relations of higher arity and (2) multiple relations.

A \AP""relational signature"", consists in:
\begin{itemize}
	\item a set of elements, called ""predicates"",
together with, for each of these elements, a strictly positive natural number, called \emph{arity}
	\item a set of constant symbols.
\end{itemize}
Both sets can be empty or infinite.
We denote by $\+R_{(k)} \in \sigma$ the fact that "predicate" $\+R$, of arity
$k$, is part of "signature" $\sigma$. 

Given a "signature" $\sigma$, a \AP""$\sigma$-structure"" $\?A$
consists of:
\begin{itemize}
	\item a set $A$, called \emph{domain},
	\item for each "predicate" $\+R_{(k)} \in \sigma$, a $k$-ary relation
		$\+R(\?A) \subseteq A^k$, and
	\item for each constant $c \in \sigma$, an element $c(\?A) \in A$.
\end{itemize}
We call $\+R(\?A)$ (resp. $c(\?A)$) the \AP""interpretation@@predicate""
of "predicate" $\+R_{(k)}$ (resp. constant $c$) in $\?A$.
By analogy with graphs, elements of the domain are sometimes referred to as
\emph{vertices}.

A "$\sigma$-structure" $\?A$ is said to be \AP""finite@@struct"" when 
(1) its domain is finite, (2) for every "predicate" $\+R_{(k)} \in \sigma$, the relation $\+R_{(k)}(\?A)$ is finite, and (3) there exists only finitely many "predicates" $\+R_{(k)} \in \sigma$
"st" $\+R_{(k)}(\?A)$ is non-empty.

A \AP""substructure"" of a "$\sigma$-structure" $\?A$ is another
"$\sigma$-structure" $\?A'$ such that:
\begin{itemize}
	\item the domain $A'$ of $\?A'$ is a subset of $A$,
	\item each "interpretation@@predicate" of a "predicate" in $\?A'$ 
		is a subset of its "interpretation@@predicate" in $\?A$, and
	\item every constant of $\?A$ belongs to $A'$, and the interpretation
		of the constants in both structures coincide. 
\end{itemize}

A \AP""proper substructure"" is a "substructure" for which
at least one of the inclusions in the first two points above
is strict: in other words, such a "substructure" should
miss at least one element, or one "hyperedge".
Given a subset $X$ of the domain $A$ of a "$\sigma$-structure" $\?A$,
the \AP""substructure of $\?A$ induced by $X$@induced substructure"" is:
\begin{itemize}
	\item undefined if not all constants of $\?A$ belong to $X$,
	\item otherwise, it is obtained from $\?A$ by restricting
		its domain to $X$, and by intersecting its $k$-ary relations
		with $X^k$.
\end{itemize}

We denote by \AP$\intro*\disunion$ the ""disjoint union"" of two "structures" obtained
by taking the disjoint union of their domains and then identifying constants.
Similarly, the ""Cartesian product"" \AP$\intro*\prodstruct$ of two structures $\?A$ and $\?B$
is defined over the domain $A\times B$ with the natural "interpretations@@predicate".
For $n\in\N$ we denote by $\intro*\iterstruct{\?A}{n}$ the $n$-fold "Cartesian product"
$\?A \prodstruct \hdots \prodstruct \?A$. 

Given "structures" $\?A_1, \dotsc, \?A_k$ sharing the same "signature",
the $i$-th projection from the Cartesian product $\?A_1 \prodstruct \dotsc \prodstruct \?A_k$
to $\?A_i$, defined by $\tup{a_1,\dotsc,a_k} \mapsto a_i$ ($i \in \intInt{1,k}$),
is a "homomorphism", and is denoted by \AP$\intro*\projHom{i}$.

\subsection{Proof of Proposition~\ref{prop:currying-hom}}
\label{apdx-prop:currying-hom}

\curryingHom*

\begin{proof}
	Let $\+R$ be a "predicate" of arity $k$, and let
	$\langle a_1,\dotsc,a_k \rangle \in \+R(\?A)$.
	We want to show that $\langle F(a_1),\dotsc,F(a_k) \rangle \in \+R(\powstruct{\?C}{\?B})$:
	for any $\langle b_1,\dotsc,b_k \rangle \in \+R(\?B)$, we have
	\[\langle F(a_1)(b_1), \dotsc, F(a_k)(b_k)\rangle = \langle f(a_1,b_1),\dotsc,f(a_k,b_k) \rangle \in \+R(\?C)\] since $f$ is a "homomorphism" from $\?A\prodstruct \?B$ to $\?C$.
	Hence, $F$ is indeed a "homomorphism" from $\?A$ to $\powstruct{\?C}{\?B}$.

	Dually, if $F$ is a "homomorphism" from $\?A$ to $\powstruct{\?C}{\?B}$,
	we define $f\colon \?A\prodstruct \?B \to \?C$ by $\langle a,b \rangle \mapsto F(a)(b)$,
	and claim that $f$ is a "homomorphism". Indeed, if $\+R$ be a predicate of arity $k$,
	for any $\langle a_1, \dotsc, a_k \rangle \in \+R(\?A)$
	and $\langle b_1, \dotsc, b_k \rangle \in \+R(\?B)$,
	we have $\langle f(a_1,b_1), \dotsc, f(a_k,b_k) \rangle
	= \langle F(a_1)(b_1), \dotsc, F(a_k)(b_k) \rangle$.
	Since $\langle F(a_1), \dotsc, F(a_k) \rangle \in \+R(\powstruct{\?C}{\?B})$
	and $\langle b_1, \dotsc,b_k \rangle \in \+R(\?B)$ 
	it follows that $\langle F(a_1)(b_1), \dotsc, F(a_k)(b_k) \rangle \in \+R(\?C)$.
	Therefore, $f$ is a "homomorphism" from $\?A \prodstruct \?B$ to $\?C$.

	It is then routine to check that the maps $f \mapsto F$ and $F \mapsto f$ defined
	in the two previous paragraphs are mutually inverse bijections.
\end{proof}

\subsection{Incidence, Adjacency and Balls}
\label{apdx:adjacency}

The ""incidence graph"" of a "$\sigma$-structure",
for some "signature" $\sigma$, is the following "undirected graph":
\begin{itemize}
	\item its domain is the disjoint union of $A$
		and the "hyperedges" of $\?A$;
	\item there is an edge between two vertices "iff" one of them
		is a vertex $a$ of $\?A$, and the other is an hyperedge
		$\bar h$ of $\?A$, with the property that $a \in \bar h$.
\end{itemize}

The \AP""distance@@struct"" between two vertices of a "$\sigma$-structure"
is defined as half of their distance in the "incidence graph".%
\footnote{By construction the "incidence graph" is bipartite and hence
the distance between two vertices of the "structure" is even.}
The \AP""diameter"" of a "structure" is the maximum over $u$ and $v$
of the "distance@@struct" between vertices $u$ and $v$.

The ball \AP$\intro*\ball{\?A}{a}{m}$ centred at vertex $a\in A$ and of radius $r\in\N$ of
a "$\sigma$-structure" $\?A$ is the "substructure" of $\?A$ induced by
all vertices at "distance@@struct" at most $r$.
A "structure" is said to be \AP""locally finite@@structure""
when every ball of finite radius is "finite@@struct".

Given a "$\sigma$-structure" $\?A$ and $a \in A$, we define the \AP""adjacency""%
\footnote{We do not use the terminology \emph{neighbourhood} since it usually refers
to a set of elements, namely the set of elements occurring in the "adjacency".}
of $a$ in $\?A$ to be the tuple of sets%
\AP\phantomintro{\adjacency}
\begin{align*}
	\reintro*\adjacency{a}{\?A}{\+R}{i} & \defeq
		\big\{
			\langle a_1, \dotsc, a_{i-1}, a_{i+1}, \dotsc, a_k \rangle \in A^{k-1}
			\\ & \hphantom{\defeq \big\{ }\mid
			\langle a_1, \dotsc, a_{i-1}, a, a_{i+1}, \dotsc, a_k \rangle \in \+R(\?A)
		\big\},
\end{align*}
when $\+R$ ranges over "predicate" of arity $k$ of $\sigma$ and $i \in \intInt{1,k}$. 
For "graphs@@dir", the "adjacency" of a vertex corresponds to its set of predecessors and
its set of successors.

\subsection{Constructions on Automatic Presentations}
\label{apdx:construction-automatic-presentations}

Let $\+A$ and $\+B$ be "automatic presentations" of some "$\sigma$-structures"
$\?A$ and $\?B$, over alphabets $\Sigma$ and $\Gamma$, respectively.
We define \AP$\+A \intro*\prodpres \+B$ to be the "presentation@@auto"
over the alphabet $(\Sigma \times \Gamma) \dcup (\Sigma \times \{\pad\}) \dcup (\{\pad\} \times 
\Gamma)$  such that:
\begin{align*}
	\domainPres{\+A\prodpres \+B} & \defeq \{u\convol v \mid u \in \domainPres{\+A} \land v \in \domainPres{\+B}\}\\
	\relPres{\+R}{\+A\prodpres \+B} & \defeq
		\{\tup{u_1\convol v_1,\, \dotsc,\, u_k\convol v_k} \mid
		\tup{u_1,\, \dotsc,\, u_k} \in \relPres{\+R}{\+A} \land
		\tup{v_1,\, \dotsc,\, v_k} \in \relPres{\+R}{\+B}
	\}
\end{align*}
for each "predicate" $\+R$ of arity $k$ in $\sigma$.
It is an "automatic presentation" of $\?A \prodstruct \?B$.
Indeed, given a "first-order formula" $\phi(x_1,\dotsc,x_k)$
over $\signatureSynchronous{\Sigma}$, describing $\relPres{\+R}{\+A}$,
and a "first-order formula" $\psi(x_1,\dotsc,x_k)$
over $\signatureSynchronous{\Sigma}$, describing $\relPres{\+R}{\+B}$,
we let $\phi^*$ (resp. $\psi^*$) be the "formula@@FO" obtained from $\phi$ (resp. $\psi$)
by substituting $\lastLetter{a}(x)$ for $\bigvee_{b \in \Gamma \dcup \{\pad\}} \lastLetter{\tup{a,b}}(x)$ (resp. $\bigvee_{b \in \Sigma \dcup \{\pad\}} \lastLetter{\tup{b,a}}(x)$).
Then $\phi^* \land \psi^*$ is a "first-order formula", over the product alphabet,
that describes $\relPres{\+R}{\+A\prodpres \+B}$. The same construction
works for $\domainPres{\+A\prodpres \+B}$.
This shows that if $\?A$ and $\?B$ are "automatic $\sigma$-structures",
then so is $\?A \times \?B$.

\begin{proposition}
	\label{prop:homreg-prod-finite}
	Let $\?A$, $\?B$ and $\?C$ be "automatic $\sigma$-structures", such that
	$\?B$ and $\?C$ are finite.
	Let $\+A$ (resp. $\+B$ and $\+B'$, resp. $\+C$ and $\+C'$) be an "automatic presentation"
	of $\?A$ (resp. $\?B$, resp. $\?C$).
	Then $\+A \prodpres \+B \homregto \+C$ "iff" $\+A \prodpres \+B' \homregto \+C'$.
\end{proposition}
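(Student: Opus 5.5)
By symmetry it suffices to prove one direction: assuming a regular homomorphism $h$ from $\+A \prodpres \+B$ to $\+C$, we build one from $\+A \prodpres \+B'$ to $\+C'$. The idea is that, since $\?B$ and $\?C$ are finite, changing their presentations only relabels finitely many elements by regular sets. The change is absorbed by first-order definable (hence, by \Cref{prop:automatic-first-order}, regular) translations.

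\textbf{Step 1: normalise the finite presentations.} Since $\?B$ is finite, $\domainPres{\+B}$ splits into finitely many classes of $\relPres{=}{\+B}$, each of which is regular, being $\{v \mid v \mathrel{\relPres{=}{\+B}} v_0\}$ for a fixed word $v_0$. Choose one representative word $v_b$ in each class for every $b \in B$, and similarly $v'_b \in \domainPres{\+B'}$; do the same for $\+C$, $\+C'$. The target side is easy. A regular homomorphism into $\+C$ really is a function into $C$ with regular preimages. By the paper's convention that $\homregto$ into a finite structure is independent of its presentation, a regular homomorphism into $\+C$ is the same thing as one into $\+C'$. So only the change $\+B \leadsto \+B'$ needs attention.

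\textbf{Step 2: transport along $\+B'\to\+B$.} Define $g\colon \domainPres{\+A\prodpres\+B'} \to C$ by $g(u\convol v') \defeq h(u \convol v_b)$, where $b = \+B'(v')$. This is a homomorphism: a tuple $\tup{u_i\convol v'_i}_i$ in $\relPres{\+R}{\+A\prodpres\+B'}$ has $\tup{u_i}_i \in \relPres{\+R}{\+A}$ and $\tup{\+B'(v'_i)}_i \in \+R(\?B)$. Hence $\tup{u_i \convol v_{b_i}}_i \in \relPres{\+R}{\+A\prodpres\+B}$, and $h$ maps it into $\+R(\?C)$. One should also check that $h$ respects the equality of the presentation, or simply work with the functions on words as the definition of regular homomorphism does.

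\textbf{Step 3: regularity.} For each $c \in C$ we have
\[
  g^{-1}[c] = \bigcup_{b\in B} \{u\convol v' \mid v' \in \+B'^{-1}[b],\; u \convol v_b \in h^{-1}[c]\}.
\]
Each set $\+B'^{-1}[b]$ is regular, so it remains to show that $\{u\convol v' \mid u\convol v_b \in L\}$ is regular for a regular $L$ and a fixed word $v_b$. This is the main technical point, because of padding. The map $u \mapsto u\convol v_b$ is first-order definable in $\univStructSynchronous{}$ over the product alphabet, using the fixed constant word $v_b$. Alternatively, an automaton can read $u\convol v'$ while simulating the automaton for $L$ on $u\convol v_b$. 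This simulation needs finite memory for the position inside $v_b$ up to $|v_b|$, and afterwards it feeds padding in the second track. Lengths mismatch when $|v'| > |u|$ or $|v_b| > |u|$, but both cases are handled by remembering boundedly many letters of $v_b$ past the end of $u$, followed by a final check. Hence $g$ is regular, which concludes.
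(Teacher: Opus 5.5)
Your proof is correct and uses the same idea as the paper: whether a map from $\+A \prodpres \+B$ into a finite structure is regular is governed by its sections $\{u \mid h(u \convol v_b) = c\}$ at the finitely many elements $b \in B$, $c \in C$, and these sections do not depend on the chosen presentations of $\?B$ and $\?C$. The paper packages this as a presentation-free characterization of regular homomorphisms (the claim inside its proof, argued via first-order logic as in \Cref{prop:regular-function-finite-domain}), which makes the equivalence immediate and treats $\+B$ and $\+C$ uniformly; you instead transport $h$ explicitly and verify the regularity of the transported map by hand, handling padding and the $\relPres{=}{\+B}$-classes more explicitly than the paper does.
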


\begin{proof}
	The proof follows from the following claim, which can be proven
	exactly in the same fashion as \Cref{prop:regular-function-finite-domain}.

	\begin{claim}
		\label{claim:homreg-prod-finite}
		Assuming again that $\?B$ and $\?C$ are finite,
		a function $f\colon\+A \prodpres \+B \to \+C$ is a "regular homomorphism"
		"iff" for every $b\in \domainPres{\+B}$, for every $c\in \domainPres{\+C}$,
		\(\{
			a\in \domainPres{\+A} \mid f(a,b) = c
		\}\)
		is a regular language.\qedhere
	\end{claim}
\end{proof}

In other words, the existence of a "regular homomorphism" does not depend on the
"automatic presentation" of the \emph{finite} "structures" that are involved, but only
on the "structure" they represent.
As a consequence of \Cref{prop:homreg-prod-finite}, we write
\(\+A \prodpres \?B \homregto \?C\) as a synonym for \(\+A \prodpres \+B \homregto \+C\).

\begin{corollary}[Currying]
	\label{coro:homreg-currying}
	Let $\?A$, $\?B$ and $\?C$ be "automatic $\sigma$-structures",
	and let $\+A$ be an "automatic presentation" of $\?A$.
	Then $\+A \prodpres \?B \homregto \?C$ "iff" $\+A \homregto \powstruct{\?C}{\?B}$.
\end{corollary}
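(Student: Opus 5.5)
We prove the two directions separately, both by unfolding the definitions via \Cref{prop:currying-hom} and \Cref{prop:homreg-prod-finite}. Since $\?B$ is finite, we fix an automatic presentation $\+B$ of $\?B$ in which every element is represented by a single letter (and likewise for $\powstruct{\?C}{\?B}$, whose domain is finite because both $\?B$ and $\?C$ are finite). By \Cref{prop:homreg-prod-finite}, we may use these particular presentations without loss of generality.

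\textbf{Forward direction.} Let $f\colon \+A \prodpres \+B \to \?C$ be a regular homomorphism. We define $F\colon \domainPres{\+A} \to \powstruct{\?C}{\?B}$ by $F(a) \defeq (b \mapsto f(a\convol b))$. By \Cref{prop:currying-hom}, $F$ is a homomorphism $\?A \to \powstruct{\?C}{\?B}$; note that $F(a)$ is well defined as a function on $B$ and lies in the domain of $\powstruct{\?C}{\?B}$ because it is a homomorphism $\?B\to\?C$, which follows from $f$ being a homomorphism. For regularity, fix $g \in \powstruct{\?C}{\?B}$. Then
\[
F^{-1}[g] = \bigcap_{b\in B}\{a \in \domainPres{\+A} \mid f(a\convol b) = g(b)\}.
\]
By \Cref{claim:homreg-prod-finite}, each set in this intersection is regular, and the intersection is finite, so $F^{-1}[g]$ is regular.

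\textbf{Backward direction.} Given a regular homomorphism $F$, we set $f(a\convol b)\defeq F(a)(b)$, which is a homomorphism by \Cref{prop:currying-hom}. For $b\in B$ and $c\in C$, we have
\[
\{a \mid f(a\convol b)=c\} = \bigcup_{g\,:\,g(b)=c} F^{-1}[g],
\]
a finite union of regular languages, hence regular. By \Cref{claim:homreg-prod-finite}, $f$ is therefore a regular homomorphism.

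The only delicate point is the invocation of \Cref{claim:homreg-prod-finite}, i.e.\ the fact that regularity of a function on $\domainPres{\+A\prodpres\+B}$ reduces to regularity of its finitely many fibres indexed by $b$. Concretely, from the sets $L_{b,c}=\{a \mid f(a\convol b)=c\}$ one must build the languages $\{a\convol b \mid a\in L_{b,c}\}$. With the chosen presentation, where each $b$ is a single letter, the word $a\convol b$ is determined by $a$ and $b$ in a synchronous, letter-by-letter way, so these languages are regular. For a general presentation of $\?B$ the same conclusion holds via \Cref{prop:automatic-first-order}, since each set is first-order definable. We expect this to be the main (mild) obstacle; the remainder is bookkeeping.
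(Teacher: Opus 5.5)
Your argument is the paper's own. Its proof is the one-liner that the corollary follows from \Cref{claim:homreg-prod-finite}: currying (\Cref{prop:currying-hom}) gives the homomorphism part, and for regularity it uses exactly your observation. Namely, $F^{-1}[g]=\bigcap_{b\in B}\{a \mid f(a\convol b)=g(b)\}$ and $\{a\mid f(a\convol b)=c\}=\bigcup_{g(b)=c}F^{-1}[g]$ are finite Boolean combinations of regular languages. That bookkeeping is correct. The point you flag as delicate is just \Cref{claim:homreg-prod-finite}, which the paper takes as given.

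The one wrong step is your justification of well-definedness in the forward direction. It is \emph{not} true that $b\mapsto f(a\convol b)$ is a homomorphism $\?B\to\?C$ merely because $f$ is one. From $\langle b_1,\dots,b_k\rangle\in\+R(\?B)$ you can only conclude $\langle f(a,b_1),\dots,f(a,b_k)\rangle\in\+R(\?C)$ when $\langle a,\dots,a\rangle\in\+R(\?A)$.

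This matters. With the paper's literal definition of $\powstruct{\?C}{\?B}$, whose domain consists of the homomorphisms $\?B\to\?C$, the forward direction genuinely fails. Over the graph signature, let $\?A$ be a single vertex without edges, $\?B=\clique{3}$ and $\?C=\clique{2}$. Then $\?A\prodstruct\?B$ has no edges, so a constant map is a regular homomorphism $\+A\prodpres\?B\to\?C$. Yet $\powstruct{\clique{2}}{\clique{3}}$ has empty domain and admits no homomorphism from $\?A$.

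The paper's proof of \Cref{prop:currying-hom} likewise never checks that $F(a)$ lies in the domain. So the corollary must be read with the standard exponential, whose domain is the set of \emph{all} maps $B\to C$ (still finite). Adopt that reading and delete the offending sentence. Then $F(a)$ is trivially in the domain, the backward direction is unchanged, and your proof is complete.
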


\begin{proof}
	This also follows from \Cref{claim:homreg-prod-finite}.
\end{proof}

\subsection{Homomorphisms}
\label{apdx:homomorphisms}

Given two "$\sigma$-structures", a \AP""homomorphism"" from $\?A$ to $\?B$
is a function from $A$ to $B$ such that
for every $\+R_{(k)} \in \sigma$, for any
$\tup{a_1,\hdots,a_k} \in \+R(\?A)$, we must have
$\tup{f(a_1), \hdots, f(a_k) \in \+R(\?B)}$.
We denote by $\?A \intro*\homto \?B$ the existence of a "homomorphism" from $\?A$ to $\?B$.
Two "structures" $\?A$ and $\?B$ are \AP""homomorphically equivalent"" when
both $\?A \homto \?B$ and $\?B \homto \?A$.
Note that in this case, whether $\?C \homto^? \?A$ is actually equivalent to
$\?C \homto^? \?B$ for any $\?C$.

An ""isomorphism"" between $\?A$ and $\?B$ is a pair of "homomorphisms"
$f\colon \?A \to \?B$ and $g\colon \?B \to \?A$ such that $f\circ g$ and $g\circ f$
are both the identity on their respective domains.
An ""automorphism"" is an "isomorphism" from a "structure" into itself.
A "homomorphism" is ""strong onto"" if it is surjective
and if every tuple occurring in the $\+R$-relation of the codomain
is the image of a tuple of the $\+R$-relation of the domain.

Given two regular languages $K$ and $L$,%
\footnote{Note that whether $K$ and $L$ share the same alphabet is irrelevant since we can always work on the union of their alphabet.}
a \AP\reintro{regular function} from
$K$ to $L$ is a function $f\colon K \to L$ "st" the relation
$\{\tup{u, f(u)} \mid u \in K\}$
is "automatic@@rel".
A \AP\reintro{regular homomorphism} between two "presentations of automatic $\sigma$-structures"
$\+A$ and $\+B$ is a "regular function" from $\domainPres{\+A}$ to $\domainPres{\+B}$
that defines a "homomorphism" from $\?A$ to $\?B$.\footnote{We use the terminology ``regular''
instead of ``automatic'' simply because ``automatic homomorphism'' sounds somewhat weird.}
We denote by \AP $\+A \intro*\homregto \+B$ the existence of a "regular homomorphism" from
$\+A$ to $\+B$.

\begin{proposition}
	\label{prop:regular-function-finite-domain}
	Let $f\colon K \to L$ be a function, where $L$ is finite.
	Then $f$ is a "regular function" "iff" for every $v \in L$,
	$f^{-1}[v]$ is a regular language.
\end{proposition}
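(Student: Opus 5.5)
Both directions reduce to elementary closure properties of synchronous automata. The one idea is that each $v \in L$ is a single fixed word, so a synchronous automaton can check or simulate its track with a bounded counter. Fix $v \in L$ and write $m \defeq |v|$. Let $C_v$ be the deterministic synchronous automaton of arity $1$ over the second track that accepts exactly $v$. Its states count positions $0,\dotsc,m$. At position $i < m$ it requires the letter $v_{i+1}$, and once it has read $m$ letters it requires the padding symbol $\pad$ on every further position.

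For the left-to-right implication, assume $\mathrm{Gr}(f) \defeq \{\tup{u,f(u)} \mid u \in K\}$ is recognized by a synchronous automaton $\+F$. Then $f^{-1}[v] = \{u \mid \tup{u,v} \in \mathrm{Gr}(f)\}$. We build an ordinary automaton over $\Sigma$ for this set, whose states are pairs (state of $\+F$, position counter in $\intInt{0,m}$). On reading $a \in \Sigma$ at position $i$, it simulates $\+F$ on the pair $(a, v_{i+1})$ if $i < m$, and on $(a,\pad)$ otherwise. At the end of $u$, if $|u| < m$, the remaining suffix $v_{|u|+1}\dotsb v_m$ still has to be fed to $\+F$ as letters $(\pad, v_j)$. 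Since this suffix is fixed, the automaton can do so deterministically inside the acceptance condition: a state $(q,i)$ is accepting iff $\+F$ reaches an accepting state from $q$ after reading $(\pad,v_{i+1})\dotsb(\pad,v_m)$. Alternatively, using \Cref{prop:automatic-first-order}, the singleton $\{v\}$ is first-order definable in $\univStructSynchronous{\Sigma}$, and $f^{-1}[v]$ is defined by $\exists y.\, \psi_v(y) \land \phi_{\mathrm{Gr}(f)}(x,y)$, so it is regular. This assumes $|\Sigma| \geq 2$, which we can ensure by enlarging the alphabet.

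For the right-to-left implication, assume each $f^{-1}[v]$ is regular. Since $K$ is the disjoint union of the $f^{-1}[v]$ for $v \in L$, we have
\[\mathrm{Gr}(f) = \bigcup_{v \in L} f^{-1}[v] \times \{v\},\]
which is a finite union because $L$ is finite. Automatic relations are closed under finite union, by taking the disjoint union of the automata. It therefore suffices to show that $R \times \{v\}$ is automatic for a regular $R$ and a single word $v$. For this we take the synchronous product of a DFA for $R$, run on the first track, with $C_v$, run on the second track. When the first track shows $\pad$, the DFA component stays put but must already be in an accepting state; from then on the DFA is frozen, and acceptance requires it to be accepting. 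The only delicate point in either direction is the padding bookkeeping when $|u| \neq |v|$, which the finite counter handles.
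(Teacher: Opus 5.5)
Your proof is correct. It uses the same decomposition as the paper, but where you build synchronous automata explicitly, the paper works entirely through Proposition~\ref{prop:automatic-first-order}.

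For the forward direction you pin the second component to a fixed $v$ and project. The paper does this by taking a formula $\phi(x,y)$ defining the graph and forming $\exists y.\,\phi(x,y)\wedge y=v$, which is exactly your ``alternative'' argument.

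For the converse you write the graph of $f$ as the finite union $\bigcup_{v\in L} f^{-1}[v]\times\{v\}$. The paper instead combines formulas $\phi_v(x)$ defining the preimages into one formula for $f$. What it intends is $\bigvee_{v\in L}(\phi_v(x)\wedge y=v)$; the displayed formula has a conjunction and a stray $\phi_b$. Your finite union is precisely the automaton-level counterpart of that disjunction.

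The logical route is shorter once the Eilenberg--Elgot--Shepherdson characterisation is available. However, it implicitly needs $|\Sigma|\geq 2$ in both directions, since ``automatic implies first-order definable'' fails over a unary alphabet. Your remark about enlarging the alphabet is the right patch for that.

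Your direct constructions avoid the assumption altogether:
\begin{itemize}
\item In the forward direction, a counter over $\{0,\dots,m\}$, with the fixed padded suffix of $v$ folded into the acceptance condition.
\item In the converse, the product of a DFA for $R$, frozen once the first track is padded, with an automaton checking that the second track is $v$ followed by padding.
\end{itemize}
They are self-contained and make the effectiveness and the size of the resulting automata explicit.
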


\begin{proof}
	For the left-to-right implication, if $f$ is a "regular function",
	then there exists a first-order formula $\phi(x,y)$ "st"
	for all $\tup{u,v} \in K\times L$, then
	\[
		\univStructSynchronous{\Sigma}, u, v \FOmodels \phi(x,y)
		\text{ "iff" }
		v = f(u).
	\]
	Then given $v\in L$,
	the formula\footnote{``$y = v$'' is not properly defined in the syntax of "first-order logic" but it is straightforward to come up with a formula expressing this property.}
	\[
		\exists y,\, \phi(x,y) \land y = v
	\]
	describes $\{ u\in K \mid f(u) = v \}$, which is hence regular.

	Conversely, we consider "first-order formulas" $\phi_{v}$ describing each set
	$\{ u\in K \mid f(u) = v \}$, with $v\in L$. Then
	\[
		\phi(x) \defeq \bigwedge_{v\in L} 
			\phi_{b}(x) \land y = v
	\]
	is a "first-order formula"---since $L$ is finite---describing $f$.
\end{proof}

Given a "$\sigma$-structure" $\?A$, there is a unique (up to isomorphism)
minimal "substructure" $\?C$ of $\?A$ such that there are "homomorphism" $r\colon \?A \to \?C$ 
with the property that $r(c) = c$ for each $c\in C$.
It is called \AP""core"" of $\?A$ and
is denoted by \AP$\intro*\core{\?A}$.
By construction, the "core" of $\?A$ is a "substructure" of $\?A$ to which it is "homomorphically equivalent", see \Cref{fig:prelim-core}.
In general, a \reintro{core} is any "$\sigma$-structure" such that is the "core" of
some "structure"---or equivalently of itself.

\begin{proposition}
	\AP\label{prop:core-iff-hom-are-auto}
	A "finite $\sigma$-structure" $\?C$ is a "core" if, and only if, every
	homomorphism from $\?C$ to itself is an "automorphism".
\end{proposition}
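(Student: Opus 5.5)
The plan is to prove the two implications separately. Finiteness of $\?C$ is used in the direct implication to extract an idempotent power of an endomorphism, and again to upgrade a bijective endomorphism to an "automorphism". Throughout, I read ``$\?C$ is a "core"'' as: $\?C$ is its own "core", i.e.\ $\?C$ has no "proper substructure" $\?C'$ admitting a "homomorphism" $r\colon \?C \to \?C'$ with $r(c) = c$ for all $c \in C'$.

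\emph{Right-to-left.} Assume every endomorphism of $\?C$ is an "automorphism". Let $\?C'$ be a "substructure" of $\?C$ and $r\colon \?C \to \?C'$ a retraction. Composing $r$ with the inclusion $\?C' \hookrightarrow \?C$ gives an endomorphism of $\?C$, which is therefore an "automorphism".
\begin{itemize}
  \item Since it is surjective onto $C$ and its image lies in $C'$, we get $C' = C$.
  \item For each "predicate" $\+R$, an "automorphism" maps $\+R(\?C)$ onto $\+R(\?C)$. The image also lies in $\+R(\?C')$, so $\+R(\?C') = \+R(\?C)$.
\end{itemize}
Hence $\?C' = \?C$. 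So no "proper substructure" of $\?C$ is a retract, and $\?C$ is its own "core".

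\emph{Left-to-right.} Assume $\?C$ is a "core" and let $h\colon \?C \to \?C$ be a "homomorphism".

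Step 1 (idempotent power). Since $C$ is finite, the set of maps $C \to C$ is a finite monoid, so some power $e \defeq h^n$ with $n \geq 1$ is idempotent: $e \circ e = e$. Let $\?C'$ be the image "substructure": its domain is $e[C]$, each $\+R(\?C')$ is the image of $\+R(\?C)$ under $e$, and constants are mapped to themselves.
\begin{itemize}
  \item $\?C'$ is a "substructure" of $\?C$, because $e$ is a "homomorphism".
  \item $e\colon \?C \to \?C'$ is a "homomorphism", and idempotency gives $e(c) = c$ for every $c \in C'$.
\end{itemize}
By minimality of the "core", $\?C' = \?C$. So $e$ is surjective, hence bijective on the finite set $C$, and therefore $h$ is bijective as well.

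Step 2 (the inverse is a homomorphism). For each "predicate" $\+R$, $h$ maps $\+R(\?C)$ injectively into $\+R(\?C)$. This is a finite set: finiteness of $\?C$ bounds each relation, and only finitely many relations are non-empty. So the map is onto $\+R(\?C)$, which means $h^{-1}$ preserves $\+R$. Constants are fixed by both $h$ and $h^{-1}$. Thus $h$ is an "automorphism".

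The main point requiring care is Step~1: checking that the image of an idempotent endomorphism really is a retract in the sense of the definition of "core", including the handling of constants, so that minimality applies. The remaining steps are routine counting on finite sets.
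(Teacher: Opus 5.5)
Your proof is correct and follows essentially the paper's approach. The converse direction is the same: a retraction composed with the inclusion is an endomorphism, hence an automorphism, so the retract is all of $\?C$. The direct direction likewise uses minimality of the core to force the image of an endomorphism to be all of $\?C$, then concludes by finiteness. Your idempotent power $e = h^n$ adds a step the paper only asserts with ``otherwise we would obtain a strictly smaller retraction'': since an arbitrary endomorphism need not retract onto its image, you pass to $e$, which does.
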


\begin{proof}
	For the left-to-right implication,
	we let $f\colon \?C \to \?C$ be a "homomorphism".
	Then $f[\?C]$ must be "isomorphic" to $\?C$, otherwise we would obtain
	a strictly smaller retraction. Hence, $f$ is a "strong onto homomorphism"
	from $\?C$ to itself, and hence is an "automorphism".
	
	Conversely, assuming that any "homomorphism" from $\?C$ to itself is an "automorphism"
	we get in particular that any retraction must be an "automorphism", and
	hence that $\?C$ is "isomorphic" to $\core{\?C}$.
\end{proof}

\begin{proposition}
	\AP\label{prop:adjacency-core}
	Given a "$\sigma$-structure" $\?B$, if $\?B$ is a "core", then
	two elements $b_1$ and $b_2$ of $\?B$ have the same "adjacency" "iff" $b_1 = b_2$.
\end{proposition}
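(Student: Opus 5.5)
The ``if'' direction is immediate: if $b_1 = b_2$, then trivially $\adjacency{b_1}{\?B}{\+R}{i} = \adjacency{b_2}{\?B}{\+R}{i}$ for every "predicate" $\+R$ and every position $i$.

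For the converse, assume that $b_1$ and $b_2$ have the same "adjacency". I will define the map $f\colon B \to B$ that sends $b_1$ to $b_2$ and fixes every other element, and check that it is a "homomorphism". Take a tuple $\tup{c_1,\dotsc,c_k} \in \+R(\?B)$. Let $I$ be the set of positions $j$ with $c_j = b_1$; I will replace these occurrences one at a time. To replace a single occurrence at position $i$, note that the tuple with $c_i$ removed lies in $\adjacency{b_1}{\?B}{\+R}{i} = \adjacency{b_2}{\?B}{\+R}{i}$. Hence putting $b_2$ at position $i$ still yields an $\+R$-tuple. At each step the current tuple is in $\+R(\?B)$ and has $b_1$ at the next position of $I$ to be treated, so the same argument applies again. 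After at most $k$ steps we obtain $\tup{f(c_1),\dotsc,f(c_k)} \in \+R(\?B)$, so $f$ is a "homomorphism".

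Now, if $b_1 \neq b_2$, then $f$ is not injective. For a finite $\?B$, this contradicts \Cref{prop:core-iff-hom-are-auto}, which says every endomorphism of a "core" is an "automorphism". For an infinite $\?B$, I will use the definition of "core" directly: the image of $f$ is a proper "substructure" of $\?B$ onto which $\?B$ retracts, which contradicts minimality. Either way, $b_1 = b_2$.

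The only point requiring care is the iterated replacement when $b_1$ occurs several times in the same tuple. It is handled by the induction above, since each substitution preserves membership in $\+R(\?B)$. No constants can interfere, assuming $\sigma$ is purely relational, as it is throughout the paper.
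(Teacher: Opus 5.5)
Your proof is correct and follows the paper's argument. You collapse one element onto the other (the paper sends $b_2$ to $b_1$), note that equal adjacencies make this a non-injective homomorphism, and conclude that $\mathbf{B}$ is not a core; your iterated replacement for tuples containing $b_1$ several times and your minimality argument for infinite $\mathbf{B}$ just spell out details that the paper's one-line proof leaves implicit.
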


\begin{proof}
	The right-to-left implication is trivial.
	For the converse one, consider the "homomorphism" from $\?B$ to itself
	which maps $b_2$ to $b_1$, and all elements of $B \smallsetminus \{b_2\}$
	to themselves. Since we assumed that $b_1$ and $b_2$ have the same "adjacency",
	this is indeed a "homomorphism", which is clearly not bijective, and
	$\?B$ is not a "core".
\end{proof}

\subsection{Proof of Proposition~\ref{prop:idempotent-core-preserves-csp-complexity}}
\label{apdx-prop:idempotent-core-preserves-csp-complexity}

\begin{figure}
	\centering
	\begin{tikzpicture}
		\input{tikz/reduction-idempotent-core-lhs}
	\end{tikzpicture}
	\hspace{1cm}
	\begin{tikzpicture}
		\input{tikz/reduction-idempotent-core-rhs}
	\end{tikzpicture}
	\caption{\AP\label{fig:dichotomy-idempotent-core}
	Reduction from $\,\HomAll{\marked{\?B}}$ to $\,\HomAll{\?B}$ when $\?B$ 
	is the "$2$-transitive tournament":
	we depict on the left-hand side, the "$\extendedSignature{\sigma}{\?B}$-structure"
	$\?A \not\in \HomAll{\marked{\transitiveTournament{2}}}$,
	and on the right-hand side, the "$\sigma$-structure" $\Phi(\?A)
	\not\in \HomAll{\transitiveTournament{2}}$ to which it is reduced.
	The "interpretation@@predicate" of unary predicates in $\?A$
	are described using colours.}
\end{figure}

\idempotentCore*

The non-easy part is to reduce $\HomFin{\marked{\?B}}$ to $\HomFin{\?B}$: only this reduction
requires the assumption that $\?B$ is a core.

\begin{proof}[Proof of \Cref{prop:idempotent-core-preserves-csp-complexity}]
	\case{Reduction from $\,\HomAll{\?B}$ to $\,\HomAll{\marked{\?B}}$.}\\
	We reduce a "$\sigma$-structure" $\?A$ to the
	"$\extendedSignature{\sigma}{\?B}$-structure" $\?A'$ obtained
	from $\?A$ by "interpreting@@predicate" each predicate $\unarypred{b}$ as the empty set.
	Clearly, a function from $A$ to $B$ is a "homomorphism" from $\?A$ to $\?B$
	"iff" it is a "homomorphism" from $\?A'$ to $\marked{\?B}$, proving the correctness
	of the reduction. It is, by definition, "first-order@@reduction".

	\case{Reduction from $\,\HomAll{\marked{\?B}}$ to $\,\HomAll{\?B}$.}
	We first define the reduction $\Phi$ and show its correctness; the fact that it
	is a "first-order reduction" is straightforward.
	We reduce a "$\extendedSignature{\sigma}{\?B}$-structure" $\?A$ to the "$\sigma$-structure"
	$\Phi(\?A)$ illustrated on \Cref{fig:dichotomy-idempotent-core} and defined as follows:
	\begin{itemize}
		\item its underlying universe is the disjoint union $A \dcup B$,
		\item given a "predicate" $\+R$ of arity $k$, its "hyperedges" are:
		\begin{itemize}
		\item all "$\+R$-tuple" of $\?A$,
		\item all "$\+R$-tuple" of $\?B$, and
		\item all "$\+R$-tuple" $\tup{b_1,\dotsc,b_{i-1}, a_i, b_{i+1},\dotsc,b_k}$
			"st" there exists $b_i$ for which the $\+R$-"hyperedge"
			$\tup{b_1,\dotsc,b_{i-1}, b_i, b_{i+1},\dotsc,b_k}$
			is in $\+R(\?B)$, and $a_i$ belongs to the "interpretation@@predicate" of 
			$\unarypred{b_i}$ in $\?A$.
		\end{itemize}
	\end{itemize}
	Note that by construction, the "adjacency" of $a \in A$ in $\Phi(\?A)$ is
	the union of its "adjacency" in $\?A$, and the union of the "adjacencies" of
	$b$ in $\?B$ for all $b$ "st" $a \in \unarypred{b}(\?A)$.

	We show that $\?A \in \HomAll{\marked{\?B}}$ "iff" $\Phi(\?A) \in \HomAll{\?B}$.
	So, assume that there exists a "homomorphism" $f\colon \?A \to \marked{\?B}$.
	Then we let $f'\colon A \dcup B \to B$ be defined by $f'(a) = f(a)$ for all $a\in A$ and
	$f'(b) = b$ for all $b \in B$. We claim that $f'$ is a "homomorphism" from
	$\Phi(\?A)$ to $\?B$. Indeed, consider a "hyperedge" of $\Phi(\?A)$:
	\begin{itemize}
		\item if it is a "hyperedge" of $\?A$, its image by $f'$ is still a "hyperedge"
			of $\?B$ since $f$ is a "homomorphism" from $\?A$ to $\marked{\?B}$;
		\item if it is a "hyperedge" of $\?B$, then its image by $f'$ is itself, and is hence
			a "hyperedge" of $\?B$;
		\item otherwise, it must be of the form 
			\[\tup{b_1,\dotsc,b_{i-1}, a_i, b_{i+1},\dotsc,b_k}\]
			"st" there exists $b_i$ for which
			$\tup{b_1,\dotsc,b_{i-1}, b_i, b_{i+1},\dotsc,b_k} \in \+R(\?B)$
			and $a_i \in \unarypred{b_i}(\?A)$:
			in this case, its image by $f'$ is 
			\[f'(\tup{b_1,\dotsc,b_{i-1}, a_i, b_{i+1},\dotsc,b_k})
			= \tup{b_1,\dotsc,b_{i-1}, b_i, b_{i+1},\dotsc,b_k} \in \+R(\?B)\]
			since $f'(b) = b$ for all $b\in B$ and $f'(a_i) = f(a_i) = b_i$ since
			$a_i \in \unarypred{b_i}(\?A)$ and $f$ is a "homomorphism" from
			$\?A$ to $\marked{\?B}$.
	\end{itemize}
	And hence, $\Phi(\?A) \homto \?B$.

	Conversely, now let $g\colon \Phi(\?A) \to \?B$ be a "homomorphism".
	Its restriction to $\?B$, namely $\restr{g}{B}$ is a "homomorphism" from $\?B$
	to itself, and since $\?B$ is a "core", it must be an "automorphism" over $\?B$
	by \Cref{prop:core-iff-hom-are-auto}.
	We then define a map $g' \colon A \to B$ by sending
	$a$ to $(\restr{g}{B})^{-1}\circ g(a)$, and claim that it is a "homomorphism"
	from $\?A$ to $\marked{\?B}$. As a matter of fact, it clearly preserves "$\+R$-tuple"
	for any $\+R$ in $\sigma$, since $g$ and $(\restr{g}{B})^{-1}$ are "homomorphisms".
	We must then show that it preserves all unary "predicates" $\unarypred{b}$, with $b\in B$:
	let $a \in A$ "st" $\unarypred{b}$ holds, "ie" $a \in \unarypred{b}(\?A)$.
	Now, by construction of $\Phi(\?A)$, the "adjacency" of $g(a)$ in $\?B$
	and the "adjacency" of $g(b)$ in $\?B$ are equal.
	Since $\?B$ is a "core", it follows by \Cref{prop:adjacency-core} that $g(a) = g(b)$.
	By definition of $g'$, this rewrites as $g'(a) = b$, "ie" $g'(a) = \unarypred{b}(\marked{\?B})$.
	Therefore, we have built a "homomorphism" from $\?A$ to $\marked{\?B}$.

	Overall, this proves that $\Phi$ is correct.
	It is trivially a "first-order reduction" and moreover,
	it preserves "finiteness@@structure" since $\?B$
	is "finite@@structure".
\end{proof}

\section{Supplementary Material for Decidability}
\label{apdx:decidability}

\subsection{Proof of Proposition~\ref{prop:dichotomy-general-upper-bounds}}
\label{apdx-prop:dichotomy-general-upper-bounds}

\dichotomyGeneralUpperBounds*

Before proving the result, we recall a useful result.
\begin{proposition}[De Bruijn–Erdős Theorem]
	\!\footnote{It is straightforward to note that
	one can replace ``every finite "substructure"'' by
	``every finite "induced substructure"'' in the statement of the theorem.
	The original theorem is about graph colouring, but the generalization is straightforward.}%
	\AP\label{prop:de-bruijn-erdos}
	Let $\?A$ be an arbitrary "$\sigma$-structure" and $\?B$ a "finite $\sigma$-structure".
	There is a "homomorphism" from $\?A$ to $\?B$ "iff" for every finite "substructure" $\?A'$
	of $\?A$, there is a "homomorphism" from $\?A$ to $\?B$.
\end{proposition}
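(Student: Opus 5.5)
The left-to-right implication is immediate: if $h\colon \?A \to \?B$ is a "homomorphism", then its restriction to the domain of any finite "substructure" $\?A'$ of $\?A$ is a "homomorphism" from $\?A'$ to $\?B$, because every "hyperedge" of $\?A'$ is a "hyperedge" of $\?A$. (We read the right-hand side of the statement as ``there is a "homomorphism" from $\?A'$ to $\?B$''.) All the work is in the converse, which I would prove by a compactness argument.

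Equip $B$ with the discrete topology and $B^A$ with the product topology. Since $B$ is finite, $B^A$ is compact by Tychonoff's theorem. For each finite "substructure" $\?A'$ of $\?A$, let $H_{\?A'} \subseteq B^A$ be the set of functions $g$ whose restriction to $A'$ is a "homomorphism" $\?A' \to \?B$. Membership in $H_{\?A'}$ depends only on the finitely many coordinates in $A'$, so $H_{\?A'}$ is a finite union of basic clopen cylinders and is therefore closed. Each $H_{\?A'}$ is non-empty: by hypothesis there is a "homomorphism" $h\colon \?A' \to \?B$, and we extend it arbitrarily outside $A'$.

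Next I would verify the finite intersection property. Given finitely many finite substructures $\?A'_1,\dotsc,\?A'_n$, let $\?A''$ be their union, namely the union of the domains and of the "interpretations@@predicate". This is again a finite "substructure" of $\?A$, and any $g \in H_{\?A''}$ lies in every $H_{\?A'_i}$. By compactness, $\bigcap_{\?A'} H_{\?A'}$ is non-empty; pick $g$ in it. To see that $g$ is a "homomorphism" $\?A \to \?B$, take $\tup{a_1,\dotsc,a_k} \in \+R(\?A)$ and consider the finite substructure whose domain is $\{a_1,\dotsc,a_k\}$ (together with the constants) and whose only "hyperedge" is this tuple. Since $g$ lies in the corresponding $H_{\?A'}$, we get $\tup{g(a_1),\dotsc,g(a_k)} \in \+R(\?B)$. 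Preservation of constants follows in the same way.

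The only delicate point I expect is bookkeeping rather than mathematics. The constants must lie in every substructure, so if $\sigma$ had infinitely many constants, no finite substructure would exist and the statement would degenerate; I would assume, as in the paper's setting, that $\sigma$ has finitely many constants, which makes the substructures used above finite. The same remark applies to the footnoted variant with "induced substructures": there I would replace the single-tuple substructure by the "substructure of $\?A$ induced by@induced substructure" $\{a_1,\dotsc,a_k\}$ together with the constants. Since $\?B$ is "finite@@struct", only finitely many predicates are non-empty in $\?B$, so no issue arises from infinite signatures. For the countable case relevant to "automatic structures", one could alternatively avoid Tychonoff: enumerate $A$ and apply König's lemma to the finitely branching tree of partial homomorphisms on initial segments.
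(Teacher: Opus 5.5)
Your proof is correct and follows the same route as the paper: Tychonoff compactness of $B^A$, closed sets whose membership depends on finitely many coordinates, and the finite intersection property obtained by taking unions. The only difference is cosmetic: the paper indexes its closed sets $H_X$ by finite subsets $X \subseteq A$ via induced substructures, whereas you index them by finite substructures and recover preservation of each hyperedge through single-tuple substructures, which matches the hypothesis exactly as stated rather than the footnoted induced variant.
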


\begin{proof}
	The left-to-right implication is direct.
	We prove the converse by using the "Tychonoff's compactness theorem".\footnote{This is a direct adaptation from \cite[\S~``Proof'']{Wikipedia2024DeBruijnErdos}.}
	So, assume that for every finite "substructure" $\?A'$ of $\?A$,
	there is a "homomorphism" from $\?A$ to $\?B$. 
	Consider the topological space $B^A$, consisting of all functions from $A$ to $B$,
	together with the product topology.\footnote{We equip $B$ with the discrete topology,
	making it compact since $B$ is finite.} By "Tychonoff's compactness theorem",
	$B^A$ is compact. For each finite subset $X$ of $A$, let
	$H_X$ denote the set of all $f \in B^A$ "st" $\restr{f}{X}$ is a "homomorphism"
	from the "substructure" of $\?A$ "induced@@structure" by $X$ to $\?B$.
	Then, each $H_X$ is closed---indeed, whether $f\in B^A$ belongs to $H_X$ only depends
	on finitely many $f(x)$'s---, and moreover the intersection of finitely many
	$H_X$'s, say $H_{X_1} \cap \cdots \cap H_{X_n}$, is non-empty since
	$H_{X_1} \cap \cdots \cap H_{X_n} \supseteq H_{X_1\cup \dotsc \cup X_n}$
	and by assumption $H_{X_1\cup \dotsc \cup X_n}$ is non-empty since $X_1 \cup \cdots \cup X_n$ is finite. Hence, by compactness of $B^A$ and the "finite intersection property", it follows
	that $\bigcap_X H_X$ is non-empty, which means that there is a "homomorphism" from $\?A$ to $\?B$.
\end{proof}


\begin{proof}[Proof of \Cref{prop:dichotomy-general-upper-bounds}]
	\case{$\HomAut{\?B}$ is "coRE".}
	By \Cref{prop:de-bruijn-erdos}, for any arbitrary "$\sigma$-structure" $\?A$,
	we have $\?A \nothomto \?B$ "iff" there exists a finite "substructure" $\?A'$ of $\?A$
	"st" $\?A' \nothomto \?B$.
	Given a "finite $\sigma$-structure" $\?A'$ and an "automatic $\sigma$-structure",
	it is decidable to test whether $\?A'$ is a "substructure" of $\?A$: indeed, it suffices
	to check, using \Cref{prop:first-order-model-checking-automatic-structures} if
	\[
		\?A
		\FOmodels^{?}
		\Bigl(
			\exists x_1.\; \cdots \exists x_n.\;
			\bigwedge_{\+R_{(k)} \in \sigma} \bigwedge_{\substack{\tup{i_1,\dotsc,i_k} \in \intInt{1,n}^k\\\text{"st" }\tup{a_{i_1},\dotsc,a_{i_k}} \in \+R(\?A')}}
			\+R(x_{i_1},\dotsc,x_{i_k})
		\Bigr),
	\]
	by letting $\{a_1,\dotsc,a_n\} = A'$. 
	Moreover, whether $\?A' \nothomto^? \?B$ is also decidable in "coNP".
	Overall, this provides a co-semi-algorithm for $\HomAut{\?B}$: we enumerate finite
	"$\sigma$-structure" $\?A'$, and test if (1) $\?A'$ is a "substructure" of $\?A$ and if (2)
	$\?A' \homto \?B$. And hence $\HomAut{\?B}$ is "coRE".

	\case{$\HomRegAut{\?B}$ is "RE".} This can be proven by defining the notion of
	``$\?B$-automata'': the notion of accepting state is replaced by a partition $\langle C_{b} \rangle_{b \in B}$ of the states.
	The semantics of such an automaton is a partial function $f\colon \Sigma^* \to B$. 
	Given an "automatic structure" $\+A$, we can then effectively test if $f$ is defined on $\domainPres{\+A}$, and if $f$ defines a "homomorphism" from $\?A$ to $\?B$. If so, $\+A \homregto \?B$. Dually, any "regular homomorphism" $\+A \homregto \?B$ can be described by such a $\?B$-automaton.
	Therefore, $\HomRegAut{\?B}$ is "RE".
\end{proof}

\subsection{Proof of Lemma~\ref{lemma:finite-duality-uniformly-definable-homomorphisms}}
\label{apdx-lemma:finite-duality-uniformly-definable-homomorphisms}

\finiteDualityUniformlyDefinableHomomorphisms*

\begin{proof}
	\case{Converse implication.} Assume that $\HomAll{\?B}$ has
	"uniformly first-order definable homomorphisms", say by $\langle \phi_b(x) \rangle_{b\in B}$.
	Then the conjunctions of the properties 
	``every $x$ must satisfy exactly one $\phi_b(x)$ ($b\in \?B$)'',
	and ``for every "predicate" $\+R$ of arity $k$, for any $\langle x_1,\, \dotsc,\, x_k \rangle$ in $\+R$, there exists $\langle b_1,\, \dotsc,\, b_k \rangle \in \+R(\?B)$ "st"
	each $x_i$ satisfies $b_i$ ($i \in \intInt{1,k}$)'' is a "first-order sentence"
	describing all "$\sigma$-structures" of $\HomAll{\?B}$.

	\case{Direct implication.} 
	Let $\?B$ be such that $\HomAll{\?B}$ is "first-order definable".
	Given an arbitrary "$\sigma$-structure" $\?A$, we define a function $F\colon A \to \pset{B}$
	by mapping each $a$ to the set of $b$'s ($b \in B$) "st"
	there is a "homomorphism" from $\?A$ to $\?B$ that maps $a$ to $b$.
	\begin{claim}
		\hspace{-.9em}
		\label{claim:finite-duality-uniformly-definable-homomorphisms-homomorphism}
		If $\?A \homto \?B$ then $F$ is a "homomorphism" from $\?A$ to $\FederVardi{\?B}$.
	\end{claim}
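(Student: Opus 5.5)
We have to prove the claim: if $\?A \homto \?B$, then $F\colon a \mapsto \{b \in B \mid \exists h\colon \?A \to \?B,\ h(a)=b\}$ is a "homomorphism" from $\?A$ to $\FederVardi{\?B}$. The plan is to check the two requirements of the definition of $\FederVardi{\?B}$ directly: each $F(a)$ lies in $\psetp{B}$, and each $\+R$-tuple of $\?A$ is sent to an $\+R$-tuple of $\FederVardi{\?B}$.

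First, $F(a)$ is non-empty for every $a \in A$. By assumption there is some "homomorphism" $h\colon \?A \to \?B$, and $h(a) \in F(a)$ by definition. So $F$ is indeed a map $A \to \psetp{B}$, which is the domain of $\FederVardi{\?B}$.

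Second, fix a "predicate" $\+R$ of arity $k$ and a tuple $\tup{a_1,\dotsc,a_k} \in \+R(\?A)$. We must show $\tup{F(a_1),\dotsc,F(a_k)} \in \+R(\FederVardi{\?B})$. Unfolding the definition, this means: for each $i \in \intInt{1,k}$ and each $b_i \in F(a_i)$, there exist $b_j \in F(a_j)$ for all $j \neq i$ such that $\tup{b_1,\dotsc,b_k} \in \+R(\?B)$.
\begin{itemize}
\item Since $b_i \in F(a_i)$, there is a "homomorphism" $h\colon \?A \to \?B$ with $h(a_i) = b_i$.
\item Set $b_j \defeq h(a_j)$ for $j \neq i$. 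Each $b_j$ lies in $F(a_j)$, witnessed by the same $h$.
\item Because $h$ is a "homomorphism", $\tup{h(a_1),\dotsc,h(a_k)} = \tup{b_1,\dotsc,b_k} \in \+R(\?B)$.
\end{itemize}
This is exactly the required condition, so $F$ preserves every $\+R$.

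There is no real obstacle; the argument is the same witness-sharing observation for both requirements. The one point to get right is that a single global "homomorphism" $h$ witnessing $b_i \in F(a_i)$ simultaneously supplies all the other coordinates $b_j$. That single witness is what makes the existential clause in the definition of $\FederVardi{\?B}$ hold, with no compatibility issues between coordinates. If $\sigma$ contains constants, they are handled identically, since $h$ fixes their interpretations. Arbitrary (possibly infinite) $\?A$ poses no extra difficulty, because nothing beyond the definition of $F$ is used.
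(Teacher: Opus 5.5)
Your proof is correct and follows the paper's argument: $F(a)$ is non-empty because some homomorphism $\mathbb{A} \to \mathbb{B}$ exists. For an $\mathcal{R}$-tuple, the one homomorphism witnessing $b_i \in F(a_i)$ gives all the other coordinates $b_j = h(a_j) \in F(a_j)$, and $\langle b_1,\dotsc,b_k\rangle$ lies in $\mathcal{R}(\mathbb{B})$.
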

	Indeed, since $\?A\homto\?B$, for each $a\in A$ the set
	$F(a)$ is non-empty subset of $B$---and hence an element of the domain of $\FederVardi{\?B}$.
	We then prove that it is a "homomorphism": let $\+R$ be a "predicate" of arity $l$,
	and let $\langle a_1,\,\dotsc,\,a_l \rangle \in \+R(\?A)$. Then for each $i \in \intInt{1,l}$, for every $b_i \in F(a_i)$, there exists a "homomorphism" $f$ from $\?A$ to $\?B$
	that sends $a_i$ to $b_i$. Then $f(a_j) \in F(a_j)$ for every $j\in \intInt{1,l}$
	and moreover $\langle f(a_1),\,\dotsc,\, f(a_l) \rangle \in \+R(\?B)$.
	Hence, $\langle F(a_1),\,\dotsc,\,F(a_l) \rangle \in \+R(\FederVardi{\?B})$,
	which concludes the proof that $F$ is a "homomorphism" from $\?A$ to $\FederVardi{\?B}$.

	By "Atserias' theorem", since $\HomAll{\?B}$ is "first-order definable",
	then $\?B$ has "finite duality", and in particular it has
	"tree duality" (by \Cref{prop:finite-duality-implies-tree-duality}) and so by
	\Cref{prop:charac-Feder-Vardi}, there exists a "homomorphism"
	$g\colon \FederVardi{\?B} \to \?B$.
	We will now produce "first-order formulas" to describe $g \circ F$.

	If $\HomAll{\?B}$ is "first-order definable", then so is
	$\HomAll{\marked{\?B}}$ by \Cref{prop:idempotent-core-preserves-csp-complexity}.
	So, let $\phi$ be a "first-order formula" over $\extendedSignature{\sigma}{\?B}$
	that describes $\HomAll{\marked{\?B}}$. We let $B = \{b_1,\dotsc,b_k\}$.
	We define a "first-order formula" $\phi^*_i(x_i)$ over $\sigma$,
	by substituting each occurrence of $\unarypred{b_i}(y)$ in $\phi$ for $y = x_i$,
	and $\unarypred{b_j}(y)$ ($j \neq i$) for $\bot$.
	Let $\?A$ be a finite "$\sigma$-structure", $a\in A$ and $i \in \intInt{1,k}$
	and $\?A_{a,i}$ be the
	"$\extendedSignature{\sigma}{\?B}$-structure" obtained by letting 
	$\unarypred{b_i}(\?A_{a,i}) \defeq \{a\}$
	and $\unarypred{b_j}(\?A_{a,i}) \defeq \emptyset$ for all $j \neq i$.

	\begin{claim}\hspace{-.9em}
		\AP\label{claim:finite-duality-uniformly-definable-homomorphisms-new-formulas}
		$\?A_{a,i} \FOmodels \phi$
		"iff" $\langle \?A, a \rangle \FOmodels \phi^*_i(x_i)$.
	\end{claim}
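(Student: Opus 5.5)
The claim is a syntactic substitution lemma, so I plan to prove it by structural induction on subformulas of $\phi$, after generalising the statement to formulas with free variables. For a subformula $\psi(\bar y)$ of $\phi$ over $\extendedSignature{\sigma}{\?B}$, let $\psi^*_i(\bar y, x_i)$ be the formula over $\sigma$ obtained by the same substitution: $\unarypred{b_i}(y) \mapsto y = x_i$ and $\unarypred{b_j}(y) \mapsto \bot$ for $j \neq i$. The variable $x_i$ is chosen fresh, that is, not occurring in $\phi$, so no bound variable of $\phi$ can capture it. The generalised statement I will prove is the following: for every assignment $\bar c$ of elements of $A$ to $\bar y$,
\[
	\?A_{a,i}, \bar c \FOmodels \psi(\bar y)
	\quad\text{"iff"}\quad
	\?A, \bar c, a \FOmodels \psi^*_i(\bar y, x_i).
\]
The claim is then the special case where $\psi = \phi$ is a sentence.

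For the base cases, I would first note that $\?A_{a,i}$ and $\?A$ have the same domain and agree on every predicate of $\sigma$. Hence atoms $\+R(\bar y)$ with $\+R \in \sigma$, as well as equalities, are left unchanged by the substitution and evaluate identically in both structures. For the atom $\unarypred{b_i}(y)$, it holds in $\?A_{a,i}$ under $\bar c$ exactly when $c_y = a$, since $\unarypred{b_i}(\?A_{a,i}) = \{a\}$; this is exactly the meaning of $y = x_i$ once $x_i$ is assigned $a$. For the atom $\unarypred{b_j}(y)$ with $j \neq i$, it is always false, because $\unarypred{b_j}(\?A_{a,i}) = \emptyset$, and so it matches $\bot$.

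For the inductive steps, the substitution commutes with $\neg$, $\land$, $\lor$, $\exists$ and $\forall$. Both structures share the same domain, so quantifiers range over the same set, and each case follows immediately from the induction hypothesis.

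The only point that needs care, and the one I would check most carefully, is variable capture: $x_i$ must be fresh with respect to all bound variables of $\phi$, which can be ensured by renaming $x_i$ or $\alpha$-renaming $\phi$ beforehand. Once that is in place the argument is routine. This is not a genuine obstacle, just bookkeeping.
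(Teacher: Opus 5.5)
Your proposal is correct and follows the same route as the paper. The paper also generalises to formulas with free variables, argues by structural induction, and treats the base cases $\unarypred{b_i}(y) \mapsto y = x_i$ and $\unarypred{b_j}(y) \mapsto \bot$ exactly as you do, calling the remaining atomic and inductive cases trivial; your explicit remark that $x_i$ must be chosen fresh to avoid variable capture is a detail the paper leaves implicit.
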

	We prove it by induction on "formulas@@FO" $\psi(\bar x)$
	that $\langle\?A_{a,i}, \bar a \rangle \FOmodels \psi$
		"iff" $\langle \?A, \bar a, a \rangle \FOmodels \psi^*_i(x_i)$.
	The base case $\unarypred{b_i}(y)$ is trivial since
	$\langle \?A_{a,i}, a' \rangle \FOmodels \unarypred{b_i}(y)$ "iff"
	$a' = a$ "ie" $\langle \?A, a', a \rangle \FOmodels y = x_i$.
	Similarly, for $\unarypred{b_j}(y)$ ($j \neq i$), we have
	$\langle \?A_{a,i}, a' \rangle \notFOmodels \unarypred{b_j}(y)$
	and so this is equivalent to $\langle \?A, a', a \rangle \FOmodels \bot$.
	The other atomic cases, and inductive cases are trivial.

	\begin{claim}
		\hspace{-.9em}
		\AP\label{claim:finite-duality-uniformly-definable-homomorphisms-formulas}
		There exist "first-order formulas" $\langle \chi_Y(x)\rangle_{Y\in \pset{B}}$,
		that do not depend on $\?A$, "st" for every arbitrary "$\sigma$-structure" $\?A$
		and for every $a\in A$, we have $\langle \?A,a \rangle \FOmodels \chi_Y(x)$ "iff"
		$F(a) = Y$.
	\end{claim}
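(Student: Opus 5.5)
We build the $\chi_Y$ from the formulas $\phi^*_i$ of \Cref{claim:finite-duality-uniformly-definable-homomorphisms-new-formulas}. First observe that $\?A_{a,i}$ is simply $\?A$ with $a$ marked by $b_i$. A homomorphism $\?A_{a,i}\to\marked{\?B}$ is therefore exactly a homomorphism $\?A\to\?B$ sending $a$ to $b_i$. Since $\phi$ defines $\HomAll{\marked{\?B}}$, we get
\[
	b_i \in F(a) \quad\text{"iff"}\quad \?A_{a,i}\FOmodels\phi \quad\text{"iff"}\quad \langle\?A,a\rangle\FOmodels\phi^*_i(x_i).
\]
The second equivalence is \Cref{claim:finite-duality-uniformly-definable-homomorphisms-new-formulas}. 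That claim is stated for finite $\?A$, but its proof is a syntactic induction on formulas and does not use finiteness. So the equivalence holds for arbitrary $\?A$, which is what we need here since we work with $\HomAll{\marked{\?B}}$.

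Next we rename the free variable $x_i$ to $x$ in each $\phi^*_i$ and set, for every $Y\in\pset{B}$,
\[
	\chi_Y(x) \defeq \bigwedge_{b_i\in Y}\phi^*_i(x) \;\land\; \bigwedge_{b_i\notin Y}\neg\phi^*_i(x).
\]
Then $\langle\?A,a\rangle\FOmodels\chi_Y(x)$ holds "iff" $\set{b_i \mid b_i\in F(a)}=Y$, that is, "iff" $F(a)=Y$. These formulas depend only on $\phi$ and $\?B$, not on $\?A$, as required.

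The one delicate point is making sure the marked structure $\?A_{a,i}$ really encodes the constraint "$a\mapsto b_i$". This means checking that the unary predicates of $\marked{\?B}$ are singletons, so that satisfying $\unarypred{b_i}$ forces the image $b_i$, and that the other markings being empty imposes no constraint. Both facts follow directly from the definition of $\marked{\?B}$.

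A minor subtlety is the use of \Cref{prop:idempotent-core-preserves-csp-complexity} to obtain $\phi$, because that proposition assumes $\?B$ is a core. If $\?B$ is not a core, we cannot pass to $\core{\?B}$, because $F$ refers to homomorphisms into $\?B$ itself. Instead we argue that $\HomAll{\marked{\?B}}$ is first-order definable directly by Atserias-type reasoning, or we accept the paper's implicit reduction as given. No other obstacle is expected; once this is settled, the claim is a direct Boolean combination.
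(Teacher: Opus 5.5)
Your Boolean-combination argument is the paper's own. You use that $b_i \in F(a)$ if and only if $\?A_{a,i} \FOmodels \phi$, if and only if $\langle \?A, a\rangle \FOmodels \phi^*_i(x_i)$, and $\chi_Y$ is the corresponding combination of the $\phi^*_i$'s. Your remark that the induction behind \Cref{claim:finite-duality-uniformly-definable-homomorphisms-new-formulas} never uses finiteness of $\?A$ is correct, and it is needed here.

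The problem is your last paragraph. You rightly notice that $\phi$ comes from \Cref{prop:idempotent-core-preserves-csp-complexity}, which requires $\?B$ to be a core. But neither of your two remedies works, because the claim can genuinely fail for non-cores.

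Take $\?B \defeq \pathGraph{2} \disunion \transitiveTournament{2}$, and let $p_0$ be the source of its copy of $\pathGraph{2}$. This $\?B$ is homomorphically equivalent to $\transitiveTournament{2}$, so it has finite duality. However, the zigzags $\zigzag{n}{2}$ with $n \geq 1$, with $a'_0$ put in $\unarypred{p_0}$, are pairwise non-isomorphic critical obstructions of $\marked{\?B}$. By \Cref{prop:finite-duality-iff-critical-obstructions} and \Cref{prop:atserias}, $\HomAll{\marked{\?B}}$ is therefore not first-order definable, so no Atserias-type argument can produce $\phi$.

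Worse, the set $\{a \mid p_0 \in F(a)\}$ is not first-order definable at all. Let $\?A_1 \defeq \zigzag{2N}{2} \disunion \?E$, where $\?E$ is $\zigzag{2N}{2}$ without its two pendant vertices. Let $\?A_2$ be obtained by exchanging the right halves of the two components, that is, replacing the two copies of the edge $a_N \to b_N$ by the two crossing edges. In $\?A_1$ we have $p_0 \notin F(a'_0)$, since the component of $a'_0$ has no homomorphism to $\pathGraph{2}$. In $\?A_2$ both components map to $\pathGraph{2}$, the one containing $a'_0$ with $a'_0 \mapsto p_0$, so $p_0 \in F(a'_0)$. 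Yet by Hanf locality, $\langle \?A_1, a'_0\rangle$ and $\langle \?A_2, a'_0\rangle$ satisfy the same formulas of quantifier rank $q$ once $N \gg q$.

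You are right that one cannot pass to the core \emph{inside} the claim. The fix is to do it one level up: replace $\?B$ by $\core{\?B}$ at the start of the proof of \Cref{lemma:finite-duality-uniformly-definable-homomorphisms}, before $F$ is defined. This is harmless for the lemma, since $\HomAll{\?B} = \HomAll{\core{\?B}}$ and every homomorphism into $\core{\?B}$ is one into $\?B$; set $\phi_b \defeq \bot$ for $b$ outside $\core{\?B}$. For a core, your argument then goes through verbatim.
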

	Indeed, given $a\in A$ and $i \in \intInt{1,k}$, there is a "homomorphism" from $\?A$
	to $\?B$ that sends $a$ to $b_i$ "iff" $\?A_{a,i} \FOmodels \phi$, and so by \Cref{claim:finite-duality-uniformly-definable-homomorphisms-new-formulas}, this is equivalent to
	$\langle \?A,a\rangle \FOmodels \phi^*_i(x_i)$. Hence, each $\chi_Y(x)$ can be defined as
	a Boolean combination of the $\phi^*_i(x_i)$'s, after renaming $x_i$ to $x$.\footnote{In 
	particular, note that $\forall x. \neg \chi_\emptyset(x)$ is a "first-order formula"
	that defines $\HomAll{\?B}$ since for any "$\sigma$-structure", $\?A \homto \?B$
	"iff" $F(a) \neq \emptyset$ for all $a \in A$.}

	We can now prove that
	$\HomAll{\?B}$ has "uniformly first-order definable homomorphisms".
	For each $b\in B$, we let $\psi_b(x) \defeq \bigvee_{Y \in g^{-1}[b]} \chi_Y(x)$.
	Now for any arbitrary "$\sigma$-structure" $\?A$, for any $a\in A$,
	there is at most one $b\in B$ "st" $\langle \?A, a \rangle \FOmodels \psi_b(x)$---indeed,
	there is a unique $Y \in \pset{B}$ (and so at most one $Y \in \psetp{B}$) "st"
	$\langle \?A, a \rangle \FOmodels \chi_Y(x)$ by
	\Cref{claim:finite-duality-uniformly-definable-homomorphisms-formulas}.
	Furthermore, if $\?A \homto \?B$, then for each $a$ there is a unique $b(a) \in B$
	"st" $\langle \?A, a \rangle \FOmodels \psi_{b(a)}(x)$, and moreover $a \mapsto b(a)$
	is a "homomorphism" by 
	\Cref{claim:finite-duality-uniformly-definable-homomorphisms-homomorphism}.
	In turn, using \Cref{claim:finite-duality-uniformly-definable-homomorphisms-homomorphism},
	we get that for each $a\in A$, there is exactly one $b(a) \in B$ "st" $\langle \?A, a \rangle \FOmodels \phi_{b(a)}(x)$, and that if
	$\?A \homto \?B$, then $a \mapsto b(a)$ is a "homomorphism"---that is equal to
	$g \circ F$. And hence, $\HomAll{\?B}$
	has "uniformly first-order definable homomorphisms".
\end{proof}

\subsection{Missing Details on Hyperedge Consistency}
\label{apdx-sec:hyperedge-consistency}

\begin{proposition}
	\AP\label{prop:existence-homomorphism-implies-lowerbound-HC}
	If $f\colon \?A \to \?B$ is a "homomorphism" then $f(a) \in \HCFixpoint{\?A}{\?B}(a)$
	for each $a \in A$.
\end{proposition}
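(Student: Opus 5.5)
The plan is to show that the ``graph'' of $f$ is a post-fixpoint of $\HCOperator$ and then invoke the Knaster--Tarski characterisation of the greatest fixpoint as the union of all post-fixpoints. Concretely, define $F_f \in \LatticeGuessFunctions{A}{B}$ by $F_f(a) \defeq \{f(a)\}$ for each $a\in A$. The goal reduces to proving $F_f \subsumed \HCOperator(F_f)$. Indeed, $\HCFixpoint{\?A}{\?B}$ is the greatest fixpoint of the monotone operator $\HCOperator$ on the complete lattice $\LatticeGuessFunctions{A}{B}$, and by Knaster--Tarski it equals the join of all $G$ with $G \subsumed \HCOperator(G)$. Hence $F_f \subsumed \HCFixpoint{\?A}{\?B}$, which says precisely that $f(a) \in \HCFixpoint{\?A}{\?B}(a)$ for all $a$.

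To check $F_f \subsumed \HCOperator(F_f)$, fix $a\in A$. We must show $f(a) \in \HCOperator(F_f)(a)$. By definition, we need $f(a) \in F_f(a)$, which is trivial. Then, for every $\+R_{(k)}\in\sigma$, every $i\in\intInt{1,k}$ and every $\langle a_1,\dotsc,a_{k-1}\rangle \in \adjacency{a}{\?A}{\+R}{i}$, we must exhibit $b_j \in F_f(a_j)$ with $\langle b_1,\dotsc,b_{k-1}\rangle \in \adjacency{f(a)}{\?B}{\+R}{i}$. The only choice is $b_j \defeq f(a_j)$. Membership in the adjacency then amounts to the tuple obtained by inserting $f(a)$ at position $i$ lying in $\+R(\?B)$. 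That tuple is the image under $f$ of the tuple obtained by inserting $a$ at position $i$, which lies in $\+R(\?A)$ by definition of $\adjacency{a}{\?A}{\+R}{i}$. Since $f$ is a homomorphism, the image tuple is in $\+R(\?B)$.

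Alternatively, if one prefers the ordinal-iteration view from the footnote, the same computation gives, by transfinite induction, $F_f \subsumed \HCOperator^{\alpha}(\topLatticeGuessFunctions{B})$ for every ordinal $\alpha$. The base case is trivial. The successor step uses monotonicity together with $F_f \subsumed \HCOperator(F_f)$. The limit step holds since at limits one takes pointwise intersections, each of which contains $f(a)$.

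There is no real obstacle here. The only point requiring care is bookkeeping the index shift in the definition of the adjacency, where position $i$ is removed from the $k$-tuple. The argument does not use finiteness of $\?A$, so it holds for arbitrary structures, as needed later.
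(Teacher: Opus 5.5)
Your proof is correct and is essentially the paper's argument. The paper shows that the invariant ``$f(a) \in F(a)$ for all $a$'' holds at the top element, is preserved by one application of the hyperedge-consistency operator and by arbitrary meets, and concludes by ordinal induction, which is exactly your alternative route; your main packaging via the Knaster--Tarski characterisation of the greatest fixpoint as the join of all post-fixpoints is the same coinduction, and merely spares you the limit-step check.
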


\begin{proof}
	The property ``$f(a) \in F(a)$ for each $a\in A$'' holds for the greatest element
	of $\LatticeGuessFunctions{A}{B}$, is stable under application of $\HCOperator$ and
	under arbitrary meets.\footnote{Meaning that if all $F_i$ ($i \in I$ for some arbitrary set $I$)
	satisfy the property, then so does $a \mapsto \bigcap_{i \in I} F(a)$}
	Hence, by ordinal induction, it holds for $\HCFixpoint{\?A}{\?B}$.
\end{proof}

\begin{corollary}
	\AP\label{coro:HC-empty-implies-no-hom}
	If $\HCFixpoint{\?A}{\?B}(a) = \emptyset$ for some $a\in A$, then
	$\?A \nothomto \?B$.
\end{corollary}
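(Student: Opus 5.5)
This is immediate from \Cref{prop:existence-homomorphism-implies-lowerbound-HC}, and the plan is to argue by contraposition. Assume that $\?A \homto \?B$, and fix a "homomorphism" $f\colon \?A \to \?B$. The preceding proposition then gives $f(a) \in \HCFixpoint{\?A}{\?B}(a)$ for every $a \in A$. In particular, every set $\HCFixpoint{\?A}{\?B}(a)$ contains an element and so is non-empty. This is exactly the negation of the hypothesis ``$\HCFixpoint{\?A}{\?B}(a) = \emptyset$ for some $a\in A$''. Hence, whenever that hypothesis holds, no "homomorphism" from $\?A$ to $\?B$ can exist, "ie" $\?A \nothomto \?B$.

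The only point needing care is that the argument uses nothing about finiteness of $\?A$ or any duality property of $\?B$. Both facts come directly from the statement of \Cref{prop:existence-homomorphism-implies-lowerbound-HC}, which holds for arbitrary $\?A$ because it is obtained by ordinal induction together with stability under meets. So the corollary holds for arbitrary "source structures", including "automatic@@struct" ones. I do not expect any real obstacle; the whole content lies in the previous proposition.
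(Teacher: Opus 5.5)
Your proposal is correct and is exactly the intended argument: the paper states this corollary without proof directly after the proposition that every homomorphism $f$ satisfies $f(a)$ belonging to the greatest fixpoint at $a$ for all $a\in A$, and it follows from it by contraposition just as you describe. Your remark that no finiteness of the source or duality assumption on the target is needed also matches the paper.
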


In general, the converse property does not hold. For instance, if $\sigma$ is the
"graph signature" and $\?B$ is the "$2$-clique"---or more generally any "clique"---, then $\HCFixpoint{\?A}{\?B}$
is always the map $a \mapsto B$, no matter whether there is a "homomorphism" from
$\?A$ to $\?B$.

Yet, Larose, Loten and Tardif managed to identify a necessary and sufficient condition on $\?B$ for
the "hyperedge consistency algorithm@@finite" to decide whether $\?A \in \HomFin{\?B}$.

\begin{figure}
	\centering
	\begin{algorithm}[H]
		\SetAlgoLined
		\SetKwRepeat{Do}{do}{while}
		\KwInput{Two "finite $\sigma$-structures" $\?A$ and $\?B$.}
		$\textrm{Im}^{0}_a \leftarrow B$ for $a\in A$\;
		$n \leftarrow 0$\;
		\Do{%
			some $\textrm{Im}_a$ has been updated
		}{
			\For{$a \in A$}{
				\nosemic$\textrm{Im}^{n+1}_a \leftarrow \{ b \in \textrm{Im}^{n}_a \mid$\;
				\pushline$\forall \langle a_1,\, \dotsc,\, a_{k-1} \rangle \in \adjacency{a}{\?A}{\+R}{i},$\;
				$\exists \langle b_1,\, \dotsc,\, b_{k-1} \rangle \in \adjacency{b}{\?B}{\+R}{i},$\;
				$b_1 \in F(a_1) \land \dotsc \land b_{k-1} \in F(a_{k-1})$\;
				\popline\dosemic$\}$\;
				\If{$\textrm{Im}^{n+1}_a = \emptyset$}{
					\Return{false}\;
				}
			}
			$n \leftarrow n+1$\;
		}
		\Return{true}
	\end{algorithm}
	\caption{\AP\label{fig:hc-finite} The ""hyperedge consistency algorithm for finite structures"".}
\end{figure}

\begin{figure}
	\newrobustcmd{\myspace}{\hphantom{
		\hspace{5em}
		\begin{tikzpicture}
			\input{tikz/2-path-graph}
		\end{tikzpicture}
	}}
	\centering
	\begin{tikzpicture}
		\input{tikz/zigzag-graph}
		\input{tikz/zigzag-graph-HC-P2-step2}
	\end{tikzpicture}
	\quad\emph{(step $2$)~}
	\myspace\\[2em]
	\begin{tikzpicture}
		\input{tikz/zigzag-graph}
		\input{tikz/zigzag-graph-HC-P2-step3}
	\end{tikzpicture}
	\quad\emph{(step $3$)~}
	\myspace\\[2em]
	\begin{tikzpicture}
		\input{tikz/zigzag-graph}
		\input{tikz/zigzag-graph-HC-P2-step6}
	\end{tikzpicture}
	\quad\emph{(step $6$)~}
	\hspace{5em}
	\begin{tikzpicture}
		\input{tikz/2-path-graph}
	\end{tikzpicture}\\[2em]
	\begin{tikzpicture}
		\input{tikz/zigzag-graph}
		\input{tikz/zigzag-graph-HC-P2-step7}
	\end{tikzpicture}
	\quad\emph{(step $7$)~}
	\myspace\\[2em]
	\begin{tikzpicture}
		\input{tikz/zigzag-graph}
		\input{tikz/zigzag-graph-HC-P2-step13}
	\end{tikzpicture}
	\quad\emph{(step $13$)}
	\myspace
	\caption{%
		\AP\label{fig:zigzag-graph-HC-P2}
		Steps $2$, $3$, $6$, $7$ and $13$ of the "hyperedge consistency algorithm@@finite"
		on $\zigzag{5}{2}$,
		when the "target structure" is $\pathGraph{2}$, depicted on the right-hand side.%
	}
\end{figure}
\begin{example}[{\Cref{ex:zigzag-HC-T2}, continued}]
	\AP\label{ex:zigzag-HC-P2}
	While $\pathGraph{2}$ does not have "finite duality" (\Cref{ex:zigzag-defn}), it has
	"tree duality" (\Cref{prop:2-path-tree-duality}), and so the "hyperedge consistency algorithm@@finite"
	decides whether a finite "$\sigma$-structure" has a "homomorphism" to $\pathGraph{2}$.
	We represent some steps of the algorithm in \Cref{fig:zigzag-graph-HC-P2},
	on the "source structure" $\zigzag{5}{2}$.
	Steps $0$, $1$ and $2$ of the "hyperedge consistency algorithm@@finite"
	are identical to \Cref{ex:zigzag-HC-T2}.
	Yet, in step $2$, we have not reached the fixpoint.
	In step $7$, this is the first time
	we have $\HCOperator^{7}(\topLatticeGuessFunctions{\pathGraph{2}})(g) = \emptyset$ for some
	$g\in \zigzag{5}{2}$. This propagates until step 13, 
	when $\HCOperator^{13}(\topLatticeGuessFunctions{\pathGraph{2}})(g) = \emptyset$
	for all $g\in \zigzag{5}{2}$. This is of course the fixpoint of $\HCOperator$,
	proving that $\HCFixpoint{\zigzag{5}{2}}{\pathGraph{2}}$ is the constant map
	equal to $\emptyset$, and by \Cref{coro:HC-empty-implies-no-hom}
	that $\zigzag{5}{2} \nothomto \pathGraph{2}$.

	In general, on "source structure" $\zigzag{n}{2}$ (with $n\in\Np$),
	the smallest $k$ "st" $\HCOperator^{\,k}(\topLatticeGuessFunctions{\pathGraph{2}})(g) = \emptyset$
	for some $g \in \zigzag{n}{2}$ is of size $\frac{n}{2} + \+O(1)$,
	and if we want to the property to hold for \emph{all} $g$'s,
	then $k$ has size $n + \+O(1)$.
\end{example}

\subsection{Proof of Proposition~\ref{prop:hyperedge-consistency-preserves-regularity}}
\label{apdx-prop:hyperedge-consistency-preserves-regularity}

\hyperedgeConsistencyPreservesRegularity*

\begin{proof}
	Let $F\in \LatticeGuessFunctions{\+A}{B}$ be "regular@@hom",
	so for each $Y \in \pset{B}$, $F^{-1}[Y]$ is regular, and so by \Cref{prop:automatic-first-order}, there exists a "first-order formula" $\phi_Y(x)$ over
	\(\signatureSynchronous{\Sigma}\)
	"st" $F^{-1}[Y] = \semFO{\phi_Y(x)}{\univStructSynchronous{\Sigma}}$.
	Also, since $\+A$ is an "automatic presentation", for any $\+R \in \sigma$ of arity $k$,
	there exists by \Cref{prop:automatic-first-order} a "first-order formula" $\psi_{\+R}(x_1,\ldots,x_k)$ over \(\signatureSynchronous{\Sigma}\) "st"
	$\relPres{\+R}{\+A} = \semFO{\psi_{\+R}(x_1,\dotsc,x_k)}{\univStructSynchronous{\Sigma}}$.
	Similarly, $\domainPres{\+A} = \semFO{\psi_{\textrm{dom}}(x)}{\univStructSynchronous{\Sigma}}$
	for some "formula@@FO" $\psi_{\textrm{dom}}(x)$.

	It is then easy to prove that $\HCOperator(F)$ is "regular@@hom" by providing a "first-order 
	formula" $\widehat \phi_Y(x)$ for each $Y \in \pset{B}$, describing $\HCOperator(F)^{-1}[Y]$,
	using both the formulas above, and the definition of $\HCOperator$.
	Indeed, recall that an element $u \in \domainPres{\+A}$ should be sent via $\HCOperator(F)$ 
	onto $Y\in \pset{B}$ if $Y$ is exactly the set of elements $b \in B$ "st" for every $\+R_{(k)} \in \sigma$, for every $i \in \intInt{1,k}$, if
	$\langle a_1,\, \dotsc,\, a_{k-1} \rangle \in \adjacency{a}{\?A}{\+R}{i}$,
	then there exists $b_1 \in F(a_1)$, $\dotsc$, $b_{k-1} \in F(a_{k-1})$ "st" 
	$\langle b_1,\, \dotsc,\, b_{k-1} \rangle \in \adjacency{b}{\?B}{\+R}{i}$.
	Symbolically the set of such $u$'s can be written as
	$\semFO{\widehat \phi_Y(x)}{\univStructSynchronous{\Sigma}}$, where
	\[
		\widehat \phi_Y(x) \defeq 
			\psi_{\textrm{dom}}(x) \land \big(\bigwedge_{b\in Y} \chi_b(x) \land \bigwedge_{b\not\in Y} \neg\chi_b(x)\big)
	\]
	where $\chi_b(x)$ is the formula\footnote{Notice that since
	$\chi_b$ appear both positively and negatively in $\widehat \phi_Y$, going from
	the $\phi$'s to the $\widehat \phi's$ increases the "quantifier alternation" of the
	"formulas@@FO" by one. And so the "formulas@@FO" we build to describe $\HCOperator^{\,n}(\topLatticeGuessFunctions{B})$ are of "quantifier alternation" $n$.
	Automata-wise, it implies that this construction is non-elementary in $n$.}
	\begin{align*}
		\chi_b(x) \defeq\; &
			\bigwedge_{\+R_{(k)} \in \sigma} \bigwedge_{i\in \intInt{1,k}}\;
			\forall x_1.\, \dotsc\, \forall x_{i-1}.\, \forall x_{i+1}.\, \dotsc \, \forall x_{k}.\,\\
			& \hspace{2em}\psi_{\+R}(x_1,\dotsc, x_{i-1}, x, x_{i+1}, \dotsc, x_k)
			\\ 
			& \hspace{2em} \Rightarrow \Big(\bigvee_{\substack{\langle b_1,\dotsc,b_{i-1},b_{i+1},\dotsc b_k\rangle \in \adjacency{b}{\?B}{\+R}{i}}}\;
			\bigwedge_{i \in \intInt{1,k} \smallsetminus \{i\}}
			\underbrace{%
				\bigvee_{\substack{Y' \in \pset{B}\\ b_i\in Y'}} \phi_{Y'}(x_i)
			}_{%
				b_i \in F(x_i)
			}%
			\Big).\qedhere
	\end{align*}
\end{proof}

\subsection{Technical Interlude: Monotonicity of $\HCOperator$}

We will intensively use the following proposition to understand the bahaviour
of the "hyperedge consistency algorithm@@finite".

\begin{proposition}[Monotonicity of $\HCOperator$]
	\!\footnote{Observe in particular that this property can be applied if $\?A'$ is a
	substructure of $\?A$.}%
	\AP\label{prop:hyperedge-consistency-antimonotonicity}
	Let $\?A$, $\?A'$ be arbitrary "$\sigma$-structures" "st"
	there is a "homomorphism" $h\colon \?A' \to \?A$. Let $\?B$ a "finite $\sigma$-structure".
	For any $F\in \LatticeGuessFunctions{A}{B}$ and $F' \in \LatticeGuessFunctions{A'}{B}$,
	if $F(h(a)) \subseteq F'(a)$ for all $a\in A'$, then $\HCOperator(F)(h(a)) \subseteq \HCOperator(F')(a)$ for all $a\in A'$.\footnote{In other words, if $F \circ h \subsumed F'$, then
	$\HCOperator(F) \circ h \subsumed \HCOperator(F')$.}
\end{proposition}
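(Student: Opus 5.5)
The plan is to unfold the definition of $\HCOperator$ and check the defining condition pointwise. Fix $a \in A'$ and $b \in \HCOperator(F)(h(a))$; we must show $b \in \HCOperator(F')(a)$. There are two conditions to verify:
\begin{itemize}
	\item $b \in F'(a)$;
	\item for every "predicate" $\+R_{(k)} \in \sigma$, every $i \in \intInt{1,k}$ and every $\langle a_1,\dotsc,a_{k-1}\rangle \in \adjacency{a}{\?A'}{\+R}{i}$, there are $b_j \in F'(a_j)$ with $\langle b_1,\dotsc,b_{k-1}\rangle \in \adjacency{b}{\?B}{\+R}{i}$.
\end{itemize}

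The first condition is immediate. By definition $\HCOperator(F)(h(a)) \subseteq F(h(a))$, and the hypothesis gives $F(h(a)) \subseteq F'(a)$.

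For the second condition, the key step is to transport the adjacency along $h$. The tuple $\langle a_1,\dotsc,a_{i-1},a,a_{i+1},\dotsc,a_{k-1}\rangle$ (with the indexing convention of the "adjacency") lies in $\+R(\?A')$. Since $h$ is a "homomorphism", its image lies in $\+R(\?A)$, so $\langle h(a_1),\dotsc,h(a_{k-1})\rangle \in \adjacency{h(a)}{\?A}{\+R}{i}$. Because $b \in \HCOperator(F)(h(a))$, the definition of $\HCOperator$ yields $b_j \in F(h(a_j))$ with $\langle b_1,\dotsc,b_{k-1}\rangle \in \adjacency{b}{\?B}{\+R}{i}$. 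Applying the hypothesis to each $a_j \in A'$ gives $F(h(a_j)) \subseteq F'(a_j)$, so $b_j \in F'(a_j)$, which is exactly the required witness.

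Together these show $b \in \HCOperator(F')(a)$, which proves the inclusion $\HCOperator(F)\circ h \subsumed \HCOperator(F')$.

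There is no real obstacle. The only point needing care is bookkeeping the position $i$ in the "adjacency" notation: one removes the $i$-th coordinate, so the $a_j$ are reindexed over $\intInt{1,k}\smallsetminus\{i\}$. The proof also never uses injectivity of $h$, nor finiteness of $\?A$ or $\?A'$, so it applies to arbitrary structures as stated.
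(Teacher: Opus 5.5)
Your proof is correct and follows essentially the same route as the paper's. You fix $b$ in the operator's value at $h(a)$, push each adjacency tuple of $a$ in $\mathbf{A}'$ forward along the homomorphism $h$ to an adjacency tuple of $h(a)$ in $\mathbf{A}$, and pull the witnesses back through $F(h(a_j)) \subseteq F'(a_j)$; you additionally check $b \in F'(a)$ via $b \in F(h(a)) \subseteq F'(a)$, which the definition of the operator requires but the paper's proof leaves implicit.
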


\begin{proof}
	Assume that $\restr{F}{A'} \subsumed F'$, and let us show that 
	$\restr{\HCOperator(F)}{A'} \subsumed \HCOperator(F')$.
	Let $a \in A'$, and let $b \in \HCOperator(F)(h(a))$.
	By definition of $\HCOperator$, for every $\+R_{(k)} \in \sigma$,
	for every $i \in \intInt{1,k}$,
	if $\langle a_1,\, \dotsc,\, a_{k-1} \rangle \in \adjacency{h(a)}{\?A}{\+R}{i}$,
	then there exists $b_1 \in F(a_1)$, $\dotsc$, $b_{k-1} \in F(a_{k-1})$ "st" 
	$\langle b_1,\, \dotsc,\, b_{k-1} \rangle \in \adjacency{b}{\?B}{\+R}{i}$.
	Now, let $\+R_{(k)} \in \sigma$ and $i \in \intInt{1,k}$,
	and let $\langle a_1,\, \dotsc,\, a_{k-1} \rangle \in \adjacency{a}{\?A'}{\+R}{i}$.
	Since $h$ is a "homomorphism", we have $\langle h(a_1),\, \dotsc,\, h(a_{k-1}) \rangle \in \adjacency{h(a)}{\?A}{\+R}{i}$,
	and so there exists
	$b_1 \in F(h(a_1))$, $\dotsc$, $b_{k-1} \in F(h(a_{k-1}))$ "st" 
	$\langle b_1,\, \dotsc,\, b_{k-1} \rangle \in \adjacency{b}{\?B}{\+R}{i}$.
	Since by hypothesis $F(h(a_i)) \subseteq F'(a_i)$ (for $a_i \in A'$), it follows that
	for every $\+R_{(k)} \in \sigma$, for every $i \in \intInt{1,k}$,
	if $\langle a_1,\, \dotsc,\, a_{k-1} \rangle \in \adjacency{a}{\?A'}{\+R}{i}$, then
	there exists $b_1 \in F'(a_1)$, $\dotsc$, $b_{k-1} \in F'(a_{k-1})$
	"st" $\langle b_1,\, \dotsc,\, b_{k-1} \rangle \in \adjacency{b}{\?B}{\+R}{i}$.
	And hence $b \in \HCOperator(F')(a)$, which concludes the proof.
\end{proof}

\subsection{Proof of Lemma~\ref{lem:hyperedge-consistency-uniform-convergence}}
\label{apdx-lem:hyperedge-consistency-uniform-convergence}



\hyperedgeConsistencyUniformConvergence*

Note the subtlety of the statement: when $\?A \nothomto \?B$, we do \textbf{not}, 
say that $\HCOperator^{\,n}_{\!\?A,\?B}(\topLatticeGuessFunctions{B})
= \HCFixpoint{\?A}{\?B}$ but only that some $a \in A$ reaches $\emptyset$ in at most $n$ steps: after this, the value $\emptyset$ can take an arbitrarily long time to propagate
to all $a' \in A$.

\begin{proof}
	We prove the implications
	\eqref{item:hc-uniform-finite-duality} $\Rightarrow$
	\eqref{item:hc-uniform-finite-structures} $\Rightarrow$
	\eqref{item:hc-uniform-arbitrary-structures} $\Rightarrow$
	\eqref{item:hc-uniform-finite-duality}.
	
	\case{\eqref{item:hc-uniform-finite-structures} $\Rightarrow$
		\eqref{item:hc-uniform-arbitrary-structures}.}
	Let $n\in \N$ "st" $\HCOperator^{\,n}_{\!\?A',\?B}(\topLatticeGuessFunctions{B}) = \HCFixpoint{\?A'}{\?B}$ for every finite "$\sigma$-structure" $\?A'$. Let $\?A$ be an arbitrary
	"$\sigma$-structure". Note that for any $F\in \LatticeGuessFunctions{A}{B}$, for any 
	"substructure" $\?A'$ of $\?A$ containing $a$, then by \Cref{prop:hyperedge-consistency-antimonotonicity}
	$\HCOperator_{\?A,\?B}(F)(a) \subseteq \HCOperator_{\?A',\?B}(F)(a)$.
	We show that equality is reached by a particular finite "substructure".
	\begin{claim}
		\hspace{-.9em}
		\AP\label{claim:hyperedge-consistency-ball}
		For any $F\in \LatticeGuessFunctions{A}{B}$ and $m\in \N$,
		there exists a finite "substructure"\footnote{Of course, if $\?A$ is "locally finite@@struct",
		we can always take $\?A_{a,m} = \ball{\?A}{a}{m}$.}
		$\?A_{a,m}$ of $\ball{\?A}{a}{m}$ "st"
		\[\HCOperator^{\,m}_{\?A,\?B}(F)(a) = \HCOperator^{\,m}_{\?A_{a,m},\?B}(F)(a).\]
	\end{claim}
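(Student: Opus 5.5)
The plan is to prove the claim by induction on $m$, for all $a \in A$ simultaneously, with $F$ fixed. Throughout, $F$ is restricted to the domain of each finite substructure considered, and $\HCOperator_{\?A',\?B}$ is computed with respect to $\restr{F}{A'}$.

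First I will record a monotonicity fact. If $\?A'' \subseteq \?A'$ are substructures of $\?A$, then for every $j\in\N$ and every $a \in A''$,
\[\HCOperator^{\,j}_{\?A',\?B}(F)(a) \subseteq \HCOperator^{\,j}_{\?A'',\?B}(F)(a).\]
This follows by induction on $j$ from \Cref{prop:hyperedge-consistency-antimonotonicity}, applied to the inclusion homomorphism $\?A'' \to \?A'$. The case $\?A' = \?A$ is included. Consequently, one inclusion of the claim, namely $\HCOperator^{\,m}_{\?A,\?B}(F)(a) \subseteq \HCOperator^{\,m}_{\?A',\?B}(F)(a)$, holds for \emph{every} substructure $\?A'$ containing $a$. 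Moreover, if a finite substructure achieves equality at $a$, then so does any larger finite substructure of $\?A$, since its value at $a$ is squeezed between the two equal sets. So it suffices to find, for each $b \in F(a) \smallsetminus \HCOperator^{\,m}_{\?A,\?B}(F)(a)$, a finite substructure $\?W_b$ of $\ball{\?A}{a}{m}$ containing $a$ such that $b \notin \HCOperator^{\,m}_{\?W_b,\?B}(F)(a)$. I then take $\?A_{a,m}$ to be the union of the $\?W_b$'s together with $\{a\}$ and the constants. This union is finite because $B$ is finite.

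For the base case $m = 0$, $\HCOperator^{\,0}(F) = F$ regardless of the structure, so the substructure on $\{a\}$ (plus constants) works.

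For the inductive step, fix $b \in F(a)$ with $b \notin \HCOperator^{\,m}_{\?A,\?B}(F)(a)$. Either $b \notin \HCOperator^{\,m-1}_{\?A,\?B}(F)(a)$, or it is removed at step $m$.

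In the first case, the induction hypothesis gives $\?A_{a,m-1}$, which lies in $\ball{\?A}{a}{m-1} \subseteq \ball{\?A}{a}{m}$. I take $\?W_b \defeq \?A_{a,m-1}$. Since $\HCOperator^{\,m}(F)(a) \subseteq \HCOperator^{\,m-1}(F)(a)$, the element $b$ is still excluded at step $m$ in $\?W_b$.

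In the second case, there are some $\+R_{(k)}$, an index $i$, and a tuple $\tup{a_1,\dotsc,a_{k-1}} \in \adjacency{a}{\?A}{\+R}{i}$ such that no tuple in $\adjacency{b}{\?B}{\+R}{i}$ has its $j$-th component in $\HCOperator^{\,m-1}_{\?A,\?B}(F)(a_j)$ for every $j$. I let $\?W_b$ consist of:
\begin{itemize}
\item this single $\+R$-hyperedge through $a$;
\item the witnesses $\?A_{a_j,m-1}$ for each $j$, given by the induction hypothesis;
\item the witness $\?A_{a,m-1}$, to keep $a$ consistent.
\end{itemize}
Each $a_j$ is at distance at most $1$ from $a$, so $\?A_{a_j,m-1}$ lies in $\ball{\?A}{a_j}{m-1} \subseteq \ball{\?A}{a}{m}$. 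By the monotonicity fact and the induction hypothesis,
\[\HCOperator^{\,m-1}_{\?W_b,\?B}(F)(a_j) \subseteq \HCOperator^{\,m-1}_{\?A_{a_j,m-1},\?B}(F)(a_j) = \HCOperator^{\,m-1}_{\?A,\?B}(F)(a_j).\]
Hence, in $\?W_b$, the same hyperedge blocks $b$ at step $m$, so $b \notin \HCOperator^{\,m}_{\?W_b,\?B}(F)(a)$.

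I expect the main care to be needed in the second case. The induction must be carried out for all $a$ at once, so that the witnesses at the neighbours $a_j$ are available. The argument also relies on the observation that enlarging a finite substructure can only shrink the computed sets, so gluing witnesses never reintroduces an excluded $b$. Finiteness of $B$, and hence finitely many excluded $b$'s, each needing at most $k+1$ finite pieces, gives finiteness of $\?A_{a,m}$.
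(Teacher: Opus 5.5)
Your proof is correct, but it is organised differently from the paper's sketch. The paper tries to preserve the whole local profile of $a$. Since $\HCOperator_{\?A,\?B}(F)(a)$ only depends on which patterns $\langle F(a_1),\dotsc,F(a_{k-1})\rangle$ are realised by tuples of $\adjacency{a}{\?A}{\+R}{i}$, and there are finitely many such patterns, it keeps one representative tuple per realised pattern (for each $\+R$ and $i$) and then says ``by induction on $m$''.

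You work the other way round. You get the inclusion $\HCOperator^{\,m}_{\?A,\?B}(F)(a) \subseteq \HCOperator^{\,m}_{\?A',\?B}(F)(a)$ for free from \Cref{prop:hyperedge-consistency-antimonotonicity}. You then only build certificates for the at most $|B|$ values that get excluded. Each certificate is a single blocking hyperedge plus the recursive certificates at its vertices.

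What your version buys is rigour at the two points the sketch leaves implicit. First, in the recursion the patterns must be taken with respect to $\HCOperator^{\,m-1}_{\?A,\?B}(F)$ rather than $F$. Second, one must check that gluing the pieces chosen for different neighbours does not destroy the values already secured. Your squeeze argument handles both at once: enlarging a substructure can only shrink the computed sets, while $\?A$ itself provides the lower bound. It also gives smaller witnesses, with at most $|B|$ blocking hyperedges kept per node and per level, regardless of the number of predicates.

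One minor remark: the piece $\?A_{a,m-1}$ you add in the second case is harmless but unnecessary. The blocking hyperedge together with the pieces $\?A_{a_j,m-1}$ already excludes $b$ at step $m$.
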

	We give a proof sketch of this claim. Note that by definition,
	$\HCOperator_{\?A,\?B}(F)(a)$ only depends on the values of $F(a')$ where $a'$ is at "distance@@struct" 1.
	More precisely, it only depends on the values
	$\langle F(a_1),\,\dotsc,\,F(a_{k-1}) \rangle$, where $\+R$ is any "predicate" of arity $k$,
	$i\in\intInt{1,k}$ and $\langle a_1,\,\dotsc,\,a_{k-1}\rangle \in \adjacency{a}{\?A}{\+R}{i}$.
	Since $B$ is finite, there are finitely many tuples
	of the form $\langle F(a_1),\,\dotsc,\,F(a_{k-1}) \rangle$, and so for each of them it suffices
	to keep (for distance $m=1$) only one tuple $\langle a_1,\,\dotsc,\,a_{k-1}\rangle \in \adjacency{a}{\?A}{\+R}{i}$. By induction on $m$, we
	obtain a finite "substructure" $\?A_{a,m}$ of $\ball{\?A}{a}{m}$ as in
	\Cref{claim:hyperedge-consistency-ball}.

	We now show that if $\?A\homto \?B$, then
	$\HCOperator^{\,n}_{\!\?A,\?B}(\topLatticeGuessFunctions{B}) =
	\HCFixpoint{\?A}{\?B}$,
	and if $\?A\homto\?B$, then
	$\HCOperator^{\,n}_{\!\?A,\?B}(\topLatticeGuessFunctions{B})(a) = \emptyset$ for some $a\in A$.
	We assume that $\?A\homto \?B$: by \Cref{claim:hyperedge-consistency-ball,prop:hyperedge-consistency-antimonotonicity}
	\[\HCOperator^{\,n}_{\!\?A,\?B}(\topLatticeGuessFunctions{B})(a) =
	\HCOperator^{\,n}_{\?A_{a,n+1},\?B}(\topLatticeGuessFunctions{B})(a).\]
	Since $\?A_{a,n+1}$ is finite, by \eqref{item:hc-uniform-finite-structures}, the right-hand side
	of the equality above equals $\HCFixpoint{\?A_{a,m}}{\?B}(a)$.
	But then again by \Cref{claim:hyperedge-consistency-ball} and \eqref{item:hc-uniform-finite-structures}, 
	\[\HCOperator^{n+1}_{\!\?A,\?B}(\topLatticeGuessFunctions{B})(a) = \HCFixpoint{\?A_{a,n+1}}{\?B}(a).\]
	And hence $\HCOperator^{\,n}_{\!\?A,\?B}(\topLatticeGuessFunctions{B})(a) = \HCOperator^{n+1}_{\!\?A,\?B}(\topLatticeGuessFunctions{B})(a)$. Since this property holds
	for arbitrary values of $a\in A$, it follows that $\HCOperator^{\,n}_{\!\?A,\?B}(\topLatticeGuessFunctions{B}) = \HCFixpoint{\?A}{\?B}$. 
	The case when $\?A\nothomto\?B$ is handled similarly.

	\case{\eqref{item:hc-uniform-arbitrary-structures} $\Rightarrow$
		\eqref{item:hc-uniform-finite-duality}.}
	One can show---exactly as in the proof
	of \Cref{prop:hyperedge-consistency-preserves-regularity}---by induction on $m \in \N$
	that $\HCOperator^{\,m}_{\!\?A,\?B}(\topLatticeGuessFunctions{B})$
	is "first-order definable", in the sense that for all $Y \in \pset{B}$,
	there exists a "first-order formula" $\phi_{m,Y}(x)$ over $\sigma$ "st"
	$\HCOperator^{\,m}_{\!\?A,\?B}(\topLatticeGuessFunctions{B})^{-1}[Y] = \semFO{\phi_{m,Y}(x)}{\?A}$,
	"ie" for any $a\in A$, we have
	\[
		\langle \?A, a\rangle \FOmodels \phi_{m,Y}(x)
		\text{ "iff" }
		\HCOperator^{\,m}_{\!\?A,\?B}(\topLatticeGuessFunctions{B})(a) = Y.
	\]
	We then claim that
	\[
		\?A \FOmodels \forall x. \neg \phi_{n,\emptyset}(x)
		\text{ "iff" }
		\?A \homto \?B.
	\]
	The left-to-right implication can be proven by contraposition, 
	since if $\?A \nothomto \?B$ then by \eqref{item:hc-uniform-arbitrary-structures}
	we have $\HCOperator^{\,n}_{\!\?A,\?B}(\topLatticeGuessFunctions{B})(a) = \emptyset$
	for some $a \in A$.
	For the right-to-left implication, we again use \eqref{item:hc-uniform-arbitrary-structures}, 
	which yields that $\HCOperator^{\,n}_{\!\?A,\?B}(\topLatticeGuessFunctions{B}) = \HCFixpoint{\?A}{\?B}$. Together with the contraposition of \Cref{coro:HC-empty-implies-no-hom},
	this implies that $\HCOperator^{\,n}_{\!\?A,\?B}(\topLatticeGuessFunctions{B})(a) \neq \emptyset$ for all $a\in A$. The conclusion that
	$\?B$ has "finite duality" follows from "Atserias' theorem".

	\case{\eqref{item:hc-uniform-finite-duality} $\Rightarrow$
		\eqref{item:hc-uniform-finite-structures}.}
	To prove this implication, we first need a characterization of
	what it means for an element $b\in B$ not to be in $\HCFixpoint{\?A}{\?B}(a)$,
	where $a\in A$.
	We denote by $n(\?B)$ the maximal "diameter" of a "critical obstruction" of $\?B$---which
	must be finite since $\?B$ has "finite duality".
	\begin{claim}
		\hspace{-.9em}
		\AP\label{claim:hyperedge-consistency-uniform-convergence-tree-witnesses}
		Let $a\in A$ and $b\in B$. Assume that there exists a "$\sigma$-tree" $\?T$ "st" there is a "homomorphism" from $\?T$ to $\?A$ that maps some element $t \in T$ to $a$, but no "homomorphism" from $\?T$ to $\?B$
		can map $t$ to $b$. Then, letting $m$ denote the "height@@struct" of $\?T$ when rooted
		at $t$, we have $b \not\in \HCOperator^{\,m}_{\?A,\?B}(\topLatticeGuessFunctions{B})(a)$.
	\end{claim}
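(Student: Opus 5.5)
We prove the contrapositive form by induction on the height $m$ of the rooted tree $\langle \?T, t\rangle$. The statement to establish is: for every "$\sigma$-tree" $\?T$ rooted at $t$ with "height@@struct" at most $m$, and every "homomorphism" $h\colon \?T\to\?A$, if $b \in \HCOperator^{\,m}_{\?A,\?B}(\topLatticeGuessFunctions{B})(h(t))$ then there is a "homomorphism" $\?T\to\?B$ sending $t$ to $b$. The first step is to transfer the problem from $\?A$ to $\?T$. By \Cref{prop:hyperedge-consistency-antimonotonicity} applied to $h$, starting from $\topLatticeGuessFunctions{B}\circ h = \topLatticeGuessFunctions{B}$ and iterating, we get $\HCOperator^{\,m}_{\?A,\?B}(\topLatticeGuessFunctions{B})(h(s)) \subseteq \HCOperator^{\,m}_{\?T,\?B}(\topLatticeGuessFunctions{B})(s)$ for every $s\in T$. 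It therefore suffices to prove the claim when $\?A = \?T$ and $h$ is the identity. Concretely, we must show: if $b \in \HCOperator^{\,m}_{\?T,\?B}(\topLatticeGuessFunctions{B})(t)$ and $\?T$ has height at most $m$ from $t$, then some homomorphism $\?T\to\?B$ maps $t$ to $b$.

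We prove this tree statement by induction on $m$, maintaining a slightly stronger invariant. For a vertex $s$ at distance $d$ from the root, let $\?T_s$ be the subtree hanging below $s$, of height at most $m-d$. The invariant is that every $b'\in\HCOperator^{\,m-d}(\topLatticeGuessFunctions{B})(s)$ extends to a homomorphism $\?T_s\to\?B$ with $s\mapsto b'$. Since the iterates decrease, $\HCOperator^{\,m}\subsumed\HCOperator^{\,m-d}$, so we may use either bound as convenient.

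\textbf{Base case.} If the height is $0$, then $\?T$ has no hyperedges. In that case the map sending $t$ to $b$ (and anything else arbitrarily) is a homomorphism.

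\textbf{Inductive step.} Consider a vertex $s$ and a candidate value $b'\in\HCOperator^{\,j}(\topLatticeGuessFunctions{B})(s)$. For each hyperedge $\langle a_1,\dots,a_{k-1}\rangle\in\adjacency{s}{\?T}{\+R}{i}$ going downward from $s$, the definition of $\HCOperator$ yields $b_\ell\in\HCOperator^{\,j-1}(\topLatticeGuessFunctions{B})(a_\ell)$ with $\langle b_1,\dots,b_{k-1}\rangle\in\adjacency{b'}{\?B}{\+R}{i}$. The induction hypothesis then extends each $a_\ell\mapsto b_\ell$ to a homomorphism of its subtree. Strong acyclicity guarantees these subtrees are pairwise disjoint and meet only through the chosen hyperedges. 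In particular, each hyperedge incident to $s$ is used once, and no element appears twice in a hyperedge tuple. Hence the union of the partial maps is well defined and is a homomorphism.

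The main obstacle is bookkeeping of this kind of disjointness for hyperedges containing repeated elements or several elements of the same depth. I would handle it by working in the "incidence graph", which is a tree. In that tree, each non-root vertex $a_\ell$ of a hyperedge has depth exactly one more than $s$, so its subtree height budget is $j-1$. With this indexing, the step count of exactly $m$ matches the height.
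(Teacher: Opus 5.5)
Your route is the paper's: transfer from $\?A$ to $\?T$ by applying \Cref{prop:hyperedge-consistency-antimonotonicity} to $h$, then induct over the rooted tree, extending values hyperedge by hyperedge. That part is fine.

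\textbf{First gap: unary hyperedges at the leaves.} The false step is your base case, and more generally the treatment of vertices at depth exactly $m$, whose budget is $\HCOperator^{\,0}(\topLatticeGuessFunctions{B})=\topLatticeGuessFunctions{B}$. A height-$0$ tree need not be edgeless. A unary hyperedge $\langle s\rangle$ does not change distances, and $\sigma$ may contain unary predicates; the paper uses them constantly, e.g.\ the $\unarypred{b}$ of $\marked{\?B}$. Such a hyperedge is only checked when one more application of $\HCOperator$ is available at $s$. Consequently the claim as literally stated fails. Take $\?A=\?T$ to be a single vertex $t$ with $\+P(t)$, and $\?B$ a single vertex $b\notin\+P(\?B)$. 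Then $m=0$, no homomorphism sends $t$ to $b$, yet $b\in\HCOperator^{\,0}(\topLatticeGuessFunctions{B})(t)=B$. The same happens at height $1$: take the edge $t\to s$ with $\+P(s)$, against the edge $b\to b'$ with $\+P(\?B)=\emptyset$. Your induction goes through verbatim with budget $m+1-d$ at depth $d$. That proves $b\notin\HCOperator^{\,m+1}_{\?A,\?B}(\topLatticeGuessFunctions{B})(a)$, which is all the rest of the lemma needs (replace $n(\?B)$ by $n(\?B)+1$).

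\textbf{Second gap: repeated entries.} You assert that strong acyclicity forbids an element occurring twice in a tuple. This does not follow from the paper's definition, where the incidence graph is a simple graph with an edge whenever $a\in\bar h$. Under that definition a single loop $\langle t,t\rangle$ is a $\sigma$-tree of height $0$. For repeated occurrences, $\HCOperator$ chooses independent witnesses for each occurrence, so the claim can fail for every bound. Against $\clique{2}$, nothing is ever removed from the loop's set although there is no homomorphism. Against $\transitiveTournament{2}$, which has finite duality, $1$ survives the first round, so even the $m+1$ bound fails. You therefore need the multigraph convention (one incidence edge per occurrence, as in Larose, Loten and Tardif), which excludes repeated entries. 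Under that convention, together with the $m+1$ bound, your disjointness argument is correct.

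The paper's one-line induction has exactly the same blind spots, so these are defects of the statement rather than of your strategy. Still, your base case as written is a false step.
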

	We can prove by induction on $\?T$ that if there are no "homomorphism" from $\?T$ to $\?B$
	that can map $t$ to $b$, then $b \not\in \HCOperator^{\,m}_{\?T,\?B}(\topLatticeGuessFunctions{B})(t)$ where $m$ is the "height@@struct" of $\?T$ rooted at $t$.
	Then by \Cref{prop:hyperedge-consistency-antimonotonicity}, 
	$\HCOperator^{\,m}_{\?A,\?B}(\topLatticeGuessFunctions{B})(a) \subseteq
	\HCOperator^{\,m}_{\?T,\?B}(\topLatticeGuessFunctions{B})(t)$ and so
	$b \not\in \HCOperator^{\,m}_{\?A,\?B}(\topLatticeGuessFunctions{B})(a)$.

	\begin{claim}%
		\hspace{-.9em}
		\AP\label{claim:hyperedge-consistency-uniform-convergence-no-hom}
		If $\?A \nothomto \?B$, then $\HCOperator^{\,n(\?B)}_{\?A,\?B}(\topLatticeGuessFunctions{B})(a) = \emptyset$ for some $a \in A$.
	\end{claim}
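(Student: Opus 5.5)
The plan is to derive this claim from \Cref{claim:hyperedge-consistency-uniform-convergence-tree-witnesses}. The tree witness will come from the finite dual of $\?B$ together with the fact that finite duality implies tree duality (\Cref{prop:finite-duality-implies-tree-duality}). Assume $\?A \nothomto \?B$. By \Cref{prop:de-bruijn-erdos}, some finite substructure of $\?A$ does not map to $\?B$, so by well-foundedness it contains a "critical obstruction" $\?D$ of $\?B$ together with a homomorphism $h\colon \?D \to \?A$. By \Cref{prop:finite-duality-iff-critical-obstructions} there are finitely many such $\?D$, which is why $n(\?B)$ is well defined. The goal is then to exhibit some $a \in A$ such that every $b \in B$ is killed within $n(\?B)$ steps.

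The natural route is to exploit criticality directly rather than passing through arbitrary trees. Fix any element $d$ of $\?D$ and any $b \in B$. Since $\?D \nothomto \?B$, in particular no homomorphism $\?D \to \?B$ sends $d$ to $b$. Because $\?B$ has tree duality, \Cref{prop:hyperedge-consistency-tree-duality} applied to the finite structure $\?D$ gives $\HCFixpoint{\?D}{\?B}(d') = \emptyset$ for some $d'$. That statement gives no bound on the number of steps, however, so I will instead unfold $\?D$ into a tree. Let $\?T$ be the ``universal cover'' of $\?D$ rooted at $d$, truncated at depth $r$. It is a "$\sigma$-tree" whose root $t$ maps to $d$ via a homomorphism into $\?D$, hence into $\?A$ via $h$, and its height is $r$. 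Moreover, $\HCOperator^{\,r}_{\?T,\?B}(\topLatticeGuessFunctions{B})(t) = \HCOperator^{\,r}_{\?D,\?B}(\topLatticeGuessFunctions{B})(d)$, because hyperedge consistency only sees the unfolding. So it suffices to show that on a critical obstruction $\?D$, iterating $\HCOperator$ for $\mathrm{diam}(\?D)$ steps empties some point, and then to transfer the result to $a \defeq h(d)$ via \Cref{prop:hyperedge-consistency-antimonotonicity}.

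The main obstacle is exactly this step: bounding the convergence time on $\?D$ by its diameter. I would proceed as follows. By tree duality, $\HCFixpoint{\?D}{\?B}$ is empty somewhere. Suppose no point were emptied after $\mathrm{diam}(\?D)$ steps. Using criticality, every proper substructure maps to $\?B$, so \Cref{prop:existence-homomorphism-implies-lowerbound-HC} gives a large supply of values surviving at each point. The idea is to argue that once values are removed, the removal propagates from a point to every other point of $\?D$ within distance-many steps, so an emptying must occur within $\mathrm{diam}(\?D)$ steps. If this direct argument resists, the fallback is to enlarge the constant: take $n(\?B)$ to be instead the maximum over critical obstructions $\?D$ of the (finite) number of steps after which $\HCOperator^{\,k}_{\?D,\?B}(\topLatticeGuessFunctions{B})$ has an empty value. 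This maximum is finite because there are finitely many $\?D$, each finite. The claim then follows immediately from \Cref{prop:hyperedge-consistency-antimonotonicity} applied to $h\colon \?D \to \?A$, since $\HCOperator^{\,k}_{\?A,\?B}(\topLatticeGuessFunctions{B})(h(d')) \subseteq \HCOperator^{\,k}_{\?D,\?B}(\topLatticeGuessFunctions{B})(d') = \emptyset$. This fallback avoids the diameter bound entirely and still yields uniformity, which is all the lemma needs.
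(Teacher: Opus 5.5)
\textbf{There is a gap with respect to the claim as stated.} Your main route rests on a false step: that it suffices to show hyperedge consistency empties some point of a critical obstruction $\?D$ within $\mathrm{diam}(\?D)$ steps. Take $\?B = \transitiveTournament{2}$, which has finite duality with dual $\{\pathGraph{3}\}$, and let $\?A = \?D$ be the directed $3$-cycle. This $\?D$ is a critical obstruction of diameter $1$. It is the only critical obstruction contained in $\?A$, so your procedure is forced to pick it. Yet $\HCOperator^{\,1}_{\?D,\?B}(\topLatticeGuessFunctions{B})$ is the constant map $\{1\}$, since every vertex needs both a predecessor and a successor, and emptiness only appears at step $2$.

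The unfolding does not rescue this. A truncated unfolding of $\?D$ is a $\sigma$-tree but need not be an obstruction: the depth-$1$ unfolding here maps to $\?B$ with its root sent to $1$. So the identity $\HCOperator^{\,r}_{\?T,\?B}(\topLatticeGuessFunctions{B})(t) = \HCOperator^{\,r}_{\?D,\?B}(\topLatticeGuessFunctions{B})(d)$ gets you nowhere. Likewise, homomorphisms from proper substructures of $\?D$ give no lower bound on the iterates over $\?D$, because monotonicity goes the other way.

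Your fallback is a correct argument, but it proves the claim for a different constant $N$: the worst stopping time of hyperedge consistency over the finitely many critical obstructions. The claim is about $n(\?B)$, the maximal diameter of a critical obstruction. The missing idea is the one you explicitly set aside: choose the obstruction to be a $\sigma$-tree. By tree duality (\Cref{prop:finite-duality-implies-tree-duality}), some $\sigma$-tree $\?T \nothomto \?B$ maps to $\?A$. A critical sub-obstruction of $\?T$ is connected, hence again a $\sigma$-tree. So without loss of generality $\?T$ is a critical obstruction, and its height from any root $t$ is at most $\mathrm{diam}(\?T) \leq n(\?B)$. Since no homomorphism $\?T \to \?B$ exists at all, \Cref{claim:hyperedge-consistency-uniform-convergence-tree-witnesses} gives $b \notin \HCOperator^{\,n(\?B)}_{\?A,\?B}(\topLatticeGuessFunctions{B})(a)$ for every $b$, where $a$ is the image of $t$. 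In the example above, this is $\pathGraph{3}$ wrapped around the $3$-cycle.

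That said, you are right that the enclosing lemma only needs uniformity. With $n \defeq \max(N, n(\?B))$, both this claim and \Cref{claim:hyperedge-consistency-uniform-convergence-hom} hold, since the iterates decrease; the latter genuinely needs tree heights bounded by $n(\?B)$. Your $N$ is in fact the more robust constant. When unary relations sit at the leaves, the tree-witness bound needs the height plus one. For example, a single point in $\+P$, against a one-point target where $\+P$ is empty, gives $n(\?B) = 0$, while $\HCOperator^{\,0}$ is the top element. So the stated claim is itself off by one in that degenerate case.
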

	Indeed, since $\?A \nothomto \?B$, there exists a "critical obstruction" $\?T$ of $\?B$
	"st" $\?T \homto \?A$.
	By \Cref{prop:finite-duality-implies-tree-duality}, "wlog" $\?T$ is a "$\sigma$-tree".
	Since $\?T \nothomto \?B$, for any $t \in T$ and $a\in A$
	"st" $t$ is mapped on $a$, we have by \Cref{claim:hyperedge-consistency-uniform-convergence-tree-witnesses} that for any $b\in B$,
	$b \not\in \HCOperator^{\,m}_{\?A,\?B}(\topLatticeGuessFunctions{B})(a)$,
	where $m$ is the "height@@struct" of $\?T$ when rooted at $t$.
	And hence $\HCOperator^{\,m}_{\?A,\?B}(\topLatticeGuessFunctions{B})(a) = \emptyset$.
	Since $m \leq n(\?B)$, this concludes the proof of the first part of
	\eqref{item:hc-uniform-finite-structures}. We will now handle the more tricky case of
	$\?A \homto \?B$.
	
	\begin{claim}%
		\hspace{-.9em}\footnote{In fact, Larose, Loten \& Tardif implicitly showed a weaker result, by adapting
			\cite[Theorem 21]{FederVardi1998ComputationalStructure}, in
			\cite[Proof of Lemma 3.2]{LaroseLotenTardif2007CharacterisationFOCSP}.
			It states that
			if $b \not\in \HCFixpoint{\?A}{\?B}(a)$, then there
			exists a "$\sigma$-tree" $\?T$ "st" there is a "homomorphism" from $\?T$ to $\?A$
			that maps some element $t \in T$ to $a$, but no "homomorphism" from $\?T$ to $\?B$
			can map $t$ to $b$. Moreover, the "height@@struct" of their "$\sigma$-tree" $\?T$ is linearly 
			bounded by the least $n\in \N$ "st"
			$b\not\in \HCOperator^{\,n}_{\?A,\?B}(\topLatticeGuessFunctions{B})(a)$.
			This property is true without any duality assumption on $\?B$, and only follows
			from the inner workings of the "hyperedge consistency algorithm@@finite".}%
		\AP\label{claim:hyperedge-consistency-uniform-convergence-hom}
		Let $a\in A$ and $b\in B$.
		If $\?A \homto \?B$ and $b \not\in \HCFixpoint{\?A}{\?B}(a)$,
		then $b \not\in \HCOperator^{\,n(\?B)}_{\?A,\?B}(\topLatticeGuessFunctions{B})(a)$.
	\end{claim}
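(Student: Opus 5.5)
The plan is to reduce the claim to \Cref{claim:hyperedge-consistency-uniform-convergence-tree-witnesses}. We produce a tree witness for $b \not\in \HCFixpoint{\?A}{\?B}(a)$ and then shrink it, using finite duality, to a tree whose height is bounded by a constant depending only on $\?B$.

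The first step is to obtain \emph{some} tree witness. Following the argument attributed in the footnote to Larose, Loten and Tardif, we unfold the run of the "hyperedge consistency algorithm@@finite". Let $m_0$ be least with $b \not\in \HCOperator^{\,m_0}(\topLatticeGuessFunctions{B})(a)$. We build by induction on $m_0$ a "$\sigma$-tree" $\?T$, rooted at $t$, with a "homomorphism" $\?T \to \?A$ sending $t$ to $a$ and such that no "homomorphism" $\?T \to \?B$ sends $t$ to $b$. Concretely, $b$ is removed at step $m_0$ because some "adjacency" tuple $\langle a_1,\dotsc,a_{k-1}\rangle \in \adjacency{a}{\?A}{\+R}{i}$ admits no compatible tuple in $\adjacency{b}{\?B}{\+R}{i}$ with $b_j \in \HCOperator^{\,m_0-1}(\topLatticeGuessFunctions{B})(a_j)$. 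We attach one fresh "hyperedge" at $t$, and at each new vertex $t_j$ we glue, for every $b' \in B$ excluded at step $m_0-1$ from $a_j$, a fresh copy of the inductively obtained tree for $(a_j,b')$. The result is strongly acyclic, and by construction any "homomorphism" sending $t$ to $b$ fails somewhere.

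The second step encodes the pointed condition as a plain obstruction. Let $\?T_b$ be the "$\extendedSignature{\sigma}{\?B}$-structure" obtained from $\?T$ by interpreting $\unarypred{b}$ as $\{t\}$ and all other unary predicates as empty. Then $\?T_b \nothomto \marked{\?B}$. Note that every substructure of $\?T$ maps to $\?B$, since $\?T \homto \?A \homto \?B$. Hence any "critical obstruction" $\?D$ of $\marked{\?B}$ contained in $\?T_b$ must contain the marked vertex $t$, and by criticality it is connected. It is therefore a sub-$\sigma$-tree of $\?T$ rooted at $t$, and no "homomorphism" from it to $\?B$ sends $t$ to $b$. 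Its height at $t$ is at most its "diameter". Applying \Cref{claim:hyperedge-consistency-uniform-convergence-tree-witnesses} to $\?D$, composed with the inclusion into $\?T$ and then into $\?A$, gives $b \not\in \HCOperator^{\,d}_{\?A,\?B}(\topLatticeGuessFunctions{B})(a)$, where $d$ bounds the diameters of "critical obstructions" of $\marked{\?B}$. By monotonicity in the number of iterations, the same holds for any larger number of steps.

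The main obstacle is bounding $d$, that is, showing that $\marked{\?B}$ has "finite duality" with diameters controlled by $\?B$. My first attempt: "Atserias' theorem" makes $\HomAll{\?B}$ "first-order definable"; the first-order reduction of \Cref{prop:idempotent-core-preserves-csp-complexity} (applied to the "core" of $\?B$) transfers this to $\HomAll{\marked{\core{\?B}}}$; and "Atserias' theorem" in reverse gives finite duality. A subtlety is that $\?B$ need not be a "core", whereas hyperedge consistency runs against $\?B$ itself. To handle this, I would either rerun the argument with markings only by the single element $b$, or replace $\?B$ by its "core" via a retraction $r$ while checking that $b \mapsto r(b)$ interacts correctly with $\HCOperator$. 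If $n(\?B)$ must be the diameter bound for $\marked{\?B}$ rather than for $\?B$, I would redefine $n(\?B)$ as the maximum of the two. This is harmless for \Cref{lem:hyperedge-consistency-uniform-convergence}, which only needs some uniform bound. Alternatively, one can try a direct argument in $\Phi(\?T_b)$: a critical obstruction of $\?B$ there must meet $t$'s glued copy of $\?B$, which also bounds its depth in $\?T$ by $n(\?B)+1$.
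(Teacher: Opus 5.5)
Your main argument is sound once $\mathbf{B}$ is assumed to be a core. That is exactly the hypothesis the paper's own proof uses tacitly: it derives that the glued structure $\mathbf{C}$ has no homomorphism to $\mathbf{B}$ from rigidity, via Proposition~\ref{prop:finite-duality-implies-rigid}, which is stated only for cores. Under this hypothesis your ``first attempt'' is fine: $\mathbf{B}^{\dagger}$ has finite duality by Atserias' theorem, then Proposition~\ref{prop:idempotent-core-preserves-csp-complexity}, then Atserias' theorem again.

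Abandon the fallback plans for non-core $\mathbf{B}$ (marking only $b$, or retracting onto the core). The claim is false there, so no patch can work. Take $\mathbf{B}=\mathbf{T}_2\uplus\mathbf{P}_2$, which is homomorphically equivalent to $\mathbf{T}_2$ and so has finite duality. Write $q_0\to q_1\to q_2$ for the copy of $\mathbf{P}_2$, and let $\mathbf{A}=\mathbf{Z}^{(n)}_2$ with $n\geq 1$ (which maps to $\mathbf{T}_2$), $a=a'_0$ and $b=q_0$.
\begin{itemize}
\item The $q$-values evolve exactly as in hyperedge consistency against $\mathbf{P}_2$, which eventually empties every vertex (Example~\ref{ex:zigzag-HC-P2}). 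Hence $q_0\notin H^{*}_{\mathbf{A},\mathbf{B}}(a'_0)$.
\item Yet the map $g$ given by $a'_0\mapsto q_0$, $a_i\mapsto q_1$, $b_i\mapsto q_2$ is a homomorphism from $\mathbf{A}$ minus $b'_n$. An induction on $m$ shows $g(x)\in\mathcal{HC}^{m}(\Lambda_B)(x)$ for every vertex $x$ at distance more than $m$ from $b'_n$.
\item So $q_0$ survives at $a'_0$ for a number of rounds linear in $n$, and no uniform bound exists.
\end{itemize}
The same example shows that the implication (1)$\Rightarrow$(2) of Lemma~\ref{lem:hyperedge-consistency-uniform-convergence} also needs a core. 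This is harmless for the decidability application, where one may replace $\mathbf{B}$ by its core.

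For core $\mathbf{B}$, your route genuinely differs from the paper's.
\begin{itemize}
\item The paper glues $\mathbf{A}$ itself to $\mathbf{B}$ along $a=b$, maps a critical obstruction of $\mathbf{B}$ into the result, and uses the part landing in $\mathbf{A}$ as input to the tree-witness claim. Its weak spot is that this part must be a $\sigma$-tree, which the paper only asserts, although the image can wind around cycles of $\mathbf{A}$.
\item You first unfold into a finite tree witness $\mathbf{T}$, so connected substructures are automatically $\sigma$-trees. You then use $\mathbf{A}\to\mathbf{B}$ to force the mark into a critical obstruction of $\mathbf{B}^{\dagger}$. This sidesteps the tree issue entirely.
\end{itemize}
The cost is the detour through finite duality of $\mathbf{B}^{\dagger}$. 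You also get a constant (the largest diameter of a critical obstruction of $\mathbf{B}^{\dagger}$) that need not equal $n(\mathbf{B})$, so strictly you prove a variant of the statement.

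You can recover the stated bound by merging the two ideas. This is essentially your option (c), done most simply by identifying $t$ with $b$ as the paper does, but in $\mathbf{T}$ rather than in $\mathbf{A}$.
\begin{itemize}
\item By rigidity, $\mathbf{T}$ glued to $\mathbf{B}$ along $t=b$ has no homomorphism to $\mathbf{B}$. So some critical obstruction $\mathbf{O}$ of $\mathbf{B}$ maps into it by some $f$.
\item $\mathbf{O}$ is connected, and its image can lie neither entirely in $\mathbf{T}$ (since $\mathbf{T}\to\mathbf{A}\to\mathbf{B}$) nor entirely in $\mathbf{B}$. Hence the image passes through $t=b$.
\item The images of the hyperedges of $\mathbf{O}$ landing in $\mathbf{T}$ form a subtree through $t$ of height at most $n(\mathbf{B})$. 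Indeed, each of its vertices is joined to $t$ by the image of a path of $\mathbf{O}$ of length at most $n(\mathbf{B})$.
\item A homomorphism from this subtree to $\mathbf{B}$ sending $t$ to $b$ would combine with $f$ on the $\mathbf{B}$-side into a homomorphism $\mathbf{O}\to\mathbf{B}$, which is impossible.
\end{itemize}
The tree-witness claim then gives exactly $b\notin\mathcal{HC}^{n(\mathbf{B})}(\Lambda_B)(a)$.
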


	\begin{figure}
		\centering
		\begin{tikzpicture}
			\input{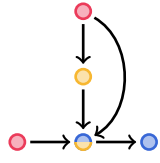}
		\end{tikzpicture}
		\caption{
			\AP\label{fig:proof-of-hyperedge-consistency-uniform-convergence-hom}
			Construction of $\?C$ as in the proof of \Cref{claim:hyperedge-consistency-uniform-convergence-hom} when $\?A$
			is the "$2$-path" and $\?B$ is the "$2$-transitive tournament".
		}
	\end{figure}
	To prove this claim, we use a construction that is similar to \Cref{prop:idempotent-core-preserves-csp-complexity}.
	Fix $a\in A$ and $b\in B$ "st" $b \not\in \HCFixpoint{\?A}{\?B}(a)$.
	Let $\?C$ be defined by first taking the disjoint union of $\?A$ and $\?B$,
	and then identifying $a$ and $b$.
	Note that for any $a' \in A$, we have:
	\[
		\adjacency{a'}{\?C}{\+R}{i} =
		\begin{cases*}
			\adjacency{a'}{\?A}{\+R}{i} & \text{ if $a' \neq a$,}\\
			\adjacency{a}{\?A}{\+R}{i} \cup \adjacency{b}{\?B}{\+R}{i}  & \text{ if $a' = a = b$.}
		\end{cases*}	
	\]
	The goal of this construction
	is that, when running the "hyperedge consistency algorithm@@finite" on $\?C$,
	vertex $b$ will be removed as a potential image for $a$ since $b \not\in \HCFixpoint{\?A}{\?B}(a)$, but because of the copy of $\?B$ included in $\?C$, any "homomorphism" from $\?C$ to $\?B$ must map $a$ to $b$. See \Cref{fig:proof-of-hyperedge-consistency-uniform-convergence-hom}
	for an example.

	\begin{claim}
		\hspace{-.9em}
		\AP\label{claim:hyperedge-consistency-uniform-convergence-hom-union}
		$\?C \nothomto \?B$.
	\end{claim}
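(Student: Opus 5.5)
The plan is to argue by contradiction, using the copy of $\?B$ inside $\?C$ to pin down the image of the identified vertex $a = b$. We work under the assumption, made without loss of generality as in the rest of this part, that $\?B$ is a "core". A structure and its "core" are "homomorphically equivalent", so replacing $\?B$ by $\core{\?B}$ changes neither "finite duality" nor the existence of "homomorphisms" into it. If the ambient proof has not already made this reduction, it has to be made before introducing $\?C$, so that $\HCFixpoint{\?A}{\?B}$ is computed with respect to the "core".

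Suppose that $g\colon \?C \to \?B$ is a "homomorphism". First, the copy of $\?B$ is a "substructure" of $\?C$, since identifying $b$ with $a$ only adds "hyperedges" around that vertex. Hence $\restr{g}{B}$ is a "homomorphism" from $\?B$ to itself. Because $\?B$ is a "core", \Cref{prop:core-iff-hom-are-auto} shows that $h \defeq \restr{g}{B}$ is an "automorphism" of $\?B$; by \Cref{prop:finite-duality-implies-rigid} it is even the identity, but invertibility is all we need. Second, $\?A$ is also a "substructure" of $\?C$, so $\restr{g}{A}$ is a "homomorphism" $\?A \to \?B$. We then set $f \defeq h^{-1} \circ \restr{g}{A}$. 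This is again a "homomorphism" $\?A \to \?B$, and since $a$ and $b$ are the same vertex of $\?C$, it satisfies $f(a) = h^{-1}(g(a)) = h^{-1}(g(b)) = h^{-1}(h(b)) = b$.

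Finally, \Cref{prop:existence-homomorphism-implies-lowerbound-HC} applied to $f$ yields $b = f(a) \in \HCFixpoint{\?A}{\?B}(a)$. This contradicts the choice of $a$ and $b$, so $\?C \nothomto \?B$.

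The only delicate point is the "core" assumption. Without it, $\restr{g}{B}$ could be a non-injective retraction, and $a$ might be sent to some $b' \neq b$ with $b' \in \HCFixpoint{\?A}{\?B}(a)$, which gives no contradiction. So the step to check carefully is that the surrounding proof of \eqref{item:hc-uniform-finite-duality} $\Rightarrow$ \eqref{item:hc-uniform-finite-structures} may indeed reduce to the case where $\?B$ is a "core", i.e.\ that the uniform bound on the number of steps of $\HCOperator$ transfers between $\?B$ and $\core{\?B}$. If it does not transfer directly, I would instead restrict the claim to "cores" and derive the general case by a comparison of the operators $\HCOperator_{\!\?A,\?B}$ and $\HCOperator_{\!\?A,\core{\?B}}$ via \Cref{prop:hyperedge-consistency-antimonotonicity}-style monotonicity arguments.
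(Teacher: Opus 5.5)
Your proof is correct and is essentially the paper's argument. The paper likewise restricts a putative homomorphism $\?C \to \?B$ to the copy of $\?B$, deduces that it fixes $b$, and concludes with \Cref{prop:existence-homomorphism-implies-lowerbound-HC} applied to the restriction to $\?A$. It gets $b$ fixed via rigidity (\Cref{prop:finite-duality-implies-rigid}), whereas you compose with the inverse of the automorphism from \Cref{prop:core-iff-hom-are-auto}, so this step does not even need finite duality. Since \Cref{prop:finite-duality-implies-rigid} is stated for cores, the paper silently relies on exactly the core assumption you make explicit.

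Be aware that this assumption is a genuine hypothesis, not a reduction without loss of generality. The claim is false for $\?B$ an edge $0 \to 1$ plus an isolated vertex $2$, with $\?A$ a single edge $a \to a'$ and $b = 2$: then $\?C$ is just two disjoint edges, which map to $\?B$. The uniform bound of \Cref{lem:hyperedge-consistency-uniform-convergence} also fails for $\transitiveTournament{2} \disunion \pathGraph{2}$ on the inputs $\zigzag{n}{2}$, because the elements of the $\pathGraph{2}$ copy are removed only after a number of steps growing with $n$. So your fallback of deriving the non-core case by monotonicity cannot work; the lemma should simply be read for cores.
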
 

	Indeed, if there was a "homomorphism" from $\?C$ to $\?B$, say $f$, then
	$\restr{f}{\?B}$ would be a "homomorphism" from $\?B$ to $\?B$.
	Since $\?B$ has "finite duality", it is "rigid" by \Cref{{prop:finite-duality-implies-rigid}}, and so in particular $f(b) = b$.
	Hence, we would get that $\restr{f}{A}$ is a "homomorphism" from $\?A$ to $\?B$
	which sends $a$ to $b$, and so by \Cref{prop:existence-homomorphism-implies-lowerbound-HC}
	we would have $b \in \HCFixpoint{\?A}{\?B}(a)$, which is a contradiction.

	So, since $\?C \nothomto \?B$, there exists a "critical obstruction" $\?T$ of $\?B$
	"st" there is a "homomorphism" $f$ from $\?T$ to $\?C$. 
	We claim that $a=b$ must be in the image of $f$. Indeed, since $a=b$ is the only vertex
	of $\?A$ that is adjacent to $\?B$ in $\?C$, and since $\?T$ is "connected" as a
	"critical obstruction",
	we would otherwise get that either $\?T \homto \?B$---contradicting that $\?T$ is a "critical obstruction" of $\?B$---or that $\?T \homto \?A$---contradicting that $\?A \homto \?B$.
	
	And so, there exists $t \in T$ "st" $f(t) = a = b$.
	We let $\?U$ be the "quotient structure" of $\?T$ by the congruence induced by $f$,
	namely the smallest congruence containing $\{\tup{t_1,t_2} \mid f(t_1) = f(t_2)\}$.
	Then, we let $\?U_{\?A}$ be the "substructure" of $\?U$
	"induced@@substructure" by the elements that are sent via $f$ on $\?A$.

	\begin{claim}
		\hspace{-.9em}
		\AP\label{claim:hyperedge-consistency-uniform-convergence-trees-are-trees}
		$\?U_{\?A}$ is a "$\sigma$-tree".
	\end{claim}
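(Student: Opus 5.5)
The plan is to reduce to the case where $f$ is injective, so that $\?U \isom \?T$. Then $\?U_{\?A}$ is simply the substructure of the $\sigma$-tree $\?T$ induced by $f^{-1}[A]$, and the only remaining work is to show that this induced substructure is connected.

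\textbf{Step 1: reduce to $f$ injective.} The image $f[\?T]$ is a finite substructure of $\?C$. It is an obstruction of $\?B$, since $f[\?T] \homto \?B$ would give $\?T \homto \?B$. Hence $f[\?T]$ contains a critical obstruction $\?T'$ of $\?B$ as a substructure, and the inclusion $\?T' \hookrightarrow \?C$ is injective. By the argument already given for $\?T$, $\?T'$ is connected, meets the vertex $a = b$, and cannot lie entirely inside the copy of $\?A$ or of $\?B$. For the tree property, recall that $\?B$ has finite duality, so it has tree duality (\Cref{prop:finite-duality-implies-tree-duality}). Hence $\?T'$ maps homomorphically from some $\sigma$-tree obstruction, and the image of that tree is a substructure of $\?T'$ that is still an obstruction. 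Since $\?T'$ is critical, this image is all of $\?T'$. I will then use the standard fact (Ne\v{s}et\v{r}il--Tardif, Larose--Loten--Tardif) that the critical obstructions of a structure with finite duality are themselves $\sigma$-trees. I expect invoking this fact cleanly, rather than only ``homomorphically equivalent to a tree'', to be the main obstacle. If the citation only gives homomorphic equivalence, I would instead argue directly that a homomorphic image of a tree which is critical and a core must be the tree itself. So we may replace $\?T$ by $\?T'$ and $f$ by the inclusion. The congruence is then trivial, $\?U = \?T$, and $\?U_{\?A}$ is the substructure of $\?T$ induced by $T_A \defeq f^{-1}[A]$.

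\textbf{Step 2: $\?U_{\?A}$ is strongly acyclic.} Every hyperedge of $\?C$ lies entirely in the copy of $\?A$ or entirely in the copy of $\?B$, and these two copies share only the vertex $a = b$. Therefore the hyperedges of $\?T$ are partitioned into $A$-hyperedges and $B$-hyperedges, and the induced substructure on $T_A$ contains exactly the $A$-hyperedges. Its incidence graph is therefore a subgraph of the incidence graph of $\?T$. Since the latter is acyclic, so is the former.

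\textbf{Step 3: $\?U_{\?A}$ is connected.} Take $u, v \in T_A$ and the unique path between them in the incidence graph of $\?T$. Suppose this path uses some $B$-hyperedge. The path starts and ends in $A$-vertices, and the only vertex common to $A$-hyperedges and $B$-hyperedges is (the preimage of) $a = b$. So the path must enter the $B$-part through that vertex and leave it through that same vertex. A path in a tree cannot revisit a vertex, which is a contradiction. Hence the path uses only $A$-hyperedges, and $u, v$ are connected in $\?U_{\?A}$. Combined with Step 2, $\?U_{\?A}$ is a $\sigma$-tree.
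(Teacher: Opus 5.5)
\textbf{There is a genuine gap in Step 1.} Steps 2 and 3 are fine once $f$ is injective \emph{and} $\?T$ is a $\sigma$-tree, but Step 1 cannot give you both at once.

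The ``standard fact'' you invoke is false: a critical obstruction of a structure with finite duality need not be a $\sigma$-tree. For instance, $\transitiveTournament{1}$ has finite duality, with dual $\{\pathGraph{2}\}$. Yet the directed $2$-cycle $u\to v\to u$ is a critical obstruction of it, and its incidence graph contains the cycle $u,(u,v),v,(v,u),u$. This $2$-cycle is also a core and a strong-onto image of the tree $\pathGraph{2}$, so your fallback (``a critical core image of a tree is the tree itself'') fails too.

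What Ne\v{s}et\v{r}il and Tardif give is a finite dual made of trees. Critical obstructions are only strong-onto homomorphic images of such trees. So you have two options, but not both:
\begin{itemize}
\item You may choose $\?T$ to be a $\sigma$-tree: a critical substructure of a tree from a finite tree dual is connected, hence a tree.
\item You may make $f$ injective by passing to a critical substructure of $f[\?T]\subseteq\?C$. But that substructure lives inside $\?C$ and inherits its cycles.
\end{itemize}

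\textbf{This failure occurs in the exact setting of the claim.} Take the following data.
\begin{itemize}
\item $\?O$ is the oriented path $w_{-1}\to w_0\to w_1\to w_2\leftarrow w_3\to w_4\to w_5$; one checks it is a rigid core.
\item $\?B$ is the core of a finite structure with dual $\{\?O\}$ (Ne\v{s}et\v{r}il--Tardif).
\item $\?A$ is the transitive triangle $x\to y\to z$, $x\to z$. Then $\?A\to\?B$, because $\?O$ contains a directed path of length $3$ and $\?A$ is acyclic with no such path.
\item $a=x$, and $b$ is any vertex of $\?B$ with an in-neighbour $b'$.
\end{itemize}
Then $b\notin\HCFixpoint{\?A}{\?B}(x)$ by the tree-witness claim. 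Indeed, $\?O$ minus $w_{-1}$ maps to $\?A$ with $w_0\mapsto x$, but not to $\?B$ with $w_0\mapsto b$.

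Now define $f$ by $w_{-1}\mapsto b'$, $w_0,w_3\mapsto x$, $w_1,w_4\mapsto y$, $w_2,w_5\mapsto z$. This is a homomorphism from the critical $\sigma$-tree obstruction $\?O$ to $\?C$. Its image $\{b'\to x\to y\to z,\ x\to z\}$ is itself critical: deleting any edge or vertex leaves a digraph into which $\?O$ does not map. Hence your $\?T'$ is exactly this image, which contains the triangle and is not a $\sigma$-tree.

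The same example shows more. Here $\?U_{\?A}$ is the triangle, so the claim as literally stated (with $\?U$ the quotient of $\?T$ by $\ker f$) is false for this admissible choice of $(\?T,f)$. The paper states the claim without proof, and its next claim actually works with $\?T_{\?A}$, the part of the tree $\?T$ itself lying over $\?A$.

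Your Step 2 argument, applied to $\?T$ rather than to its image, does show that this part is strongly acyclic. However, it need not be connected, and it may contain several vertices lying over $a$ (here $w_0$ and $w_3$). The downstream argument must handle both issues.
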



	\begin{claim}
		\hspace{-.9em}\AP\label{claim:hyperedge-consistency-uniform-convergence-tree}
		There is a "homomorphism" from $\?T_{\?A}$ to $\?A$ that maps $t$ to $a$,
		but no "homomorphism" from $\?T_{\?A}$ to $\?B$ that maps $t$ to $b$.
	\end{claim}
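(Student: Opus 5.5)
The idea is to exploit the factorisation of $f$ through the quotient $\?U$. This makes the vertex of $\?U$ lying over the glued point $a=b$ unique, so that $\?U$ splits cleanly into an ``$\?A$-part'' $\?U_{\?A}$ and a ``$\?B$-part'' sharing exactly one vertex. I read $\?T_{\?A}$ in the statement as $\?U_{\?A}$, and I write $t$ also for the class of $t$ in $\?U$. Let $\bar f\colon \?U \to \?C$ be the map induced by $f$. Since $\?U$ is the quotient of $\?T$ by the congruence generated by the kernel of $f$, $\bar f$ is well defined, is a homomorphism, and is injective on elements. In particular the class of $t$ is the \emph{only} element of $\?U$ sent to $a=b$. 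I will also use the basic fact that every hyperedge of $\?C$ is either a hyperedge of the copy of $\?A$ or a hyperedge of the copy of $\?B$, because $\?C$ is a disjoint union glued at a single vertex.

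\emph{Homomorphism $\?U_{\?A}\to\?A$ sending $t$ to $a$.} I take the restriction of $\bar f$ to $\?U_{\?A}$. By definition of $\?U_{\?A}$ it lands in $A$ and sends $t$ to $a$. For a hyperedge of $\?U_{\?A}$, its image under $\bar f$ is a hyperedge of $\?C$ all of whose entries lie in $A$. If it is an $\?A$-hyperedge we are done. Otherwise it is a $\?B$-hyperedge all of whose entries are $b$, i.e.\ a ``loop'' $\tup{b,\dotsc,b}$. This degenerate case is the one point needing care. If $\?B$ contained such a loop, $\?B$ being a core would make it a single looped vertex, and then the whole situation is trivial (every structure maps to $\?B$, contradicting $\?C \nothomto \?B$). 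So this case cannot occur.

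\emph{No homomorphism $\?U_{\?A}\to\?B$ sending $t$ to $b$.} Suppose for contradiction that $g\colon \?U_{\?A}\to\?B$ is a homomorphism with $g(t)=b$. Define $h\colon U\to B$ by $h(u)=g(u)$ if $u\in U_{\?A}$, and $h(u)=\bar f(u)$ otherwise; in the second case $\bar f(u)$ lies in the $\?B$-copy, minus the glued vertex. I claim $h$ is a homomorphism $\?U\to\?B$. Take a hyperedge $\bar u$ of $\?U$ and consider its image $\bar f(\bar u)$, which is a hyperedge of $\?C$.
\begin{itemize}
\item If $\bar f(\bar u)$ is an $\?A$-hyperedge, then all entries of $\bar u$ are in $U_{\?A}$. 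Since $U_{\?A}$ is an induced substructure, $\bar u$ is a hyperedge of $\?U_{\?A}$, and $g$ preserves it.
\item If $\bar f(\bar u)$ is a $\?B$-hyperedge, then every entry of $\bar u$ is either outside $U_{\?A}$, where $h=\bar f$, or lies over $a=b$. By injectivity of $\bar f$ an entry of the latter kind equals $t$, and there $h(t)=g(t)=b=\bar f(t)$. Hence $h(\bar u)=\bar f(\bar u)\in\+R(\?B)$.
\end{itemize}
Composing the quotient map $\?T\to\?U$ with $h$ then yields $\?T\homto\?B$. This contradicts the fact that $\?T$ is an obstruction of $\?B$.

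The main obstacle, and the reason the quotient $\?U$ is introduced at all, is the gluing consistency in the second part. With $\?T$ itself, several elements could be mapped onto $a=b$. In that case $g$ would have no reason to send all of them to $b$, and the patched map would break on the $\?B$-side hyperedges. The injectivity of $\bar f$ is exactly what removes this problem. The remaining care is the case analysis of hyperedges of $\?C$, including the degenerate looped case handled above.
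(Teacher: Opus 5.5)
\paragraph{Verdict.} Your second half is correct. Passing to $\mathbf U$ makes $\bar f$ injective, so $t$ is the \emph{only} element over the glued vertex, and that is exactly what makes the patched map $h$ well defined. The paper's own argument is written on $\mathbf T$ and checks well-definedness via ``$f(t')=b=g(t)$'', which tacitly needs this.

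\paragraph{The gap.} The gap is in the first half, where you dismiss the case of a $\mathbf B$-hyperedge $\langle b,\dots,b\rangle$. The inference ``$\mathbf B$ has such a loop, so as a core it is a single looped vertex'' is valid only when $\sigma$ has one predicate. Indeed, $\langle b,\dots,b\rangle\in\mathcal R(\mathbf B)$ makes the constant map to $b$ preserve $\mathcal R$, but not the other predicates. The case genuinely occurs, and then the statement is false when $\mathbf U_{\mathbf A}$ is an \emph{induced} substructure, as the paper defines it.

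\paragraph{Counterexample.} Take $\sigma=\{\mathcal R_{(2)},\mathcal S_{(2)}\}$ and $\mathbf B$ on $\{0,1\}$ with $\mathcal R(\mathbf B)=\{\langle 0,0\rangle\}$ and $\mathcal S(\mathbf B)=\{\langle 0,1\rangle\}$.
\begin{itemize}
\item $\mathbf B$ is a rigid core, since every endomorphism fixes the only $\mathcal S$-tuple.
\item $\mathbf B$ has finite duality: the three structures $\mathcal S(x,y)\wedge\mathcal S(y,z)$, $\mathcal S(x,y)\wedge\mathcal R(y,z)$ and $\mathcal S(x,y)\wedge\mathcal R(z,y)$ form a dual.
\item Let $\mathbf A$ consist of the single tuple $\mathcal S(x,a)$, and let $b=0$. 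Then $\mathbf A$ maps to $\mathbf B$. Also $0$ is removed from $a$ at the first step of hyperedge consistency, because $0$ has no $\mathcal S$-predecessor.
\item The critical obstruction $\mathbf T$ with tuples $\mathcal S(t_0,t_1)$ and $\mathcal R(t_1,t_2)$ maps to $\mathbf C$ via $t_0\mapsto x$ and $t_1,t_2\mapsto a$.
\item Hence $\mathbf U$ has tuples $\mathcal S(t_0,t)$ and $\mathcal R(t,t)$, and both its elements lie over $A$.
\end{itemize}
So $\mathbf U_{\mathbf A}=\mathbf U$, and it admits no homomorphism at all to $\mathbf A$, because $\mathcal R(\mathbf A)=\emptyset$. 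The paper's one-line ``restrict $f$'' argument overlooks the same case.

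\paragraph{Fix.} This case cannot be argued away; the definition has to change. Let $\mathbf U_{\mathbf A}$ keep the elements sent into $A$, but only those hyperedges $\bar u$ for which $\bar f(\bar u)$ is a hyperedge of the $\mathbf A$-copy. By injectivity of $\bar f$, this drops only hyperedges all of whose entries equal $t$. So it cannot disconnect anything or create cycles, and the tree claim used downstream is unaffected. With this definition your first half is immediate. Your second half goes through unchanged, except that in the first bullet $\bar u\in\mathcal R(\mathbf U_{\mathbf A})$ now holds by definition rather than by inducedness.
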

	The first point is trivial: it suffices to consider the restriction of $f$ to $T_A$.
	For the second point, assume by contradiction that there is a "homomorphism" $g$ from
	$\?T_{\?A}$ to $\?B$ that maps $t$ to $b$.
	Then the function
	\[
		t' \in T \mapsto \begin{cases*}
			g(t') & \text{ if $t' \in T_A$,}\\
			f(t') & \text{ if $t' \in T_B$,}
		\end{cases*}
	\]
	is well-defined---if $t' \in T_A \cap T_B$, then $f(t') = b = g(t)$---and is a "homomorphism"
	from $\?T$ to $\?B$. This contradicts that $\?T$ is a "critical obstruction" of $\?B$.
	And hence, no "homomorphism" from $\?T_{\?A}$ to $\?B$ can map $t$ to $b$.
	We then apply \Cref{claim:hyperedge-consistency-uniform-convergence-tree-witnesses}
	to get that $b \not\in \HCOperator^{\,n(\?B)}_{\?A,\?B}(\topLatticeGuessFunctions{B})(a)$,
	concluding the proof of \Cref{claim:hyperedge-consistency-uniform-convergence-hom}.
	This concludes the proof of \Cref{lem:hyperedge-consistency-uniform-convergence}.
\end{proof}
\section{Supplementary Material for Undecidability}
\label{apdx:dichotomy-undecidability}

\subsection{Links and Finite Duality}

For $n\in\N$, we define the \AP""$n$-link"" $\intro*\link{n}$ be the "$\sigma$-structure"\footnote{From \cite[\S~2]{LaroseLotenTardif2007CharacterisationFOCSP}.} 
whose domain is $\intInt{0,n}$, and every "predicate" $\+R$
of arity $k$, is interpreted as the set of tuples $\langle a_1,\, \dotsc,\, a_k \rangle$
"st" $|a_i-a_j| \leq 1$ for all $i,j \in \intInt{0,n}$. See \Cref{fig:n-link}.
\begin{figure}
	\centering
	\begin{tikzpicture}
		\input{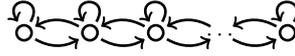}
	\end{tikzpicture}
	\caption{\AP\label{fig:n-link}The "$n$-link" $\link{n}$ over the "graph signature".}
\end{figure}
Given a "$\sigma$-structure" $\?B$, say that $b \in \?B$ and $b'$ are
\AP""$n$-linked"" if there exists a "homomorphism" from $\link{n}$ to $\?B$
that sends $0$ to $b$ and $n$ to $b'$. We say that $b$ and $b'$ are \AP\reintro{linked} if
they are "$n$-linked" for some $n \in \N$.

Note that the fact that $k \mapsto n-k$
defines an "automorphism" of $\link{n}$ implies that the relation of being "$n$-linked"---%
and to a greater extent of being "linked"---is symmetric.
Moreover, being "linked" is transitive, but not necessarily reflexive.

\begin{proposition}[{\cite[Theorem 4.7]{LaroseLotenTardif2007CharacterisationFOCSP}}]%
	\!\footnote{Actually \cite[Theorem 4.7]{LaroseLotenTardif2007CharacterisationFOCSP} assumes
	that $\HomFin{\?B}$ is "first-order definable", but this condition
	is equivalent to $\?B$ having "finite duality" by Atserias' result
	\cite[Corollary 4]{Atserias2008DigraphColoring}.}%
	\AP\label{prop:characterization-finite-duality-path-projections}
	An arbitrary "$\sigma$-structure" $\?B$ has "finite duality" "iff"
	$\projHom{1}$ and $\projHom{2}$ are "linked" in $\powstruct{\?B}{(\iterstruct{\?B}{2})}$.
\end{proposition}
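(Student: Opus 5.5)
The plan is to prove the two directions separately, both by comparing "links" in $\powstruct{\?B}{(\iterstruct{\?B}{2})}$ with long obstructions. By currying (\Cref{prop:currying-hom}), a homomorphism $\link{n} \to \powstruct{\?B}{(\iterstruct{\?B}{2})}$ sending $0 \mapsto \projHom{1}$ and $n \mapsto \projHom{2}$ is the same thing as a sequence of binary maps $f_0=\projHom{1}, f_1,\dotsc,f_n=\projHom{2}$ from $\iterstruct{\?B}{2}$ to $\?B$. These maps must be pairwise compatible: every $\+R$-tuple whose entries use indices $i$ and $i+1$ only is preserved. Throughout, I would first reduce to the case where $\?B$ is a "core", since both finite duality and the linking property are invariant under taking cores up to routine adjustments.

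\textbf{Linked $\Rightarrow$ finite duality.} Fix a link $f_0,\dotsc,f_n$ as above. By \Cref{prop:finite-duality-iff-critical-obstructions}, it suffices to show that every "critical obstruction" has "diameter" bounded by some $N$ depending only on $n$; there are finitely many critical obstructions of bounded diameter up to isomorphism once we also bound degrees, a point I would double-check via the strong-onto argument of that proposition. Suppose $\?T$ is a critical obstruction with two vertices $u,v$ at distance $> 2n+2$. I would remove a hyperedge $e_u$ incident to $u$ and a hyperedge $e_v$ incident to $v$. By criticality we get homomorphisms $h_1,h_2$ to $\?B$ defined on $\?T$ minus $e_v$ and $\?T$ minus $e_u$, respectively. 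Let $\delta(t)=\min(n,\max(0,d(u,t)-1))$ and define $h(t)=f_{\delta(t)}(h_1(t),h_2(t))$. Any hyperedge has entries whose $\delta$ values differ by at most $1$, so compatibility of consecutive $f_i$ applied to the pair of tuples gives preservation. Near $u$ we have $\delta=0$, so $h=h_1$, which is valid there since $e_u$ is present in $h_1$'s domain; near $v$ we have $h=h_2$ symmetrically. The hyperedge $e_v$ only needs $h_2$ and $e_u$ only needs $h_1$, so $h$ maps $\?T$ to $\?B$, a contradiction. The care needed here is in choosing the cutoff offsets so that $e_u$ lies entirely in the region where $\delta=0$, and $e_v$ in the region where $\delta=n$.

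\textbf{Finite duality $\Rightarrow$ linked.} For each $m$, I would build the structure $\?A_m=\link{m}\prodstruct\iterstruct{\?B}{2}$, marked in the sense of $\marked{\?B}$. At layer $0$ each element $(0,b_1,b_2)$ carries $\unarypred{b_1}$, and at layer $m$ each element carries $\unarypred{b_2}$. By currying and \Cref{prop:idempotent-core-preserves-csp-complexity}, $\?A_m\homto\marked{\?B}$ iff $\projHom{1}$ and $\projHom{2}$ are $m$-linked, using rigidity (\Cref{prop:finite-duality-implies-rigid}) to pin the markings. Suppose no link exists. Then every $\?A_m$ is an obstruction of $\marked{\?B}$, which again has finite duality by the first-order equivalence and "Atserias' theorem". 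So some critical obstruction $\?D$ of bounded diameter $d$ maps into $\?A_m$. For $m>2d+2$, its image lies in a window of layers that touches at most one marked end. That window maps to $\marked{\?B}$ by the constant choice $\projHom{1}$ (or $\projHom{2}$) on every layer, since $\projHom{1}$ is "$0$-linked" to itself as a polymorphism. This contradicts $\?D\nothomto\marked{\?B}$.

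I expect the main obstacle to be the first direction, namely passing from bounded diameter of critical obstructions to finitely many of them. My first attempt would be to use a different bound: apply \Cref{lem:hyperedge-consistency-uniform-convergence} together with \Cref{prop:finite-duality-implies-tree-duality}, which would require only tree witnesses of bounded height. Tree duality itself may have to be derived from the link, through $\FederVardi{\?B}\homto\?B$ and \Cref{prop:charac-Feder-Vardi}.
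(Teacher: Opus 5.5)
The paper gives no proof of its own here. It imports Theorem~4.7 of Larose, Loten and Tardif, which is stated for cores, and translates it through Atserias' theorem.

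\textbf{What works.} Two parts of your proposal are correct.
\begin{itemize}
\item Your direction \emph{finite duality $\Rightarrow$ linked} is sound for a core $\?B$. Your $\?A_m$ is the structure of \Cref{lem:reduction-hom} built on a path, and the window argument works. Rigidity is not even needed, since the markings already pin the two end layers.
\item Your interpolation $h(t)=f_{\delta(t)}(h_1(t),h_2(t))$ is correct. It shows that if $\projHom{1}$ and $\projHom{2}$ are $n$-linked, then every critical obstruction has diameter at most $2n+2$.
\end{itemize}

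\textbf{The gap.} It is the step after the diameter bound, in the direction \emph{linked $\Rightarrow$ finite duality}.
\begin{itemize}
\item A diameter bound does not control the size of critical obstructions. Orienting each odd wheel (a hub joined to an odd cycle) gives infinitely many pairwise non-isomorphic critical obstructions of $\clique{3}$, all of diameter $2$.
\item So you need a degree or size bound, and nothing in your argument provides one. The strong-onto argument of \Cref{prop:finite-duality-iff-critical-obstructions} starts from a finite dual, so using it here is circular.
\item Your fallback does not close the gap either. \Cref{prop:finite-duality-implies-tree-duality} and item (1)$\Rightarrow$(2) of \Cref{lem:hyperedge-consistency-uniform-convergence} presuppose finite duality, so only (3)$\Rightarrow$(1) is usable.
\item Deriving from the link that hyperedge consistency stabilises after a number of rounds independent of $\?A$ (equivalently, tree duality with bounded-height witnesses) is precisely the substantive content of the cited theorem.
\end{itemize}
This is also the only direction the paper actually uses, in \Cref{lem:reduction-hom,lem:reduction-hom-reg}.

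\textbf{The reduction to cores.} This step is also wrong. Being linked is not invariant under taking cores, and the equivalence genuinely requires $\?B$ to be a core, as in the cited theorem.
\begin{itemize}
\item Let $\?B$ be $\pathGraph{2}$ together with a disjoint vertex $\ell$ carrying a loop. It has finite duality: it is homomorphically equivalent to a loop, so $\emptyset$ is a dual.
\item $\iterstruct{\?B}{2}$ contains the path $(0,\ell)\to(1,\ell)\to(2,\ell)$. The map $\projHom{1}$ sends it onto $0\to 1\to 2$, while $\projHom{2}$ sends it to $\ell$.
\item Suppose $f$ agrees with $\projHom{1}$ on this path and $g$ is $1$-linked to $f$. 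The edges $\langle f,g\rangle$ and $\langle g,f\rangle$ of $\powstruct{\?B}{(\iterstruct{\?B}{2})}$ then force $g(0,\ell)=0$, $g(1,\ell)=1$ and $g(2,\ell)=2$.
\item By induction along any link starting at $\projHom{1}$, every map in the link agrees with $\projHom{1}$ on this path. So $\projHom{1}$ and $\projHom{2}$ are not linked.
\end{itemize}
The direction \emph{linked $\Rightarrow$ finite duality} is unaffected, since your interpolation never uses core-ness. The converse must be restricted to cores, and your argument for it already relies on this implicitly.
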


\subsection{Proof of the undecidability of $\HomAut{\?B}$}
\label{apdx-lem:lowerbound-hom}

\begin{proposition}[Folklore]
	\AP\label{prop:undecidability-connectivity}
	"Connectivity in automatic graphs" is "RE"-complete.
\end{proposition}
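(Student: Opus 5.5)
The statement has two halves: membership in "RE" and "RE"-hardness.

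\emph{Membership.} Given an "automatic presentation" $\+G$ and words $s,t$, I would enumerate $n \in \N$ together with sequences $s = u_0, u_1, \dotsc, u_n = t$ of words. For each candidate, I check that every $u_i \in \domainPres{\+G}$ and that, for each $i$, $\tup{u_i,u_{i+1}}$ or $\tup{u_{i+1},u_i}$ belongs to $\relPres{\+E}{\+G}$. These checks are membership tests in regular languages and in "automatic relations", hence decidable. Since "connectivity" is understood in the undirected sense, $\+G(s)$ and $\+G(t)$ are "connected" if and only if such a finite sequence exists. This gives a semi-algorithm.

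\emph{Hardness.} I would reduce the halting problem for a deterministic "Turing machine" $\+T$ on empty input.

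\begin{itemize}
\item Encode configurations as words over tape symbols, states and a head marker, as is standard. The one-step relation of a deterministic machine is an "automatic relation": a synchronous automaton compares two configurations letter by letter and checks that they coincide except in a bounded window around the head. The only subtlety is that the tape may grow by one cell, and the padding symbol $\pad$ handles exactly this.
\item Make the graph reflect runs faithfully. First add an edge from every halting configuration to a fresh sink word $t$; this is regular, since halting configurations are those containing a halting state. Then take $s$ to be the initial configuration.
\item Conclude that $s$ and $t$ are "connected" if and only if $\+T$ halts. The potential problem is that undirected connectivity could join $s$ to $t$ through backward edges, passing through unreachable configurations that lead to halting.
\end{itemize}

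I would remove that problem by replacing $\+T$ with a "reversible" machine, in which every configuration has at most one predecessor. The "linear Turing machines" mentioned in the paper already have this property, and Bennett's construction yields it in general. In the configuration graph of such a machine, each connected component is a simple path. The component of $s$ is then exactly the forward run from $s$, provided $s$ has no predecessor, which can be ensured by a fresh initial state. Hence that component contains a halting configuration, and therefore reaches $t$, if and only if $\+T$ halts.

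The main obstacle is this faithfulness issue, namely avoiding spurious connections through unreachable configurations. I expect to handle it entirely by citing reversibility, or the linear machines of \cite{BarceloFigueiraMorvan2023SeparatingAutomatic}. The remaining verifications, that the presentation is automatic and computable from $\+T$, are routine.
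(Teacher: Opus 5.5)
Your proof is correct, and its core is the paper's construction: the configuration graph of a deterministic machine, a fresh vertex $\bullet$ (your $t$) attached to every halting configuration, and $s$ the initial configuration. You also spell out the semi-algorithm for membership in RE, which the paper leaves implicit.

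You diverge from the paper on the faithfulness step, which you settle by passing to a reversible (Bennett) or linear machine. That detour is sound but not needed. Determinism already gives every vertex out-degree at most one: halting configurations point only to $\bullet$, and $\bullet$ points nowhere. In such a graph, two vertices joined by an undirected path have intersecting forward orbits. Indeed, an edge $x \to y$ makes the orbit of $y$ a suffix of that of $x$, and two orbits that both meet a third, linearly ordered orbit also meet each other. Since the orbit of $\bullet$ is $\{\bullet\}$, the vertices $s$ and $\bullet$ are connected iff $\bullet$ is reachable from $s$, i.e.\ iff the machine halts. In particular, an unreachable configuration that leads to halting can only lie in the component of $s$ if the run from $s$ itself halts.

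So the paper's one-line argument, which speaks of a path without addressing direction, is in fact correct for the plain configuration graph. It also spares you the overhead of your route:
\begin{itemize}
\item invoking Bennett's construction;
\item checking that reversibility survives the word encoding (canonical treatment of blank cells, several tapes merged into one);
\item adding a fresh initial state without giving the old initial configuration a second predecessor. You do not actually need this condition, since the backward part of the component of $s$ contains no halting configuration.
\end{itemize}

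What reversibility, or the paper's linear machines, really buys is something stronger: the component of $s$ is exactly the set of reachable configurations. That is irrelevant here. It is essential in the later reduction for regular unconnectivity, where runs of unreachable configurations merging into the run of $s$ would otherwise enlarge the component of $s$.
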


\begin{proof}
	This follows from the fact that the "configuration graph" of
	a "Turing machine" is always "automatic@@struct":
	indeed, a "Turing machine" halts on the empty word "iff" there is, in its
	"configuration graph", a path from the "initial configuration"
	to $\bullet$, where $\bullet$ is a newly added node,
	"st" we add an edge from any accepting "configuration@@TM" to $\bullet$. 
\end{proof}

\begin{lemma}
	\AP\label{lem:reduction-hom}
	Assume that $\sigma$ contains at least one "predicate" of arity at least 2,
	and let $\?B$ be a "finite $\sigma$-structure".
	If $\?B$ does not have "finite duality", then there is a "first-order reduction" 
	from the complement of "connectivity in automatic graphs" to $\HomAut{\marked{\?B}}$.
\end{lemma}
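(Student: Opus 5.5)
We adapt Larose and Tesson's construction. By \Cref{prop:characterization-finite-duality-path-projections}, since $\?B$ does not have "finite duality", $\projHom{1}$ and $\projHom{2}$ are \emph{not} "linked" in $\?P \defeq \powstruct{\?B}{\?X}$, where $\?X \defeq \iterstruct{\?B}{2}$. Let $\?L \subseteq P$ be the set of elements "linked" to $\projHom{1}$. Then $\projHom{2}\notin \?L$, and no "hyperedge" of $\?P$ meets both $\?L$ and its complement, since linkedness propagates along hyperedges. The reduction maps an instance $\tup{\+G,s,t}$ to an "automatic presentation" of a "$\extendedSignature{\sigma}{\?B}$-structure" $\?A$. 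Informally, $\?A$ is $\?G' \prodstruct \?X$ equipped with markings. Here $\?G'$ is obtained from $\?G$ by replacing every edge $\tup{u,v}$ (taken symmetrically, since we care about undirected connectivity) by an $\+R$-tuple $\tup{u,v,v,\dotsc,v}$, where $\+R$ is a fixed predicate of arity $\geq 2$. We then mark $s$ and $t$ so that $\tup{s,x}$ carries $\unarypred{\projHom{1}(x)}$ and $\tup{t,x}$ carries $\unarypred{\projHom{2}(x)}$ for every $x \in X$. By \Cref{prop:currying-hom} (adapted to markings), $\?A \homto \marked{\?B}$ iff there is a "homomorphism" $H\colon \?G' \to \?P$ with $H(s)=\projHom{1}$ and $H(t)=\projHom{2}$. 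All the pieces ($\prodpres$ with the finite $\?X$, rewriting edges as $\+R$-tuples, and markings defined by the "first-order" conditions ``$x=s$'' and ``$x=t$'' combined with the finite coordinate) are "first-order" definable. Hence \Cref{prop:automatic-first-order} and \Cref{apdx:construction-automatic-presentations} give an "automatic presentation", and the reduction is "first-order".

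Correctness, first direction. Suppose $\+G(s)$ and $\+G(t)$ are not "connected". Define $H$ to be constantly $\projHom{1}$ on the component $K$ of $s$, and constantly $\projHom{2}$ elsewhere. We need each constant map to preserve the $\+R$-tuples of $\?G'$, i.e.\ that $\tup{\projHom{j},\dotsc,\projHom{j}} \in \+R(\?P)$. This holds because $\projHom{j}$ is a "homomorphism" $\?X \to \?B$. Since no tuple of $\?G'$ crosses $K$ and its complement, $H$ is a "homomorphism", and the markings are respected.

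Correctness, second direction. Suppose $\+G(s)$ and $\+G(t)$ are connected, by a path $s=v_0,\dotsc,v_n=t$ in the underlying undirected graph. For each step $\tup{v_i,v_{i+1}}$, the tuple $\tup{H(v_i),H(v_{i+1}),\dotsc,H(v_{i+1})}$ (or its mirror) lies in $\+R(\?P)$. This hyperedge yields a "homomorphism" from $\link{1}$ sending $0\mapsto H(v_i)$ and $1\mapsto H(v_{i+1})$. Concatenating these gives a map $\link{n}\to\?P$ from $\projHom{1}$ to $\projHom{2}$, contradicting \Cref{prop:characterization-finite-duality-path-projections}.

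Main obstacle. The delicate point is that tuples of $\link{1}$ include all mixtures of $0$ and $1$ in every position, not just the specific tuple $\tup{u,v,\dotsc,v}$. So I expect to need the stronger encoding used by Larose and Tesson: for each edge, add all $\+R$-tuples with entries in $\{u,v\}$, for every predicate $\+R$. This is still "first-order", and it makes each edge a genuine image of $\link{1}$. It also requires that the constant maps $\projHom{j}$ be reflexive, i.e.\ that $\tup{\projHom{j},\dotsc}$ mixed with itself is a hyperedge, which holds trivially since all entries equal $\projHom{j}$. I would first check that this symmetric encoding keeps the first direction valid. It does, because tuples within a single component still receive a constant value.
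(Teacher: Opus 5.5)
Once you commit to the stronger encoding from your last paragraph, your proof is essentially the paper's. That encoding adds, for each edge $u$--$v$ and every predicate, all tuples with entries in $\{u,v\}$, i.e.\ it replaces each edge by a copy of $\link{1}$. The paper uses the same marked product $\?A\prodstruct\iterstruct{\?B}{2}$ and curries into $\powstruct{\?B}{(\iterstruct{\?B}{2})}$. For one direction it uses the map that is constantly $\projHom{1}$ on the component of $s$ and $\projHom{2}$ elsewhere. For the other it composes a link with the curried map to contradict \Cref{prop:characterization-finite-duality-path-projections}.

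The stronger encoding is mandatory, not a refinement. With only $\langle u,v,\dotsc,v\rangle$, your step ``this hyperedge yields a homomorphism from $\link{1}$'' is false, since $\link{1}$ contains every tuple over $\{0,1\}$ in every predicate. Your aside that linkedness propagates along hyperedges of $\?P$ is unjustified for the same reason. It is never used, so drop it.
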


\begin{figure}
	\centering
	\begin{tikzpicture}
		\input{tikz/graph-to-struct-graph}
	\end{tikzpicture}
	\caption{
		\AP\label{fig:graph-to-struct-graph}
		A "graph@@dir" $\?G$.
	}
\end{figure}
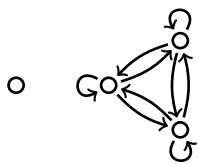
\begin{figure}
	\centering
	\begin{tikzpicture}
		\node[vertex] (0) at (0,0) {};
\node[vertex, right = 2.5 em of 0] (1) {};
\node[vertex, above right = 1em and 2em of 1] (2) {};
\node[vertex, below right = 1em and 2em of 1] (3) {};

\draw[edge, bend right=15] (1) to (2);
\draw[edge, bend right=15] (2) to (1);
\draw[edge, bend right=15] (2) to (3);
\draw[edge, bend right=15] (3) to (2);
\draw[edge, bend right=15] (1) to (3);
\draw[edge, bend right=15] (3) to (1);

\draw[edge, loop, out=150,in=210,looseness=6] (1) to (1);
\draw[edge, loop, out=60,in=120,looseness=6] (2) to (2);
\draw[edge, loop, out=-120,in=-60,looseness=6] (3) to (3);
	\end{tikzpicture}
	\caption{
		\AP\label{fig:graph-to-struct-struct}
		The "structure" $\?A$ defined from $\?G$ (in \Cref{fig:graph-to-struct-graph}),
		using the construction done in the proof of \Cref{lem:reduction-hom},
		when $\sigma$ consists of a single binary relation.
	}
\end{figure}

\begin{proof}
	Given an instance $\langle \+G, s, t \rangle$ of "connectivity in automatic graphs",
	we first define the $\sigma$-structure $\?A$ with "automatic presentation" $\+A$
	obtained by replacing every edge of $\+G$ by a "$1$-link".
	Formally, $\?A$ has the same domain as $\?G$, and for any
	"predicate" $\+R \in \sigma$ of arity $k$,
	$\langle g_1,\, \dotsc,\, g_k \rangle \in \+R(\?A)$ "iff"
	$\{g_1, \dotsc, g_k\} = \{g,g'\}$ for some $g,g' \in G$ "st"
	there is an edge from $g$ to $g'$ in $\?G$.
	See \Cref{fig:graph-to-struct-graph,fig:graph-to-struct-struct}.

	\begin{claim}
		\AP\label{claim:reduction-hom-from-graph-to-link}
		$\+G(s)$ and $\+G(t)$ are "connected" "iff"
		$\+A(s)$ and $\+A(t)$ are "linked".
	\end{claim}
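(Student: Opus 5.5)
We prove \Cref{claim:reduction-hom-from-graph-to-link} directly, by turning paths into "link" homomorphisms and back. We read the definition of $\+R(\?A)$ as: $\tup{g_1,\dotsc,g_k} \in \+R(\?A)$ \emph{iff} all $g_i$ lie in $\{g,g'\}$ for some edge of $\?G$ between $g$ and $g'$, in either direction. This is the reading suggested by \Cref{fig:graph-to-struct-struct}, where every non-isolated vertex carries a self-loop. We take "connected" to mean connected in the underlying undirected graph of $\?G$.

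\emph{Left-to-right.} Assume $\+G(s)$ and $\+G(t)$ are "connected". Pick an undirected path $s = g_0, g_1, \dotsc, g_n = t$ in $\?G$, so that for each $j<n$ there is an edge between $g_j$ and $g_{j+1}$ in some direction. Define $h\colon \intInt{0,n} \to A$ by $j \mapsto g_j$. Any $\+R$-tuple of $\link{n}$ has all its entries within distance $1$ of each other, so its entries lie in $\{j, j+1\}$ for some $j$. Its image therefore has entries in $\{g_j, g_{j+1}\}$, which span an edge of $\?G$, so the image tuple is in $\+R(\?A)$. Hence $h$ is a "homomorphism" witnessing that $\+A(s)$ and $\+A(t)$ are "$n$-linked". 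The degenerate case $s = t$ with $s$ isolated, where $n = 0$, should be handled separately by a convention. I expect that, for the reduction, the paper either assumes $s \neq t$ or has $s$ incident to some edge; I would add this harmless assumption explicitly.

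\emph{Right-to-left.} Let $h\colon \link{n} \to \?A$ with $h(0) = s$ and $h(n) = t$. Here we use the hypothesis of \Cref{lem:reduction-hom} that $\sigma$ contains a "predicate" $\+R$ of arity $k \geq 2$. For each $j < n$, the tuple $\tup{j, j+1, \dotsc, j+1}$ belongs to $\+R(\link{n})$, so its image $\tup{h(j), h(j+1), \dotsc, h(j+1)}$ belongs to $\+R(\?A)$. By definition of $\?A$, both $h(j)$ and $h(j+1)$ then lie in $\{g, g'\}$ for some edge between $g$ and $g'$. So either $h(j) = h(j+1)$, or $h(j)$ and $h(j+1)$ are adjacent in $\?G$. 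Concatenating these steps gives an undirected walk from $s$ to $t$ in $\?G$, hence they are "connected".

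The only delicate point is the bookkeeping in the definition of $\?A$: constant tuples $\tup{g,\dotsc,g}$ and unary predicates must be in $\+R(\?A)$ exactly when needed for the left-to-right direction, and the existence of a predicate of arity at least $2$ is what makes the right-to-left direction work. Once the reading of $\+R(\?A)$ above is fixed, both directions are routine, and $\+A$ is clearly an "automatic presentation", being "first-order definable" from $\+G$ via \Cref{prop:automatic-first-order}.
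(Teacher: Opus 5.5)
Your proof is correct and is essentially the paper's argument. The paper turns each edge of $\?G$ into a $1$-link and appeals to transitivity of being linked, and for the converse applies a predicate of arity at least $2$ to $\langle a,\dotsc,a,a'\rangle$; you instead build the whole map $\link{n}\to\?A$ from a path at once, and in the other direction decompose such a map step by step. Both of your caveats are legitimate refinements that the paper glosses over: the definition of $\+R(\?A)$ has to be read with $\{g_1,\dotsc,g_k\}\subseteq\{g,g'\}$ for edges to yield $1$-links, as the self-loops in \Cref{fig:graph-to-struct-struct} suggest, and the equivalence really does fail when $s=t$ is isolated in $\?G$, despite the paper's later remark that being linked is reflexive on $\?A$.
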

	For the left-to-right implication: if there is an edge between two elements
	in $\?G$, then they are "$1$-linked" in $\?A$. Since being "linked" is
	reflexive and transitive, the conclusion follows.
	Conversely, if two elements $a$ and $a'$ of $\?A$ are "$1$-linked", 
	then pick a "predicate" $\+R \in \sigma$ of arity at least 2.
	Then $\langle a,\, \dotsc,\, a,\, a' \rangle \in \+R(\?A)$,
	and so by definition of $\?A$ there is either an edge from $a$ to $a'$
	or from $a'$ to $a$ in $\?G$.\footnote{Note that the proof of this claim
	is the only part of the proof of \Cref{lem:reduction-hom} that requires
	the assumption that $\sigma$ contains at least one "predicate" of arity at least 2.}

	We then consider the "automatic $\sigma$-structure" $\?A\prodstruct \iterstruct{\?B}{2}$,
	and extend it to a
	"$\extendedSignature{\sigma}{\?B}$-structure" \AP\(\intro*\ConstrUndecHom{(\?A\prodstruct \iterstruct{\?B}{2})}\)
	in which for each $b_0 \in B$,
	we "interpret@@predicate" the unary "predicate" $\unarypred{b_0}$ as
	\[
		\big\{
			\langle \+A(s),\, b_0,\, b\rangle \;\big\vert\; b \in B
			\big\}
		\cup
		\big\{
			\langle \+A(t),\, b,\, b_0\rangle \;\big\vert\; b \in B
		\big\}
	\]
	To construct an "automatic presentation" for this structure, see \Cref{apdx:construction-automatic-presentations}.
	\begin{claim}
		\AP\label{claim:reduction-hom-direct}
		If there is a "homomorphism" from
		$\ConstrUndecHom{(\?A\prodstruct \iterstruct{\?B}{2})}$
		to $\marked{\?B}$,
		then $\+G(s)$ and $\+G(t)$ are not "connected" in $\?G$.
	\end{claim}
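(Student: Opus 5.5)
The plan is to prove the contrapositive. Assume $\+G(s)$ and $\+G(t)$ are "connected" in $\?G$, and let $h$ be a "homomorphism" from $\ConstrUndecHom{(\?A\prodstruct \iterstruct{\?B}{2})}$ to $\marked{\?B}$. We will show that this forces $\projHom{1}$ and $\projHom{2}$ to be "linked" in $\powstruct{\?B}{(\iterstruct{\?B}{2})}$. By \Cref{prop:characterization-finite-duality-path-projections}, this in turn forces $\?B$ to have "finite duality", contradicting the hypothesis of \Cref{lem:reduction-hom}.

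\textbf{Step 1: forget the unary predicates and curry.} Forgetting the unary predicates $\unarypred{b_0}$, the map $h$ is in particular a "homomorphism" of "$\sigma$-structures" from $\?A\prodstruct\iterstruct{\?B}{2}$ to $\?B$. By \Cref{prop:currying-hom}, the map $H\colon a\mapsto\bigl(\tup{b_1,b_2}\mapsto h(a,b_1,b_2)\bigr)$ is then a "homomorphism" from $\?A$ to $\powstruct{\?B}{(\iterstruct{\?B}{2})}$.

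\textbf{Step 2: pin down the images of $s$ and $t$ using the markings.} For every $b_0,b\in B$, the tuple $\tup{\+A(s),b_0,b}$ lies in $\unarypred{b_0}$. Since $h$ preserves this predicate and $\unarypred{b_0}(\marked{\?B})=\{b_0\}$, we get $h(\+A(s),b_0,b)=b_0$. Hence $H(\+A(s))$ is exactly the first projection $\projHom{1}\colon\iterstruct{\?B}{2}\to\?B$. Symmetrically, $H(\+A(t))=\projHom{2}$.

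\textbf{Step 3: transport the link.} By \Cref{claim:reduction-hom-from-graph-to-link}, connectivity of $\+G(s)$ and $\+G(t)$ gives some $n$ and a "homomorphism" $\ell\colon\link{n}\to\?A$ with $\ell(0)=\+A(s)$ and $\ell(n)=\+A(t)$. Then $H\circ\ell$ is a "homomorphism" from $\link{n}$ to $\powstruct{\?B}{(\iterstruct{\?B}{2})}$ sending $0$ to $\projHom{1}$ and $n$ to $\projHom{2}$. So $\projHom{1}$ and $\projHom{2}$ are "$n$-linked", and \Cref{prop:characterization-finite-duality-path-projections} concludes as described above.

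The only delicate point is the direction of \Cref{prop:characterization-finite-duality-path-projections}. The argument needs the implication ``$\projHom{1}$ and $\projHom{2}$ linked $\Rightarrow$ $\?B$ has finite duality'', which is the direction we invoke here. If that turns out to be the wrong reading of the proposition, the Larose--Loten--Tardif statement would need to be rechecked rather than the construction. Everything else is bookkeeping: the markings act on elements of $\?A\prodstruct\iterstruct{\?B}{2}$, and hom-composition is routine.
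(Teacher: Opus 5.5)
Your proof is correct and follows the paper's argument essentially step by step. Like the paper, you forget the unary predicates and curry to get $F\colon \mathbb{A}\to\mathbb{B}^{(\mathbb{B}^2)}$, and use the markings to force $F(\+A(s))=\projHom{1}$ and $F(\+A(t))=\projHom{2}$. You then compose with the link from \Cref{claim:reduction-hom-from-graph-to-link} and conclude via \Cref{prop:characterization-finite-duality-path-projections}, reading it as ``linked projections imply finite duality'', which is exactly the direction the paper uses.
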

	Let $f\colon \ConstrUndecHom{(\?A\prodstruct \iterstruct{\?B}{2})} \homto \marked{\?B}$
	be a "homomorphism".\footnote{Recall that both sides are
	"$\extendedSignature{\sigma}{\?B}$-structures".}
	It induces a "homomorphism"
	\[
		\overline f\colon \?A\prodstruct \iterstruct{\?B}{2} \to \?B
	\]
	between "$\sigma$-structures", and by currying (\Cref{prop:currying-hom}),
	$\overline f$ can be seen as a "homomorphism"
	\[
		F\colon \?A \to \powstruct{\?B}{(\iterstruct{\?B}{2})}.
	\]
	Note moreover that because $\overline f$ comes from a "homomorphism" between
	"$\extendedSignature{\sigma}{\?B}$-structures" then we must have  
	$f(\+A(s),\, b,\, b') = b$
	and $f(\+A(t),\, b,\, b') = b'$ for all $b,b' \in B$.
	This implies that $F(\+A(s)) = \projHom{1}$ and $F(\+A(t)) = \projHom{2}$.
	
	We now assume by contradiction that
	$\+G(s)$ and $\+G(t)$ are "connected", and hence by
	\Cref{claim:reduction-hom-from-graph-to-link}
	there is some $n \in \N$
	"st" there is a "homomorphism" $g\colon \link{n} \to \?A$
	with $g(0) = \+A(s)$ and $g(n) = \+A(t)$.
	Then by composition, we obtain a "homomorphism"
	\[
		F \circ g\colon
		\link{n} \to \powstruct{\?B}{(\iterstruct{\?B}{2})},
 	\]	
	which sends $0$ to $F(g(0)) = F(\+A(s)) = \projHom{1}$
	and sends $n$ to $F(g(n)) = F(\+A(t)) = \projHom{2}$.
	So, by \Cref{prop:characterization-finite-duality-path-projections},
	$\?B$ would have "finite duality", which is a contradiction.
	Hence, $\+A(s)$ and $\+A(t)$ are not "linked",
	and so by \Cref{claim:reduction-hom-from-graph-to-link}, $\+G(s)$ and $\+G(t)$
	are not "connected".

	\begin{claim}
		\AP\label{claim:reduction-hom-converse}
		If $\+G(s)$ and $\+G(t)$ are not "connected" in $\?G$,
		then there is a "homomorphism" from
		$\ConstrUndecHom{(\?A\prodstruct \iterstruct{\?B}{2})}$ to $\marked{\?B}$.
	\end{claim}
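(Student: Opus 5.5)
We will build the homomorphism explicitly from the connected component of $\+A(s)$, using currying (\Cref{prop:currying-hom}) in the reverse direction. Let $K \subseteq A$ be the set of elements of $\?A$ linked to $\+A(s)$; by \Cref{claim:reduction-hom-from-graph-to-link} (and its proof), $K$ is exactly the connected component of $\+G(s)$ in $\?G$, read in $\?A$. Since $\+G(s)$ and $\+G(t)$ are not connected, $\+A(t) \notin K$. We define
\[
	F\colon A \to \powstruct{\?B}{(\iterstruct{\?B}{2})},\qquad
	F(a) \defeq \begin{cases*}
		\projHom{1} & if $a \in K$,\\
		\projHom{2} & otherwise.
	\end{cases*}
\]

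First we check that $F$ is a homomorphism. Take a predicate $\+R$ of arity $k$ and a tuple $\langle g_1,\dotsc,g_k\rangle \in \+R(\?A)$. By construction of $\?A$, the set $\{g_1,\dotsc,g_k\}$ equals $\{g,g'\}$ for some edge $g \to g'$ of $\?G$. Hence $g$ and $g'$ lie in the same connected component, so either all $g_i$ are in $K$ or none is. The image tuple is therefore constant, equal to $\projHom{1}$ or to $\projHom{2}$. Each $\projHom{j}$ is a homomorphism $\iterstruct{\?B}{2}\to\?B$, so the constant tuple $\langle \projHom{j},\dotsc,\projHom{j}\rangle$ lies in $\+R(\powstruct{\?B}{(\iterstruct{\?B}{2})})$ by the definition of the exponential structure. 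Uncurrying with \Cref{prop:currying-hom} then gives a homomorphism $\overline f\colon \?A\prodstruct\iterstruct{\?B}{2}\to\?B$, namely $\overline f(a,b,b') = b$ if $a\in K$ and $\overline f(a,b,b') = b'$ otherwise.

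It remains to check the unary predicates $\unarypred{b_0}$ of $\ConstrUndecHom{(\?A\prodstruct \iterstruct{\?B}{2})}$, so that $\overline f$ is a homomorphism into $\marked{\?B}$. An element $\langle \+A(s), b_0, b\rangle$ is sent to $b_0$ because $\+A(s)\in K$. An element $\langle \+A(t), b, b_0\rangle$ is sent to $b_0$ because $\+A(t)\notin K$. This case split is where non-connectivity is used, and it is also the only delicate point: if $\+A(s)=\+A(t)$, or if they were connected, the two kinds of markings would clash.

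The remaining subtlety is that the elements of $\?A$ are equivalence classes of words under $\relPres{=}{\+A}$. We only need $K$ to be a well-defined subset of the domain, which it is, since it is defined at the level of the structure. No regularity of $K$ is needed here, because the target problem is $\HomAut{\marked{\?B}}$, not the regular variant.
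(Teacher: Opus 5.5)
Your proof is correct and is essentially the paper's argument. The paper defines $f(a,b,b')$ to be $b$ when $a$ is linked to $\mathcal{A}(s)$ and $b'$ otherwise, and checks that the entries of any $\mathcal{R}$-tuple of $\mathbf{A}$ lie on a single edge of $\mathbf{G}$, so the case split is uniform along the tuple. It then remarks that the curried map is $\pi_1$ on the component of $\mathcal{A}(s)$ and $\pi_2$ on its complement, which is exactly your $F$ before uncurrying.

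One small correction to how you define $K$: take it to be the connected component of $\mathcal{G}(s)$ directly, not the set of elements linked to $\mathcal{A}(s)$. When $\mathcal{G}(s)$ is isolated, $\mathcal{A}(s)$ occurs in no tuple of $\mathbf{A}$, and every link contains constant tuples, so $\mathcal{A}(s)$ is not linked even to itself. The linked set would then be empty, your claimed identification with the component would fail, and the marking check at $\mathcal{A}(s)$ would break. The paper's own definition of $f$ has the same gap. With $K$ taken to be the component, your argument goes through unchanged.
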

	We define a "homomorphism" $f\colon \ConstrUndecHom{(\?A\prodstruct \iterstruct{\?B}{2})} \to \marked{\?B}$ by:
	\[
		f(a, b, b') \defeq \begin{cases*}
			\;b & \text{ if $\+A(s)$ and $a$ are "linked",} \\
			\;b' & \text{ otherwise.}
		\end{cases*}
	\]
	We show that this is indeed a "homomorphism": for any "predicate" $\+R$
	of arity $k$ in $\sigma$, if
	\[
		\langle a_1,\, b_1,\, b'_1 \rangle,\;
		\langle a_2,\, b_2,\, b'_2 \rangle,\;
		\dotsc,\;
		\langle a_k,\, b_k,\, b'_k \rangle
	\]
	are all "$\+R$-tuples" of $\ConstrUndecHom{(\?A\prodstruct \iterstruct{\?B}{2})}$,
	then by definition of $\?A$, we have that either (1) all $a_i$'s are equal,
	or (2) $\{a_1,\, \dotsc,\, a_k\} = {a,a'}$ for some $a \neq a' \in A$
	and there is an edge from $a$ to $a'$ or from $a'$ to $a$ in $\?G$.
	In both cases, it follows that $\+A(s)$ and $a_i$ are "linked"
	"iff" $\+A(s)$ and $a_j$ are "linked", for all $i,j\in \intInt{1,k}$.
	Hence, either:
	\begin{itemize}
		\item $f(a_i,\, b_i,\, b'_i) = b_i$ for all $i\in \intInt{1,k}$
			(if all $a_i$'s are "connected" to $\+A(s)$), or
		\item $f(a_i,\, b_i,\, b'_i) = b'_i$ for all $i\in \intInt{1,k}$ (otherwise).
	\end{itemize}
	In both cases, we get that
	\[
		\Big\langle
			f(a_1,\, b_1,\, b'_1),\;
			f(a_2,\, b_2,\, b'_2),\;
			\dotsc,\;
			f(a_k,\, b_k,\, b'_k)
		\Big\rangle
		\in \+R(\?B).
	\]
	We also need to show that this map preserves the new unary "predicates" of
	$\extendedSignature{\sigma}{\?B}$: this follows from---and is in fact equivalent to---the
	fact that $\+A(s)$ and $\+A(t)$ are not "linked" by \Cref{claim:reduction-hom-from-graph-to-link}:
	indeed, the currying of $f$ behaves like $\projHom{1}$ on the "connected component"
	of $\+A(s)$ and like $\projHom{2}$ on its complement.
	Overall, this proves that $\ConstrUndecHom{(\?A\prodstruct \iterstruct{\?B}{2})} \homto \marked{\?B}$.

	Putting the claims together,
	we get that the reduction is correct.
	Lastly, note that it is a "first-order reduction":
	clearly, one can go from $\?G$ to $\?A$ via a (one-dimensional) "first-order reduction",
	and then from $\?A$ to $\?A\prodstruct \iterstruct{\?B}{2}$
	via a (multi-dimensional) "first-order reduction" since $\?B$ is finite,
	and lastly 
	$\ConstrUndecHom{(\?A\prodstruct \iterstruct{\?B}{2})}$ can be obtained
	by a "first-order reduction" from the latter "structure" since
	"first-order logic" can test equality with a fixed element.
\end{proof}

By \Cref{prop:undecidability-connectivity}, the complement of "Connectivity in automatic graphs"
is "coRE"-complete, and assuming that $\sigma$ contains at least one "predicate" of arity 2,
it reduces by \Cref{lem:reduction-hom} to any problem $\HomAut{\marked{\?B}}$ when $\?B$ has "finite duality". In turn, by \Cref{prop:idempotent-core-preserves-csp-complexity}, it reduces to
$\HomAut{\?B}$, which is thus "coRE"-hard. It remains to deal with "signatures" consisting of only
unary "predicates".\footnote{It is not clear to us whether this case was properly handled in
\cite{LaroseLotenTardif2007CharacterisationFOCSP}.}

\begin{proposition}
	\AP\label{prop:finite-duality-unary-predicates}
	If $\sigma$ only consists of unary "predicates", then all "$\sigma$-structures"
	have "finite duality".
\end{proposition}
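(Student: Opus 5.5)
We exhibit an explicit finite dual. Let $\?B$ be a finite $\sigma$-structure where every predicate of $\sigma$ is unary; write $\sigma = \{\+U_1,\dotsc,\+U_r\}$. In this case a homomorphism $\?A \to \?B$ is simply a choice, for each $a \in A$, of some $b \in B$ whose "unary type" dominates that of $a$. The \emph{type} of $a$ is $\tau(a) \defeq \{\+U \in \sigma \mid a \in \+U(\?A)\}$. The requirement is $\tau(a) \subseteq \tau(b)$, because there is no interaction between distinct elements. Hence $\?A \homto \?B$ holds iff for every $a\in A$ there exists $b\in B$ with $\tau_{\?A}(a) \subseteq \tau_{\?B}(b)$.

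The plan is as follows. First, I would prove this characterization. The direct implication holds because homomorphisms preserve each unary relation. For the converse, pick for each $a$ a witness $b$; the resulting map is a homomorphism, since every hyperedge of $\?A$ is a singleton $\tup{a}$ in some $\+U(\?A)$, and it is sent into $\+U(\?B)$.

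Second, for each $S \subseteq \sigma$, I would let $\?D_S$ be the one-element structure $\{d\}$ with $d \in \+U(\?D_S)$ iff $\+U \in S$. Note that $\?D_S \homto \?A$ iff some $a \in A$ has $S \subseteq \tau(a)$. Define
\[
	\+D \defeq \{\?D_S \mid S \subseteq \sigma \text{ and } S \not\subseteq \tau_{\?B}(b) \text{ for all } b \in B\}.
\]
This is a finite set, of size at most $2^{r}$.

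Third, I would check that $\+D$ is a dual. Suppose $\?A \homto \?B$ and some $\?D_S\in\+D$ maps into $\?A$. Composing gives $\?D_S \homto \?B$, which means $S\subseteq\tau(b)$ for some $b$, contradicting the definition of $\+D$. Conversely, suppose $\?A \nothomto \?B$. By the characterization, some $a$ has $\tau(a)$ contained in no $\tau(b)$, and then $\?D_{\tau(a)}\in\+D$ maps to $\?A$ via $d\mapsto a$.

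The only subtle point is if $\sigma$ is infinite, or contains constants (allowed by the paper's definition of signature). For a finite signature of predicates, the argument above is complete. With infinitely many predicates, the set $\+D$ could be infinite. In that case I would restrict to the finitely many predicates that are non-empty in $\?B$ (since $\?B$ is finite), and use singletons $\?D_{\{\+U\}}$ for predicates $\+U$ with $\+U(\?B)=\emptyset$. I expect that only finitely many such predicates matter in the intended setting, because the paper fixes a finite signature $\sigma$. I would state this assumption explicitly rather than treat it as an obstacle.
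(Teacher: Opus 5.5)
Your proof is correct and matches the paper's argument: the paper also uses the one-element structures whose unary type is contained in the type of no element of $\mathbf{B}$. It shows these are obstructions, and that any $\mathbf{A}$ with no homomorphism to $\mathbf{B}$ has an element whose type is obstructing. You are right to insist on a finite $\sigma$. For nonempty $\mathbf{B}$ and infinitely many unary predicates, each one-point structure carrying a single fact $\mathcal{U}(d)$ with $\mathcal{U}(\mathbf{B}) = \emptyset$ is a critical obstruction, so the statement genuinely fails there (your fallback dual cannot be made finite). The paper's proof implicitly relies on its standing assumption that $\sigma$ is finite.
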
	

\begin{proof}
	Fix a $\sigma$-structure $\?B$. We define the \AP""unary type""
	$\intro*\unaryType{b}{\?B}$ of $b \in \?B$
	to be the set of "predicates" $\+P$ "st" $b \in \+P(\?B)$.
	
	Given $\tau \subseteq \sigma$, define \AP$\intro*\structOfUnaryType{\tau}$
	to be the "$\sigma$-structure"
	consisting of a single element $*$, and "st" $* \in \+P(\?1_\tau)$ "iff"
	$\+P \in \tau$.
	We say that $\tau$ is \AP""obstructing@@unarytype"" if
	$\tau \not\subseteq \unaryType{b}{\?B}$ for all $b \in \?B$.

	\begin{claim}
		\AP\label{claim:finite-duality-unary-predicates-direct}
		If $\tau$ is "obstructing@@unarytype",
		then $\structOfUnaryType{\tau} \nothomto \?B$.
	\end{claim}
	We prove the result by contraposition.
	Any "homomorphism" from $\structOfUnaryType{\tau}$ to $\?B$
	should send $*$ on some element $b$ of $\?B$
	"st" $b \in \+P(\?B)$ for all $\+P \in \tau$, and
	hence $\tau \subseteq \unaryType{b}{\?B}$.

	\begin{claim}
		\AP\label{claim:finite-duality-unary-predicates-converse}
		If $\?A \nothomto \?B$ then there exists an "obstructing@@unarytype"
		$\tau \subseteq \sigma$ "st" $\structOfUnaryType{\tau} \homto \?A$.
	\end{claim}
	We define a partial homomorphism $f$ from $A$ to $B$,
	by sending $a \in A$ to any $b \in B$ "st" the "unary type" of $a$
	is included in the "unary type" of $b$. This is clearly a (partial) "homomorphism",
	and so since $\?A \nothomto \?B$, it follows that it must be partial,
	"ie" that some element $a \in \?A$ "st" $\unaryType{a}{\?A} \not\subseteq
	\unaryType{b}{\?B}$ for every $b \in B$. It follows that $\unaryType{a}{\?A}$
	is "obstructing@@unarytype". Since $\structOfUnaryType{\unaryType{a}{\?A}} \homto \?A$
	"via" $* \mapsto a$, the conclusion follows.

	Putting the claims together, we get that
	\[
		\big\{\;
			\structOfUnaryType{\tau}
			\;\big\vert\;
			\text{ $\tau \subseteq \sigma$ is "obstructing@@unarytype"} 
		\;\big\}
	\]
	is a finite "dual" for $\?B$.
\end{proof}

\lowerboundHom*

\begin{proof}
	By \Cref{prop:finite-duality-unary-predicates}, since $\?B$ does not have "finite duality",
	then $\sigma$ has at least one "predicate" of arity at least 2.
	The conclusion follows from \Cref{prop:idempotent-core-preserves-csp-complexity,prop:undecidability-connectivity,lem:reduction-hom}.
\end{proof}

\subsection{Proof of the undecidability of $\HomRegAut{\?B}$}
\label{apdx-coro:lowerbound-homreg}

\reductionHomReg*
\begin{proof}
	Given an instance $\langle \+G, s, t \rangle$ of "regular unconnectivity in automatic graphs",
	we first define the $\sigma$-structure $\?A$ with "automatic presentation" $\+A$
	obtained by replacing every edge by a "$1$-link", as in \Cref{lem:reduction-hom}.

	\begin{claim}
		\!\footnote{While ``being "linked"'' is not reflexive in general, it is over the
		structure $\?A$, by reflexivity of ``being "connected"'' in $\+G$.}%
		\AP\label{claim:reduction-homreg-from-graph-to-link}
		$\+G(s)$ and $\+G(t)$ are "regularly unconnected" "iff"
		there is a regular language $L \subseteq \Sigma^*$ "st" $\+A(s)\in L$ and $t \not\in L$,
		and $L$ is a union of equivalences classes of $\domainPres{\+A}$
		under ``being "linked"''.
	\end{claim}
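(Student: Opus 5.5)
The plan is to show that, in $\?A$, the relation ``being "linked"'' coincides exactly with ``being "connected"'' in $\?G$ (transported along the presentation). Since $\?A$ and $\?G$ share the same presentation domain $\domainPres{\+A} = \domainPres{\+G}$, the two conditions in the claim then quantify over the same regular languages $L$. Both require $s \in L$ and $t \notin L$, and $L$ being a union of classes of one relation is the same as being a union of classes of the other. So the claim reduces to an equality of two equivalence relations on $\domainPres{\+A}$.

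For that equality, I would reuse \Cref{claim:reduction-hom-from-graph-to-link}, stated for arbitrary pairs of vertices rather than just $s,t$; its proof does not depend on the choice of endpoints. It gives: $\+A(u)$ and $\+A(v)$ are "linked" "iff" $\+G(u)$ and $\+G(v)$ are "connected". The left-to-right direction uses a predicate of arity at least $2$, which is exactly the standing hypothesis of \Cref{lem:reduction-hom-reg}.

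One subtlety is reflexivity, which is why "linked" classes are well defined here. Every element $g$ of $\?A$ is "$0$-linked" to itself, because $\link{0}$ has the single element $0$, and the constant map satisfies all atoms $\+R(0,\dotsc,0)$ of $\link{0}$. For this to be a "homomorphism" we need $\tup{g,\dotsc,g} \in \+R(\?A)$. By the definition of $\?A$, this holds only when $\{g\} = \{g,g'\}$ with an edge of $\?G$, i.e.\ when $g$ has a self-loop. To avoid this issue I would instead use "$2$-linked" through a neighbour, or more simply note that connectivity in $\?G$ is reflexive by convention. Then the identification of the two relations makes linkedness an equivalence relation on $\?A$, as the footnote asserts. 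If an isolated vertex causes trouble, I would treat reflexivity as a convention for both relations (the classes being singletons), which does not affect unions of classes.

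Finally, I would check the side conditions. The language $L$ is the same subset of $\Sigma^*$ in both formulations, so regularity is unchanged. The membership conditions $s \in L$ and $t \notin L$ are literally identical. The footnote's formal reading of ``union of connected components'' is $L = \+G^{-1}[U]$, and this translates to $L = \+A^{-1}[U']$ for the corresponding union $U'$ of linked classes. The main obstacle is only the reflexivity bookkeeping above; everything else is transport along the identity of domains.
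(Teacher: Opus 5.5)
Your approach is the same as the paper's. Its proof only says the argument mirrors \Cref{claim:reduction-hom-from-graph-to-link}: on the common domain, being linked in $\?A$ coincides with being connected in $\?G$, so the two conditions on $L$ are identical.

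One step of your reflexivity discussion is wrong, however. Passing to $2$-links through a neighbour cannot help. Every $\link{n}$ contains all constant tuples $\tup{i,\dotsc,i}$, so any homomorphism from a link into $\?A$ needs a constant tuple at each vertex of its image. Under the literal reading $\{g_1,\dotsc,g_k\}=\{g,g'\}$ that you quote, linkedness would therefore only relate vertices carrying self-loops. The claim you rely on would then fail too: an edge $g\to g'$ would not even give a $1$-link, since $\link{1}$ contains $\tup{0,0}$ and $\tup{1,1}$.

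The definition must instead be read as $\{g_1,\dotsc,g_k\}\subseteq\{g,g'\}$. This is what replacing each edge by a copy of $\link{1}$ means, and it is what \Cref{fig:graph-to-struct-struct} shows. With this reading, every non-isolated vertex is $0$-linked to itself. Only isolated vertices then need your singleton convention, applied to all words representing such a vertex; your formulation $L=\+A^{-1}[U']$ already ensures this.
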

	The proof is similar to \Cref{claim:reduction-homreg-from-graph-to-link}.
	Then again, we reduce the instance $\langle \+G, s, t \rangle$
	to an "automatic presentation" of \(\ConstrUndecHom{(\?A\prodstruct \iterstruct{\?B}{2})}\),
	as in \Cref{lem:reduction-hom}.
	\begin{claim}
		\AP\label{claim:reduction-homreg-direct}
		If $\ConstrUndecHom{(\+A\prodpres \iterstruct{\?B}{2})} \homregto \marked{\?B}$,
		then $\+G(s)$ and $\+G(t)$ are "regularly unconnected" in $\?G$.
	\end{claim}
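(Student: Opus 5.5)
We take a regular homomorphism $f$ from $\ConstrUndecHom{(\+A\prodpres \iterstruct{\?B}{2})}$ to $\marked{\?B}$ and extract from it a regular language $L$ separating $s$ from $t$ that is a union of linked-classes. By \Cref{claim:reduction-homreg-from-graph-to-link}, this shows that $\+G(s)$ and $\+G(t)$ are regularly unconnected.

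First, curry as in \Cref{claim:reduction-hom-direct}, now using \Cref{coro:homreg-currying}. This gives a regular homomorphism $F\colon \+A \to \powstruct{\?B}{(\iterstruct{\?B}{2})}$. By the marking, $F(\+A(s)) = \projHom{1}$ and $F(\+A(t)) = \projHom{2}$. Since the codomain is finite, \Cref{claim:homreg-prod-finite} shows that each fibre $F^{-1}[h]$ is regular, for every $h$ in $\powstruct{\?B}{(\iterstruct{\?B}{2})}$.

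Next, let $H$ be the set of elements $h$ of $\powstruct{\?B}{(\iterstruct{\?B}{2})}$ that are linked to $\projHom{1}$. This set is finite and does not depend on the input. Note that $\projHom{1}\in H$: the element $\+A(s)$ is $0$-linked to itself in $\?A$, because linkedness is reflexive on $\?A$, and composing with $F$ shows that $\projHom{1}$ is $0$-linked to itself. Define
\[ L \defeq \bigcup_{h\in H} F^{-1}[h] = \{u \in \domainPres{\+A} \mid F(u) \text{ is linked to } \projHom{1}\}. \]
As a finite union of regular languages, $L$ is regular. It contains $s$. It does not contain $t$: otherwise $\projHom{2}$ would be linked to $\projHom{1}$, and \Cref{prop:characterization-finite-duality-path-projections} would give $\?B$ finite duality, contradicting the hypothesis.

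Finally, we show that $L$ is a union of linked-classes of $\?A$. Let $u \in L$ and let $v$ be linked to $u$ via some homomorphism $g\colon \link{n}\to\?A$. Then $F\circ g$ links $F(u)$ to $F(v)$. Since linkedness is transitive in $\powstruct{\?B}{(\iterstruct{\?B}{2})}$, and $F(u)$ is linked to $\projHom{1}$, it follows that $F(v)$ is linked to $\projHom{1}$, so $v\in L$. The complement of $L$ in $\domainPres{\+A}$ is then also closed under linkedness.

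Two small technical points need care. One must check that $L$ respects the equality relation of the presentation; this holds because $F$ is a function on represented elements. One must also pass from linked-classes of $\?A$ to connected components of $\+G$; the claim stated just before handles this.

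I expect the main obstacle to be the regularity of $L$: the set of elements linked to $\+A(s)$ need not itself be regular. The argument above avoids this by defining $L$ through $F$ rather than through linkedness in $\?A$ directly. Membership in $L$ then depends only on the finite-valued regular map $F$, and the closure under links comes from the finite target.
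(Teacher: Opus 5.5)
Your proposal is correct and is essentially the paper's proof: you curry via \Cref{coro:homreg-currying}, take the preimage under $F$ of the finite set of elements of $\powstruct{\?B}{(\iterstruct{\?B}{2})}$ linked to $\projHom{1}$, exclude $\projHom{2}$ via \Cref{prop:characterization-finite-duality-path-projections}, and get closure under links because homomorphisms preserve them. The one step to tighten is $\projHom{1} \in H$ (the paper sidesteps it by putting ``or $g = \projHom{1}$'' into the definition of $\+X$): rather than appealing to reflexivity of linkedness in $\?A$, which fails at vertices of $\?G$ with no incident edge, note that every element $h$ of $\powstruct{\?B}{(\iterstruct{\?B}{2})}$ is itself a homomorphism, so $\langle h,\dotsc,h\rangle \in \+R(\powstruct{\?B}{(\iterstruct{\?B}{2})})$ for every $\+R$, and hence $h$ is $0$-linked to itself.
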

	
	Let \(f\colon \ConstrUndecHom{(\+A\prodpres \iterstruct{\?B}{2})} \to \marked{\?B}\)
	be a "regular homomorphism".
	By currying---see \Cref{coro:homreg-currying}---of the underlying "homomorphism"
	between "\(\sigma\)-structures", we obtain a "regular homomorphism"
	\[
		F\colon \+A \to \powstruct{\?B}{(\iterstruct{\?B}{2})}.
	\]
	Moreover, using the "predicates" \(\unarypred{b}\), \(b \in B\),
	we get that $F(\+A(s)) = \projHom{1}$ and $F(\+A(s)) = \projHom{2}$.

	We then define \[\+X \defeq \{g \in \powstruct{\?B}{(\iterstruct{\?B}{2})} \mid \text{ $g$ and $\projHom{1}$ are "linked" or $g = \projHom{1}$}\}.\]
	We claim that ${F}^{-1}[\+X]$ witnesses the fact that
	$\+G(s)$ and $\+G(t)$ are "regularly unconnected".
	First, $\projHom{1} \in \+X$ so $\+A(s) \in {F}^{-1}[\+X]$.
	Since $\?B$ has "finite duality", by \Cref{prop:characterization-finite-duality-path-projections}, $\projHom{2} \not\in \+X$
	and so $\+A(t) \not\in {F}^{-1}[\+X]$.
	Then, ${F}^{-1}[\+X]$ is regular since $F$ is a "regular homomorphism". Finally, ${F}^{-1}[\+X]$ is a union of
	equivalences classes of $\domainPres{\+A}$ under ``being "linked"''.\footnote{Indeed,
	if $c_1, c_2 \in \?C$ are "linked" in some "structure" $\?C$ and if $f\colon \?C \to \?D$ is a "homomorphism", then $f(c_1)$ and $f(c_2)$ are "linked" in $\?D$.}
	Hence, by \Cref{claim:reduction-homreg-from-graph-to-link}, $\+G(s)$ and $\+G(t)$ are "regularly unconnected".

	\begin{claim}
		\AP\label{claim:reduction-homreg-converse}
		If $\+G(s)$ and $\+G(t)$ are "regularly unconnected" in $\?G$,
		then $\+A\prodpres \iterstruct{\?B}{2} \homregto \marked{\?B}$.
	\end{claim}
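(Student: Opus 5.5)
Let $L \subseteq \Sigma^*$ be a regular language witnessing that $\+G(s)$ and $\+G(t)$ are "regularly unconnected": $s \in L$, $t \notin L$, and $L$ is a union of "connected components" of $\+G$. By \Cref{claim:reduction-homreg-from-graph-to-link}, $L$ is then also a union of classes of $\domainPres{\+A}$ under ``being "linked"''; being "linked" is reflexive on $\?A$, and its classes are exactly the connected components. I will mimic the homomorphism of \Cref{claim:reduction-hom-converse}, with the non-regular test ``$a$ is "linked" to $\+A(s)$'' replaced by the regular test ``$a \in L$''. So I would define
\[
	f(u, b, b') \defeq \begin{cases*}
		\;b & \text{ if $u \in L$,} \\
		\;b' & \text{ otherwise,}
	\end{cases*}
\]
on $\domainPres{\+A} \times B \times B$. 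The target is $\ConstrUndecHom{(\+A\prodpres \iterstruct{\?B}{2})} \homregto \marked{\?B}$, which is what the reduction requires; the same argument without the unary predicates gives the weaker statement as written.

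First I check that $f$ is a homomorphism. Take an "$\+R$-tuple" $\langle (a_i,b_i,b'_i)\rangle_i$ of the product. By construction of $\?A$, all the $a_i$ lie in a single connected component of $\?G$: either they are all equal, or they form an edge of $\?G$. Since $L$ is a union of components, either every $a_i \in L$ or none is. Consequently the image of the tuple is $\langle b_i\rangle_i$ or $\langle b'_i\rangle_i$, and both lie in $\+R(\?B)$ because the tuple is an $\+R$-tuple of $\iterstruct{\?B}{2}$. Next I check the unary predicates. We have $s \in L$, so $f(\+A(s),b_0,b) = b_0$. We have $t \notin L$, so $f(\+A(t),b,b_0) = b_0$. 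Hence $\unarypred{b_0}$ is preserved.

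Finally I check regularity. By \Cref{claim:homreg-prod-finite}, it suffices that for fixed $b,b'$ and fixed $c \in B$, the set $\{u \in \domainPres{\+A} \mid f(u,b,b')=c\}$ is regular. This set is one of $\domainPres{\+A}\cap L$, $\domainPres{\+A}\smallsetminus L$, $\domainPres{\+A}$, or $\emptyset$, and each of these is regular.

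The main subtlety is not the construction itself. It is making sure that $L$ is phrased on representatives, that is, that it is closed under the equality relation $\relPres{=}{\+A}$, so that $f$ is well defined on $\?A$. The footnote definition $L = \+G^{-1}[U]$ guarantees exactly this, since $L$ is then a preimage of a set of elements of $\?G$.
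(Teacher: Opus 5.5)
Your proof is correct and matches the paper's own argument: define $f(u,b,b')$ as $b$ or $b'$ according to whether $u\in L$, check the $\sigma$-relations using that every $\mathcal{R}$-tuple of $\?A$ lies inside a single connected component, check the markings using $s\in L$ and $t\notin L$, and get regularity from that of $L$ via \Cref{claim:homreg-prod-finite}. Your version is in fact a little cleaner than the paper's: its displayed case distinction literally tests ``$\+A(s)\in L$'' (a typo for $a\in L$), and it omits the $\star$ on the source structure, which the reduction actually requires.
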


	Since $\+G(s)$ and $\+G(t)$ are "regularly unconnected" in $\?G$,
	by \Cref{claim:reduction-homreg-from-graph-to-link} there is a regular language $L \subseteq \Sigma^*$ "st" $\+A(s)\in L$ and $\+A(t) \not\in L$,
	and $L$ is a union of equivalences classes of $\domainPres{\+A}$
	under ``being "linked"''.
	We define a function $f\colon \domainPres{\+A}\times B^2 \to B$ by 
	\[
		f(a, b, b') \defeq \begin{cases*}
			\;b & \text{ if $\+A(s) \in L$,} \\
			\;b' & \text{ otherwise,}
		\end{cases*}
	\]
	and we claim that $f$ is a "regular homomorphism" from
	\(\+A\prodpres \iterstruct{\?B}{2}\) to \(\marked{\?B}\).
	The proof that it is a "homomorphism" is similar to \Cref{claim:reduction-hom-converse}:
	in particular, we use the fact that $\+G(s)$ and $\+G(t)$ are not "connected" in $\?G$,
	which is a consequence of the fact that they are "regularly unconnected".
	"Regularity@@hom" follows from the regularity of $L$. 
	Hence, $\+A\prodpres \iterstruct{\?B}{2} \homregto \marked{\?B}$.

	Putting the claims together,
	we get that the reduction is correct.
	Lastly, note that this is a first-order reduction for the same reason
	as \Cref{lem:reduction-hom}.
\end{proof}

\lowerboundHomReg*

\begin{proof}
	Recall that $\HomAut{\?B} = \HomAut{\core{\?B}}$, so we assume "wlog"
	that $\?B$ is a "core".
	By \Cref{lemma:regular-unconnectivity-lowerbound},
	"regular unconnectivity in automatic graphs" is "RE"-hard.
	Then by \Cref{prop:finite-duality-unary-predicates}, since $\?B$ does not have "finite
	duality", $\sigma$ does not consist only of unary "predicates",
	and hence by \Cref{lem:reduction-hom-reg}, we get
	a reduction from "regular unconnectivity in automatic graphs" to
	$\HomRegAut{\marked{\?B}}$, which in turn
	reduces to $\HomRegAut{\?B}$ by \Cref{prop:idempotent-core-preserves-csp-complexity}
	since $\?B$ was assumed to be a "core".
	Indeed, "first-order reductions" preserves "regularity@@hom", by 
	\Cref{prop:automatic-first-order}.
\end{proof}

\end{document}